\renewcommand\footnotetextcopyrightpermission[1]{} %
\newcommand{\fail}{{\mbox{\it fail}}}
\newcommand{\ok}{{\mbox{\it ok}}}
\newcommand{\IF}{{\mbox{\it IF}}}
\newcommand{\levels}{\mbox{\it levels}}
\newcommand{\treecomp}{\mathit{treecomp}}
\newcommand{\calF}{{\mathcal F}}
\newcommand{\calH}{{\mathcal H}}
\newcommand{\calO}{{\mathcal O}}
\newcommand{\calT}{{\mathcal T}}
\newcommand{\NP}{\text{\rm NP}}
\newcommand{\ra}{\ensuremath{\rightarrow}}
\newcommand{\la}{\ensuremath{\leftarrow}}
\newcommand{\ffo}[1]{\ensuremath{\mathit{#1}}}
\newcommand{\icBMIP}{$ic$-BMIP}
\newcommand{\dBDP}{$d$-BDP}
\newcommand{\vc}{\mbox{\rm vc}}
\newcommand{\cigap}[1]{\mbox{\it cigap}(#1)}
\newcommand{\tigap}[1]{\mbox{\it tigap}(#1)}
\newcommand{\poly}{\ffo{poly}}
\newcommand{\fhw}{\ffo{fhw}}
\newcommand{\ghw}{\ffo{ghw}}
\newcommand{\hw}{\ffo{hw}}
\newcommand{\vertices}{\ffo{V}}
\newcommand{\V}{\vertices}
\newcommand{\weight}{\ffo{weight}}
\newcommand{\width}{\ffo{width}}
\newcommand{\nodes}{\ffo{nodes}}
\newcommand{\edges}{\ffo{edges}}
\newcommand{\component}{\ffo{comp}}
\newcommand{\parent}{\ffo{parent}}
\newcommand{\tc}{\ffo{treecomp}}
\newcommand{\cov}{\ffo{supp}}
\newcommand{\Int}{\ffo{Int}}
\newcommand{\treeroot}{\ffo{root}}
\newcommand{\rep}{\ffo{rep}}
\newcommand{\first}{\ffo{first}}
\newcommand{\critp}{\ffo{critp}}
\newcommand{\supp}{\ffo{supp}}
\newcommand{\vsupp}{\ffo{vsupp}}
\newcommand{\type}{\ffo{type}}
\newcommand{\iflevel}{\ffo{iflevel}}
\newcommand{\INSET}{\ensuremath{\mathbb{B}}}
\newcommand{\TYPES}{\ensuremath{\mathbb{T}}}
\newcommand{\CLASSES}{\ensuremath{\mathbb{C}}}
\newcommand{\FRINGE}{\ensuremath{\mathbb{F}}}
\newcommand{\class}{\ffo{class}}
\newcommand{\sett}{\ffo{set}}
\newcommand{\markk}{\ffo{mark}}
\newcommand{\iwidth}[1]{\mbox{\it iwidth}(#1)}
\newcommand{\cmiwidth}[2]{\mbox{\it $#1$-miwidth(#2)}}
\newcommand{\ddegree}[1]{\mbox{\it degree}(#1)}
\newcommand{\VTu}{V(T_u)}
\newcommand{\VTs}{V(T_s)}
\newcommand{\rec}[1]{\textsc{Check}(#1)}
\newcommand{\boundedopt}{$K$-\textsc{Bounded-FHW-Optimization}}
\newcommand{\mcF}{\ensuremath{\mathcal{F}}}
\newcommand{\mcH}{\ensuremath{\mathcal{H}}}
\newcommand{\mcC}{\ensuremath{\mathcal{C}}}
\newcommand{\classC}{\ensuremath{\mathscr{C}}}
\newcommand{\defFHD}{\ensuremath{\left< T, (B_u)_{u\in T}, (\gamma_u)_{u\in T} 
\right>}}
\newcommand{\defGHD}{\ensuremath{\left< T, (B_u)_{u\in T}, (\lambda_u)_{u\in T} 
\right>}}
\newcommand{\defHD}{\defGHD}
\newcommand{\defFHDohne}{\ensuremath{\left< T, (B_u), (\gamma_u) \right>}}
\newcommand{\comp}[1]{[#1]-component}
\newcommand{\adjacent}[1]{[#1]-adjacent}
\newcommand{\connected}[1]{[#1]-connected}
\newcommand{\ppi}{\pi(\hat{u}_1,\hat{u}_N)}
\newcommand{\decomp}{\texttt{$(k,\epsilon,c)$-frac-decomp}\xspace}
\newcommand{\fdecomp}{\texttt{f-decomp}\xspace}
\newcommand{\findFHD}{\texttt{find-fhd}\xspace}
\newcommand{\iforest}{\texttt{Intersection-Forest}\xspace}
\newcommand{\uitree}{\texttt{Union-of-Intersections-Tree}\xspace}
\newcommand{\cupcap}{\bigcup\bigcap}
\newcommand{\lab}{\mathit{label}}
\newcommand{\inter}{\mathit{int}}
\newcommand{\HH}{\ensuremath{H}}
\newcommand{\overbar}[1]{\mkern 1.5mu\overline{\mkern-1.5mu#1\mkern-1.5mu}\mkern 
1.5mu}
\newcommand{\ptime}{{\sc Ptime}\xspace}
\newcommand{\alogspace}{{\sc ALogSpace}\xspace}
\newcommand{\np}{{\sc NP}\xspace}
\newif\ifrdfshort
\newcommand{\mcG}{\ensuremath{\mathcal{G}}}
\newcommand{\nop}[1]{}
\newlength{\knotenbreite}
\newlength{\knotenhoehe}
\newlength{\ueberhoehe}
\newlength{\knotenradius}
\newcolumntype{a}{>{\columncolor{gray!25}}c}
\newcommand{\htnode}[2]{
   $B_u$ \& #1 \\ \hline 
   $\lambda_u$ \&  #2 \\
}
\providecommand*{\cupdot}{%
  \mathbin{%
    \mathpalette\@cupdot{}%
  }%
}
\newcommand*{\@cupdot}[2]{%
  \ooalign{%
    $\m@th#1\cup$\cr
    \sbox0{$#1\cup$}%
    \dimen@=\ht0 %
    \sbox0{$\m@th#1\cdot$}%
    \advance\dimen@ by -\ht0 %
    \dimen@=.5\dimen@
    \hidewidth\raise\dimen@\box0\hidewidth
  }%
}
\providecommand*{\bigcupdot}{%
  \mathop{%
    \vphantom{\bigcup}%
    \mathpalette\@bigcupdot{}%
  }%
}
\newcommand*{\@bigcupdot}[2]{%
  \ooalign{%
    $\m@th#1\bigcup$\cr
    \sbox0{$#1\bigcup$}%
    \dimen@=\ht0 %
    \advance\dimen@ by -\dp0 %
    \sbox0{\scalebox{2}{$\m@th#1\cdot$}}%
    \advance\dimen@ by -\ht0 %
    \dimen@=.5\dimen@
    \hidewidth\raise\dimen@\box0\hidewidth
  }%
}
\providecommand*{\capdot}{%
  \mathbin{%
    \mathpalette\@capdot{}%
  }%
}
\newcommand*{\@capdot}[2]{%
  \ooalign{%
    $\m@th#1\cap$\cr
    \hidewidth$\m@th#1\cdot$\hidewidth
  }%
}
\providecommand*{\bigcapdot}{%
  \mathop{%
    \vphantom{\bigcap}%
    \mathpalette\@bigcapdot{}%
  }%
}
\newcommand*{\@bigcapdot}[2]{%
  \ooalign{%
    $\m@th#1\bigcap$\cr
    \sbox0{$#1\bigcap$}%
    \dimen@=\ht0 %
    \advance\dimen@ by -\dp0 %
    \sbox0{\scalebox{2}{$\m@th#1\cdot$}}%
    \advance\dimen@ by -\ht0 %
    \dimen@=.5\dimen@
    \hidewidth\raise\dimen@\box0\hidewidth
  }%
}
\newenvironment{myintro}%
  {\list{}{\leftmargin=0.1in\rightmargin=0.1in}\item[]}%
  {\endlist}
\title[General and Fractional Hypertree Decompositions: 
Hard and Easy Cases]{General and Fractional Hypertree Decompositions: \\
Hard and Easy Cases}
\author{Wolfgang Fischl}
\affiliation{%
  \institution{TU Wien}
}
\email{wolfgang.fischl@tuwien.ac.at}
\author{Georg Gottlob}
\affiliation{%
  \institution{University of Oxford  \&  TU Wien}
}
\email{georg.gottlob@cs.ox.ac.uk}
\author{Reinhard Pichler}
\affiliation{%
  \institution{TU Wien}
}
\email{reinhard.pichler@tuwien.ac.at}
\begin{document}

\begin{abstract}
\normalsize 
Hypertree decompositions, as well as the more powerful generalized hypertree 
decompositions (GHDs), and the yet more general fractional hypertree 
decompositions (FHD) are hypergraph decomposition methods successfully used for 
answering conjunctive queries
and
for solving %
constraint satisfaction problems.
Every hypergraph $H$ has a width relative to each of these %
methods:   
its hypertree width $\hw(H)$, its generalized hypertree width $\ghw(H)$, and 
its  
fractional hypertree width $\fhw(H)$, respectively.
It is known that  $\hw(H)\leq k$ can be checked in polynomial time for fixed $k$, while 
checking $\ghw(H)\leq k$ is NP-complete for $k \geq 3$. 
The complexity of checking  $\fhw(H)\leq k$  for a fixed $k$ has been open for over a decade.

We settle this open problem by showing that  
checking  $\fhw(H)\leq k$ is NP-complete, even for $k=2$.
The same construction allows us to prove also the NP-completeness of checking 
 $\ghw(H)\leq k$ for $k=2$. 
After that, we identify meaningful restrictions
which make checking for bounded $\ghw$ or $\fhw$ tractable or
allow for an efficient approximation of the $\fhw$.
\end{abstract}

\maketitle

\section{Introduction and Background}
\label{sect:introduction}

\noindent{\bf Research Challenges Tackled.}\  
In this work we tackle computational problems on hypergraph decompositions, 
which
play a prominent role for %
answering Conjunctive Queries (CQs) and 
solving Constraint Satisfaction Problems (CSPs), which we discuss %
below.

\nop{The treewidth~\cite{} $tw(G)$ of a graph $G$ is a measure of  its degree 
of cyclicity. The treewidth $tw(G)$ is defined as the smallest width over 
all tree decompositions of a graph (definitions will be given below).} 
Many \np-hard graph-based problems become tractable for instances whose 
corresponding graphs have bounded treewidth.
There are, however, many problems for which the structure of an instance is 
better described by a hypergraph than by a graph, for example,  the 
above mentioned CQs and CSPs. 
Given that treewidth does not generalize hypergraph acyclicity\footnote{We here 
refer to the standard notion of hypergraph acyclicity, as used 
in~\cite{DBLP:conf/vldb/Yannakakis81} and \cite{fagin1983degrees}, where it is 
called $\alpha$-acyclicity. This notion is more general than other types of 
acyclicity that have been introduced in the literature.}, proper  hypergraph 
decomposition methods have been developed, in particular, {\it hypertree 
decompositions (HDs)\/}  
\cite{2002gottlob}, the more general  {\it generalized 
hypertree decompositions (GHDs)}~\cite{2002gottlob}, and the yet 
more general   
{\it fractional hypertree decompositions (FHDs)}~\cite{2014grohemarx}, and 
corresponding notions of width of a hypergraph $H$ have been defined: the  {\it 
hypertree width} $\hw(H)$,  the {\em generalized hypertree width} 
$\ghw(H)$, and the {\em fractional hypertree width} $\fhw(H)$, where for every 
hypergraph $H$, $\fhw(H)\leq \ghw(H)\leq \hw(H)$ holds. 
Definitions are given in 
Section~\ref{sect:prelim}. A number of highly relevant hypergraph-based 
problems such as  
CQ\hyp{}evaluation and CSP\hyp{}solving become tractable for classes of instances of 
bounded 
$\hw$, $\ghw$, or, $\fhw$. For each of the mentioned types of 
decompositions it would thus be %
useful to be able to recognize for each 
constant $k$ whether a given hypergraph $H$ has corresponding  
width at most  
$k$, and if so, to compute such a %
decomposition. More formally, for 
{\it decomposition\/} $\in \{$HD, GHD, FHD$\}$ and $k > 0$, we consider the 
following family of problems:

\medskip
\noindent
\rec{{\it decomposition\/},\,$k$}\\
\begin{tabular}{ll}
 \bf input & hypergraph $H = (V,E)$;\\
 \bf output & {\it decomposition\/}  of $H$ of width $\leq k$ if it 
exists and  \\ & answer `no' otherwise.
\end{tabular}

\smallskip

As shown in~\cite{2002gottlob}, \rec{HD,\,$k$}\  is in \ptime. However, 
little %
is known
about  \rec{FHD,\,$k$}. 
In fact, this has been %
an open problem since the 2006 paper \cite{DBLP:conf/soda/GroheM06},
where Grohe and Marx state: \lq\lq{}It remains an important open question whether 
there
is a polynomial-time algorithm that determines (or approximates)
the fractional hypertree width and constructs a corresponding
decomposition.\rq\rq{}
The 2014 journal version
still 
mentions this as %
open %
and it is conjectured that the problem might be \np-hard. 
The open problem is restated in \cite{bevern2015}, where further evidence for 
the hardness of the problem is 
given by  showing that ``it is not expressible in monadic second-order logic 
whether a 
hypergraph has bounded (fractional, generalized) hypertree width''.
We will tackle this open problem here:

\begin{myintro}
\noindent{\bf Research Challenge 1:} \ Is  \rec{FHD,\,$k$} tractable?
\end{myintro}

\nop{**************
In order to analyze the above problem, we will study the complexity of deciding 
whether, for constant $k$, deciding $\fhw(H)\leq k$ is \np-complete. If so, 
then 
certainly also 
\rec{FHD,\,$k$}  is \np-hard.
**************}

Let us now  turn to generalized hypertree decompositions. In~\cite{2002gottlob} 
the complexity of  \rec{GHD,\,$k$} was stated as an open problem. 
In \cite{2009gottlob}, it was shown that 
\rec{GHD,\,$k$} is \np-complete for $k\geq 3$. 
For 
$k=1$ the problem is trivially tractable because
$\ghw(H)=1$ just means 
$H$ is acyclic. However the case $k=2$ has been left open.  This case is quite 
interesting, because it was observed that the majority of practical queries 
from 
various  benchmarks that are not acyclic have 
$\ghw = 2$ \cite{DBLP:journals/pvldb/BonifatiMT17,pods/FischlGLP19}, and that a decomposition in such cases can be very 
helpful.
Our second 
research goal is to finally  settle the complexity of \rec{GHD,\,$k$} 
completely.

\begin{myintro}
\noindent{\bf Research Challenge 2:} \ Is \rec{GHD,\,$2$} tractable?
\end{myintro}

For those problems which are  known to be intractable, for example, 
\rec{GHD,\,$k$} for $k\geq 3$, and for those others that will turn out to be 
intractable, we would like to find large islands of tractability 
that correspond to meaningful restrictions of the input 
hypergraph instances. Ideally, such restrictions should  fulfil two main 
criteria: (i) they need to be {\it realistic} in the sense that they apply to a 
large number of CQs and/or CSPs in real-life applications, and 
(ii) 
they need to be {\it non-trivial}  in the sense that the restriction itself 
does 
not already imply bounded $\hw$, $\ghw$, or $\fhw$. 
Trivial restrictions would be, for example, 
acyclicity or bounded 
treewidth.
Hence, our third research problem is as follows:

\begin{myintro}
\noindent{\bf Research Challenge 3:} \ Find realistic, non-trivial restrictions 
on hypergraphs which entail the tractability of  the 
\rec{{\it decomp\/},\,$k$} problem for {\it decomp\/} $\in 
\{$GHD, 
FHD$\}$.
\end{myintro}

Where we do not achieve {\sc Ptime} algorithms 
for the precise computation of a decomposition of optimal width, we would like 
to find tractable methods for achieving good approximations. 
Note that 
for GHDs, the problem of approximations is solved, since $\ghw(H) \leq 3 \cdot \hw(H) +1$ holds for every 
hypergraph $H$ \cite{DBLP:journals/ejc/AdlerGG07}.
In contrast, for FHDs, the best known polynomial-time approximation is cubic. More precisely, 
in \cite{DBLP:journals/talg/Marx10}, a polynomial-time algorithm is presented which, 
given a hypergraph $H$ with $\fhw(H) = k$,
computes an FHD of width $\calO(k^3)$. 
We would like to find 
meaningful restrictions that guarantee significantly tighter approximations
in polynomial time. This leads to the fourth research problem:

\begin{myintro}
\noindent{\bf Research Challenge 4:} \ Find realistic, non-trivial restrictions 
on hypergraphs which allow us to compute in {\sc Ptime} good 
approximations of $\fhw(k)$.
\end{myintro}

\smallskip

\noindent{\bf Background and Applications.}\ Hypergraph decompositions have 
meanwhile found their way into commercial database systems such as LogicBlox 
\cite{DBLP:conf/sigmod/ArefCGKOPVW15,
olteanu2015size,BKOZ13,KhamisNRR15,KhamisNR16} and advanced research prototypes 
such as 
EmptyHeaded~\cite{DBLP:conf/sigmod/AbergerTOR16,tu2015duncecap,aberger2016old}. 
Moreover, 
since CQs and CSPs of bounded hypertree width fall into the highly 
parallelizable complexity class LogCFL, hypergraph decompositions have also 
been 
discovered as a useful tool for parallel query processing with MapReduce 
\cite{DBLP:journals/corr/AfratiJRSU14}. Hypergraph decompositions, in 
particular, HDs and GHDs have been used in many other contexts, e.g., 
in combinatorial auctions~\cite{gottlob2013decomposing} and automated selection 
of Web services based on recommendations from social 
networks~\cite{hashmi2016snrneg}.
There exist exact algorithms for computing the generalized or fractional 
hypertree width 
\cite{moll2012}; clearly, 
they require exponential time even if the optimal width
is bounded by some fixed $k$.

CQs are the most basic and arguably the most important class
of queries
in the database world.
Likewise, CSPs constitute one of the most fundamental classes of problems
in Artificial Intelligence. 
Formally, CQs and CSPs are the same problem and correspond to first-order 
formulae using $\{\exists,\wedge\}$ but disallowing $\{\forall, \vee, \neg\}$ 
as 
connectives, that need to be evaluated over a set of finite relations:  the 
{\em 
database relations} for CQs, and the {\em constraint relations} for CSPs.
In practice, CQs have often fewer conjuncts (query atoms) and larger relations, 
while CSPs have more conjuncts but smaller relations.
These problems are well-known to be \np-complete
\cite{DBLP:conf/stoc/ChandraM77}. 
Consequently, there has been an intensive search for tractable fragments of CQs 
and/or CSPs
over the past decades. 
For our work, the approaches based on decomposing the 
structure of 
a given CQ or CSP are most relevant,~see~e.g.\
\cite{DBLP:conf/adbt/GyssensP82,%
DBLP:journals/ai/DechterP89,%
DBLP:conf/aaai/Freuder90,%
DBLP:journals/ai/GyssensJC94,%
DBLP:journals/jcss/KolaitisV00,%
DBLP:conf/stoc/GroheSS01,%
DBLP:conf/cp/DalmauKV02,%
DBLP:journals/tcs/ChekuriR00,%
2002gottlob,%
DBLP:conf/cp/ChenD05,%
DBLP:journals/jacm/Grohe07,%
DBLP:journals/jcss/CohenJG08,%
DBLP:journals/mst/Marx11,%
DBLP:journals/jacm/Marx13,%
DBLP:journals/siamcomp/AtseriasGM13,%
2014grohemarx}.
The underlying structure of both %
is nicely captured by 
hypergraphs. 
The hypergraph $H = (V(H), E(H))$ underlying a CQ (or a CSP)  $Q$ has as vertex 
set $V(H)$ the set of variables occurring in $Q$; moreover, for every atom in 
$Q$, $E(H)$ contains a hyper\-edge consisting 
of all variables occurring in this atom.  
From now on, we shall mainly talk about hypergraphs with the understanding that 
all our results are equally applicable to CQs and~CSPs.

\nop{********************
All three notions of width generalize the standard notion of hypergraph 
acyclicity (more precisely, $\alpha$-acyclicity~\cite{}) in the sense that $H$ 
is acyclic iff 
$\ghw(H) = \hw(H) = \fhw(H) = 1$. 
Clearly, FHDs generalize GHDs, which in turn generalize HDs. Hence, 
$\fhw(H) \leq \ghw(H) \leq \hw(H)$ holds for every hypergraph $H$. 
Most importantly, 
CQ answering (resp.\ CSP solving) becomes tractable over the class of CQs 
(resp.\ of CSP instances) if any of the three width measures $\fhw(),$ 
$\ghw()$, 
and $\hw()$ is bounded.
The class of hypergraphs with bounded fractional hypertree width constitutes 
the 
most general class of hypergraphs known to date
that guarantees tractable CQ answering and CSP solving, respectively.
However, FHDs have a major drawback: it is not known if it is possible to 
recognize efficiently if a given hypergraph $H$ has $\fhw(H) \leq k$ for fixed 
$k > 0$ and, in the positive case, to 
compute efficiently an FHD of width $\leq k$. It was mentioned as a big open 
question in \cite{DBLP:conf/soda/GroheM06} to settle the tractability status of 
this problem. In this paper, we answer this question by proving 
\np-completeness 
even for $k = 2$.

A tree decomposition~\cite{} of a hypergraph $H$ is a  tree of bags  $\left< T, 
(B_u)_{u\in V(T)}\right>$ fulfilling Conditions (1) and (2) above, but not 
necessarily any of the other conditions. The treewidth of such a decomposition 
is given by $\max_{u\in V(T)} (|B_u|-1)$. The treewidth $tw(H)$ of $H$ is the 
minimum width over all tree decompositions of $H$.

Hypertree width is thus the most general notion of width known to date that 
allows for efficient CQ answering and CSP solving {\em and\/} at the same time 
allows for
efficient decision if the width is $\leq k$ for fixed $k$. It is known that 
$\hw(H)$ can be bigger than $\ghw(H)$ at most by a factor of 3 
\cite{DBLP:journals/ejc/AdlerGG07}. On the other hand, the ratio between 
$\ghw(H)$ and $\fhw(H)$ can become arbitrarily big \cite{2014grohemarx}.
Given the 
intractability of recognizing low $\ghw()$ or $\fhw()$, it is therefore 
important to identify 
realistic restrictions on the hypergraphs that allow for an efficient 
computation (or at least for an efficient approximation) of $\ghw(H)$ and 
$\fhw(H)$. In the second part of this paper, we will prove novel positive 
results for computing (or at least approximating) $\ghw(H)$ and $\fhw(H)$.
********************}

\medskip

\noindent
{\bf Main Results.}    
First of all,  we have investigated  the above mentioned open problem 
concerning 
the recognizability of $\fhw \leq k$ for fixed $k$. Our initial hope was to 
find 
a simple adaptation of the \np-hardness proof in \cite{2009gottlob}
for recognizing $\ghw(H) \leq k$, for $k\geq 3$. Unfortunately, this proof 
dramatically fails for the fractional case. In fact,  the hypergraph-gadgets in 
that proof are such that both  \lq\lq{}yes\rq\rq{} and \lq\lq{}no\rq\rq{} 
instances may yield the same $\fhw$. However, via crucial modifications, 
including 
the introduction of novel gadgets, we succeed to construct a reduction from 
3SAT that allows us to control the $\fhw$ of the resulting  hypergraphs
such that those hypergraphs arising from  \lq\lq{}yes\rq\rq{} 3SAT instances 
have  $\fhw(H)=2$ and those 
arising from  \lq\lq{}no\rq\rq{} instances have $\fhw(H)>2$. Surprisingly, 
thanks to our new gadgets, the resulting proof is 
actually significantly simpler than the \np-hardness proof for recognizing 
$\ghw(H) \leq k$ in \cite{2009gottlob}. We thus obtain the following result: 

\begin{myintro}
\noindent{\bf Main Result 1:} Deciding $\fhw(H) \leq 2$ for hypergraphs $H$ is 
\np-complete and, therefore, 
\rec{FHD,\,$k$} is  intractable
even for $k = 2$.
\end{myintro} 

\noindent
This  result can be extended to the \np-hardness of recognizing
$\fhw(H) \leq k$ for arbitrarily large $k$. Moreover, the same 
construction can be used to prove that recognizing ghw $\leq 2$ is 
also \np-hard, thus killing two birds with one stone. 
\nop{**************************
We thus  finally close 
the gap (for $k=2$)  left open regarding the complexity of deciding 
$\ghw(H)\leq 
k$, which has been
mentioned as an open problem since 2001 (PODS version 
of~\cite{gottlob2003robbers}), 
and which was solved for $k\geq 3$ in~\cite{2009gottlob} :
**************************} %

\begin{myintro}
\noindent{\bf Main Result 2:} Deciding $\ghw(H) \leq 2$ for hypergraphs $H$ is 
\np-complete and, therefore, 
\rec{GHD,\,$2$} is  intractable
even for $k = 2$.
\end{myintro}

The Main Results 1 and 2 are presented in Section \ref{sect:hardness}. 
These results close some smouldering open problems with bad news. We thus 
further 
concentrate on Research Challenges 3 and 4 in order to obtain some positive results 
for 
restricted hypergraph classes. 

We first study GHDs, where we succeed to identify very 
general, realistic, 
and 
non-trivial restrictions that make the \rec{GHD,\,$k$} problem tractable. 
These results are based on new insights about
the differences of GHDs and HDs and the introduction of a novel 
technique for expanding a hypergraph $H$ to an edge-augmented 
hypergraph $H'$ s.t.\ the width $k$ GHDs of $H$ correspond %
to the width 
$k$ 
HDs of $H'$. The crux here is to find restrictions under which only a 
polynomial number of edges needs to be added to $H$ to obtain $H'$. The HDs of 
$H'$ can then be computed in polynomial time. 

In particular, we concentrate on the {\em bounded intersection property 
(BIP)}, which, for a class $\classC$ of hypergraphs  requires that for some 
constant $i$, for each pair of distinct edges $e_1$ and $e_2$ of each 
hypergraph 
$H\in {\classC}$,
$|e_1\cap e_2|\leq i$, and its generalization, the {\em bounded 
multi-intersection property (BMIP)}, which, informally, requires that for some 
constant $c$ any intersection of $c$ distinct hyperedges of $H$ has at most $i$ 
elements for some constant $i$. In 
\cite{pods/FischlGLP19}
we report on tests with a large 
number of known CQ and CSP benchmarks and it turns out that a very large number 
of instances coming from real-life applications enjoy the BIP and a yet more 
overwhelming number enjoys the BMIP for very low constants $c$ and $i$. We 
obtain the following good news, which are presented in Section \ref{sect:ghd}. 

\begin{myintro}
\noindent{\bf Main Result 3:} For classes of hypergraphs fulfilling the BIP or 
BMIP, for every constant $k$,  the problem \rec{GHD,\,$k$} is tractable. 
Tractability holds even 
for classes~$\classC$ 
of 
hypergraphs where for some constant $c$ all intersections of $c$ distinct edges 
of every $H\in{\classC}$ of size $n$ have $\calO(\log n)$ elements.
Our complexity analysis reveals that 
the problem 
 \rec{GHD,\,$k$} 
is %
fixed-parameter tractable
w.r.t.\ 
parameter
$i$ 
of the BIP.
\end{myintro}

The tractability proofs for BIP and BMIP do not directly carry over to  FHDs. 
We thus consider the degree $d$ of  a hypergraph $H = (V(H), E(H))$, 
which is defined as the 
maximum number of hyperedges in which a vertex occurs, i.e., 
$d = \max_{v \in V(H)} |\{ e \in E(H) \mid v \in E(H)\}|$.
We say that a class $\classC$ of hypergraphs  has the {\em bounded degree property (BDP)\/}, 
if there exists $d \geq 1$, 
such that every hypergraph $H\in {\classC}$ has degree $\leq d$.
We 
obtain the following result, which is  presented in Section~\ref{sect:fhd-exact}. 

\begin{myintro}
\noindent{\bf Main Result 4:} 
For classes of hypergraphs fulfilling the BDP and
for every constant $k$,  the problem \rec{FHD,\,$k$} is tractable. 
\end{myintro} 

To get yet bigger tractable classes, we also consider approximations of an optimal FHD. 
Towards this goal, we study the $\fhw$ in case of the BIP and we establish an 
interesting connection between the BMIP and the 
Vapnik--Chervonenkis dimension (VC-dimension) of hypergraphs. 
Our research, presented in Section~\ref{sect:fhd} is summarized as 
follows.

\begin{myintro}
\noindent{\bf Main Result 5:} For rather general, realistic, and non-trivial 
hypergraph restrictions, there exist {\sc Ptime} algorithms that, for 
hypergraphs $H$ with $\fhw(H)=k$, where $k$ is a constant,  produce FHDs whose 
widths are significantly smaller than the best previously known 
approximation. 
In particular, the BIP allows us to compute in polynomial time 
an FHD whose width  is $\leq k + \epsilon$
for arbitrarily chosen constant $\epsilon > 0$. 
The  BMIP  or bounded VC-dimension 
allow us to compute in polynomial time an FHD whose width  is $\calO(k \log k)$. 
\end{myintro} 

We finally turn our attention also to the optimization problem of fractional hypertree width, i.e., given a hypergraph $H$, 
determine $\fhw(H)$ and find an FHD of width $\fhw(H)$. All our algorithms for the \rec{FHD,\,$k$} problem 
have a runtime exponential in the desired width $k$. Hence, even with the restrictions 
to the BIP or BMIP we cannot expect 
an efficient approximation of $\fhw$ if $\fhw$ can become arbitrarily large. We will therefore study the following 
\boundedopt\ problem
for constant $K \geq 1$: 

\medskip
\noindent
\boundedopt\\
\begin{tabular}{lrl}
 \bf input &  \multicolumn{2}{l}{hypergraph $H = (V,E)$;} \\
 \bf output & if $\fhw(H) \leq K$: & find an FHD $\mcF$ of $H$ with minimum width; \\
 & otherwise: & answer ``$\fhw(H) > K$''.
\end{tabular}

\medskip
\noindent
For this bounded version of the optimization problem, we  will prove the following result: 

\begin{myintro}
\noindent{\bf Main Result 6:} There exists a polynomial time approximation scheme (PTAS; for details see Section \ref{sect:fhd}) 
for the \boundedopt\ problem in case of the BIP for any fixed $K\geq 1$.  
\end{myintro}

\section{Preliminaries}
\label{sect:prelim}

\subsection{Hypergraphs}

A {\em hypergraph} is a pair $H = (V(H), E(H))$, consisting of a set $V(H)$ of 
{\em vertices} and a set $E(H)$ of {\em hyperedges} (or, simply {\em edges\/}), 
which are non-empty 
subsets of $V(H)$. We assume that hypergraphs do not have isolated 
vertices, i.e.\ for each $v \in V(H)$, there is at least one edge $e \in E(H)$, 
s.t.\ $v \in e$. For a set $C \subseteq V(H)$, we define $\edges(C) = \{ e \in 
E(H) \mid e \cap C \neq \emptyset \}$ and for a set $S \subseteq E(H)$, we 
define $\V(S) = \{ v \in e \mid e \in S \}$. 
Actually, 
for a set $S$ of edges, it is convenient to write $\bigcup S$ (and $\bigcap S$, respectively)  to denote the set of vertices obtained by taking the union
(or the intersection, respectively) of the edges in $E$. Hence, we can write $\V(S)$ simply as $\bigcup S$.

For a hypergraph $H$ and a set $C \subseteq V(H)$, we say that a pair of 
vertices
$v_1,v_2 \in V(H)$ is \adjacent{$C$} if there exists an edge $e \in E(H)$ such
that $\{ v_1,v_2 \} \subseteq (e \setminus C)$. A \path{$C$} $\pi$ from $v$ 
to $v'$ consists of a sequence $v = v_0,\dots,v_h = v'$ of vertices and a 
sequence of edges $e_0, \dots, e_{h-1}$ ($h \geq 0$) such that $\{ v_i, v_{i+1} 
\} \subseteq ( e_i \setminus C)$, for each $i \in [0\ldots h-1]$. 
We denote by
$\vertices(\pi)$ the set of vertices  occurring 
in the sequence $v_0,\ldots, v_h$. 
Likewise, we denote by
$\edges(\pi)$ the set of edges occurring 
in the sequence $e_0,\ldots,e_{h-1}$. 
A set $W \subseteq
V(H)$ of vertices is \connected{$C$} if $\forall v,v' \in W$ there is a 
\path{$C$} from $v$ to $v'$. A \comp{$C$} is a maximal \connected{$C$}, 
non-empty
set of vertices $W \subseteq V(H)\setminus C$. 

\subsection{(Fractional) Edge Covers}
Let $H = (V(H),E(H))$ be a hypergraph 
and consider (edge-weight) functions $\lambda \colon E(H) \ra \{0,1\}$ and
$\gamma \colon E(H) \ra [0,1]$. 
For $\theta \in \{\lambda, \gamma\}$, 
we denote by $B(\theta)$ the set of all 
vertices {\em covered\/} by $\theta$:
\[ B(\theta) = \left\{ v\in V(H) \mid \sum_{e\in E(H), v\in e} \theta(e) \geq 1 
\right\}\]
The weight of 
such a 
function 
$\theta$ is defined as
\[ \weight(\theta) = \sum_{e \in E(H)} \theta(e). \]
Following \cite{2002gottlob}, we will sometimes consider
$\lambda$ as a set with $\lambda \subseteq E(H)$ 
(i.e., the set of edges $e$ with $\lambda(e) = 1$)
and the weight of $\lambda$ as the cardinality of this set. %
However, for the  sake of a 
uniform treatment with function $\gamma$,  we shall %
prefer 
to treat $\lambda$ as a 
function.

\begin{definition}
An {\em edge cover} of a hypergraph $H$ %
is a function 
$\lambda : E(H) \ra \{0,1\}$ 
such that $V(H) = B(\lambda)$. The {\em edge cover number} $\rho(H)$
is the  minimum weight of all edge covers of $H$.
\end{definition}

Note that the edge cover number
can be calculated by the following integer linear 
program (ILP).
 \[
 \begin{aligned}
  \text{minimize: } & \sum_{e\in E(H)} \lambda(e) & \\
  \text{subject to: } & \quad\sum_{\mathclap{e \in E(H), v \in e}} \;\lambda(e) 
                        \geq 1, & & \text{for all } v \in V(H)\\
                      & \lambda(e) \in \{0,1\} & & \text{for all } e \in E(H)
 \end{aligned}
 \]
By substituting all $\lambda(e)$ by $\gamma(e)$ and by relaxing the last condition of the ILP  above
to $\gamma(e) \geq 0$, 
we arrive
at the linear program (LP) for computing the fractional edge cover number
to be defined next.
Note that even though our weight function is defined
to take values between 0 and 1, we do not need to add $\gamma(e) \leq 1$ as a 
constraint,
because implicitly by the minimization itself the weight on an edge for
an edge cover is never greater than 1.
Also note that now the program above
is an LP, which (in contrast to an ILP) can be solved in \ptime even if  $k$ is not fixed.

\begin{definition}
A {\em fractional edge cover} of a hypergraph $H=(V(H),E(H))$ is a 
function 
$\gamma : E(H) \ra [0,1]$  
such that $V(H) = B(\gamma)$. The {\em fractional edge cover number} $\rho^*(H)$ of 
$H$ %
is the 
 minimum weight of all fractional edge covers of $H$.
We write $\cov(\gamma)$ to denote the {\em support\/} of 
$\gamma$, i.e., 
$\cov(\gamma) := \{ e\in E(H) \mid \gamma(e) > 0\}$.
\end{definition}

Clearly, we have $\rho^*(H) \leq \rho(H)$ for every hypergraph $H$, and 
$\rho^*(H)$ can %
be much smaller than $\rho(H)$. However, below we give 
an example,  which is important for 
our proof of Theorem \ref{thm:npcomp} and where $\rho^*(H)$ and $\rho(H)$ 
coincide.

\begin{lemma}
\label{lem:cliquewidth}
    Let $K_{2n}$ be a clique of size $2n$. Then the equalities $\rho(K_{2n}) = \rho^*(K_{2n}) 
    = n$ hold.
\end{lemma}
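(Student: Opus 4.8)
The plan is to prove both equalities by sandwiching, using the trivial inequality $\rho^*(H) \le \rho(H)$ (valid for every hypergraph, as noted just above the lemma). Regard $K_{2n}$ as a hypergraph with $|V(K_{2n})| = 2n$ and with one hyperedge $\{u,v\}$ for each pair of distinct vertices; in particular every edge has exactly two vertices. It then suffices to establish $\rho(K_{2n}) \le n$ and $\rho^*(K_{2n}) \ge n$: together with $\rho^*(K_{2n}) \le \rho(K_{2n})$ these squeeze $n \le \rho^*(K_{2n}) \le \rho(K_{2n}) \le n$, forcing both numbers to equal $n$.

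For $\rho(K_{2n}) \le n$ I would exhibit a concrete integral edge cover of weight $n$. Partition $V(K_{2n})$ into $n$ pairwise disjoint two-element sets, and let $\lambda$ be the indicator of the $n$ corresponding hyperedges (a perfect matching of the clique, which exists precisely because $2n$ is even). Every vertex lies in exactly one selected edge, so $B(\lambda) = V(K_{2n})$ and $\weight(\lambda) = n$, whence $\rho(K_{2n}) \le n$.

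For $\rho^*(K_{2n}) \ge n$ I would use a double-counting (averaging) argument. Let $\gamma\colon E(K_{2n}) \ra [0,1]$ be an arbitrary fractional edge cover. Summing the covering constraint $\sum_{e \ni v}\gamma(e) \ge 1$ over all $2n$ vertices $v$ and interchanging the order of summation gives $2n \le \sum_{v \in V(K_{2n})} \sum_{e \ni v}\gamma(e) = \sum_{e \in E(K_{2n})} |e|\,\gamma(e) = 2\,\weight(\gamma)$, where the last step uses $|e| = 2$ for every edge. Hence $\weight(\gamma) \ge n$, and taking the infimum over all fractional edge covers yields $\rho^*(K_{2n}) \ge n$.

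There is no genuine obstacle here; the only points needing a line of care are (a) making the exchange of summation precise — each edge $e$ is counted once for each of its two endpoints — and (b) noting that the lower-bound argument uses only $2$-uniformity of $K_{2n}$, whereas the matching in the upper bound relies on $2n$ being even. Both equalities then follow at once.
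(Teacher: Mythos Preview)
Your proof is correct and is essentially the same as the paper's: both establish $\rho(K_{2n})\le n$ via a perfect matching and $\rho^*(K_{2n})\ge n$ via the double-counting observation that the total vertex weight is at least $2n$ while each edge contributes to at most two vertices, then sandwich using $\rho^*\le\rho$. Your write-up is simply more explicit about the summation interchange.
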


\begin{proof}
Since we have to cover each vertex with weight $\geq 1$, the total 
weight on the 
vertices of the graph is $\geq 2n$. As the weight of each edge adds to the 
weight of 
at most 2 vertices, we need at least weight $n$ on the edges to achieve $\geq 
2n$ 
weight on the vertices.
On the other hand,  we can use $n$ edges each with weight 1 to cover $2n$ 
vertices. Hence, in total, we get
$n \leq \rho^*(K_{2n}) \leq \rho(K_{2n}) \leq  n$.
  \end{proof}

\subsection{HDs, GHDs, and FHDs}

We now define three types of hypergraph decompositions:

\begin{definition}
 \label{def:GHD}
  A {\em generalized hypertree decomposition\/} (GHD) of a hypergraph 
$H=(V(H),E(H))$ 
is a tuple 
$\left< T, (B_u)_{u\in N(T)}, 
(\lambda_u)_{u\in N(T)} \right>$, such that 
$T = \left< N(T),E(T)\right>$ is a rooted tree and 
the 
following conditions hold:

\begin{enumerate}[label=\emph{(\arabic*})]
 \item for each $e \in E(H)$, there is a node $u \in N(T)$ with $e \subseteq 
B_u$;
 \item for each $v \in V(H)$, the set $\{u \in N(T) \mid v \in B_u\}$ is 
connected 
in $T$;
 \item for each $u\in N(T)$, $\lambda_u$ is a function $\lambda_u \colon E(H) 
\ra 
\{0,1\}$
with 
$B_u \subseteq  B(\lambda_u)$.
\end{enumerate}
\end{definition}

Let us clarify some notational conventions used throughout this paper.
To avoid confusion, we will consequently refer to the 
elements in 
$V(H)$ as {\em vertices\/} (of the hypergraph) and to the elements in $N(T)$ as 
the {\em nodes\/}
of $T$ (of the decomposition). 
Now consider a decomposition $\mcG$ with tree structure~$T$. 
For a node $u$ in $T$, 
we write $T_u$ to denote the subtree of $T$ rooted at $u$.
By slight abuse of notation, we will often write $u' \in T_u$ to denote
that $u'$ is a node in the subtree $T_u$ of $T$.
Moreover, we define $\VTu 
:= \bigcup_{u' \in T_u} B_{u'}$
and, for a set $V' 
\subseteq V(H)$, we define $\nodes(V') = 
\{ u \in T \mid B_u \cap V' \neq \emptyset \}$.
If we want to make explicit the decomposition $\mcG$, 
we also write $\nodes(V', \mcG)$ synonymously with $\nodes(V')$.
By further overloading the $\nodes$ operator, we also write 
$\nodes(T_u)$ or $\nodes(T_u, \mcG)$ to denote the 
nodes in a subtree $T_u$ of $T$, i.e.,  $\nodes(T_u) = \nodes(T_u,\mcG) =\{ v \mid v \in T_u \}$.

\begin{definition}
 \label{def:HD}
 A {\em hypertree decomposition\/} (HD) of a hypergraph 
$H=(V(H),E(H))$  is a GHD, which in addition also 
satisfies the following condition:
\begin{enumerate}[label=\emph{(\arabic*})]
 \item[(4)] for each $u\in N(T)$, $ V(T_u) \cap B(\lambda_u) \subseteq B_u$ 
\end{enumerate}
\end{definition}

\begin{definition}
 \label{def:FHD}
 A 
{\em fractional hypertree decomposition\/} (FHD) 
\cite{2014grohemarx}
of a hypergraph 
$H=(V(H),E(H))$ is a tuple 
$\left< T, (B_u)_{u\in N(T)}, (\gamma)_{u\in N(T)} \right>$, where
conditions (1) and (2) of Definition~\ref{def:GHD} plus condition 
(3') hold:
\begin{enumerate}[label=\emph{(\arabic*})]
 \item[(3')] for each $u\in N(T)$, $\gamma_u$ is a function $\gamma_u : E(H) \ra 
[0,1]$
with $B_u \subseteq  B(\gamma_u)$.
\end{enumerate}
\end{definition}

The width of a GHD, HD, or FHD is the maximum weight of the functions 
$\lambda_u$ or $\gamma_u$, 
respectively, over all nodes $u$ in $T$. Moreover, the generalized hypertree 
width,
hypertree width, and fractional hypertree width of $H$ (denoted $\ghw(H)$, 
$\hw(H)$, $\fhw(H)$) is the minimum width over all GHDs, HDs, and FHDs of $H$, 
respectively.
Condition~(2) is called the ``connectedness condition'', and condition~(4) is 
referred to as ``special condition'' \cite{2002gottlob}. The set $B_u$ is often 
referred to as the 
``bag'' at node $u$. 
Note 
that, 
strictly speaking, only HDs require that the underlying tree $T$ be rooted. 
For the sake of a uniform treatment we assume that also the tree underlying a 
GHD or an FHD is rooted (with the understanding that the root is arbitrarily 
chosen).

We now recall two fundamental properties
of the various notions of decompositions and width.

\begin{lemma}
  \label{lem:subhypergraph}
  Let $H$ be a hypergraph and let $H'$ be a vertex induced subhypergraph of $H$, then 
$\hw(H') \leq \hw(H)$,   
$\ghw(H') \leq \ghw(H)$,
and $\fhw(H') \leq \fhw(H)$ hold. 
\end{lemma}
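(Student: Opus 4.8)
The plan is to take an arbitrary decomposition of $H$ and show that, by restricting every bag to the vertex set $V(H')$ of the induced subhypergraph, one obtains a valid decomposition of $H'$ of no greater width. Recall that an induced subhypergraph $H'$ of $H$ has $V(H') \subseteq V(H)$ and $E(H') = \{ e \cap V(H') \mid e \in E(H), e \cap V(H') \neq \emptyset \}$. So let $\left< T, (B_u)_{u\in N(T)}, (\theta_u)_{u\in N(T)} \right>$ be a GHD, HD, or FHD of $H$ of width $w$ (where $\theta_u$ is $\lambda_u$ or $\gamma_u$ as appropriate). Define $B'_u := B_u \cap V(H')$ for each node $u$, keep the same tree $T$, and define $\theta'_u$ by restricting the weight function from $E(H)$ to $E(H')$: for $e' = e \cap V(H') \in E(H')$, set $\theta'_u(e') := \sum\{\theta_u(e) \mid e \in E(H), e \cap V(H') = e'\}$ (or, more simply in the $\lambda$ case, take the edges of $H'$ induced by the edges chosen in $\lambda_u$). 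I claim $\left< T, (B'_u)_{u\in N(T)}, (\theta'_u)_{u\in N(T)} \right>$ is a decomposition of $H'$ of the same kind and of width $\leq w$.

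The verification is a routine check of the defining conditions. Condition (1): for each $e' \in E(H')$ pick $e \in E(H)$ with $e \cap V(H') = e'$; there is a node $u$ with $e \subseteq B_u$, hence $e' = e \cap V(H') \subseteq B_u \cap V(H') = B'_u$. Condition (2): for $v \in V(H') \subseteq V(H)$ the set $\{u \mid v \in B'_u\} = \{u \mid v \in B_u\}$ since $v \in B'_u \iff v \in B_u \cap V(H') \iff v \in B_u$; this set is connected in $T$ by condition (2) for the original decomposition. Conditions (3)/(3'): one checks $B'_u \subseteq B(\theta'_u)$ viewed inside $H'$; indeed if $v \in B'_u$ then $v \in B_u \subseteq B(\theta_u)$, so $\sum_{e \ni v, e \in E(H)} \theta_u(e) \geq 1$, and grouping the edges of $H$ by their trace on $V(H')$ this sum equals $\sum_{e' \ni v, e' \in E(H')} \theta'_u(e') \geq 1$. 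The width bound follows because $\weight(\theta'_u) = \sum_{e' \in E(H')} \theta'_u(e') = \sum_{e \in E(H), e \cap V(H') \neq \emptyset} \theta_u(e) \leq \sum_{e \in E(H)} \theta_u(e) = \weight(\theta_u) \leq w$. For HDs one additionally needs the special condition (4): since $V(T_u)$ computed for the new decomposition is $\bigcup_{u' \in T_u} B'_{u'} = \left(\bigcup_{u' \in T_u} B_{u'}\right) \cap V(H') = V(T_u) \cap V(H')$, and $B(\theta'_u) \subseteq B(\theta_u)$ (within the ambient vertex set, intersected with $V(H')$), condition (4) for $H$ restricted to $V(H')$ yields condition (4) for $H'$.

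The only genuinely delicate point — and the one I would flag as the main thing to get right rather than an obstacle per se — is the bookkeeping around edges of $H$ that collapse to the same edge of $H'$ (or that vanish entirely because $e \cap V(H') = \emptyset$), and the corresponding definition of $\theta'_u$ via summation. Once one fixes the convention that $E(H')$ consists of the nonempty traces $e \cap V(H')$ and that weights are pushed forward by summing over fibers, every condition transfers mechanically. An alternative, even lighter, route avoids defining $H'$ via traces and instead treats $H'$ as having the same edges but with $V(H')$ as vertex set, in which case one simply keeps $\theta_u$ unchanged and only has to observe $B'_u \subseteq B(\theta_u) \cap V(H') = B_{H'}(\theta_u)$; the width does not even increase since $\weight$ is unchanged. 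Either way the argument is short, and the lemma holds uniformly for all three width notions because the restriction operation respects conditions (1), (2), (3)/(3'), and (4) simultaneously.
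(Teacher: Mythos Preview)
Your proof is correct and follows the standard restriction argument. The paper itself does not prove this lemma; it merely states it as one of two ``fundamental properties'' being recalled, so there is no proof in the paper to compare against. Your push-forward of weights and the alternative lighter route you mention (keeping $\theta_u$ unchanged and only restricting bags) are both valid; the one minor wrinkle worth noting is that your summation definition of $\theta'_u$ can in principle exceed $1$ when several edges of $H$ collapse to the same trace, which conflicts with the codomain $[0,1]$ stipulated in the paper's Definition~2.2, but capping at $1$ (or using your alternative route) immediately fixes this without affecting any of the conditions or the width bound.
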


\begin{lemma}
\label{lem:clique}
Let $H$ be a hypergraph. If $H$ has a subhypergraph $H'$ such that 
$H'$ is a 
clique, then every HD, GHD, or FHD of $H$ has a node $u$ 
such 
that $V(H') \subseteq B_u$.
\end{lemma}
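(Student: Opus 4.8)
The plan is to prove Lemma~\ref{lem:clique} by a standard ``all cliques live in a single bag'' argument, analogous to the well-known fact for tree decompositions that any clique of a graph must be contained in some bag. First I would observe that an HD and an FHD are both special cases of the notion obtained by dropping condition~(3)/(3') entirely, i.e.\ a \emph{tree decomposition} of $H$ satisfying conditions (1) and (2); hence it suffices to prove the statement for an arbitrary tree $\left< T, (B_u)_{u \in N(T)} \right>$ satisfying (1) and (2), and the three cases (HD, GHD, FHD) follow at once. So from now on let $H' $ be a subhypergraph of $H$ that is a clique, meaning every pair of distinct vertices of $V(H')$ lies together in some edge of $H'$ (and hence of $H$), and let $\left< T, (B_u)_{u\in N(T)}, \dots \right>$ be any decomposition of $H$ of the relevant type.

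The key step is the following ``Helly-type'' claim: the family of subtrees $\{ \nodes(\{v\}, T) \mid v \in V(H') \}$ of $T$ pairwise intersect, and therefore have a common node. Pairwise intersection is immediate: for any two distinct $v, v' \in V(H')$ there is an edge $e \in E(H')\subseteq E(H)$ with $\{v,v'\}\subseteq e$, so by condition~(1) there is a node $u$ with $e \subseteq B_u$, whence $v,v' \in B_u$ and $u \in \nodes(\{v\},T) \cap \nodes(\{v'\},T)$. Each of these sets is, by condition~(2), a connected subtree of $T$. Now I invoke the Helly property of subtrees of a tree: a finite family of pairwise-intersecting subtrees of a tree has a nonempty common intersection. (If one prefers to avoid citing this, it can be proved directly by induction on $|V(H')|$: pick any leaf-most node $u_0$ of the subtree $S := \nodes(\{v_0\},T)$ for a fixed $v_0$; every other subtree $\nodes(\{v\},T)$ meets $S$, and because $T_{u_0}$ separates the rest of $T$ from part of $S$, one shows that each such subtree must already contain $u_0$ or be pushed towards it — the cleanest route is the classical ``leaf'' argument on $T$.) Let $u^{\ast}$ be a node in the common intersection; then $v \in B_{u^{\ast}}$ for every $v \in V(H')$, i.e.\ $V(H') \subseteq B_{u^{\ast}}$, which is exactly the conclusion.

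The only point requiring a little care — and what I expect to be the main obstacle — is making the Helly step fully rigorous without simply quoting it, since the paper has not set up that machinery; the induction on $|V(H')|$ needs the right invariant so that ``shrinking towards a leaf of $T$'' does not lose the common node. A slick way to sidestep the full Helly theorem is: orient each edge of $T$ and, for each vertex $v \in V(H')$, since $\nodes(\{v\},T)$ is connected, it has a unique node closest to the root; let $u_v$ be the one among $\{u_v : v \in V(H')\}$ that is deepest in $T$ (farthest from the root, breaking ties arbitrarily). Then for every $v'$, the subtree $\nodes(\{v'\},T)$ meets $\nodes(\{v\},T)$ for the witness $v$ giving $u_v = $ (this deepest node), and an easy case analysis on the position of $u_{v'}$ relative to $u_v$ forces $u_v \in \nodes(\{v'\},T)$; hence $V(H') \subseteq B_{u_v}$. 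Everything else is routine, and the three cases HD, GHD, FHD are uniform because none of the conditions (3), (3'), (4) is used.
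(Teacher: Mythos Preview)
Your proposal is correct and is precisely the standard argument the paper has in mind: the paper does not give a proof at all but simply remarks that Lemma~\ref{lem:clique} ``is a well-known property of tree decompositions -- independently of the $\lambda$- or $\gamma$-label,'' which is exactly your reduction to conditions~(1) and~(2) followed by the Helly property for subtrees. Your ``deepest topmost node'' argument is a clean way to make the Helly step self-contained; the only spot to tighten is the case analysis you flag: once $u_v$ is chosen with maximal depth, any $u_{v'}$ incomparable to $u_v$ would force $\nodes(\{v'\},T)\subseteq T_{u_{v'}}$ and $\nodes(\{v\},T)\subseteq T_{u_v}$ to be disjoint, contradicting pairwise intersection, so $u_{v'}$ must be an ancestor of $u_v$ and connectedness then gives $u_v\in\nodes(\{v'\},T)$.
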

\noindent
Strictly speaking, Lemma~\ref{lem:clique} is a well-known property of tree 
decompositions -- independently of the $\lambda$- or $\gamma$-label.

\nop{***************************************
\medskip\noindent
Last, we define the notion of {\em full} nodes. Intuitively, a node $u$ is called full in a decomposition if it is not possible to add to the bag $B_u$
a new vertex $v$ without increasing the width of the decomposition.
  
\begin{definition}
   Let $\mcF= \left< T, (B_u)_{u\in T}, (\gamma_u)_{u\in T} \right>$ be 
   an FHD of $H$ of width $\leq k$,
   then a node $u$ in $T$ is said to be {\em full in $\mcF$\/} 
   (or simply {\em full\/}, if $\mcF$ is understood from the context),
   if for any vertex 
   $v \in V(H) \setminus B_u$
   it is the case that 
   $
   \rho^*(B(\gamma_u) \cup v) > k$.
\end{definition}
****************************************}

\section{NP-Hardness}
\label{sect:hardness}

The main result in this section is the \np-hardness
of \rec{{\it decomp\/},\,$k$} with  
{\it decomp\/} $\in \{$GHD, FHD$\}$ and $k = 2$.
At the core of the \np-hardness proof is the construction of a hypergraph 
$H$ with certain properties. The gadget in Figure~\ref{fig:gadgetH0} will play 
an integral part of this construction. 

\newcommand{\lemGadgetH}{%
Let $M_1$, $M_2$ be disjoint sets and $M=M_1\cup M_2$. Let $H = (V(H),E(H))$ 
be a hypergraph and $H_0 =$ $(V_0, E_A \cup E_B \cup E_C)$ a subhypergraph of 
$H$ with $V_0=\{a_1,a_2,b_1,b_2,c_1,c_2,d_1,d_2\} \cup M$ and
  \begin{align*}
   E_A = \{ &\{a_1,b_1\} \cup M_1, \{ a_2, b_2 \} \cup M_2, 
         \{a_1,b_2\},\{a_2,b_1\}, \{a_1, a_2\} \} \\ 
   E_B = \{ &\{b_1,c_1\} \cup M_1, \{ b_2, c_2 \} \cup M_2, \{b_1,c_2\},
                   \{b_2,c_1\}, \{ b_1,b_2\}, \{c_1,c_2\} \} \\ 
   E_C = \{ & \{c_1,d_1\} \cup M_1, \{ c_2, d_2 \} \cup M_2, 
           \{c_1,d_2\},
                   \{c_2,d_1\}, \{ d_1,d_2\} \}  
  \end{align*}
\noindent  
where no element from the set $R = \{ a_2, b_1, b_2, c_1, c_2, d_1, d_2 \}$ 
occurs in any edge of $E(H) \setminus (E_A\cup E_B\cup E_C)$.
\noindent  
Then, every FHD $\mcF =\left<T,(B_u)_{u\in T},(\gamma_u)_{u\in T}\right>$ of 
width $\leq 2$ of H has nodes $u_A, u_B, u_C$ s.t.:
\begin{itemize} 
 \item $\{a_1,a_2,b_1,b_2\} \subseteq B_{u_A} \subseteq M \cup \{a_1,a_2,b_1,b_2\}$
 \item $B_{u_B} = \{b_1,b_2,c_1,c_2\} \cup M$,
 \item $\{c_1,c_2,d_1,d_2\} \subseteq B_{u_C} \subseteq M \cup \{c_1,c_2,d_1,d_2\}$, and
 \item $u_B$ is on the path from $u_A$ to $u_C$.%
\end{itemize}
}

\begin{figure}[t]
    \centering

\tikzset{
    between/.style args={#1 and #2}{
         at = ($(#1)!0.5!(#2)$)
    }
}

\begin{tikzpicture}[
   vert/.style={fill, circle, inner sep = 1pt},
   ell/.style={ellipse,draw,minimum width=2.5cm, inner sep=0cm}]
   \node[vert,label=above:$a_1$] (a1) {};
   \node[vert,label=below:$a_2$, below=0.8 of a1] (a2) {};
   \node[vert,label=above:$b_1$, right=2 of a1] (b1) {};
   \node[vert,label=below:$b_2$, right=2 of a2] (b2) {};
   \node[vert,label=above:$c_1$, right=2 of b1] (c1) {};
   \node[vert,label=below:$c_2$, right=2 of b2] (c2) {};
   \node[vert,label=above:$d_1$, right=2 of c1] (d1) {};
   \node[vert,label=below:$d_2$, right=2 of c2] (d2) {};

   \draw (a1) -- (a2);
   
   \node[ell, between=a1 and b1] {$M_1$};
   \draw (a1) -- (b2);
   \draw (a2) -- (b1);

   \node[ell, between=a2 and b2] {$M_2$};
   \draw (b1) -- (b2);
   
    \draw (c1) -- (c2);
    \node[ell, between=b1 and c1] {$M_1$};
   \draw (c1) -- (b2);
   \draw (c2) -- (b1);
   \node[ell, between=b2 and c2] {$M_2$};

   \draw (d1) -- (d2);
   \node[ell, between=c1 and d1] {$M_1$};
   \draw (c1) -- (d2);
   \draw (c2) -- (d1);
   \node[ell, between=c2 and d2] {$M_2$};
\end{tikzpicture}

    \caption{Basic structure of $H_0$ in Lemma~\ref{lem:gadgetH0}} 
    \label{fig:gadgetH0}
\end{figure}

\begin{lemma}
\label{lem:gadgetH0} 
\lemGadgetH
\end{lemma}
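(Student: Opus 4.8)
The plan is to exploit three hidden $4$-vertex cliques of $H_0$, the near-rigidity of weight-$2$ fractional edge covers on such cliques, and the connectedness condition. To begin, observe that $K^A:=\{a_1,a_2,b_1,b_2\}$, $K^B:=\{b_1,b_2,c_1,c_2\}$ and $K^C:=\{c_1,c_2,d_1,d_2\}$ each induce a clique in $H$: for instance every pair inside $K^B$ is contained in one of the six edges of $E_B$, and similarly $K^A$ is a clique via $E_A$ together with $\{b_1,b_2\}\in E_B$, and $K^C$ via $E_C$ together with $\{c_1,c_2\}\in E_B$. By Lemma~\ref{lem:clique} there are therefore nodes $u_A,u_B',u_C$ of $T$ with $K^A\subseteq B_{u_A}$, $K^B\subseteq B_{u_B'}$ and $K^C\subseteq B_{u_C}$.

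Next I would establish a rigidity fact: if $u$ is \emph{any} node whose bag contains one of these $4$-cliques $K$, then $B_u\subseteq K\cup M$. Indeed, every edge of $H$ meets $K$ in at most two vertices — this is immediate for the edges of $H_0$, and for the remaining edges it follows from the hypothesis that no vertex of $R$ occurs in them — and $\gamma_u$ is a fractional edge cover of weight $\le 2$ with $K\subseteq B_u\subseteq B(\gamma_u)$, so summing the covering inequalities over the four vertices of $K$ yields
\[
4\ \le\ \sum_{v\in K}\ \sum_{e\ni v}\gamma_u(e)\ =\ \sum_{e\in E(H)}\gamma_u(e)\cdot|e\cap K|\ \le\ 2\,\weight(\gamma_u)\ \le\ 4,
\]
so equality holds throughout. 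Hence $\cov(\gamma_u)$ consists only of edges meeting $K$ in exactly two vertices; for $K=K^B$ these are precisely the six edges of $E_B$ (again using $b_1,b_2,c_1,c_2\in R$), whose union is $K^B\cup M$, and the analogous statement holds for $K^A$ and $K^C$. Therefore $B_u\subseteq\V(\cov(\gamma_u))\subseteq K\cup M$. (As a side effect $\weight(\gamma_u)=2$, reproving $\rho^*=2$ for a $4$-clique; cf.\ Lemma~\ref{lem:cliquewidth}.)

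Then I would build the spine. Let $u_B$ be the median of $u_A,u_B',u_C$ in $T$, i.e.\ the unique node lying on all three of the pairwise $T$-paths. Since $\{b_1,b_2\}\subseteq B_{u_A}\cap B_{u_B'}$ and $u_B$ lies on the path from $u_A$ to $u_B'$, the connectedness condition gives $\{b_1,b_2\}\subseteq B_{u_B}$; likewise $\{c_1,c_2\}\subseteq B_{u_B}$ since $u_B$ lies on the path from $u_B'$ to $u_C$. Thus $K^B\subseteq B_{u_B}$ and $u_B$ is on the path from $u_A$ to $u_C$, and the rigidity fact applied to $u_B$ gives $B_{u_B}\subseteq K^B\cup M$. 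In particular $u_B\ne u_A$ and $u_B\ne u_C$, because $a_1\in K^A\subseteq B_{u_A}$ but $a_1\notin K^B\cup M$, and $d_1\in K^C\subseteq B_{u_C}$ but $d_1\notin K^B\cup M$. Hence deleting $u_B$ disconnects $u_A$ from $u_C$; let $T_A$ and $T_C$ be the connected components of $T\setminus\{u_B\}$ containing $u_A$ and $u_C$, respectively (so $T_A\ne T_C$).

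It remains to prove $M\subseteq B_{u_B}$, which I expect to be the main obstacle: the other conditions do not by themselves push $M$ into $u_B$ (for example the cover $\{b_1,b_2\}\mapsto 1$, $\{c_1,c_2\}\mapsto 1$ covers $K^B$ with weight $2$ but avoids $M$), so the argument must combine the bag bound $B_{u_B}\subseteq K^B\cup M$ with the placement of the ``rung'' edges. Fix $m\in M_1$ (the case $M_1=\emptyset$ being vacuous, and $M_2$ being symmetric). By condition~(1) there are nodes $z_A$ with $\{a_1,b_1\}\cup M_1\subseteq B_{z_A}$ and $z_C$ with $\{c_1,d_1\}\cup M_1\subseteq B_{z_C}$. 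Since $a_1\in B_{z_A}\cap B_{u_A}$ while $a_1\notin B_{u_B}$, the set of nodes containing $a_1$ — connected by condition~(2) — avoids $u_B$ and contains $u_A\in T_A$, hence lies entirely in $T_A$, so $z_A\in T_A$; symmetrically, using $d_1\notin B_{u_B}$, we get $z_C\in T_C$. As $T_A\ne T_C$, the path in $T$ from $z_A$ to $z_C$ passes through $u_B$, and since $m\in B_{z_A}\cap B_{z_C}$, connectedness forces $m\in B_{u_B}$. Since $m\in M_1$ was arbitrary, $M_1\subseteq B_{u_B}$; running the same argument with $\{a_2,b_2\}\cup M_2$ and $\{c_2,d_2\}\cup M_2$ (now using $a_2,d_2\notin B_{u_B}$) gives $M_2\subseteq B_{u_B}$. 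Hence $K^B\cup M\subseteq B_{u_B}$, and $u_A,u_B,u_C$ satisfy all four requirements of the lemma.
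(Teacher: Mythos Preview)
Your proof is correct and follows a genuinely different route from the paper's. The paper first fixes nodes $u_A,u_B,u_C$ covering the three $4$-cliques, argues by case analysis that each is ``full'' (no vertex from $\{a_1,\dots,d_2\}\setminus K$ can be added without pushing the cover weight above $2$), then uses fullness repeatedly to show $u_B$ lies between $u_A$ and $u_C$; for $M\subseteq B_{u_B}$ it builds the subtrees $T'_a=\nodes(\{a_1,a_2\},\mcF)$ and $T'_d=\nodes(\{d_1,d_2\},\mcF)$, shows (via the edges $\{a_1,a_2\},\{d_1,d_2\}$ and fullness) that they are connected and disjoint with $u_B$ on the path between them, and concludes by connectedness.

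Your argument replaces the somewhat ad hoc ``fullness'' notion by a single sharp fact --- the double-counting identity $\sum_e\gamma_u(e)\,|e\cap K|=4$ forces $\cov(\gamma_u)$ to lie among the edges meeting $K$ in exactly two vertices, hence $B_u\subseteq K\cup M$. This is stronger than fullness (it bounds the whole bag, not just excludes specific vertices) and avoids all case distinctions. Taking the median of the three clique-nodes then gives the spine immediately, and your $M$-argument is more direct: you locate the two rung edges $\{a_1,b_1\}\cup M_1$ and $\{c_1,d_1\}\cup M_1$ in distinct components of $T\setminus\{u_B\}$ using only that $a_1,d_1\notin K^B\cup M\supseteq B_{u_B}$. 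The paper's approach has one compensating advantage: the ``full'' terminology is reused throughout the subsequent Claims A--I of the \np-hardness proof, so it is setting up vocabulary for later. Your counting argument, while cleaner here, would need to be invoked afresh at each later use. One minor imprecision: for $K=K^A$ (and symmetrically $K^C$) the edges meeting $K$ in exactly two vertices are $E_A\cup\{\{b_1,b_2\}\}$, not ``the edges of $E_A$'' alone --- but their union is still $K^A\cup M$, so your conclusion stands.
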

\begin{proof}
Consider an arbitrary FHD 
$\mcF = \left<T,(B_u)_{u\in T},(\gamma_u)_{u\in T}\right>$ of 
width $\leq 2$ of H. 
      Observe that $a_1, a_2, b_1$, and $b_2$ form a clique of size 4. Hence, by
      Lemma~\ref{lem:clique}, there is a node $u_A$ in $\mcF$, such that
      $\{a_1,a_2,b_1,b_2\} \subseteq B_{u_A}$. 
      It remains to show that also $B_{u_A} \subseteq M \cup \{a_1,a_2,b_1,b_2\}$ holds.
      To this end, we use a similar reasoning as in the proof of Lemma~\ref{lem:cliquewidth}: 
      to cover each vertex in $\{a_1,a_2,b_1,b_2\}$, we have to put weight $\geq 1$ on each of these 4 vertices. 
      By assumption, the only edges containing 2 out of these 4 vertices are the edges in $E_A \cup \{ \{b_1,b_2\} \}$. 
      All other edges in $E(H)$ contain at most 1 out of these 4 vertices. Hence, in order to cover  
      $\{a_1,a_2,b_1,b_2\}$ with weight $\leq 2$, we are only allowed to put non-zero weight on the edges in 
      $E_A \cup \{ \{b_1,b_2\} \}$. It follows, that $B_{u_A} \subseteq M \cup \{a_1,a_2,b_1,b_2\}$ indeed holds.

      Analogously, for the cliques $b_1, b_2, c_1, c_2$ and $c_1, c_2, d_1, d_2$, 
      there must exist nodes $u_B$ and $u_C$ in $\mcF$ with 
    $\{b_1,b_2,c_1,c_2\}  \subseteq  B_{u_B} \subseteq M \cup \{b_1,b_2,c_1,c_2\}$ and
    $\{c_1,c_2,d_1,d_2\} \subseteq B_{u_C} \subseteq M \cup \{c_1,c_2,d_1,d_2\}$. 
    
    It remains to show that $u_B$ is on the path from $u_A$ to $u_C$ and 
    $B_{u_B} = \{b_1,b_2,c_1,c_2\} \cup M$ holds.
We first show that $u_B$ is on the path between $u_A$ and $u_C$. 
Suppose to the contrary that it is not. We distinguish two cases. 
First, assume that $u_A$ is on the path between $u_B$ and $u_C$. 
Then, by 
connectedness, $\{ c_1, c_2 \} \subseteq B_{u_A}$, which contradicts the property 
$B_{u_A} \subseteq M \cup \{a_1,a_2,b_1,b_2\}$ shown above.
Second, assume $u_C$ is on the path between $u_A$ and $u_B$. In this case, 
we have $\{ b_1, b_2 \} \subseteq B_{u_C}$, which contradicts the property 
$B_{u_C} \subseteq M \cup \{c_1,c_2,d_1,d_2\}$ shown above.

We now show that also $B_{u_B} = \{b_1,b_2,c_1,c_2\} \cup M$ holds.
Since we have already established    
    $\{b_1,b_2,c_1,c_2\}  \subseteq  B_{u_B} \subseteq M \cup \{b_1,b_2,c_1,c_2\}$,
it suffices to show $M \subseteq B_{u_B}$.    
First, let $T'_a$ be the subgraph of $T$ induced by 
$\nodes(\{a_1,a_2\},\mcF)$
     and let $T'_d$ be the subgraph of $T$ induced by 
$\nodes(\{d_1,d_2\},\mcF)$.
     We show that each of the subgraphs $T'_a$ and $T'_d$ is connected
     (i.e., a subtree of $T$) and that the two subtrees are disjoint.
     The connectedness is immediate: by the connectedness condition, each of 
     $\nodes(\{a_1\},\mcF)$, $\nodes(\{a_2\},\mcF)$, 
     $\nodes(\{d_1\},\mcF)$, and $\nodes(\{d_2\},\mcF)$ is connected. 
     Moreover, since $H$ contains an edge $\{a_1,a_2\}$ (resp.\ $\{d_1,d_2\}$), 
     the two subtrees induced by
     $\nodes(\{a_1\},\mcF)$, $\nodes(\{a_2\},\mcF)$ (resp.\ 
     $\nodes(\{d_1\},\mcF)$, $\nodes(\{d_2\},\mcF)$) must be connected, hence
     $T'_a$ and $T'_d$ are subtrees of $T$.
     It remains to show that $T'_a$ and 
     $T'_d$ are disjoint. 
     
Suppose to the contrary that there exists a node $u$ 
which is both in $T'_a$ and in $T'_d$, i.e., 
$a_i, d_j \in B_u$ for some $i \in \{1,2\}$ and $j \in \{1,2\}$.
We claim that $u$ must be on the path between $u_A$ and $u_C$. Suppose
it is not. This means that either 
$u_A$ is on the path between $u$ and $u_C$ or $u_C$ is on the path between $u$ 
and $u_A$. 
In the first case, $B_{u_A}$ has to contain $d_j$ by the connectedness 
condition, which we have already ruled out above. In the second case,
$B_{u_C}$ has to  contain $a_i$, which we have also ruled out above. 
Hence, $u$ is indeed on the path between $u_A$ and $u_C$. 
We have already shown above that also $u_B$ is on the path between $u_A$ and 
$u_C$. Hence, there
are two cases depending on how $u$ and $u_B$ are arranged on the path between 
$u_A$ and $u_C$.
First, assume $u$ is on the path between   $u_A$ and $u_B$. In this case, 
$B_{u_B}$ also contains $d_j$, which we have ruled out above.
Second, assume $u$ is on the path  between $u_B$ and $u_C$. 
Then $B_{u_B}$ has to contain $a_i$, which we have also ruled out above.
Thus, there can be no node $u$ in $T$ with $a_i, d_j \in B_u$ for some 
$i$, $j$ and therefore the subtrees    $T'_a$ and    
$T'_d$ are 
     disjoint and connected by a path containing $u_B$. 
     
As every edge
     must be covered, there are nodes in 
     $T'_a$ that cover $\{a_1,b_1\} \cup M_1$ and $\{a_2,b_2\} \cup 
     M_2$, respectively. 
Hence, the subtree $T'_a$ covers $M = M_1 \cup M_2$, 
i.e., 
     $M \subseteq \bigcup_{u\in T'_a} B_u$. Likewise, 
     $T'_d$ covers $M$. Since both subtrees are
     disjoint and $u_B$ is on the path between them, by the connectedness 
     condition, we have $M \subseteq B_{u_B}$.
\end{proof}

\newcommand{\thmNpcomp}{%
The \rec{{\it decomp\/},\,$k$} problem is \np-complete for 
{\it decomp\/} $\in \{$GHD, FHD$\}$ and $k = 2$.%
}

\begin{theorem}
 \label{thm:npcomp}
\thmNpcomp 
\end{theorem}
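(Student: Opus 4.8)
The plan is to prove \textbf{NP-completeness} of \rec{{\it decomp\/},\,$k$} for {\it decomp\/} $\in \{$GHD, FHD$\}$ and $k=2$ by establishing membership in \np\ and then giving a polynomial-time reduction from 3SAT that works simultaneously for both decomposition types. Membership in \np\ is the easy direction: given a hypergraph $H$, one guesses a tree $T$ with $O(|V(H)|)$ nodes together with the bags $(B_u)$ and the (polynomially-sized, rational-valued) weight functions $\gamma_u$ (or $\lambda_u$), and checks in polynomial time that conditions (1), (2), (3$'$) (resp.\ (3)) hold and that each $\weight(\gamma_u)\le 2$; the bound on the number of nodes and on the bit-size of an optimal fractional edge cover follows from standard LP arguments, so this can be verified efficiently.

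For the hardness direction I would fix a 3SAT instance $\varphi$ with variables $x_1,\dots,x_n$ and clauses $C_1,\dots,C_m$ and build a hypergraph $H_\varphi$ whose gadgets are assembled from copies of the gadget $H_0$ of Lemma~\ref{lem:gadgetH0}. The key structural feature I want to exploit is that $H_0$ forces any width-$2$ (F)HD to contain three special nodes $u_A,u_B,u_C$ with prescribed large bags lying consecutively along a path, with $u_B$ in the middle carrying the ``switch'' set $M=M_1\cup M_2$; by instantiating $M_1,M_2$ with edges or vertices that encode a truth assignment I can make the relative position of a variable-gadget along the decomposition tree encode whether $x_i$ is set true or false, while clause-gadgets are attached so that a clause is satisfiable iff its gadget can be covered consistently with the chosen routing. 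The idea is that the chained cliques of size $4$ inside each $H_0$ (which, by Lemma~\ref{lem:cliquewidth} / Lemma~\ref{lem:clique}, must each sit inside a single bag and cost weight exactly $2$) leave no spare budget, so the only freedom left to a width-$2$ decomposition is the combinatorial choice of how the paths through the gadgets are laid out, i.e.\ the choice of a truth assignment. I would then prove the two implications: if $\varphi$ is satisfiable then a satisfying assignment yields a width-$2$ GHD of $H_\varphi$ (hence also a width-$2$ FHD, since $\fhw\le\ghw$), and conversely any width-$2$ FHD of $H_\varphi$ (in particular any width-$2$ GHD) induces, via the forced node structure from Lemma~\ref{lem:gadgetH0} applied to each gadget copy, a consistent truth assignment satisfying every clause. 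Because the ``yes'' direction is witnessed by an \emph{integral} decomposition and the ``no'' direction rules out even \emph{fractional} width $2$, the single construction simultaneously gives Main Result~1 ($\fhw(H)\le 2$ is \np-hard) and Main Result~2 ($\ghw(H)\le 2$ is \np-hard); the ``no'' direction should moreover give $\fhw(H_\varphi)>2$ with a concrete lower bound, and a routine padding argument then lifts hardness to every fixed $k\ge 2$.

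The main obstacle — and the place where the naive adaptation of the $\ghw\ge 3$ proof of \cite{2009gottlob} fails, as the introduction warns — is controlling the \emph{fractional} edge cover number: one must design the gadgets so that the ``no''-instance hypergraphs genuinely have $\fhw>2$ and not merely $\ghw>2$, since fractional covers can exploit overlapping edges to beat the integral bound. Concretely, the delicate part is ensuring that the only cheap way to cover each critical bag is via the size-$4$ clique argument (weight exactly $2$, with no fractional shortcut available) and that no clever fractional cover can simultaneously absorb a clique and a stray vertex $v$ with $\rho^*(B(\gamma_u)\cup v)>2$; this is exactly what the notion of a \emph{full} node and Lemma~\ref{lem:cliquewidth} are set up to handle, so I would route the argument through those, checking for each gadget that adding any ``foreign'' vertex to a forced bag pushes $\rho^*$ strictly above $2$. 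Verifying these $\rho^*$-inequalities for the assembled hypergraph, rather than for the isolated gadget $H_0$, is the technically heaviest step; once it is in place, the correspondence between satisfying assignments and width-$2$ decompositions is a matter of carefully tracking the connectedness condition (2) across the chained $H_0$-copies.
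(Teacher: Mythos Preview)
Your high-level strategy is right: reduce from 3SAT, exhibit a width-$2$ GHD in the satisfiable case, and rule out any width-$2$ FHD in the unsatisfiable case, so that one construction serves both problems. You also correctly identify Lemma~\ref{lem:gadgetH0} and the notion of \emph{full} nodes as the tools that pin down the shape of any width-$2$ decomposition.

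However, your proposal has a genuine gap at the level of the actual mechanism. You envision ``variable-gadgets'' and ``clause-gadgets'' built from many copies of $H_0$, with the truth value encoded by the \emph{routing} of each variable-gadget through the tree. This is essentially the architecture of the $\ghw\ge 3$ proof in \cite{2009gottlob}, and as the introduction warns, it does not control the fractional width. The paper's construction is structurally quite different: it uses exactly \emph{two} copies $H_0,H'_0$ of the gadget, placed at the two ends, and forces any width-$2$ decomposition to contain a single long path between them. Along this path, the bags must all contain a large ``control'' set $S\cup\{z_1,z_2\}$, and the truth assignment is encoded not by routing but by \emph{which} of the vertices $y_i,y'_i$ survive in the bags along this path.

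The key technical idea you are missing --- and which does the real work in ruling out fractional shortcuts --- is the notion of \emph{complementary edges} (Definition~\ref{def:complementary}) together with Lemmas~\ref{lem:compl_edge} and~\ref{lem:covering}. Every edge $e$ that covers $S\setminus S'$ for some $S'\subseteq S$ has a unique partner $e'$ covering exactly $S'$; to cover $S\cup\{z_1,z_2\}$ with total weight $\le 2$, one is forced to put \emph{equal} weight on each complementary pair (Lemma~\ref{lem:compl_edge}). This equality constraint, not the clique argument, is what kills fractional covers: it reduces the search for a fractional cover of each path-bag to a convex combination of a very small set of integral options, namely the pairs $(e^{k,0}_p,e^{k,1}_p)$ for $k\in\{1,2,3\}$ (Lemma~\ref{lem:covering}), each of which omits exactly one of $y_\ell,y'_\ell$ depending on the sign of the $k$-th literal of the $j$-th clause. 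The satisfying assignment is then extracted by a pigeonhole argument: the path is made long enough ($2n{+}3$ blocks) that two consecutive blocks must carry the same set $X_s\subseteq Y\cup Y'$, and the cover constraints at the nodes between them force $X_s$ to encode an assignment satisfying every clause. Your ``verify $\rho^*$ stays above $2$ when a foreign vertex is added'' programme is necessary for the endpoint gadgets, but it is not sufficient on its own for the interior of the construction; without the complementary-edge mechanism you have no handle on what a fractional cover of a generic path-bag can look like.
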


\newcommand{\lemComplEdge}{%
   Let $\mcF = \left< T, (B_u)_{u\in T}, (\gamma_u)_{u\in T} \right>$ be 
   an  FHD of width $\leq 2$ of the hypergraph
   $H$ constructed above. For every 
   node $u$ with $S \cup \{z_1, z_2\} \subseteq B_u$ and every pair $e, e'$ of
   complementary edges, it holds that $\gamma_u(e) = \gamma_u(e')$.
}

\newcommand{\lemCovering}{%
   Let $\mcF = \left< T, (B_u)_{u\in T}, (\gamma_u)_{u\in T} \right>$ be 
   an  FHD of width $\leq 2$ of the hypergraph
   $H$ constructed above and let $p \in [2n+3;m]^-$. For every 
   node $u$ with $S \cup A'_p \cup
   \overbar{A_p} \cup \{ z_1, z_2 \} \subseteq B_u$, 
the only way to cover $S 
   \cup A'_p \cup \overbar{A_p} \cup \{ z_1, z_2 \}$ 
by a fractional edge cover $\gamma$ of weight $\leq 2$ is 
by putting non-zero weight 
exclusively on edges 
   $e^{k,0}_p$ and $e^{k,1}_p$ with $k \in \{1,2,3\}$. 
Moreover, 
   $\sum_{k=1}^3 \gamma(e^{k,0}_p) = 1$ and $\sum_{k=1}^3 \gamma(e^{k,1}_p) = 1$  must hold.
}

\newcommand{\clmA}{
The nodes $u'_A,u'_B,u'_C$ (resp. $u_A,u_B,u_C$) are not 
on the path from $u_A$ to $u_C$ (resp. $u'_A$ to $u'_C$).
}

\newcommand{\clmB}{
The following equality holds:
$\nodes(A\cup A',\mcF) \cap \{u_A,u_B,u_C,u'_A,u'_B,u'_C\} = 
\emptyset$.

}

\newcommand{\clmC}{
The FHD $\mcF$ has a path containing nodes %
$\hat{u}_1, \dots, $ $\hat{u}_N$ for some $N$, such that the edges 
$e_{\min \ominus 1}, e_{\min}$, $e_{\min \oplus 1}$, \dots, $e_{\max \ominus 
1}$, $e_{\max}$
are covered in this order. More formally, there is a mapping $f: \{ 
\min\ominus 1,$ $\ldots, \max \} \ra \{1, \ldots, N\}$, s.t.
\begin{itemize}
 \item $\hat{u}_{f(p)}$ covers $e_p$ and
 \item if $p < p'$ then $f(p) \leq f(p')$.
\end{itemize}
By a {\em path containing nodes\/} $\hat{u}_1, \dots, \hat{u}_N$ we mean that 
$\hat{u}_1$ and $\hat{u}_N$ are nodes in $\mcF$, such that the nodes 
$\hat{u}_2, 
\dots, \hat{u}_{N-1}$ lie (in this order) on the path from $\hat{u}_1$ to  
$\hat{u}_N$. 
Of course, the path from $\hat{u}_1$ to  $\hat{u}_N$ may also contain further 
nodes, but we are not interested in whether they 
cover  any of the edges $e_p$.
}

\newcommand{\clmD}{
In the FHD $\mcF$ of $H$ of 
width $\leq 2$, the path from $u_A$ to $u'_A$ has non-empty intersection with
$\ppi$.
}

\newcommand{\clmE}{In the FHD $\mcF$ of $H$ of 
width $\leq 2$ there are two distinct nodes $\hat{u}$ and $\hat{u}'$ in the intersection 
of the path from $u_A$ to $u'_A$ with $\ppi$, s.t.\ $\hat{u}$ is the node in $\ppi$ closest to $u_A$
and $\hat{u}'$ is the node in $\ppi$ closest to $u'_A$. 
Then, on the path $\ppi$, $\hat{u}$ comes before $\hat{u}'$.
See Figure~\ref{fig:u-and-uprime} (a) for a graphical illustration of the arrangement of the 
nodes $\hat{u}_1$, $\hat{u}$, $\hat{u}'$, and 
$\hat{u}_N$
on the path $\ppi$.}

\newcommand{\clmF}{
In the FHD $\mcF$ of $H$ of 
width $\leq 2$ the path %
$\ppi$ has at least 3 nodes $\hat{u}_i$, i.e., $N \geq 
3$. 
}

\newcommand{\clmG}{In the FHD $\mcF$ of $H$ of 
width $\leq 2$ all the nodes $\hat{u}_2, \ldots, \hat{u}_{N-1}$ are on 
the path from $u_A$ to $u'_A$.
For the nodes $\hat{u}$ and $\hat{u}'$ from Claim E, this means that the nodes 
$\hat{u}_1, \hat{u},$ $\hat{u}_2,\hat{u}_{N-1}$, 
$ \hat{u}'$, $\hat{u}_N$ are arranged in precisely this order on the path $\ppi$
from $\hat{u}_1$ to $\hat{u}_N$, cf.\ Figure~\ref{fig:u-and-uprime} (b).
The node $\hat{u}$ may possibly coincide with $\hat{u}_1$    
and $\hat{u}'$ may possibly coincide with $\hat{u}_{N}$. 
}

\newcommand{\clmH}{
 Each of the nodes 
$\hat{u}_1, \dots, \hat{u}_N$ covers 
exactly one of the edges
$e_{\min \ominus 1}$, $e_{\min}$, $e_{\min \oplus 1}$, \dots, $e_{\max \ominus 
1}$, $e_{\max}$.
}

\newcommand{\clmI}{
The constructed truth assignment $\sigma$ %
is %
a model of $\varphi$.
}

\begin{proof}
The problem is clearly in \np: guess a tree decomposition and check in 
polynomial
 time for each node $u$ whether $\rho(B_u) \leq 2$ 
or $\rho^*(B_u) \leq 2$, respectively, holds.
The \np-hardness is proved by a reduction from 3SAT. 
Before presenting this reduction, we first 
introduce some 
useful notation.

\nop{*********************************
Below, we first describe 
the 
problem reduction, i.e., given an arbitrary instance $\varphi$ of 3SAT, we will 
construct 
a hypergraph $H$. For the correctness of this construction, we have to prove 
that 
$\ghw(H)\leq 2$ if and only if $\fhw(H)\leq 2$ if and only if $\varphi$ is 
satisfiable. We will show the two directions of these equivalences separately, 
i.e., 
on the one hand, we show 
$\ghw(H)\leq 2$ $\Rightarrow$ $\varphi$ is satisfiable and 
$\fhw(H)\leq 2$ $\Rightarrow$ $\varphi$ is satisfiable.
on the other hand, we show 
$\ghw(H)\leq 2$ $\Leftarrow$ $\varphi$ is satisfiable and 
$\fhw(H)\leq 2$ $\Leftarrow$ $\varphi$ is satisfiable;

It will be convenient to prove further lemmas before proving these implications.
Moreover, the proofs of the implications will be split into several claims to 
emphasize the structure of the proofs.
*********************************} %

\medskip
\noindent
{\bf Notation.}
\nop{*********************************
From this we  
construct 
a hypergraph $H = (V(H),E(H))$, i.e., 
an instance of \rec{{\it decomp\/},\,$k$} with 
{\it decomp\/} $\in \{$GHD, FHD$\}$ and $k = 2$.
*********************************} %
  For $i,j \geq 1$, we denote $\{1,\ldots,i\} \times \{1,\ldots,j\}$ by $[i;j]$.
  For each $p \in [i;j]$, we denote by $p \oplus 1$ ($p \ominus 1$) 
  the 
  successor (predecessor) of $p$ in the 
  usual lexicographic order on pairs, that is, the order $(1,1),\ldots,(1,j),$ 
   $(2,1),\ldots,(i,1)$, $\ldots, (i,j)$. We 
  refer to the first element $(1,1)$ 
as $\min$ and 
  to the last element $(i,j)$ as $\max$. 
  We denote by $[i;j]^-$ the set $[i;j]\setminus\{\max\}$, i.e.\ $[i;j]$ without
  the last element.
  
Now let $\varphi = \bigwedge_{j=1}^m 
  (L_j^1 
  \vee L_j^2 \vee L_j^3)$ be an arbitrary instance of  3SAT with $m$ clauses 
and 
variables 
  $x_1,\ldots,x_n$. 
From this we  will
construct 
a hypergraph $H = (V(H),E(H))$,
which consists of two copies $H_0, H'_0$ of the 
  (sub-)hypergraph $H_0$ of Lemma~\ref{lem:gadgetH0} plus additional edges
  connecting $H_0$ and $H'_0$. We use the sets $Y = \{ y_1, \ldots, y_n\}$ 
  and $Y' = \{ y'_1, \ldots, y'_n\}$ to encode the truth values of the 
variables 
of $\varphi$. 
  We  denote by $Y_l$ ($Y'_l$) the set $Y 
  \setminus \{y_l\}$ ($Y' \setminus \{ y'_l \}$). Furthermore, we use the sets 
  $A = \{ a_p \mid p \in [2n+3; m] \}$ and $A' = \{ a'_p \mid p \in 
  [2n+3;m]\}$, and
we define the following subsets of  $A$ and $A'$, respectively:
  \begin{align*}
     A_p &= \{ a_{\min},\ldots,a_p \} & 
  \overbar{A_p} &=  \{ a_{p},\ldots,a_{\max} \} \\
     A'_p &= \{ a'_{\min},\ldots,a'_p \} & 
  \overbar{A'_p} &= \{ a'_{p},\ldots,a'_{\max} \} 
  \end{align*}
  
  In addition,
  we will use another set $S$ of elements, that controls and restricts the ways
  in which edges are combined in a possible FHD or GHD. %
  Such a decomposition will have, implied by Lemma~\ref{lem:gadgetH0}, 
  two nodes $u_B$ and $u'_B$ 
  such that $S \subseteq 
  B_{u_B}$ and $S \subseteq B_{u'_B}$. From this, we will reason on the path 
  connecting $u_B$ and $u'_B$.
  
The concrete set $S$ used in our construction of  $H$ is obtained as follows.
Let $Q = [2n+3;m] \cup 
  \{(0,1),(0,0),(1,0)\}$, hence $Q$ is an extension of the set $[2n+3;m]$ with 
  special elements $(0,1),(0,0),(1,0)$. Then we define the set $S$ as 
  $ S = Q \times \{1,2,3\}.$ 

  The elements in $S$
  are pairs, which we denote as $(q \mid k)$. The values $q \in Q$ are themselves pairs of 
integers
  $(i,j)$. Intuitively, $q$ indicates the position of a node on the 
  ``long'' path $\pi$ in the desired FHD or GHD. The integer $k$ 
  refers to a literal in the $j$-th clause.
  We will write the wildcard $*$ to indicate that a component in some element 
of 
$S$
  can take an arbitrary value.  For example,
  $(\min \mid *)$ denotes the set of tuples 
  $(q \mid k)$
  where $q = \min = (1,1)$ and
  $k$ can take an arbitrary value in $\{1,2,3\}$.
  We will denote by $S_p$ the set $(p \mid *)$. For instance, 
  $(\min \mid *)$  will be denoted as $S_{\min}$. Further, for 
  $p \in [2n+3; m]$ and
  $k \in \{1,2,3\}$, we define singletons
  $S^{k}_p = \{ (p \mid k)\}$.    
  
\medskip
\noindent
{\bf Problem reduction.}
Let $\varphi = \bigwedge_{j=1}^m 
  (L_j^1 
  \vee L_j^2 \vee L_j^3)$ be an arbitrary instance of  3SAT with $m$ clauses 
and 
variables 
  $x_1,\ldots,x_n$. 
From this we 
construct a hypergraph $H = (V(H),E(H))$, 
that is,
an instance of \rec{{\it decomp\/},\,$k$} with 
{\it decomp\/} $\in  \{$GHD, FHD$\}$ and $k = 2$.

We start by defining the vertex set $V(H)$: 
  \begin{align*}
    V(H) = &\; S \;\cup\; A \;\cup\; A' \;\cup\;  Y \;\cup\; Y' 
\;\cup\; \{ 
z_1, 
z_2 \} \;\cup \\
    &\; \{ a_1, a_2, b_1, b_2, c_1, c_2, d_1, d_2, %
    a'_1, a'_2, b'_1, b'_2, c'_1, c'_2, d'_1, d'_2 \}.
  \end{align*}

The edges of $H$ are defined in 3 steps. First, we take two 
    copies of the subhypergraph $H_0$ used in Lemma~\ref{lem:gadgetH0}:
    
  \begin{itemize}
   \item Let $H_0 = (V_0, E_0)$ be the hypergraph of Lemma~\ref{lem:gadgetH0} 
   with $V_0=\{a_1,a_2,b_1,b_2$, $c_1,c_2,d_1,d_2\} \cup M_1 \cup M_2$ and 
   $E_0 = E_A \cup E_B \cup E_C$, where we set $M_1 = S \setminus 
   S_{(0,1)} \cup \{z_1\}$ and $M_2 = Y \cup 
   S_{(0,1)} \cup \{z_2\}$.
   \item Let $H'_0 = (V'_0, E'_0)$ be the corresponding hypergraph, 
         with $V'_0=\{a'_1, a'_2, b'_1,$ $b'_2, 
         c'_1, c'_2, d'_1, d'_2\}\cup M'_1 \cup M'_2$ and $E'_A, E'_B, E'_C$ 
are the primed versions of the 
         egde sets $M'_1 = S 
         \setminus S_{(1,0)} \cup \{z_1\}$ and $M'_2 = Y' \cup S_{(1,0)} \cup  
         \{z_2\}$.
   \end{itemize}
  
In the second step, we define the edges which (as we will see) enforce the 
existence of a ``long'' path~$\pi$
between the nodes covering $H_0$ and the nodes covering  $H'_0$ in any 
FHD of width $\leq 2$. 
  \begin{itemize}
   \item $e_{p} = A'_p \cup \overbar{A_p}$,
         for $p \in [2n+3;m]^-$,
   \item $e_{y_i} = \{ y_i, y'_i \}$, for $1 \leq i \leq n$,
   \item For $p = (i,j)  \in [2n+3;m]^-$ and $k \in \{1,2,3\}$:
     \begin{align*}
      e^{k,0}_p = & \begin{cases}
                    \overbar{A_p} \cup (S\setminus S^{k}_p) 
                       \cup Y  \cup \{z_1\} & \mbox{if } L^k_j = x_l \\
                    \overbar{A_p} \cup (S\setminus S^{k}_p)
                       \cup Y_l \cup \{z_1\} & \mbox{if } 
                       L^k_j = \neg x_l, 
                  \end{cases} \\
          e^{k,1}_p = & \begin{cases}
                    A'_p \cup S^{k}_p \cup 
                    Y'_l \cup \{ z_2\} & \mbox{if } L^k_j = x_l \\
                    A'_p \cup S^{k}_p \cup 
                    Y' \cup \{z_2\} & \mbox{if } L^k_j = \neg x_l. 
                  \end{cases}        
     \end{align*}
  \end{itemize}

Finally, we need edges that connect $H_0$ and $H'_0$ with the above edges covered 
by the nodes of the 
``long'' path $\pi$ in a GHD or FHD:
  
  \begin{itemize}
   \item $e^0_{(0,0)}= %
              \{ a_1 \} \cup A \cup S \setminus S_{(0,0)} \cup Y \cup \{ z_1\}
         $
   \item $e^1_{(0,0)} = S_{(0,0)} \cup Y' \cup \{ z_2\}$
   \item $e^0_{\max} = %
                S \setminus S_{\max} \cup Y
          \cup \{ z_1\}$
   \item $e^1_{\max} = \{a'_1 \} \cup A' \cup  S_{\max} \cup 
                Y' \cup \{z_2\}$
  \end{itemize}
  
This concludes the construction of the hypergraph $H$. Before we prove the correctness of the problem reduction, we give an example that will help to illustrate the intuition underlying this construction.

\begin{example}
\label{bsp:NPhardnessProof}
Suppose that an instance of 3SAT is given by the propositional 
formula $\varphi = (x_1 \vee \neg x_2 \vee x_3) \wedge 
(\neg x_1 \vee x_2 \vee \neg x_3)$, i.e.: we have $n= 3$ variables 
and $m = 2$ clauses. From this we construct a hypergraph 
$H = (V(H),E(H))$. 
First, we instantiate the sets $Q,A,A',S,Y$, and $Y'$ from our problem 
reduction.
  \begin{eqnarray*}
A & =  & \{ a_{(1,1)}, a_{(1,2)}, a_{(2,1)}, a_{(2,2)}, \dots, a_{(9,1)}, 
a_{(9,2)} \},
\\
A' & =  & \{ a'_{(1,1)}, a'_{(1,2)}, a'_{(2,1)}, a'_{(2,2)}, \dots, a'_{(9,1)}, 
a'_{(9,2)} \},
\\
Q &  = & \{(1,1), (1,2), (2,1), (2,2), \dots, (9,1), (9,2)\} \cup  \{(0,1),(0,0),(1,0)\}, 
\\
S & = &  Q \times \{1,2,3\}, %
\\
Y & =  & \{y_1, y_2, y_3\}, \\
Y' & = & \{y'_1, y'_2, y'_3\}.
  \end{eqnarray*}
According to our problem reduction, the set $V(H)$ of vertices of $H$ is 
  \begin{align*}
    V(H) = &\; S \;\cup\; A \;\cup\; A' \;\cup\; Y \;\cup\; Y' 
           \;\cup\; \{ z_1, z_2 \} \;\cup \\
    &\; \{ a_1, a_2, b_1, b_2, c_1, c_2, d_1, d_2 \} \cup \{ a'_1, a'_2, 
    b'_1, b'_2, c'_1, c'_2, d'_1, d'_2 \}.
  \end{align*}
The  set $E(H)$ of edges of $H$ is defined in several steps. First, the edges 
in 
$H_0$ and $H'_0$ are defined: We thus have the subsets 
$E_A,E_B,E_C,E'_A,$ $E'_B, E'_C \subseteq E(H)$, whose definition is based on the 
sets 
$M_1 = S \setminus S_{(0,1)} \cup \{z_1\}$,
$M_2 = Y \cup  S_{(0,1)} \cup \{z_2\}$,
$M'_1  =  S \setminus S_{(1,0)} \cup \{z_1\}$, 
and
$M'_2   =  Y' \cup S_{(1,0)} \cup \{z_2\}$.
The definition of the edges 
\begin{eqnarray*}
e_{p} & = & A'_p \cup \overbar{A_p} \hskip52pt \mbox{for } p \in \{(1,1), (1,2), \dots (8,1),(8,2),(9,1)\}, 
\\ 
e_{y_i} & = & \{ y_i, y'_i \} \hskip55pt \mbox{ for } 1 \leq i \leq 3,
\\
e^0_{(0,0)} & = & \{ a_1 \} \cup A \cup S \setminus S_{(0,0)} \cup Y \cup \{ 
z_1\} ,
\\
e^1_{(0,0)} & = & S_{(0,0)} \cup Y' \cup \{ z_2\}, 
\\
e^0_{(9,2)} & = & S \setminus S_{(9,2)} \cup Y \cup \{ z_1\}, \hskip2pt \mbox{ and}
\\
e^1_{(9,2)} & = & \{a'_1 \} \cup A' \cup  S_{(9,2)} \cup Y' \cup \{z_2\}
\end{eqnarray*}
is straightforward. We concentrate on the edges
$e^{k,0}_p$ and $e^{k,1}_p$ 
for $p \in \{(1,1), (1,2), \dots (8,1),(8,2)$, $(9,1)\}$ and $k \in \{1,2,3\}$.
These edges play the key role for covering the bags of the nodes 
along the ``long'' path $\pi$
in any FHD or GHD of $H$. This path can be thought of as being 
structured in 
9 blocks. Consider an arbitrary $i \in \{1, \dots, 9\}$.
Then $e^{k,0}_{(i,1)}$ and $e^{k,1}_{(i,1)}$ encode the $k$-th literal of the 
first clause
and $e^{k,0}_{(i,2)}$ and $e^{k,1}_{(i,2)}$ encode the $k$-th literal of the 
second clause 
(the latter is only defined for $i \leq 8$).
These edges are defined as follows:
the edges $e^{1,0}_{(i,1)}$ and $e^{1,1}_{(i,1)}$ encode the first literal of 
the first clause, i.e., 
the positive literal $x_1$. We thus have
  \begin{eqnarray*}
e^{1,0}_{(i,1)} &  = &  
\overbar{A_{(i,1)}} \cup (S\setminus S^{1}_{(i,1)}) \cup \{y_1,y_2,y_3\}  
\cup 
\{z_1\} \mbox{ and}
\\
e^{1,1}_{(i,1)}  &  = &  A'_{(i,1)} \cup S^{1}_{(i,1)} \cup 
\{y'_2,y'_3\} 
\cup 
\{ z_2\} 
  \end{eqnarray*}
The edges $e^{2,0}_{(i,1)}$ and $e^{2,1}_{(i,1)}$ encode the second literal of 
the first clause, i.e., 
the negative literal $\neg x_2$. Likewise, $e^{3,0}_{(i,1)}$ and 
$e^{3,1}_{(i,1)}$ encode the third literal of the first clause, i.e., 
the positive literal $x_3$. Hence,
  \begin{eqnarray*}
e^{2,0}_{(i,1)} &  = &  
\overbar{A_{(i,1)}} \cup (S\setminus S^{2}_{(i,1)}) \cup \{y_1,y_3\}  \cup 
\{z_1\}, 
\\
e^{2,1}_{(i,1)}  &  = &  A'_{(i,1)} \cup S^{2}_{(i,1)} \cup 
 \{y'_1,y'_2,y'_3\} 
\cup \{ z_2\} 
\\
e^{3,0}_{(i,1)} &  = &  
\overbar{A_{(i,1)}} \cup (S\setminus S^{3}_{(i,1)}) \cup \{y_1,y_2,y_3\}  
\cup 
\{z_1\}, 
\mbox{ and}
\\
e^{3,1}_{(i,1)}  &  = &  A'_{(i,1)} \cup S^{3}_{(i,1)} 
\cup \{y'_1,y'_2\} 
\cup   
\{ z_2\} 
  \end{eqnarray*}
Analogously, the 
edges $e^{1,0}_{(i,2)}$ and $e^{1,1}_{(i,2)}$ (encoding the 
first literal of the second clause, i.e., $\neg x_1$), 
the edges $e^{2,0}_{(i,2)}$ and $e^{2,1}_{(i,2)}$ (encoding the 
second literal of the second clause, i.e., $x_2$), and 
the edges $e^{3,0}_{(i,2)}$ and $e^{3,1}_{(i,2)}$ (encoding the 
third literal of the second clause, i.e., $\neg x_3$) are defined as follows:
  \begin{eqnarray*}
e^{1,0}_{(i,2)} &  = &  
\overbar{A_{(i,2)}} \cup (S\setminus S^{1}_{(i,2)}) \cup \{y_2,y_3\}  \cup 
\{z_1\}, \\
e^{1,1}_{(i,2)}  &  = &  A'_{(i,2)} \cup S^{1}_{(i,2)} \cup 
 \{y'_1,y'_2,y'_3\} 
\cup \{ z_2\}, 
\\ 
e^{2,0}_{(i,2)} &  = &  
\overbar{A_{(i,2)}} \cup (S\setminus S^{2}_{(i,2)}) \cup \{y_1,y_2,y_3\}  
\cup 
\{z_1\}, 
\\
e^{2,1}_{(i,2)}  &  = &  A'_{(i,2)} \cup S^{2}_{(i,2)} 
\cup \{y'_1,y'_3\} 
\cup 
\{ z_2\} 
\\
e^{3,0}_{(i,2)} &  = &  
\overbar{A_{(i,2)}} \cup (S\setminus S^{3}_{(i,2)}) \cup \{y_1,y_2\}  \cup 
\{z_1\}, 
\mbox{ and}
\\
e^{3,1}_{(i,2)}  &  = &  A'_{(i,2)} \cup S^{3}_{(i,2)} \cup 
\{y'_1,y'_2,y'_3\} 
\cup \{ z_2\}.
  \end{eqnarray*}
The crucial property of these pairs of edges 
$e^{k,0}_{(i,j)}$ and $e^{k,1}_{(i,j)}$ is that they together encode
the $k$-th literal of the $j$-th clause in the following way: 
if the literal is of the form $x_l$
(resp.\ of the form $\neg x_l$), 
then 
$e^{k,0}_{(i,j)} \cup e^{k,1}_{(i,j)}$ covers all of $Y \cup Y'$ except for 
$y'_l$ (resp.\ except for $y_l$).

Formula $\varphi$ in this example is clearly satisfiable, e.g., by the truth assignment $\sigma$ with 
$\sigma(x_1)=$ true and $\sigma(x_2) =\sigma(x_3)=$ false. Hence, for the 
problem
reduction to be correct, there must exist a GHD (and thus also an FHD) of width 
2 
of $H$. In Figure~\ref{fig:DecompPath}, the tree structure $T$ plus the 
bags $(B_t)_{t\in T}$ of such a GHD is displayed. Moreover, in   
Table~\ref{tab:np_decomp},
the precise definition of $B_u$ and $\lambda_u$ of every node $u\in T$ is given:
in the column labelled $B_u$, the set of vertices contained in $B_u$ for each node $u \in T$ is shown. 
In the column labelled $\lambda_u$, the two edges with weight 1 are shown. 
For the row with label $u_{p \in [2n+3;m]^-}$, the entry in the last column is 
$e^{k_p,0}_p, e^{k_p,1}_p$. By this we mean that, for every $p$, an appropriate value
$k_p \in \{1,2,3\}$ has to be determined. It will be explained below how to find an appropriate
value $k_p$ for each $p$.
The set $Z$ in the bags of this GHD is defined as 
$Z = \{y_i \mid \sigma(x_i) = $ true\,$\} \cup \{y'_i \mid \sigma(x_i) = $ 
false\,$\}$.
In this example, for the chosen truth assignment $\sigma$, 
we thus have $Z= \{y_1,y'_2,y'_3\}$.
The bags $B_t$ and the edge covers $\lambda_t$ for each $t\in T$ are explained 
below.

The nodes $u_C,u_B,u_A$ to cover the edges of the subhypergraph $H_0$ and the 
nodes
$u'_A,u'_B,u'_C$ to cover the edges of the subhypergraph $H'_0$ are clear by 
Lemma \ref{lem:gadgetH0}. The purpose of the nodes $u_{\min \ominus 1}$ and $u_{\max}$ is mainly to 
make sure that 
each edge $\{y_i,y'_i\}$ is covered by some bag. 
Recall that the set $Z$ contains exactly one of $y_i$ and $y'_i$ for every $i$. 
Hence, the node $u_{\min \ominus 1}$ (resp.\ $u_{\max}$) covers each edge $\{y_i,y'_i\}$, such that
$y'_i \in Z$ (resp.\ $y_i \in Z$). 

We now have a closer look at the nodes $u_{(1,1)}$ to $u_{(9,1)}$
on the ``long'' path $\pi$. More precisely, let us look at the nodes 
$u_{(i,1)}$ and $u_{(i,2)}$ for some $i \in \{1, \dots, 8\}$, i.e., the 
``$i$-th 
block''.
It will turn out that 
the bags at these nodes can be covered by edges from $H$ because $\varphi$ is 
satisfiable.
Indeed, our choice of $\lambda_{u_{(i,1)}}$ and $\lambda_{u_{(i,2)}}$ is guided 
by the literals 
satisfied by the truth assignment $\sigma$, namely: for $\lambda_{u_{(i,j)}}$, 
we have to choose
some $k_j$, such that the $k_j$-th literal in the $j$-th clause is true in 
$\sigma$. 
For instance, we may define 
$\lambda_{u_{(i,1)}}$ and $\lambda_{u_{(i,2)}}$ as follows: 
  \begin{align*}
\lambda_{u_{(i,1)}} &  =  \{ e^{1,0}_{(i,1)}, e^{1,1}_{(i,1)} \} &\lambda_{u_{(i,2)}} &  =  \{ e^{3,0}_{(i,2)}, e^{3,1}_{(i,2)} \} 
  \end{align*}
The covers $\lambda_{u_{(i,1)}}$ and $\lambda_{u_{(i,2)}}$ were chosen because 
the first literal of the first clause and the third literal of the second 
clause 
are true in $\sigma$. 
Now let us verify that
$\lambda_{u_{(i,1)}}$ and $\lambda_{u_{(i,2)}}$ are indeed covers
of $B_{u_{(i,1)}}$ and $B_{u_{(i,2)}}$, respectively.
By the definition of the 
edges $e^{k,0}_{(i,j)}, e^{k,1}_{(i,j)}$ for $j \in \{1,2\}$ and $k \in 
\{1,2,3\}$, it is 
immediate that 
$e^{k,0}_{(i,j)} \cup e^{k,1}_{(i,j)}$
covers 
$\overbar{A_{(i,j)}} \cup A'_{(i,j)} \cup S \cup \{z_1,z_2\}$.
The only non-trivial question is if $\lambda_{u_{(i,j)}}$ also covers $Z$.
Recall that by definition, 
$(e^{1,0}_{(i,1)} \cup e^{1,1}_{(i,1)}) \supseteq (Y \cup Y') \setminus 
\{y'_1\}$.
Our truth assignment $\sigma$ sets $\sigma(x_1) = $ true. Hence, by our 
definition of $Z$,
we have $y_1 \in Z$ and $y'_1 \not\in Z$. This means that 
$e^{1,0}_{(i,1)} \cup e^{1,1}_{(i,1)}$ indeed covers $Z$ and, hence, all of 
$B_{u_{(i,1)}}$. 
Note that we could have also chosen 
$\lambda_{u_{(i,1)}}   =   \{ e^{2,0}_{(i,1)}, e^{2,1}_{(i,1)} \}$, since 
also the second literal of the first clause (i.e., $\neg x_2$) is true in 
$\sigma$. In this case, 
we would have 
$(e^{2,0}_{(i,1)} \cup e^{2,1}_{(i,1)})
\supseteq (Y \cup Y') \setminus \{y_2\}$ and $Z$ indeed does not contain $y_2$.
Conversely, setting  
$\lambda_{u_{(i,1)}}   =   \{ e^{3,0}_{(i,1)}, e^{3,1}_{(i,1)} \}$ would fail, 
because
in this case, $y'_3 \not\in 
(e^{3,0}_{(i,1)} \cup e^{3,1}_{(i,1)})$ since $x_3$ occurs positively in the 
first clause. 
On the other hand, we have $y'_3 \in Z$  by definition of $Z$, 
because $\sigma(x_3) =$ false holds.

Checking that $\lambda_{u_{(i,2)}}$ as defined above covers $Z$ is done 
analogously. Note that in the second
clause, only the third literal is satisfied by $\sigma$. Hence, 
setting $\lambda_{u_{(i,2)}}   =   \{ e^{3,0}_{(i,2)}, $ $e^{3,1}_{(i,2)} \}$ 
is 
the 
only option to cover $B_{u_{(i,2)}}$ (in particular, to cover $Z$). 
Finally, note that $\sigma$ as defined above is not the only satisfying truth 
assignment of $\varphi$. For instance, we could have chosen 
$\sigma(x_1) = \sigma(x_2) = \sigma(x_3) =$ true. In this case, we would 
define $Z = \{ y_1,y_2,y_3\}$ and the covers $\lambda_{u_{(i,j)}}$ would have 
to 
be 
chosen according to an arbitrary choice of one literal per clause that is 
satisfied by 
this assignment $\sigma$.
\hfill$\Diamond$
\end{example}

\nop{****************************************
important observations. First, in every GHD and FHD of the hypergraph defined 
above, the weight on complementary edges must be equal in every node of the
decomposition and second, the only way to cover $S 
   \cup A'_p \cup \overbar{A_p} \cup \{z_1,z_2\}$ with weight $\leq 2$ 
is by 
   using only edges 
   $e^{k,0}_p$ and $e^{k,1}_p$ with $k \in \{1,2,3\}$ (see 
Lemma~\ref{lem:compl_edge} and~\ref{lem:covering} for details).
****************************************} %

  \begin{table*}
   \centering
   \begin{tabular}{|c|c|c|}
     \hline
     $u \in T$ & $B_u$ & $\lambda_u$ \\ 
     \hline 
     $u_C$ & $\{d_1, d_2, c_1, c_2\} \cup Y \cup S \cup \{z_1,z_2\}$ & 
$\{c_1,d_1\}\cup M_1$, 
                                                      $\{c_2,d_2\}\cup M_2$   \\
     $u_B$ & $\{c_1, c_2, b_1, b_2\} \cup Y \cup S \cup \{z_1,z_2\}$ & 
$\{b_1,c_1\}\cup M_1$, 
                                                      $\{b_2,c_2\}\cup M_2$  \\
     $u_A$ & $\{b_1, b_2, a_1, a_2\} \cup Y \cup S \cup \{z_1,z_2\}$ & 
$\{a_1,b_1\}\cup M_1$, 
                                                      $\{a_2,b_2\}\cup M_2$  \\
     $u_{\min\ominus 1}$ & $\{a_1 \} \cup A \cup Y \cup S \cup Z 
\cup \{z_1,z_2\}$ &
           $e^0_{(0,0)}, e^1_{(0,0)}$  \\
     $u_{p \in [2n+3;m]^-}$ & $A'_p \cup \overbar{A_p} \cup 
           S \cup Z \cup \{z_1,z_2\}$ & $e^{k_p,0}_p, e^{k_p,1}_p$  \\
     $u_{\max}$ & $\{ a'_1\} \cup A' \cup Y' \cup S \cup Z \cup \{z_1,z_2\}$ & 
           $e^0_{\max}, e^1_{\max}$  \\
     $u'_A$ & $\{a'_1, a'_2, b'_1, b'_2\} \cup Y' \cup S \cup \{z_1,z_2\}$ & 
$\{a'_1,b'_1\} \cup 
                                              M_1'$, $\{a'_2,b'_2\}\cup M'_2$  
\\
     $u'_B$ & $\{b'_1, b'_2, c'_1, c'_2\} \cup Y' \cup S \cup \{z_1,z_2\}$ & 
$\{b'_1,c'_1\} \cup 
                                              M_1'$, $\{b'_2,c'_2\}\cup M'_2$  
\\
     $u'_C$ & $\{c'_1, c'_2, d'_1, d'_2\} \cup Y' \cup S \cup \{z_1,z_2\}$ & 
$\{c'_1,d'_1\} \cup 
                                              M_1'$, $\{c'_2,d'_2\}\cup M'_2$  
\\
     \hline
   \end{tabular}
   \vspace{3mm}
   \caption{Definition of $B_u$ and $\lambda_u$ for GHD of $H$.}
   \label{tab:np_decomp} 
  \end{table*}
  \begin{figure*}
    \centering
    \includegraphics[width=\textwidth]{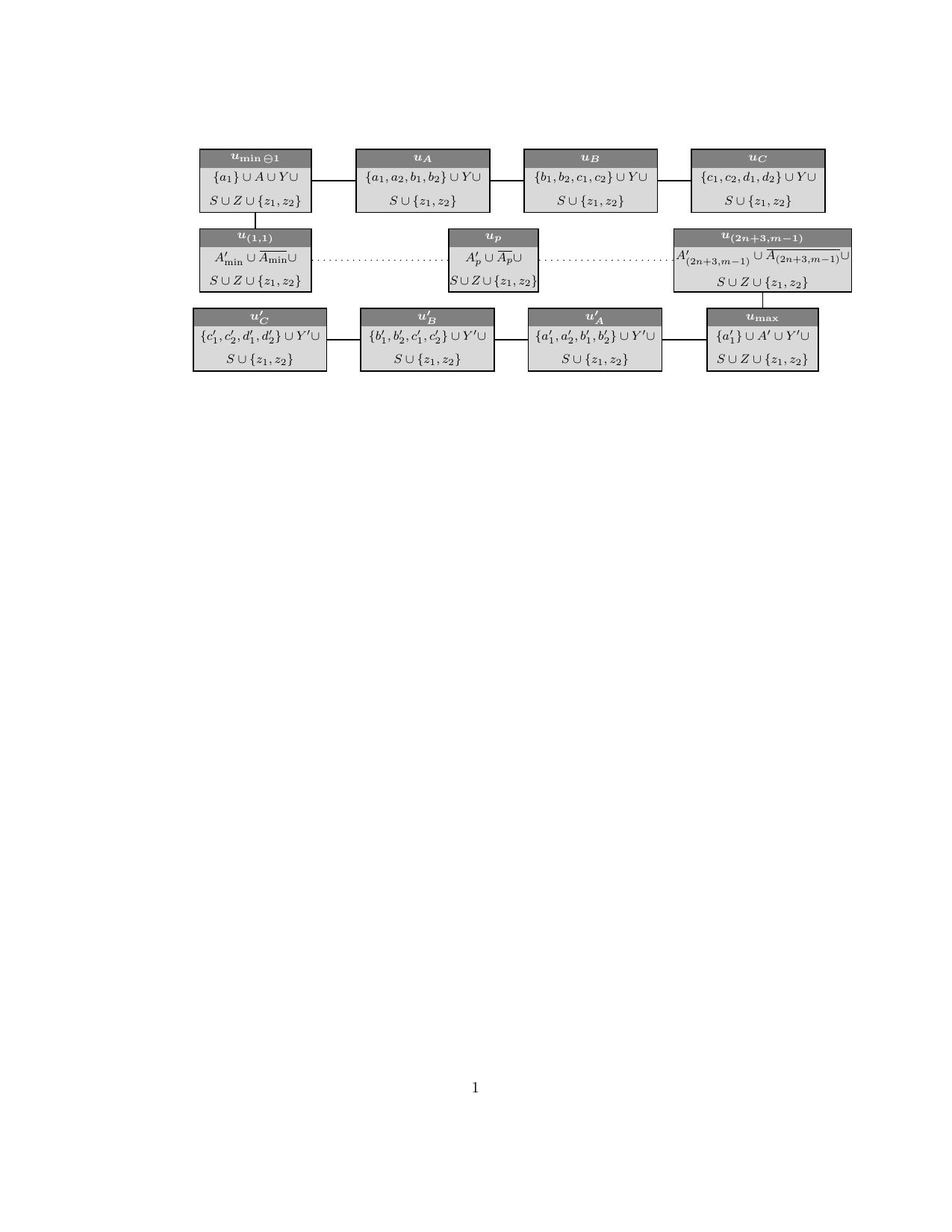}
    \caption{Intended path of the GHD of hypergraph $H$ in the proof of 
             Theorem~\ref{thm:npcomp}} 
    \label{fig:DecompPath}
  \end{figure*}

To prove the correctness of our problem reduction, we have to show the two equivalences: first, that 
$\ghw(H) \leq 2$ if and only if 
$\varphi$ is satisfiable and second, that
$\fhw(H) \leq 2$ 
if 
and only if $\varphi$ is satisfiable. We prove the two directions of these equivalences 
separately.

  \medskip
 \noindent
 {\bf Proof of the ``if''-direction.}
  First assume that $\varphi$ is satisfiable. It suffices to show that 
then 
  $H$ has a GHD of width $\leq 2$, because $\fhw(H) \leq \ghw(H)$ holds. 
  Let $\sigma$ be a 
  satisfying truth assignment. Let us fix for each $j \leq m$, some $k_j \in
  \{ 1,2,3 \}$ such that $\sigma(L^{k_j}_j) = 1$. By $l_j$, we denote the index 
of the variable in the literal $L^{k_j}_j$, that is, $L^{k_j}_j = x_{l_j}$ or 
$L^{k_j}_j = 
\neg x_{l_j}$. For $p=(i,j)$, let $k_p$ refer to $k_j$ and let $L^{k_p}_p$ 
refer to $L^{k_j}_j$. Finally, we define $Z$ as
  $Z = \{ y_i \mid \sigma(x_i) = 1 \} \cup \{ y'_i \mid \sigma(x_i) = 0 \}$.
  
  A GHD  $\mcG = \left< T, (B_u)_{u\in T}, (\lambda_u)_{u\in T} \right>$ of 
width 2 
for $H$ is 
  constructed as follows. $T$ is a path $u_C$, $u_B$, $u_A$, 
$u_{\min\ominus1}$, 
  $u_{\min}$,\dots, $u_{\max}$, $u'_A$, $u'_B$, $u'_C$. The 
construction is 
  illustrated in 
  Figure~\ref{fig:DecompPath}.
  The precise definition of $B_u$ and $\lambda_u$ is given in
  Table~\ref{tab:np_decomp}. Clearly, the GHD  has width $\leq 2$. We 
  now show that $\mcG$ is 
indeed a GHD of $H$:
  \begin{enumerate}
   \item[(1)] For each edge $e \in E$, there is a node $u \in T$, such that 
         $e \subseteq B_u$:
         \begin{itemize}
          \item $\forall e \in E_X : e \subseteq B_{u_X}$ for all $X \in 
\{A,B,C\}$, 
          \item $\forall e' \in E'_X : e' \subseteq B_{u'_X}$ for all $X \in 
\{A,B,C\}$, 
          \item $e_p \subseteq B_{u_p}$ for $p \in [2n+3;m]^-$,
          \item $e_{y_i} \subseteq B_{u_{\min\ominus 1}}$ 
                (if $y'_i \in Z$) or 
                $e_{y_i} \subseteq B_{u_{\max}}$ 
                (if $y_i \in Z$), respectively,
          \item $e^{k,0}_p \subseteq B_{u_{\min\ominus 1}}$ 
                for $p \in [2n+3;m]^-$,
          \item $e^{k,1}_p \subseteq B_{u_{\max}}$ 
                for $p \in [2n+3;m]^-$,
          \item $e^0_{(0,0)} \subseteq B_{u_{\min\ominus 1}}$, 
                $e^1_{(0,0)} \subseteq B_{u_{\max}}$,
          \item $e^0_{\max} \subseteq B_{u_{\min\ominus 1}}$ and 
                $e^1_{\max} \subseteq B_{u_{\max}}$.
         \end{itemize}
         All of the above inclusions can be 
         verified in 
         Table~\ref{tab:np_decomp}. 
        
   \item[(2)] For each vertex $v \in V$, the set $\{ u\in T \mid v \in 
         B_u \}$ induces a connected subtree of $T$, which %
         is easy to 
verify in Table~\ref{tab:np_decomp}.
         
   \item[(3)] For each $u \in T$, $B_u \subseteq B(\lambda_u)$:
         the only inclusion which cannot be easily verified in 
         Table~\ref{tab:np_decomp} is $B_{u_p} \subseteq 
         B(\lambda_{u_p})$. In fact, this is the 
         only place in the proof where we
         make use of the assumption that $\varphi$ is satisfiable.
         First, notice that the set $A'_p \cup \overbar{A_p} \cup
         S \cup \{z_1,z_2\}$ is clearly a subset of $B(\lambda_{u_p})$. It 
remains to
         show that $Z \subseteq B(\lambda_{u_p})$ holds for 
         arbitrary $p \in [2n+3;m]^-$. We show this property by a case distinction
         on the form of $L^{k_p}_p$.
         
         \smallskip
         
         Case (1): First, assume that $L^{k_p}_p = x_{l_j}$ holds. Then
         $\sigma(x_{l_j}) = 1$ and, therefore, $y'_{l_j} \not\in Z$. But, by
         definition of $e^{k_p,0}_p$ and $e^{k_p,1}_p$, vertex $y'_{l_j}$ is the 
only 
         element of $Y \cup Y'$ not contained in $B(\lambda_{u_p})$. 
         Since $Z \subseteq (Y \cup Y')$ and $y'_{l_j} \not\in Z$, we have that
         $Z \subseteq B(\lambda_{u_p})$. 
         
         \smallskip

         Case (2): Now assume that $L^{k_p}_p = \neg x_{l_j}$ holds. Then
         $\sigma(x_{l_j}) = 0$ and, therefore, $y_{l_j} \not\in Z$. But, by
         definition of $e^{k_p,0}_p$ and $e^{k_p,1}_p$, vertex $y_{l_j}$ is the 
only 
         element of $Y \cup Y'$ not contained in $B(\lambda_{u_p})$. 
         Since $Z \subseteq (Y \cup Y')$ and $y_{l_j} \not\in Z$, we have that
         $Z \subseteq B(\lambda_{u_p})$.
  \end{enumerate}

 \smallskip
 \noindent
 {\bf  Two crucial lemmas.}
  Before we prove the ``only if'-direction, 
  we define the notion of complementary 
edges and state two important lemmas related to this notion.

  \begin{definition}
  \label{def:complementary}
   Let $e$ and $e'$ be two edges from the hypergraph $H$ as defined 
before. We say $e'$ is the {\em complementary} edge of $e$ (or, simply, 
$e,e'$ are complementary edges) whenever
   \begin{itemize}
    \item $e \cap S  =  S \setminus S'$ for some $S' \subseteq S$ and
    \item $e' \cap S =  S'$.
   \end{itemize}
  \end{definition}
  
  Observe that for every edge in our construction that covers 
  $S\setminus S'$ for some $S' \subseteq S$ there is a complementary 
edge that covers $S'$, for example 
$e^{k,0}_p$ and $e^{k,1}_p$, $e^0_{(0,0)}$ and $e^1_{(0,0)}$, and so on. In 
particular %
there is no edge that covers $S$ completely. 
Moreover, consider arbitrary subsets $S_1,S_2$ of $S$, 
s.t.\ (syntactically)
$S \setminus S_i$ is part of the definition of $e_i$  for some $e_i \in E(H)$ 
with $i \in \{1,2\}$.
Then $S_1$ and $S_2$ are~disjoint. 

We now present two lemmas needed for the ``only if''-direction.

\begin{lemma}
\label{lem:compl_edge} 
\lemComplEdge
\end{lemma}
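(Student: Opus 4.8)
The plan is to prove Lemma~\ref{lem:compl_edge}: in any width-$\leq 2$ FHD $\mcF$ of $H$, any node $u$ whose bag contains $S \cup \{z_1,z_2\}$ must assign equal weight to each pair of complementary edges. The starting point is the observation (recorded just before the lemma statement) that every edge $e$ of $H$ meets $S$ in either some set $S'\subseteq S$ or in its complement $S\setminus S'$, that for any two edges the "$S\setminus S_i$''-parts are disjoint, and crucially that \emph{no} edge of $H$ covers all of $S$. The weight function $\gamma_u$ must fractionally cover the whole of $S$ together with $z_1$ and $z_2$ with total weight at most $2$; the idea is to show this is only possible in a very rigid way, namely by pairing up complementary edges with equal weights.

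First I would set up the covering constraints. Partition $E(H)$ into the edges $e$ with $z_1\in e$ (call these the "$0$-type'' edges, since each such $e$ meets $S$ in some $S\setminus S'$ and contains $\{z_1\}\cup Y$ or $Y_l$), the edges with $z_2\in e$ (the "$1$-type'' edges, meeting $S$ in the small complementary piece $S'$ and containing $z_2$), and the remaining edges (the gadget edges $E_A\cup E_B\cup E_C$ etc.), which meet $S$ in at most a bounded-size piece and contain neither $z_1$ nor $z_2$. Let $\alpha = \sum_{e\ni z_1}\gamma_u(e)$ and $\beta = \sum_{e\ni z_2}\gamma_u(e)$. Covering $z_1$ forces $\alpha\geq 1$, covering $z_2$ forces $\beta\geq 1$, so $\alpha+\beta\geq 2 = $ width, which already forces $\gamma_u(e)=0$ on every edge containing neither $z_1$ nor $z_2$, and forces $\alpha=\beta=1$ exactly. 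So $\gamma_u$ is supported entirely on $0$-type and $1$-type edges, with the $0$-type weights summing to exactly $1$ and the $1$-type weights summing to exactly $1$.

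Next I would exploit the structure of $S$ to pin down the weights element by element. For a complementary pair $(e,e')$ with $e$ of $0$-type covering $S\setminus S^{k,1}_p$ and $e'$ of $1$-type covering $S^{k,1}_p$: since the $1$-type edges have pairwise disjoint $S$-parts $S'$ and they only sum to weight $1$, to cover an element $s\in S^{k,1}_p$ we need the total $1$-type weight on edges whose $S'$-part contains $s$ to be $\geq 1$; but that is at most $\beta=1$, so every $1$-type edge whose $S'$-part contains $s$ — and in the construction there is essentially a unique such relevant edge, namely $e'=e^{k,1}_p$ — must carry weight exactly matching what is needed, and all $1$-type weight is concentrated there. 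Dually, to cover an element $s' \in S\setminus S^{k,1}_p$ using $0$-type edges, whose $S$-parts are the large complements $S\setminus S_i$, and noting that the $S_i$'s are pairwise disjoint so the $S\setminus S_i$'s are "almost all of $S$'', one argues that $s'$ is missed only by the $0$-type edges whose removed piece $S_i$ contains $s'$; requiring coverage of \emph{every} element of $S$ simultaneously, together with $\alpha=1$, forces the weight on each $0$-type edge $e$ to equal the weight on its complementary $1$-type edge $e'$. Concretely: if $\gamma_u(e') = t$ for the $1$-type edge covering $S'$, then the elements of $S'$ are covered to total weight $t$ from $1$-type side but must reach $1$, so the $0$-type side must contribute $1-t$ to each element of $S'$; since the $0$-type edge \emph{missing} exactly $S'$ is $e$, and every other $0$-type edge covers all of $S'$ (their removed pieces are disjoint from $S'$), we get $\alpha - \gamma_u(e) \geq 1-t$, i.e. $\gamma_u(e)\leq \alpha - (1-t) = t$; symmetrically $\gamma_u(e')\leq \gamma_u(e)$, whence equality.

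The main obstacle I anticipate is the bookkeeping around "which edges' $S$-parts contain a given element $s$'': the argument above is clean only if, for each relevant small piece $S'$ used in the construction (namely each $S^{k,1}_p$, each $S_{(0,0)}$, each $S_{\max}$, each $S_{(0,1)}$, $S_{(1,0)}$), there is essentially a \emph{unique} $1$-type edge whose removed-complement piece equals that $S'$, and uniqueness of the dual $0$-type edge missing exactly $S'$. This needs a careful case check over all the edge families $e^{k,1}_p, e^1_{(0,0)}, e^1_{\max}$ and the gadget edges inside $H_0,H'_0$ (recalling $M_1,M_2$ and $M'_1,M'_2$ were defined in terms of $S$ and its pieces), verifying that the $S$-parts are pairwise disjoint as claimed and that the only edge with a given large $S$-part $S\setminus S'$ is the asserted complementary one. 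Once that combinatorial inventory is in place, the linear-programming argument of the previous paragraph applies uniformly to every complementary pair, giving $\gamma_u(e)=\gamma_u(e')$ as required. Since this is exactly the kind of routine-but-lengthy verification that the excerpt defers to Appendix~\ref{app:hardness}, in the main text I would state the lemma and give only the LP-style sketch above, deferring the full edge inventory.
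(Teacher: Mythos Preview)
Your approach is essentially the same as the paper's: split the edges into those containing $z_1$ (your ``$0$-type'', the paper's $E^0$) and those containing $z_2$ (your ``$1$-type'', the paper's $E^1$), observe that covering $z_1,z_2$ with total weight $\leq 2$ forces each group to carry weight exactly $1$ and all other edges weight $0$, and then use the disjointness of the small $S$-pieces $S_i$ to pin down the per-edge weights. The one slip is the phrase ``symmetrically $\gamma_u(e')\leq \gamma_u(e)$'': the roles of $0$-type and $1$-type are not symmetric (one side has big $S$-parts $S\setminus S_i$, the other small parts $S_i$), so swapping does not give the reverse inequality for the same pair. The paper closes the argument the way your setup already suggests: from $\gamma_u(e^1_i)\geq \gamma_u(e^0_i)$ for every $i$, sum over all $i$ and use $\sum_{E^0}\gamma_u=\sum_{E^1}\gamma_u=1$ to force equality everywhere. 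Replace the ``symmetrically'' sentence with that summation step and your sketch matches the paper's proof.
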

\begin{proof}
     First, we try to cover $z_1$ and $z_2$. For $z_1$ we have to put total 
     weight~$1$ on the edges in $E^0$, and to cover 
     $z_2$ we have to put total  weight~$1$ on the edges in $E^1$, where
     \begin{align*}
      E^0 =&   \{ e^{k,0}_p \mid p\in [2n+3;m]^- \mbox{ and } 1 \leq k 
                \leq 3\} \; \cup \\ 
             & \{ e^0_{(0,0)},e^0_{\max} \} \;\cup \\
             & \{\{ a_1,b_1 \} \cup M_1, \{ b_1,c_1\}\cup M_1, \{c_1,d_1\}  
                \cup M_1 \} \;\cup \\
            & \{\{ a'_1,b'_1 \} \cup M'_1, \{ b'_1,c'_1\}\cup M'_1,
                  \{c'_1,d'_1\} \cup M'_1 \} \\
     E^1 = &  \{ e^{k,1}_p \mid p\in [2n+3;m]^- \mbox{ and } 1 \leq k 
               \leq 3\} \; \cup  \\
            & \{ e^1_{(0,0)}, e^1_{\max} \} \; \cup \\
            & \{\{ a_2,b_2 \} \cup M_2, \{ b_2,c_2\}\cup M_2, \{c_2,d_2\}
            \cup M_2 \} \;\cup \\
            & \{ \{ a'_2,b'_2 \} \cup M'_2, \{ b'_2,c'_2\}\cup M'_2,
                  \{c'_2,d'_2\}\cup M'_2 \}
     \end{align*}
     In order to also cover $S$ with weight 2, we are only allowed to 
     assign weights to the above edges.
     Let $S_i$ be a subset of $S$, s.t.\ $S \setminus S_i \subseteq 
     e^0_i$, where $e^0_i \in E^0$. Suppose 
     $\gamma_u(e^0_i) = w_i$. Still, we need to put weight $1$ on the vertices
     in $S_i$. In order to do so, we can put at most weight $1-w_i$ on the 
     edges in $E^0 \setminus \{ e^0_i\}$, which covers $S_i$ with weight at most 
     $1-w_i$.
     The only edge in $E^1$ that intersects $S_i$ is the complementary edge 
     $e^1_i$ of 
     $e^0_i$. Hence, we have to set $\gamma_u(e^1_i) \geq w_i$. 
     This 
     holds for
     all edges $e^1 \in E^1$. Moreover, recall that both 
$\sum_{e^0\in E^0} \gamma_u(e^0) = 1$  and    
     $\sum_{e^1\in E^1} \gamma_u(e^1) = 1$ hold.
Hence,  we cannot afford to set $\gamma_u(e_i^1)>w_i$ for some $i$, since this
would lead to $\sum_{e^1\in E^1} \gamma_u(e^1) > 1$.
We thus have $\gamma_u(e_i^0) = 
     \gamma_u(e_i^1) = w_i$ for every $e_i^0 \in E^0$ and its complementary 
     edge      $e_i^1 \in E^1$. 
  \end{proof}
\begin{lemma}
\label{lem:covering} 
\lemCovering
\end{lemma}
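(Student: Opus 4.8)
The plan is to prove Lemma~\ref{lem:covering}, which states that for the hypergraph $H$ constructed in the reduction, any FHD of width $\leq 2$ must cover the set $S \cup A'_p \cup \overbar{A_p} \cup \{z_1,z_2\}$ (at any node $u$ containing it) using \emph{only} the six edges $e^{k,0}_p, e^{k,1}_p$ for $k\in\{1,2,3\}$. My overall strategy is: first, drastically restrict which edges can carry positive weight at $u$ by a support/containment argument; second, within that restricted edge set, use the fractional-cover structure (in particular Lemma~\ref{lem:compl_edge} and a weight-counting argument on the "new" vertices $z_1, z_2$ and the $A$/$A'$-vertices) to force $\gamma_u$ to vanish off the claimed six edges.

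\medskip\noindent\textbf{Step 1: Which edges can have positive weight.}
Since $\mcF$ has width $\leq 2$, we have $\weight(\gamma_u)\leq 2$. The first move is to list, for each vertex in the target set $T^\ast := S \cup A'_p \cup \overbar{A_p} \cup \{z_1,z_2\}$, exactly which edges of $E(H)$ contain it; this is read off directly from the construction. The key observations to extract are: (i) $z_1$ lies only in the "left-type" edges — the $H_0$-edges touching $M_1$, the edges $e^{k,0}_q$, $e^0_{(0,0)}$, $e^0_{\max}$ — and $z_2$ only in the "right-type" edges — symmetrically. (ii) The vertices of $A_p\setminus\{a_p\}$ (the "past" of the long path, relative to $p$) and of $\overbar{A'_p}\setminus\{a'_p\}$ do \emph{not} appear in the target set, but the edges $e^{k,1}_q$ for $q\neq p$ would drag in $A'_q\supseteq A'_p\cup\{a_{p\oplus1},\dots\}$ that... — more precisely, the point is that covering the "full tails" $\overbar{A_p}$ and $A'_p$ simultaneously, together with all of $S$, using only weight $2$, leaves no room for any edge that is neither $e^{k,0}_p$ nor $e^{k,1}_p$. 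I would make this rigorous by noting that any edge $e$ with $\gamma_u(e)>0$ satisfies $e\cap T^\ast\neq\emptyset$ is not enough; rather, since $B_u\supseteq T^\ast$ and $B_u\subseteq B(\gamma_u)$, every vertex of $T^\ast$ must be covered, and then a Helly-type / counting argument on $S$ (using the disjointness of the $S_i := S\setminus(e_i\cap S)$ noted right before the lemma) shows that at most two edges can be involved and they must be a complementary pair whose $S$-parts are $S^{k,1}_p$ and $S\setminus S^{k,1}_p$.

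\medskip\noindent\textbf{Step 2: Pinning down to $k$ and forcing weight $0$ elsewhere.}
Once Step~1 confines $\cov(\gamma_u)$ to edges of the form $e^{k,\tau}_p$ (the only edges whose $S$-projection is $S^{k,1}_p$ or its complement, for the \emph{correct index $p$} — here the elements of $Q\setminus[2n+3;m]$ and the role of $p$ being fixed matter, so that $e^0_{(0,0)}, e^1_{(0,0)}, e^0_{\max}, e^1_{\max}$ are excluded because their $S$-parts are indexed by $(0,0)$ or $\max\neq p$), I would apply Lemma~\ref{lem:compl_edge}: at node $u$ (which contains $S\cup\{z_1,z_2\}$), complementary edges get equal weight, so $\gamma_u(e^{k,0}_p)=\gamma_u(e^{k,1}_p)=:w_k$. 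To cover $z_1$ we need $\sum_k w_k\geq1$ from the $e^{k,0}_p$ side; to cover $z_2$, $\sum_k w_k\geq1$ from the $e^{k,1}_p$ side; and $\weight(\gamma_u)=2\sum_k w_k\leq2$. Hence $\sum_k w_k=1$ exactly, and in particular every bit of weight sits on the six edges $e^{k,\tau}_p$, $k\in\{1,2,3\}$, $\tau\in\{0,1\}$ — i.e.\ $\gamma_u(e)=0$ for all other $e$, which is exactly the claim. I should also double-check that these six edges genuinely suffice to cover all of $T^\ast$ (the $A'_p$, $\overbar{A_p}$, $S$, $Y$/$Y_l$, $z_1,z_2$ parts), so that the restriction is not vacuous, though strictly the lemma only asserts the "only way", so this is a sanity check rather than a required step.

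\medskip\noindent\textbf{Main obstacle.}
The delicate part is Step~1: ruling out \emph{every} stray edge (including the $H_0$- and $H'_0$-gadget edges, the long-path edges $e_q, e_{y_i}$, and the boundary edges $e^\tau_{(0,0)}, e^\tau_{\max}$) from having positive weight, using only the budget-$2$ constraint and the combinatorics of how $S$, $A$, $A'$, $z_1$, $z_2$ are distributed. The cleanest route is probably to argue purely about the $S$-component: since $B_u\supseteq S$ and no single edge covers $S$, at least two edges with positive weight are needed, their $S$-complements $S_1,S_2$ are disjoint subsets of $S$, and weight $\leq2$ forces $\gamma_u$ concentrated on exactly a complementary pair with $\gamma_u\equiv1$ on each (or a balanced spread over the three pairs $k=1,2,3$ as in Step~2) — any contribution from an edge whose $S$-part is not of the required shape would either over-spend the budget or leave part of $S$ uncovered. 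Getting this counting argument tight — handling the case of three or more edges sharing the budget, and the interplay with covering $z_1,z_2$ and the $A$-vertices — is where the real work lies; the rest is bookkeeping against the construction.
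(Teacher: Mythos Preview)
Your Step~2 is essentially correct and matches the paper's endgame: once the support is confined to $\{e^{k,\tau}_p : k\in\{1,2,3\},\tau\in\{0,1\}\}$, Lemma~\ref{lem:compl_edge} plus the $z_1,z_2$ coverage constraints finish the job. The gap is in Step~1, and it is not just a matter of tightening a counting argument. Your proposed ``cleanest route'' via the $S$-component alone cannot work: for \emph{any} index $q$, each complementary pair $(e^{k,0}_q,e^{k,1}_q)$ partitions $S$ and covers $z_1,z_2$, so the $S$-structure (and Lemma~\ref{lem:compl_edge}) by itself does not pin down $p$. Your parenthetical ``the only edges whose $S$-projection is $S^{k,1}_p$ or its complement, for the \emph{correct index $p$}'' is circular---you need an independent reason why $p$ is the correct index. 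Also, the assertion that ``at most two edges can be involved'' is simply false: the conclusion allows all six edges $e^{k,\tau}_p$ to carry positive weight, and indeed in the intended decomposition they may.

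What is missing is the use of the extremal vertices $a_p\in\overbar{A_p}$ and $a'_p\in A'_p$. The paper's argument runs as follows. Let $E^0$ (resp.\ $E^1$) be the set of all edges containing $z_1$ (resp.\ $z_2$); these are disjoint, and covering $z_1,z_2$ forces weight exactly $1$ on each, hence weight $0$ on every edge outside $E^0\cup E^1$ (this kills $e_q$, $e_{y_i}$, and the non-$M$ gadget edges in one stroke). Now inside $E^1$, the only edges containing $a'_p$ are $\{e^{k,1}_r: r\geq p\}\cup\{e^1_{\max}\}=:E^1_p$; since covering $a'_p$ needs weight $\geq 1$ from $E^1_p$ and the total on $E^1$ is $1$, everything in $E^1\setminus E^1_p$ has weight $0$. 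Symmetrically, $a_p$ forces all weight in $E^0$ onto $E^0_p:=\{e^{k,0}_s:s\leq p\}\cup\{e^0_{(0,0)}\}$. Finally, by Lemma~\ref{lem:compl_edge} each edge in $E^0_p$ must share its weight with its complementary partner; but the only complementary pairs with one member in $E^0_p$ and the other in $E^1_p$ are $(e^{k,0}_p,e^{k,1}_p)$ for $k\in\{1,2,3\}$ (since $s\leq p$ and $r\geq p$ force $s=r=p$), so every other edge in $E^0_p\cup E^1_p$ has weight $0$. That is the argument you need in Step~1; the ``drag in extra vertices'' intuition in your point~(ii) is a red herring, since $B(\gamma_u)\supseteq B_u$ is all that is required.
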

\begin{proof}
     As in the proof of Lemma~\ref{lem:compl_edge}, to cover $z_1$ we 
have to put 
     weight $1$ on the edges in $E^0$ and to cover 
     $z_2$ we have to put weight $1$ on the edges in $E^1$, where $E^0$ and $E^1$ 
     are defined as in the proof of Lemma~\ref{lem:compl_edge}. Since we have 
     $\width(\mcF) \leq 2$, we have to cover $A'_p \cup \overbar{A_p} \cup S$
     with the weight already put on the edges in $E^0 \cup E^1$. 
     In order to cover 
     $A'_p$, we have to put weight 1 on the edges in $E^1_p$, where
     \[ E^1_p = \{ e^{k,1}_r \mid r \geq p \} \cup \{ e^1_{\max} \}. \]
     Notice that, $E^1_p \subseteq E^1$ and 
     therefore $\sum_{e \in E^1 \setminus E^1_p} \gamma_u(e) = 0$. 
     Similar,
     in order to cover $\overbar{A_p}$, we have to put weight 1 on the edges in
     $E^0_p$, where 
     \[ E^0_p = \{ e^{k,0}_s \mid s \leq p \} \cup  \{ e^0_{(0,0)} \}.\] 
     Again, since 
     $E^0_p \subseteq E^0$,
     $\sum_{e \in E^0 \setminus E^0_p} \gamma_u(e) = 0$. It remains to cover 
     $S \cup \{z_1,z_2\}$. By Lemma~\ref{lem:compl_edge}, in order to cover 
$S$, 
$z_1$ and 
     $z_2$, we have to put the same weight $w$ on complementary 
     edges $e$ and $e'$. The only complementary edges in the sets $E^0_p$ and
     $E^1_p$ are edges of the form $e^{k,0}_p$ and $e^{k,1}_p$ with $k \in 
\{1,2,3\}$. 
 In total, we thus have
     $\sum_{k=1}^3 e^{k,0}_p = 1$ and $\sum_{k=1}^3 e^{k,1}_p = 1$.
  \end{proof}

 \noindent
 {\bf Proof of the ``only if''-direction.}
  It remains to show that $\varphi$ is satisfiable if $H$ has a GHD or FHD of 
  width $\leq 2$. Due to the inequality $\fhw(H) \leq \ghw(H)$,
  it suffices to show that $\varphi$ is satisfiable if $H$ 
  has an FHD of  width $\leq 2$.
  For this, let $\mcF = \left< T, (B_u)_{u\in T}, (\gamma_u)_{u\in T} 
  \right>$ be such an 
  FHD. Let $u_A, u_B, u_C$ and $u'_A, u'_B, u'_C$ be the nodes that are
  guaranteed by Lemma~\ref{lem:gadgetH0}.
  We state several %
  properties of the path connecting $u_A$ and $u'_A$, which heavily 
rely on Lemmas~\ref{lem:compl_edge} and~\ref{lem:covering}. 

\medskip
{\sc  Claim A.}{\it \clmA}

\medskip
{\sc Proof of Claim A.}
We only show that none of the nodes $u'_i$ with $i \in \{A,B,C\}$ is on the 
path from $u_A$ to $u_C$. 
The other 
property is shown analogously. Suppose to the contrary that some $u'_i$ is on 
the path from $u_A$ to $u_C$. Since $u_B$ is also on the path between $u_A$ and 
$u_C$ we distinguish two cases:

\begin{itemize}
\item Case~(1): $u'_i$ is on the path between $u_A$ 
and $u_B$; then $\{ b_1, b_2 \} \subseteq B_{u'_i}$. This 
contradicts the property shown in Lemma \ref{lem:gadgetH0} that $u'_i$ cannot cover any
vertices outside $H'_0$. 

\item Case~(2): $u'_i$ is on the path between $u_B$ and $u_C$;
then $\{ c_1, c_2 \} \subseteq B_{u'_i}$, which again contradicts Lemma \ref{lem:gadgetH0}.
\end{itemize}
Hence, the paths from $u_A$ to $u_C$ and from $u'_A$ to $u'_C$ are indeed disjoint. $\hfill\diamond$

\medskip
{\sc  Claim B.}{\it \clmB}

\medskip
{\sc Proof of Claim B.}
Suppose to the contrary that there is a $u_X$ (the proof for $u'_X$ is analogous) 
for some $X \in 
\{A,B,C\}$, s.t.\ $u_X \in \nodes(A\cup A',\mcF)$; then
there is some $a \in (A\cup A')$, s.t.\ $a \in B_{u_X}$. This contradicts 
the property shown in Lemma \ref{lem:gadgetH0} that $u_X$ cannot cover any
vertices outside $H_0$. $\hfill \diamond$

\medskip
We are now interested in the sequence of nodes $\hat{u}_i$ that cover the 
edges 
$e^0_{(0,0)}, e_{\min}, e_{\min \oplus 1}$, \dots,
$e_{\max \ominus 1}$, $e_{\max}$. 
Before we formulate Claim~C, 
it is convenient to introduce the following notation. To be able to refer to 
the 
edges $e^0_{(0,0)}$, $e_{\min}$, $e_{\min \oplus 1}$, \dots, $e_{\max\ominus 
1}$, $e^1_{\max}$ in a uniform way,
we use $e_{\min \ominus 1}$ as synonym of $e^0_{(0,0)}$ and 
$e_{\max}$ as synonym of $e^1_{\max}$. We can thus define the natural order 
$e_{\min \ominus 1} < e_{\min} < e_{\min \oplus 1} < \dots < e_{\max \ominus 1} 
< e_{\max}$ on these edges.

\medskip
{\sc Claim C.}{\it \clmC}

\medskip
{\sc Proof of Claim C.}
Suppose to the contrary that no such path exists. Let $p \geq \min$ be the maximal value such 
that there is a path containing nodes $\hat{u}_1, \hat{u}_2, \ldots, 
\hat{u}_l$, which cover 
$e_{\min\ominus 1}, \ldots, e_p$ in this order. 
Clearly, there exists a node $\hat{u}$ that covers $e_{p \oplus 1} = A'_{p 
\oplus 1} \cup \overbar{A_{p\oplus 1}}$. We distinguish 
four cases:

\begin{itemize}
 \item Case~(1): $\hat{u}_1$ is on the path from $\hat{u}$ to all other nodes
                 $\hat{u}_i$, with $1 < i \leq l$. By the connectedness 
                 condition, $\hat{u}_1$ covers $A'_p$.
                 Hence, in total $\hat{u}_1$ covers $A'_p \cup 
                 A$ with 
                 $A'_{p} = \{a'_{\min}, \dots, a'_{p}\}$ and 
                 $A = \{a_{\min}, \dots, a_{\max}\}$.
                 Then 
                 $\hat{u}_1$ covers all edges $e_{\min\ominus 1}, \ldots, e_p$. 
Therefore, 
                 the path containing nodes $\hat{u}_1$ and $\hat{u}$ covers 
                 $e_{\min\ominus 1}, \ldots,$ $e_{p\oplus 1}$ in this order,
                 which contradicts
                 the maximality of $p$.
  \item Case~(2): $\hat{u} = \hat{u}_1$,
                 hence, $\hat{u}_1$ covers $A'_{p\oplus 1} \cup 
                 A$ with 
                 $A'_{p\oplus 1} = \{a'_{\min}, \dots, a'_{p\oplus 1}\}$ and 
                 $A = \{a_{\min}, \dots$, $a_{\max}\}$.
                 Then, 
                 $\hat{u}_1$ covers all %
                 $e_{\min\ominus 1}, \ldots, 
                 e_{p\oplus 1}$,
                 which contradicts
                 the maximality of $p$.

 \item Case~(3): $\hat{u}$ is on the path from $\hat{u}_1$ to $\hat{u}_l$
           and $\hat{u} \neq \hat{u}_1$. 
                 Hence, $\hat{u}$ is between two nodes $\hat{u}_{i}$ and 
                 $\hat{u}_{i+1}$ for some $1 \leq i < l$ or 
                 $\hat{u} = \hat{u}_{i+1}$ for some $1 \leq i < l-1$.
                 The following
                 arguments hold for both cases.
                 Now, there is some
                 $q \leq p$, such that $e_{q}$ is covered by $\hat{u}_{i+1}$
                 and $e_{q \ominus 1}$ is covered by $\hat{u}_{i}$.
                 Therefore,  $\hat{u}$ covers  $\overbar{A_q}$
                 either by the connectedness condition (if $\hat{u}$ is between 
$\hat{u}_{i}$ and $\hat{u}_{i+1}$)
                 or simply because $\hat{u} = \hat{u}_{i+1}$.
                 Hence, in total, $\hat{u}$
                 covers $A'_{p\oplus 1} \cup \overbar{A_q}$ with 
                 $A'_{p\oplus 1} = \{a'_{\min}, \dots, a'_{p \oplus 1}\}$ and 
                 $\overbar{A_q} = \{a_q, a_{q\oplus 1}, \dots, a_p, a_{p\oplus 
1}, \dots a_{\max}\}$.
                 Then, $\hat{u}$ covers all edges $e_q, e_{q\oplus 1}, 
                 \ldots, e_{p\oplus 1}$. Therefore, the path containing nodes
                 $\hat{u}_1, \dots, \hat{u}_i, \hat{u}$
                 covers $e_{\min\ominus 1}, \ldots, e_{p\oplus 1}$ in this 
order,
                 which contradicts
                 the maximality of $p$.
 \item Case~(4): There is a $u^*$ on the path from $\hat{u}_1$ to $\hat{u}_l$, 
                 such that the paths from $\hat{u}_1$ to $\hat{u}$ and from 
                 $\hat{u}$ to $\hat{u}_l$ go 
                 through $u^*$ and, moreover, $u^* \neq \hat{u}_1$.                 
                 Then, $u^*$ is either between %
                 $\hat{u}_i$ and $\hat{u}_{i+1}$ for some $1 \leq i < l$
                 or
                 $u^* = \hat{u}_{i+1}$ for some $1 \leq i < l-1$. The following
                 arguments hold for both cases. There is some $q \leq p$, 
                 such that
                 $e_q$ is covered by $\hat{u}_{i+1}$ and $e_{q\ominus 1}$ is
                 covered by $\hat{u}_i$. By the connectedness condition, 
                 $u^*$ covers 
                 \begin{itemize}
                  \item $A'_p = \{ a'_{\min},\ldots,a'_p \}$, since $u^*$ is on 
                        the path from $\hat{u}$ to $\hat{u}_l$, and
                  \item $\overbar{A}_q = \{a_q,  \dots, a_p, 
                        a_{p\oplus 1}, \dots a_{\max}\}$, since $u^*$ is on
                        the path from $\hat{u}_1$ to $\hat{u}_{i+1}$ or 
                        $u^* = \hat{u}_{i+1}$.
                 \end{itemize}
                 Then $u^*$ covers all edges $e_q,e_{q\oplus 
                 1},\ldots,e_p$. Therefore, the path containing the nodes 
                 $\hat{u}_1,\ldots,\hat{u}_{i}$, $u^*$, $\hat{u}$ 
                 covers $e_{\min\ominus 1}, 
                 \ldots, e_{p\oplus 1}$ in this order,
                 which contradicts
                 the maximality of~$p$.                
                 $\hfill \diamond$
\end{itemize}

So far we have shown, that there are three disjoint paths from $u_A$ to $u_C$, 
from $u'_A$ to $u'_C$ and from $\hat{u}_1$ to $\hat{u}_N$, respectively. It is 
easy to see, 
that $u_A$ 
is closer to the path $\hat{u}_1$, \dots, $\hat{u}_N$ than $u_B$ and $u_C$, 
since otherwise 
$u_B$ and $u_C$ would have to cover $a_1$ as well, which is impossible 
by Lemma~\ref{lem:gadgetH0}. 
The same also
holds for $u'_A$. In the next claims we will argue that the path from $u_A$
to $u'_A$ goes through some node $\hat{u}$ of the path from $\hat{u}_1$ to 
$\hat{u}_N$. We write $\ppi$ as a short-hand notation for 
the path 
from $\hat{u}_1$ to $\hat{u}_N$. Next, we state some important properties of 
$\ppi$ and the path from $u_A$ to $u_A'$.

\medskip
{\sc  Claim D.} 
{\it \clmD}

\medskip
{\sc Proof of Claim D.} Suppose to the contrary that the path from $u_A$ to 
$u'_A$ is disjoint from 
$\ppi$. We distinguish three cases:
\begin{itemize}
 \item Case~(1):
$u_A$ is on the path from $u'_A$ to (some node in) $\ppi$. Then, by the 
connectedness condition, $u_A$ must contain $a'_1$, which 
contradicts Lemma~\ref{lem:gadgetH0}.
 \item Case~(2): $u'_A$ is on the path from $u_A$ to $\ppi$. 
Analogously to Case (1), we get a contradiction by the fact that then 
$u'_A$ must contain $a_1$.
\item Case~(3): There is a node $u^*$ on the path from $u_A$ to $u'_A$, which 
is closest to $\ppi$, i.e., $u^*$ lies on the path from $u_A$ to $u'_A$ and 
both 
paths, the one
connecting $u_A$ with $\ppi$   and the one connecting $u'_A$ with $\ppi$, go 
through $u^*$. 
Hence, by the connectedness condition, the bag of $u^*$ contains $S \cup 
\{z_1,z_2,$ $a_1,a'_1\}$. By Lemma~\ref{lem:compl_edge}, 
in order to cover $S \cup \{ z_1, z_2 \}$ with weight $\leq 2$,
we are only allowed to put 
non-zero weight on pairs of complementary edges.
However, then it is impossible to achieve also weight $\geq 1$ on 
$a_1$ and $a'_1$ at the same time. $\hfill \diamond$
\end{itemize}

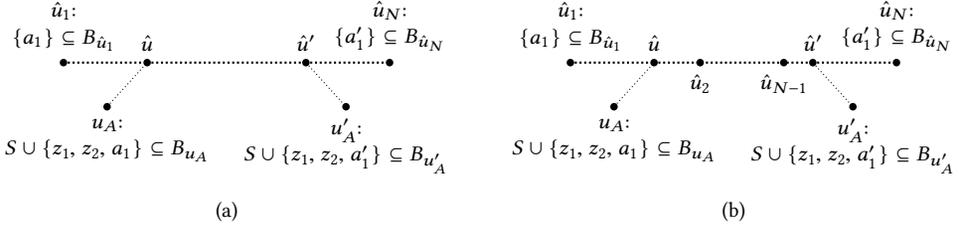
\begin{figure}[t]
    \centering 
  \footnotesize
    
    \begin{minipage}[c]{0.48\textwidth}
       \centering
       \begin{tikzpicture}[
   vert/.style={fill, circle, inner sep = 1pt},
   ell/.style={ellipse,draw,minimum width=2.5cm, inner sep=0cm}]
   \node[vert,label={[align=center]above:$\hat{u}_1$: \\ $\{ a_1 \} \subseteq B_{\hat{u}_1}$}] (u1) {};
   \node[vert,label=above:$\hat{u}$, right=1 of u1] (u) {};
   \node[vert,label={[align=center]below:$u_A$: \\ $S \cup \{ z_1, z_2, a_1 \} \subseteq B_{u_A}$},below right=.5 and .5 of u1] (ua) {};
   \node[vert,label=above:$\hat{u}'$, right=2 of u] (up) {};
   \node[vert,label={[align=center]above:$\hat{u}_N$: \\ $\{ a'_1 \} \subseteq B_{\hat{u}_N}$}, right=1 of up] (un) {};
   \node[vert,label={[align=center]below:$u'_A$: \\ $S \cup \{ z_1, z_2, a'_1 \} \subseteq B_{u'_A}$},below left=.5 and .5 of un] (uap) {};
   
   \draw[densely dotted, thick] (u1) -- (u);
   \draw[densely dotted, thick] (u) -- (up);
   \draw[densely dotted, thick] (up) -- (un);
   \draw[densely dotted] (ua) -- (u);
   \draw[densely dotted] (up) -- (uap);

\end{tikzpicture}

     \centering\medskip\noindent (a)
     \end{minipage}
     \begin{minipage}[c]{0.48\textwidth}
       \centering
       \begin{tikzpicture}[
   vert/.style={fill, circle, inner sep = 1pt},
   ell/.style={ellipse,draw,minimum width=2.5cm, inner sep=0cm}]
   \node[vert,label={[align=center]above:$\hat{u}_1$: \\ $\{ a_1 \} \subseteq B_{\hat{u}_1}$}] (u1) {};
   \node[vert,label=above:$\hat{u}$, right=1 of u1] (u) {};
   \node[vert,label={[align=center]below:$u_A$: \\ $S \cup \{ z_1, z_2, a_1 \} \subseteq B_{u_A}$},below right=.5 and 0.5 of u1] (ua) {};
   \node[vert,label=above:$\hat{u}'$, right=2 of u] (up) {};
   \node[vert,label={[align=center]above:$\hat{u}_N$: \\ $\{ a'_1 \} \subseteq B_{\hat{u}_N}$}, right=1 of up] (un) {};
   \node[vert,label={[align=center]below:$u'_A$: \\ $S \cup \{ z_1, z_2, a'_1 \} \subseteq B_{u'_A}$},below left=.5 and .5 of un] (uap) {};
   
   \node[vert,label=below:$\hat{u}_2$, right=.5 of u] (u2) {};
   \node[vert,label=below:$\hat{u}_{N-1}$, right=1 of u2] (un1) {};

   \draw[densely dotted, thick] (u1) -- (u);
   \draw[densely dotted, thick] (u) -- (up);
   \draw[densely dotted, thick] (up) -- (un);
   \draw[densely dotted] (ua) -- (u);
   \draw[densely dotted] (up) -- (uap);
\end{tikzpicture}

     \centering\medskip\noindent (b)
     \end{minipage}

    \caption{Arrangement of the nodes 
    $\hat{u}_1$, $\hat{u}$, $\hat{u}'$, and $\hat{u}_N$   from Claim E (a) and Claim G (b).} 
    \label{fig:u-and-uprime}
\end{figure}

\medskip
{\sc Claim E.} 
{\it \clmE}

\medskip 
{\sc Proof of Claim E.}
  First, we show that $\hat{u}$ and $\hat{u}'$ are indeed distinct. 
Suppose towards a contradiction that they are not, i.e.\ $\hat{u} = \hat{u}'$. Then, 
by connectedness, $\hat{u}$ has to cover $S \;\cup $ $\{ z_1, 
z_2\}$, because $S \;\cup $ $\{ z_1, z_2\}$ is contained in 
$B_{u_A}$ and in  $B_{u'_A}$.
Moreover, again by connectedness, $\hat{u}$ also has to cover
$\{a_1, a'_1\}$, because $a_1$ is contained in $B_{\hat{u}_1}$ and in $B_{u_A}$ and 
$a'_1$ is contained in $B_{\hat{u}_N}$ and in $B_{u'_{A}}$.
As in Case (3) in the proof of Claim D, this is impossible
by Lemma~\ref{lem:compl_edge}. Hence, $\hat{u}$ and $\hat{u}'$ are distinct. 

Second, we show that, on the path from 
$\hat{u}_1$ to  $\hat{u}_N$, the node $\hat{u}$ comes
before $\hat{u}'$.
Suppose to the contrary that $\hat{u}'$ comes before $\hat{u}$. 
Then, by the connectedness condition,  $\hat{u}$ covers the following (sets of) 
vertices: 
\begin{itemize}
\item $a'_1$, since we are assuming that $\hat{u}'$ comes before $\hat{u}$, i.e., 
$\hat{u}$ is on the path from $\hat{u}_N$ to $u'_A$;
\item $a_1$, since $\hat{u}$ is on the path from  $\hat{u}_1$ to $u_A$;
\item $S \cup \{z_1,z_2\}$, since $\hat{u}$ is on the path from  $u_A$ to 
$u'_A$.
\end{itemize}
In total, $\hat{u}$  has to cover all vertices in 
$S \;\cup $  $\left\{ z_1, z_2, a_1, a'_1\right\}$. 
Again, by Lemma~\ref{lem:compl_edge}, 
this 
is impossible with weight $\leq 2$. $\hfill\diamond$

\medskip
{\sc Claim F.} 
{\it \clmF}

\medskip
{\sc Proof of Claim F.}
First, it is easy to verify that $N \geq 2$ must hold. Otherwise, a single node 
would have to cover
$\{e_{\min \ominus 1}, e_{\min}$, $e_{\min \oplus 1}$, \dots, $e_{\max \ominus 
1}$, $e_{\max}\}$
and, hence, in particular, $S \cup \{z_1,z_2,a_1,a'_1\}$, 
which is impossible as we  have already seen in Case (3) of the proof of Claim D.

It remains to prove $N \geq 3$. Suppose to the contrary that $N = 2$. By the 
problem reduction, hypergraph $H$ has distinct edges $e_{\min \ominus 1}$, $e_{\min}$ and $e_{\max}$.
Hence, 
$\hat{u}_1$ covers at least $e_{\min \ominus 1}$ and $\hat{u}_2$ covers at least $e_{\max}$.
Recall from Claim~E the nodes $\hat{u}$ and $\hat{u}'$, which constitute the endpoints of the intersection
of the path from $u_A$ to $u'_A$ with the path $\ppi$, cf.\ Figure~\ref{fig:u-and-uprime}(a).  
Here we are assuming $N = 2$. We now show that, by the connectedness condition of FHDs,
the nodes $\hat{u}$ and $\hat{u}'$ must cover certain vertices, which will lead to 
a contradiction by Lemma~\ref{lem:compl_edge}.

\begin{itemize}
 \item vertices covered by $\hat{u}$: node $\hat{u}$ is on the path between $u_A$ and $u'_A$. Hence, it covers $S \cup \{ z_1,z_2\}$. Moreover, $\hat{u}$ is on the path between
 $\hat{u}_1$ and $u_A$ (or even coincides with $\hat{u}_1$). Hence, 
 it also covers $a_1$. In total, $\hat{u}$  covers at least $S \cup \{ z_1,z_2, a_1\}$.

 \item vertices covered by $\hat{u}'$: node $\hat{u}'$ is on the path between $u_A$ and $u'_A$. Hence, it covers $S \cup \{ z_1,z_2\}$. Moreover, $\hat{u}'$ is on the path between
 $\hat{u}_2$ and $u'_A$ (or even coincides with $\hat{u}_2$). Hence, 
 it also covers $a'_1$. In total, $\hat{u}'$  covers at least $S \cup \{ z_1,z_2, a'_1\}$.
\end{itemize} 

\smallskip

\noindent
One of the nodes $\hat{u}_1$ or $\hat{u}_2$ must also cover the edge $e_{\min}$.
We inspect these 2 cases separately:

\begin{itemize}
\item Case~(1): suppose that the edge $e_{\min}$ is covered by $\hat{u}_1$. 
Then, $\hat{u}_1$ covers vertex $a'_{\min}$, which is also covered by $\hat{u}_2$.
Hence, also $\hat{u}$  covers $a'_{\min}$. In total, $\hat{u}$ covers 
$S \cup \{ z_1,z_2, a_1, a'_{\min}\}$. However, 
by Lemma~\ref{lem:compl_edge}, we know that, 
to cover $S \cup \{ z_1, z_2 \}$ with weight $\leq 2$,
we are only allowed to put 
non-zero weight on pairs of complementary edges. Hence, 
it is impossible to achieve also weight $\geq 1$ on $a_1$ and on $a'_{\min}$
at the same time. 

\item Case~(2): suppose that the edge $e_{\min}$ is covered by $\hat{u}_2$. 
Then, $\hat{u}_2$ covers vertex $a_{\min}$ (actually, it even covers all of $A$), 
which is also covered by $\hat{u}_1$.
Hence, also $\hat{u}'$  covers $a_{\min}$. In total, $\hat{u}'$ covers 
$S \cup \{ z_1,z_2, a'_1, a_{\min}\}$. Again, this is impossible 
by Lemma~\ref{lem:compl_edge}.
\end{itemize}

Hence, the path $\ppi$ indeed has at least 3 nodes $\hat{u}_i$. $\hfill\diamond$

\medskip
{\sc Claim G.} 
{\it \clmG}

\medskip
{\sc Proof of Claim G.}
We have to prove that $\hat{u}$ lies between 
$\hat{u}_1$ and $\hat{u}_2$ (not including $\hat{u}_2$)
and $\hat{u}'$ lies between 
$\hat{u}_{N-1}$ and $\hat{u}_N$ (not including $\hat{u}_{N-1}$).
For the first property, 
suppose to the contrary that 
$\hat{u}$ does not lie between $\hat{u}_1$ and $\hat{u}_2$
or $\hat{u} = \hat{u}_2$.
This means, that there exists $i \in \{2, \dots, N-1\}$
such that $\hat{u}$ lies between $\hat{u}_i$ and $\hat{u}_{i+1}$, 
including  the case that $\hat{u}$
coincides with $\hat{u}_i$. Note that, by Claim E, $\hat{u}$ cannot coincide with $\hat{u}_N$, 
since there is yet another
node $\hat{u}'$ between  $\hat{u}$ and $\hat{u}_N$.

By definition of $\hat{u}_i$ and $\hat{u}_{i+1}$, there is a $p \in [2n+3;m]$, 
such that both $\hat{u}_i$ and $\hat{u}_{i+1}$ cover $a'_p$. 
Then, by the connectedness condition,  $\hat{u}$ covers the following (sets of) 
vertices: 
\begin{itemize}
\item $a'_p$, since $\hat{u}$ is on the path from $\hat{u}_{i}$ 
to $\hat{u}_{i+1}$
(or 
$\hat{u}$ coincides with $\hat{u}_{i}$), 
\item $a_1$, since $\hat{u}$ is on the path from  $\hat{u}_1$ to $u_A$,
\item $S \cup \{z_1,z_2\}$, since $\hat{u}$ is on the path from  $u_A$ to 
$u'_A$.
\end{itemize}
However, by Lemma~\ref{lem:compl_edge}, we know that, 
to cover $S \cup \{ z_1, z_2 \}$ with weight $\leq 2$,
we are only allowed to put 
non-zero weight on pairs of complementary edges.
Hence, it is impossible to achieve also weight $\geq 1$ on $a'_p$ and $a_1$
at the same time.

It remains to show that $\hat{u}'$ lies between 
$\hat{u}_{N-1}$ and $\hat{u}_N$ (not including $\hat{u}_{N-1}$).
Suppose to the contrary that it does not. Then, 
analogously to the above considerations for $\hat{u}$, 
it can be shown that there exists some $p \in [2n+3;m]$, such that $\hat{u}'$
covers the vertices
$S \cup \{z_1,z_2, a_p, a'_1\}$. Again, this is impossible
by Lemma~\ref{lem:compl_edge}. $\hfill \diamond$
\nop{********************************* 
\begin{itemize}
\item $\hat{u}'$ lies between $\hat{u}_{N-1}$ and $\hat{u}_N$ (not including $\hat{u}_{N-1}$):
suppose to the contrary that it does not. This means, that there exists $i \in \{2, \dots, N-1\}$
such that $\hat{u}'$ lies between $\hat{u}_{i-1}$ and $\hat{u}_{i}$. This includes the case that $\hat{u}$
coincides with $\hat{u}_{i}$; by Claim E, $\hat{u}'$ cannot coincide with $\hat{u}_N$, since there is yet another
node $\hat{u}$ between  $\hat{u}'$ and $\hat{u}_1$.
 
By definition of $\hat{u}_{i-1}$ and $\hat{u}_{i}$, there is a $p \in [2n+3;m]$, 
such that both $\hat{u}_{i-1}$ and $\hat{u}_{i}$ cover $a_p$. 
Then, by the connectedness condition,  $\hat{u}$ covers the following (sets of) 
vertices: 
\begin{itemize}
\item $a_p$, since $\hat{u}'$ is on the path from $\hat{u}_{i-1}$
to $\hat{u}_{i}$, 
\item $a'_1$, since $\hat{u}'$ is on the path from  $\hat{u}_N$ to $u'_A$,
\item $S \cup \{z_1,z_2\}$, since $\hat{u}'$ is on the path from  $u_A$ to 
$u'_A$.
\end{itemize}
Again, we get a contradiction with Lemma~\ref{lem:compl_edge}. $\hfill \diamond$
\end{itemize}
*********************************}
\medskip
By Claim C, the decomposition $\mcF$ contains a path 
$\hat{u}_1 \cdots \hat{u}_N$ that covers the edges 
$e_{\min \ominus 1}, e_{\min}$, $e_{\min \oplus 1}$, \dots, $e_{\max \ominus 
1}$, $e_{\max}$
in this order. We next strengthen this  
property by showing that every node $\hat{u}_i$ covers exactly one edge $e_p$.

\medskip
{\sc  Claim H.}{\it \clmH}

\medskip
{\sc Proof of Claim H.}
We prove this property for the ``outer nodes'' $\hat{u}_1$, $\hat{u}_N$ 
and for the ``inner nodes'' $\hat{u}_2 \cdots \hat{u}_{N-1}$ separately.
We start with the ``outer nodes''.
The proof for $\hat{u}_1$ and $\hat{u}_N$ is symmetric. We thus only work out 
the details for 
$\hat{u}_1$. Suppose to the contrary that $\hat{u}_1$ not only covers $e_{\min 
\ominus 1}$ 
but also $e_{\min}$. We distinguish two cases according to the position of 
node $\hat{u}$ in Figure~\ref{fig:u-and-uprime} (b):

\begin{itemize}
 \item Case~(1): $\hat{u} = \hat{u}_1$. 
Then,  $\hat{u}_1$ has to cover the following (sets of) vertices: 
\begin{itemize}
\item $S \cup \{z_1,z_2\}$, since $\hat{u}$ is on the path from $u_A$ to 
$u'_A$ and we are assuming $\hat{u} = \hat{u}_1$. 

\item $a_1$, since $\hat{u}_1$ covers $e_{\min \ominus 1}$,

\item $a'_{\min}$, since we are assuming that $\hat{u}_1$ also covers $e_{\min}$. 

\end{itemize}

By applying Lemma~\ref{lem:compl_edge}, we may conclude that 
the set $S \cup \{z_1,z_2,a_1,a'_{\min}\}$
cannot be covered by a fractional edge cover of weight $\leq 2$.

\item Case~(2): $\hat{u} \neq \hat{u}_1$. Then 
$\hat{u}$ is on the path from $\hat{u}_1$ to 
$\hat{u}_{2}$. Hence, 
$\hat{u}$ has to cover the following (sets of) vertices: 
\begin{itemize}
\item $S \cup \{z_1,z_2\}$, since $\hat{u}$ is on the path from $u_A$ to $u'_A$,

\item $a_1$, since $\hat{u}$ is on the path from $u_A$ to $\hat{u}_1$, 

\item $a'_{\min}$, since $\hat{u}$ is on the path from $\hat{u}_1$ to 
$\hat{u}_2$.
\end{itemize}
As in Case (1) above, 
$S \cup \{z_1,z_2,a_1,a'_{\min}\}$
cannot be covered by a fractional edge cover of weight $\leq 2$
due to Lemma~\ref{lem:compl_edge}.
\end{itemize}

It remains to consider the ``inner'' nodes $\hat{u}_i$ with  $2 \leq i \leq 
N-1$. 
Each such $\hat{u}_i$ has to cover $S \cup \{z_1,z_2\}$ since all these nodes 
are on the path from $u_A$ to $u'_A$ by Claim G. Now suppose that $\hat{u}_i$ 
covers 
$e_p = A'_p \cup \overbar{A_p}$
for some $p \in \{e_{\min}, \dots, e_{\max \ominus 1}\}$. 
By Lemma~\ref{lem:covering}, covering all of the vertices $A'_p \cup 
\overbar{A_p} \cup S \cup \{z_1,z_2\}$ by a fractional edge cover of weight 
$\leq 2$ 
requires that we put total weight $1$ on  the
edges $e^{k,0}_p$ and total weight $1$ on the edges $e^{k,1}_p$
with $k \in \{ 1,2,3\}$.
However, then it is 
impossible to cover also $e_{p'}$ for some $p'$ with $p' \neq p$.
This concludes the proof of Claim F.$\hfill \diamond$

\medskip
We can now associate with each $\hat{u}_i$ for $1 \leq i \leq N$ the 
corresponding 
edge $e_p$ and write $u_p$ to denote the node that covers the edge $e_p$.
By Claim G, we know that all of the nodes $u_{\min} \dots, u_{\max\ominus 1}$ 
are on the path from $u_A$ to $u'_A$. Hence, by the connectedness condition, 
all these nodes cover $S \cup \{z_1,z_2\}$.

We are now ready to construct a satisfying truth assignment $\sigma$ of 
$\varphi$. For each $i \leq 2n+3$, let $X_i$ be the set $B_{u_{(i,1)}} \cap (Y 
\cup Y')$. As 
  $Y \subseteq B_{u_A}$ and $Y' \subseteq B_{u'_A}$, the sequence $X_1 \cap 
  Y, \ldots, X_{2n+3}\cap Y$ is non-increasing and the sequence $X_1 \cap Y', 
  \ldots,
  X_{2n+3}\cap Y'$ is non-decreasing. Furthermore, as all edges $e_{y_i} = \{ 
  y_i, y'_i \}$ must be covered by some node in $\mcF$, we conclude that for 
  each 
  $i$ and $j$, $y_j \in X_i$ or $y'_j \in X_i$.
  Then, there is some $s \leq 2n+2$ such that $X_s = X_{s+1}$. Furthermore,
  all nodes between $u_{(s,1)}$ and $u_{(s+1,1)}$ cover $X_s$. 
  We derive a truth assignment for $x_1, \ldots, x_n$ from $X_s$ as follows. For
  each $l \leq n$, we set $\sigma(x_l) = 1$ if $y_l \in X_s$ and otherwise
  $\sigma(x_l) = 0$. Note that in the latter case $y'_l \in X_s$.

\medskip
{\sc  Claim I.}
{\it \clmI}

\medskip
{\sc Proof of Claim I.}
We have to show that every clause $c_j =   L_j^1   \vee L_j^2 \vee L_j^3$ of $\varphi$ 
is true in $\sigma$. 
Choose an arbitrary $j \in \{1, \dots, m\}$.
We have to show that there exists a literal in $c_j$ which is true in $\sigma$.
To this end, we inspect the node $u_{(s,j)}$, which, by construction, 
lies between $u_{(s,1)}$ and $u_{(s+1,1)}$. 
Let $p = (s,j)$.
Then we have $A'_{p} \cup \overbar{A_{p}} \cup S   \cup  \{  z_1, z_2\}
\subseteq B_{u_{p}}$. Moreover, by the definition of $X_s$, we also have 
$X_s \subseteq B_{u_{p}}$.
  By Lemma~\ref{lem:covering}, the only way to cover
  $B_{u_{p}}$ with weight $\leq 2$ is by using exclusively the edges 
  $e^{k,0}_p$ and $e^{k,1}_p$ with $k \in \{1,2,3\}$.
More specifically, we have 
 $\sum_{k=1}^3 \gamma_{u_{p}}(e^{k,0}_p) = 1$ and $\sum_{k=1}^3 \gamma_{u_{p}}(e^{k,1}_p) = 1$.
Therefore, $\gamma_{u_{p}}(e^{k,0}_p) > 0$ for some $k$. 
We distinguish two cases depending on the form of literal $L^k_j$:

\begin{itemize}
\item Case (1): First, suppose $L^k_j = x_l$. By Lemma~\ref{lem:compl_edge}, 
complementary edges must have equal weight. Hence, 
from $\gamma_{u_{p}}(e^{k,0}_p) > 0$ it follows that 
also 
  $\gamma_{u_{p}}(e^{k,1}_p) > 0$ holds. Thus, the weight on $y'_l$ is less
  than $1$, which means that $y'_l \not\in B(\gamma_{u_{p}})$ and 
  consequently $y'_l \not\in X_s$. Since this implies that $y_l \in X_s$, we indeed 
have that 
  $\sigma(x_l) = 1$. 
  
\item Case (2):  Conversely, suppose $L^k_j =  \neg x_l$. Since $\gamma_{u_{p}}(e^{k,0}_p) > 0$, the weight on $y_l$ is 
  less
  than $1$, which means that $y_l \not\in B(\gamma_{u_{p}})$ and 
  consequently $y_l \not\in X_s$. Hence, we have $\sigma(x_l) = 0$. 
  
\end{itemize}  
In either case, literal  $L^k_p$ is satisfied by $\sigma$ and therefore, the $j$-th clause $c_j$ is 
satisfied by $\sigma$. Since $j$ was arbitrarily chosen, $\sigma$  indeed satisfies $\varphi$.$\hfill \diamond$

\medskip
\noindent
Claim~I completes the proof of Theorem~\ref{thm:npcomp}.
\end{proof}

\noindent
We conclude this section  by mentioning that the above reduction 
is easily extended to $k+\ell$ for arbitrary
$\ell \geq 1$: for integer values $\ell$, simply add a clique of $2 \ell$ fresh 
vertices 
$v_1, \dots, v_{2 \ell}$ to $H$ and connect each $v_i$ with each ``old'' vertex 
in $H$. Now assume a rational value $\ell \geq 1$, i.e., 
$\ell = r / q$ for natural numbers $r,q$ with $r >q > 0$. 
To achieve a rational bound $k + r/q$, we add %
$r$ fresh 
vertices and 
add hyperedges 
$\{v_i, v_{i\oplus 1},\dots, v_{i \oplus (q-1)}\}$ with $i \in \{1, \dots, 
r\}$
to $H$, where $a \oplus b$ denotes $a + b$ modulo $r$. Again, we connect 
each 
$v_i$ with each ``old'' vertex in $H$. With this construction we
can give NP-hardness proofs for any (fractional) $k \geq 3$. For all fractional values
$k < 3$ (except for $k = 2$) different gadgets and ideas might be needed to
prove NP-hardness of \rec{FHD,$k$}, which we leave  for future work.

\section{Efficient Computation of GHDs}
\label{sect:ghd}

As discussed in Section \ref{sect:introduction} we are interested in finding a realistic and non-trivial criterion on hypergraphs that 
makes the \rec{GHD,\,$k$} problem tractable for fixed $k$.
We thus propose here such a simple property, namely the bounded intersection of two or more edges.

\begin{definition}\label{def:bip}
The {\em intersection width} $\iwidth{\HH}$ of a hypergraph $\HH$ is the 
maximum 
cardinality of any intersection $e_1\cap e_2$ of two distinct edges $e_1$ and 
$e_2$ 
of $\HH$.
We say that a hypergraph $H$ has
the {\em $i$-bounded intersection property ($i$-BIP)} if 
$\iwidth\HH\leq i$ holds. 

Let $\classC$ be a class of hypergraphs. 
We say that $\classC$ 
has the  
{\em bounded intersection property (BIP)} if there exists some 
integer 
constant $i$ such that
every hypergraph $H$ in $\classC$
 has the $i$-BIP.
Class $\classC$ has the 
{\em logarithmically-bounded intersection property (LogBIP)}
if for each of its elements $\HH$, $\iwidth{\HH}$ is $\calO(\log n)$, where $n$ 
denotes the size of the
hypergraph $\HH$.
\end{definition}

The BIP criterion is indeed non-trivial, as several well-known 
classes 
of
unbounded $\ghw$  enjoy the 1-BIP, such as  cliques and grids. Moreover, 
our empirical study \cite{pods/FischlGLP19} 
suggests that the overwhelming number of CQs enjoys the $2$-BIP
(i.e., one hardly joins two relations over more than 2 attributes).
To allow for a yet bigger class of hypergraphs, the BIP can be 
relaxed as follows.

\begin{definition}\label{def:bmip}
The {\em $c$-multi-intersection width} $\cmiwidth{c}{\HH}$ of a 
hypergraph 
$\HH$ 
is the maximum cardinality of any intersection $e_1\cap\cdots\cap e_c$  of 
$c$ distinct edges $e_1, \ldots,  e_c$ of $\HH$.  
We say that a hypergraph $H$ has
the 
{\em $i$-bounded $c$-multi-intersection property (\icBMIP)} if
$\cmiwidth{c}{\HH}\leq i$ holds.

Let $\classC$ be a class of hypergraphs. 
We say that $\classC$ 
has the  
{\em bounded multi-intersection property (BMIP)} 
if there exist constants $c$ and $i$ 
such that
every hypergraph $H$ in $\classC$
has the \icBMIP.
Class $\classC$ of hypergraphs has the 
{\em logarithmically-bounded  multi-intersection property (LogBMIP)} if there 
is 
a constant $c$ such that 
for the hypergraphs $\HH\in \classC$, $\cmiwidth{c}{\HH}$ is 
$\calO(\log n)$, 
where $n$ denotes the size of the
hypergraph $\HH$.
\end{definition}

  \begin{figure}
    \centering
     \includegraphics{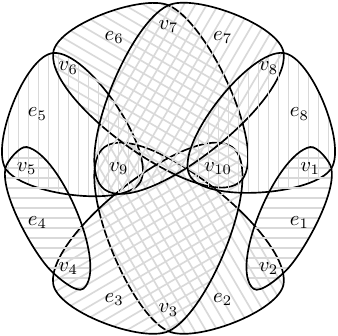}
    \caption{Hypergraph $\HH_0$ from Example \ref{ex:ghw1}} 
    \label{fig:GHDvsHD_graph}
  \end{figure}

\begin{example}
  \label{ex:ghw1}
  
  Figure~\ref{fig:GHDvsHD_graph} shows the hypergraph $H_0 = (V_0,E_0)$ with $ghw(\HH_0)=2$ but $hw(\HH_0)$=3. (which is from \cite{2009gottlob}, which, in turn, was inspired by work of Adler~\cite{adler2004marshals}). 
    Figure \ref{fig:H0_HD}
  shows an HD of width 3 and Figure \ref{fig:H0_GHD} shows GHDs of width 2 for the hypergraph $H_0$. 
  The BIP and the 3-BMIP of $H_0$ is 1. Starting from c=4, the c-BMIP is 0. 
  \hfill$\Diamond$
\end{example}

\begin{figure}
  \centering 
  \footnotesize
    \tikzstyle{htmatrix}=[%
      matrix,
      matrix of nodes,
      nodes={draw=none},
      column 1/.style={nodes={fill=gray!20,align=right},anchor=base east,minimum width=2em},
      column 2/.style={anchor=base west},
      ampersand replacement=\&
    ]
      \begin{tikzpicture}[sibling distance=13em,
     every node/.style = {shape=rectangle,draw,inner sep=1,align=center}]
  \node[htmatrix] { \htnode{$v_1,v_2,v_3,v_6,v_7,v_9,v_{10}$}{$e_1,e_2,e_6$}}
    child { node[htmatrix] {\htnode{$v_3,v_4,v_5,,v_6,v_7,v_9,v_{10}$}{$e_3,e_5$}} }
    child { node[htmatrix] {\htnode{$v_1,v_7,v_8,v_9,v_{10}$}{$e_7,e_8$} }
    };
\end{tikzpicture}
\caption{HD of hypergraph $H_0$ in Figure~\ref{fig:GHDvsHD_graph}}
\label{fig:H0_HD}
\end{figure}
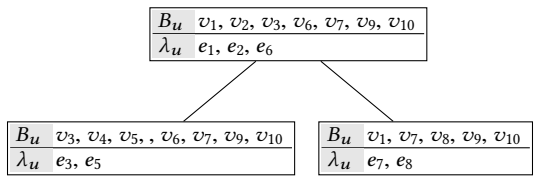

The LogBMIP is %
the most liberal restriction on classes of hypergraphs introduced in 
Definitions~\ref{def:bip} and \ref{def:bmip}. 
The main result in this section will be that the
\rec{GHD,\,$k$} problem with fixed $k$ 
is tractable for any class of hypergraphs satisfying this 
criterion.

Towards this result, first recall that the difference between HDs and GHDs lies 
in the ``special condition'' required by HDs. 
Assume a hypergraph $H = (V(H), E(H))$ and an arbitrary GHD $\mcH=\defGHD$ of $H$. 
Then $\mcH$ is not necessarily an HD, since it may 
contain a special condition violation (SCV), i.e.: there can exist a node $u$, an edge $e \in \lambda_u$ and a 
vertex $v \in V$,  s.t.\ $v \in e$ (and, hence, $v \in B(\lambda_u)$), $v \not\in B_u$ and $v \in V(T_u)$.
Clearly, if we could be sure that $E(H)$ also contains the edge $e ' = e \cap B_u$, then we would simply replace $e$ in $\lambda_u$ 
by $e'$ and would thus get rid of this SCV. 

\begin{example}[Example \ref{ex:ghw1} continued]
   \label{ex:ghw2}
   The GHDs in Figure \ref{fig:H0_GHD} (a) and (b) violate the special condition in node $u$ since the edge $e_2$ containing vertex $v_2$ is in $\lambda_u$ and $v_2$ is in $V(T_u)$ but not in $B_u$. Adding \[e'_2 = e_2 \cap B_u = \{ v_2, v_3, v_9\} \cap \{ v_3, v_6, v_7, v_9,v_{10} \} = \{ v_3, v_9 \}\]
   to $H_0$ and replacing $e_2$ with $e'_2$ in $\lambda_u$ would repair the SCV at node $u$ 
   of the GHDs in Figure \ref{fig:H0_GHD}.
   \hfill$\Diamond$
\end{example}

Now our goal is to define a polynomial-time computable 
function $f$  which, to each hypergraph $\HH$ and integer $k$, 
associates a set $f(\HH,k)$ of additional hyperedges such that $\ghw(\HH)=k$ 
iff 
$\hw(\HH')=k$ with 
$H = (V(H), E(H))$ and
$H' = (V(H), E(H) \cup f(\HH,k))$. 
From this it follows immediately that $\ghw(\HH)$ is computable in polynomial 
time. 
The function $f$ is defined in such a way that $f(\HH,k)$ 
only contains subsets of hyperedges 
of $\HH$. Thus, $f$ is a {\em subedge function} as described 
in~\cite{2009gottlob} and
a GHD of the same width can be easily obtained from any HD of 
$\HH'$.
It is easy to see and well-known \cite{2009gottlob}
that  for 
each subedge function $f$, and each $H$ and $k$,  
$\ghw(\HH) \leq \hw(\HH\cup f(\HH,k))\leq \hw(\HH)$. Moreover, 
for the ``limit'' subedge function $f^+$ where  $f^+(\HH,k)$ 
consists of all possible non-empty subsets of edges of $\HH$,  we have that 
$\hw(\HH\cup f^+(\HH,k)) = \ghw(\HH)$~\cite{adler2004marshals,2009gottlob}.
Of course, in general, $f^+$ contains an 
exponential number of edges. The important point is that our function $f$ will achieve the 
same, while generating a polynomial and \ptime-computable set of edges only. 

  \begin{figure}
    \centering 
  \footnotesize
    \tikzstyle{htmatrix}=[%
      matrix,
      matrix of nodes,
      nodes={draw=none},
      column 1/.style={nodes={fill=gray!20,align=right},anchor=base east,minimum width=2em},
      column 2/.style={anchor=base west},
      ampersand replacement=\&
    ]

    \begin{minipage}[c]{0.48\textwidth}
       \centering
       \begin{tikzpicture}[sibling distance=12em,
     every node/.style = {shape=rectangle,draw,inner sep=1,align=center}
     ]
      
  \node (root) [htmatrix] {\htnode{$v_3,v_6,v_7,v_9,v_{10}$}{$e_2,e_6$ }}
        child { node (c1) [htmatrix]{\htnode{$v_3,v_7,v_8,v_9,v_{10}$}{ $e_3,e_7$}} 
           child { node (c2) [htmatrix]{\htnode{$v_1,v_2,v_3,v_8,v_9,v_{10}$}{$e_2,e_8$ }}
                 }
              }
        child { node (c1a) [htmatrix]{\htnode{$v_3,v_6,v_9,v_{10}$}{$e_3,e_5$}}
     child { node (c3) [htmatrix]{\htnode{$v_3,v_4,v_5,v_6,v_9,v_{10}$}{$e_3,e_5$}}
        }
     };
     
     \node [above=of c1a.north west, draw=none, style={color=red}, anchor=south west, yshift=-1cm] { $u'$: };
     
     \node [above=of root.north west, draw=none, style={color=red}, anchor=south west, yshift=-1cm] { $u_0 = u$: };

     \node [above=of c1.north west, draw=none, style={color=red}, anchor=south west, yshift=-1cm] { $u_1$: };
     \node [above=of c2.north west, draw=none, style={color=red}, anchor=south west, yshift=-1cm] { $u_2 = u^*$: };

\end{tikzpicture}
     \centering\medskip\noindent (a)
     \end{minipage}
     \begin{minipage}[c]{0.48\textwidth}
       \centering
       \begin{tikzpicture}[sibling distance=12em,
     every node/.style = {shape=rectangle,draw,inner sep=1,align=center}
     ]
     
  \node (root) [htmatrix] {\htnode{$v_3,v_6,v_7,v_9,v_{10}$}{$e_2,e_6$}}
        child { node (c1) [htmatrix]{\htnode{$v_3,v_7,v_8,v_9,v_{10}$}{$e_3,e_7$}} 
           child { node (c2) [htmatrix]{\htnode{$v_1,v_2,v_3,v_8,v_9,v_{10}$}{$e_2,e_8$}}
                 }
              }
     child { node (c3) [htmatrix]{\htnode{$v_3,v_4,v_5,v_6,v_9,v_{10}$}{$e_3,e_5$}}
     };
     
     \node [above=of root.north west, draw=none, style={color=red}, anchor=south west, yshift=-1cm] { $u_0 = u$: };

     \node [above=of c1.north west, draw=none, style={color=red}, anchor=south west, yshift=-1cm] { $u_1$: };
     \node [above=of c2.north west, draw=none, style={color=red}, anchor=south west, yshift=-1cm] { $u_2 = u^*$: };
     
     \node (circ1) [shape=circle, style={color=red}, thick] at ([yshift=-5,xshift=-19]root) {  \;\;\; };
     \node (circ2) [shape=circle, style={color=red}, thick] at ([yshift=5,xshift=-13]c2) {  \;\;\; };
\end{tikzpicture}
     \centering\medskip\noindent (b)
     \end{minipage}

   \caption{(a) non bag-maximal vs. (b) bag-maximal GHD of hypergraph $H_0$ in Figure~\ref{fig:GHDvsHD_graph}}
   \label{fig:H0_GHD}
  \end{figure}
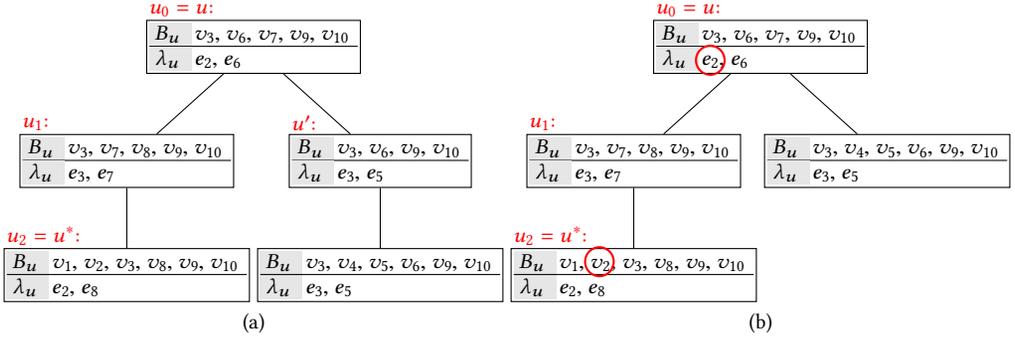

We start by introducing a useful property of GHDs, which we will call 
{\em bag-maximality\/}. 
Let $\mcH=\defGHD$ be a GHD of some hypergraph $H = (V(H), E(H))$. 
For each node $u$ in $T$, we have $B_u \subseteq B(\lambda_u)$ by definition of GHDs
and, in general, $B(\lambda_u) \setminus B_u$ may be non-empty.
We observe that it is sometimes possible to take some vertices from 
$B(\lambda_u) \setminus B_u$ and add them to $B_u$ 
without violating the connectedness condition. Of course, such an addition of vertices to $B_u$ 
does not violate any of the other conditions of GHDs. Moreover, it does not increase the width.

\begin{definition}
 Let $\mcH=\defGHD$ be a GHD of some hypergraph $H = (V(H), E(H))$.
 We call $\mcH$ {\em bag-maximal\/}, if for every node $u$ in $T$,
adding a vertex $v \in B(\lambda_u) \setminus B_u$ to $B_u$ would violate the connectedness condition. 
\end{definition}

It is easy to verify that if $H$ has a GHD of width $\leq k$, then it also has a bag-maximal GHD of width $\leq k$.

\begin{lemma}
 For every GHD $\mcH=\defGHD$ of some hypergraph $H = (V(H), E(H))$, there exists a bag-maximal 
 GHD $\mcH'=\left<T,(B'_u)_{u\in T},(\lambda_u)_{u\in T}\right>$ of $H$, such that $\mcH$ and $\mcH'$ have the same width. 
\end{lemma}

\begin{proof}
Start with a GHD of width $k$ of $\HH$. As long as there exists a
node $u \in T$ and a vertex $ v\in B(\lambda_u) \setminus B_u$, such that $v$ can be 
added to $B_u$ without destroying the GHD properties, select such a node
$u$ and vertex $v$ arbitrarily and add $v$ to $B_u$. By exhaustive application of this 
transformation, 
a bag-maximal GHD of width $k$ of $\HH$  is obtained.
\end{proof}

\begin{example}[Example \ref{ex:ghw2} continued]
\label{ex:ghw3}
Clearly, the GHD in Figure \ref{fig:H0_GHD}(a) violates bag-maximality in node $u'$, since 
the vertices $v_4$ and $v_5$ can be added to $B_{u'}$ without violating any GHD properties. 
If we add $v_4$ and $v_5$ to $B_{u'}$, then bag $B_{u'}$ at node $u'$ and the bag at its child node are the same, which allows us to delete one of the nodes. This results in the GHD given in Figure \ref{fig:H0_GHD}(b), which is bag-maximal. In particular, the vertex $v_2$ cannot be added to $B_{u_0}$: indeed, adding $v_2$ to $B_{u_0}$ would violate the connectedness condition, since $v_2$ is not in $B_{u_1}$ but in $B_{u_2}$.  
\hfill$\Diamond$
\end{example}

So from now on, we will restrict ourselves w.l.o.g.\ to bag-maximal GHDs. 
Before we prove a crucial lemma, 
we introduce some useful notation: 

\begin{definition}
\label{def:critical_ghd}
Let $\mcH=\defGHD$  
be an GHD
of a hypergraph $\HH$. Moreover, let 
$u$ be a node in $\mcH$ and 
let $e \in \lambda_u$ such that 
$e \setminus B_u \neq \emptyset$ holds. 
Let $u^*$ denote the node closest to $u$, such that $u^*$ covers $e$, i.e., 
$e \subseteq B_{u*}$. Then,
we call the path $\pi = (u_0,u_1,\ldots,u_l)$ with $u_0 = u$ and $u_l = u^*$ the
{\em critical path\/} of $(u,e)$ denoted as $\critp(u,e)$.
\end{definition}

\begin{lemma}
\label{lem:crit}
Let $\mcH=\defGHD$  be a bag-maximal GHD of a hypergraph 
$\HH = (V(H),$ $E(H))$, 
let $u \in T$, $e \in \lambda_u$, and $e \setminus B_u \neq \emptyset$. 
Let $\pi = (u_0, u_1, \dots, u_\ell)$ with $u_0 = u$  be the critical path of $(u,e)$.
Then the following equality holds.
$$e\cap B_u=\ \ \ \  e\,\cap\bigcap_{i=1}^\ell B(\lambda_{u_i})
$$
\end{lemma}

\begin{proof} ``$\subseteq$'':
Given that $e \subseteq B_{u_\ell}$ and by the connectedness 
condition, 
$e \cap B_u$ must be a subset of $B_{u_i}$ for every $i \in \{1,\dots,\ell\}$. 
Therefore, $e\cap B_u \subseteq e \cap \bigcap_{i=1}^\ell B(\lambda_{u_i})$ holds.

\smallskip

\noindent
``$\supseteq$'': Assume to the contrary that there exists some vertex 
$v \in e$ with  $v \not\in B_u$ but 
$v \in \bigcap_{i=1}^\ell B(\lambda_{u_i})$. 
By $e \subseteq B_{u_\ell}$, we have $v \in B_{u_\ell}$.
By the connectedness condition, along the path $u_0, \dots, u_\ell$ with $u_0 =u$, there exists $\alpha \in \{0, \dots, \ell-1\}$, s.t.\
$v \not \in B_{u_{\alpha}}$ and $v \in B_{u_{\alpha+1}}$. However, by the assumption, $v \in \bigcap_{i=1}^\ell B(\lambda_{u_i})$ holds. 
In particular, $v \in B(\lambda_{u_{\alpha}})$. Hence, we could 
safely add $v$ to $B_{u_{\alpha}}$ without violating the connectedness condition nor any other 
GHD condition. This contradicts the bag-maximality of $\mcH$. 
\end{proof}

\begin{example}[Example \ref{ex:ghw2} continued]
\label{ex:ghw4}
Consider root node $u$ of the GHD in Figure \ref{fig:H0_GHD}(b). 
We have $e_2 \in \lambda_u$ and $e_2 \setminus B_u = \{v_2\} \neq \emptyset$. On the other hand, 
$e_2$ is covered by $u_2$. Hence,
the critical path of $(u,e_2)$ is $\pi = (u,u_1,u_2)$.
It is easy to verify that 
$e_2 \cap B_u = e_2 \cap (e_3 \cup e_7) \cap (e_8 \cup e_2) = \{v_3,v_9\}$ indeed holds. 
\hfill$\Diamond$
\end{example}

We are now ready to prove the main result of this section.

\newcommand{\thmLogBMIP}{%
For every hypergraph class $\classC$ that enjoys the LogBMIP, and for every 
constant 
$k\geq 1$, the \rec{GHD,\,$k$} problem is tractable, i.e., 
given a hypergraph $\HH$, it is feasible in polynomial time to 
check $\ghw(\HH)\leq k$ and, if so, to compute a GHD of 
width $k$ of $\HH$.%
}

\begin{theorem}\label{theo:LogBMIP}
\thmLogBMIP
\end{theorem}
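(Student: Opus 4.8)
The plan is to reduce the \rec{GHD,\,$k$} problem to the \rec{HD,\,$k$} problem, which is known to be in \ptime by \cite{2002gottlob}. The key conceptual tool is the well-known gap between GHDs and HDs: the only thing an HD has over a GHD is the special condition (4), which forces $V(T_u)\cap B(\lambda_u)\subseteq B_u$. Given a GHD $\mcG$ of $H$ of width $\le k$, if we could guarantee that every bag $B_u$ is in fact a union of at most $k$ hyperedges (rather than merely being \emph{covered} by at most $k$ hyperedges, i.e. possibly a proper subset of such a union), then the special condition would be easy to enforce. So the strategy is: expand $H$ to an \emph{edge-augmented} hypergraph $H'$ by adding, as new hyperedges, all sets of the form $e_1\cap e_2\cap\cdots$ that can arise as ``left-over'' pieces when one takes a union of $k$ edges and intersects with the relevant subtree. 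More precisely, for each $k$-tuple of edges $f_1,\dots,f_k$ and each choice of ``which parts survive'', one adds the relevant subsets of $V(f_1)\cup\cdots\cup V(f_k)$ as new edges. The point of the BMIP/LogBMIP is exactly that these new edges are obtained by intersecting hyperedges, so there are only polynomially many of them when $c$-wise intersections have size $O(\log n)$: an intersection of $c$ edges has $O(\log n)$ vertices, hence at most $2^{O(\log n)}=n^{O(1)}$ subsets, and there are $O(n^c)$ such $c$-tuples, so the blow-up is polynomial. Concretely, one would let $H' $ contain $E(H)$ together with all subsets of all $c$-wise (and fewer-wise) intersections of edges of $H$; this is the set of all sets that can appear as $e\cap B_u$ needing to be ``patched in'' to make a GHD bag a genuine union of edges.

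The core technical step is the equivalence: $\ghw(H)\le k$ if and only if $\hw(H')\le k$, where $H'$ is this edge-augmented hypergraph. The direction $\hw(H')\le k \Rightarrow \ghw(H)\le k$ is easy: an HD of $H'$ is in particular a GHD of $H'$, hence (since $E(H)\subseteq E(H')$ and the new edges of $H'$ are subsets of unions of old edges, so any $\lambda$-cover using new edges can be rewritten into a $\lambda$-cover of the same weight using old edges after relaxing the special condition) it yields a GHD of $H$ of the same width. The harder direction is $\ghw(H)\le k \Rightarrow \hw(H')\le k$. Here I would take an optimal GHD $\mcG=\langle T,(B_u),(\lambda_u)\rangle$ of $H$ and transform it into an HD of $H'$ of the same width. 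The idea: at each node $u$, $B_u\subseteq B(\lambda_u)=V(f_1)\cup\cdots\cup V(f_m)$ for the $m\le k$ edges $f_1,\dots,f_m$ in the support of $\lambda_u$. The special condition may fail because some vertex $v\in V(T_u)\cap B(\lambda_u)$ is not in $B_u$. One repairs this by shrinking: replace $\lambda_u$'s cover by covering only $B_u$ itself — but $B_u$ need not be a union of $k$ edges of $H$. However, $B_u\cap V(f_j)$ is, for each $j$, an intersection of $B_u$ with an edge; and $B_u$ is itself (by connectedness arguments, restricting to the subtree) contained in bounded intersections of edges along the decomposition. This is where one must be careful: one argues that $B_u$, or the relevant pieces of it, can be written as a union of $\le k$ edges of $H'$, using that each piece is a subset of a $c$-wise intersection of original edges (and hence is an edge of $H'$ by construction). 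The standard argument (as in \cite{2009gottlob} and the $\ghw$ vs.\ $\hw$ literature) is that one can always normalize a GHD so that each bag equals $B(\lambda_u)\cap V(T_u)$ or similar, and then the ``new'' bag decomposes into a union of an original cover intersected with bounded-size sets.

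I would then invoke the \ptime algorithm for \rec{HD,\,$k$} on $H'$: since $|E(H')|$ is polynomial in $|H|$ (by LogBMIP), and \rec{HD,\,$k$} runs in time polynomial in the size of its input hypergraph, the whole procedure runs in polynomial time. Finally, given an HD of $H'$ of width $\le k$, one converts it back (by the easy direction above) into a GHD of $H$ of width $\le k$; this conversion is clearly polynomial. This establishes both the decision and the construction parts of the theorem. For the fixed-parameter tractability claim with respect to $c$ and $i$ (mentioned in Main Result~3 though perhaps proved separately), one notes that the bound $n^{O(c)}\cdot 2^{O(i)}$ on $|E(H')|$ is polynomial in $n$ for fixed $c,i$ with the exponent independent of $c$ in the FPT sense after a more careful counting, but I would relegate that refinement to the detailed proof.

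The main obstacle I anticipate is pinning down exactly which subsets must be added to form $H'$ and proving the nontrivial direction $\ghw(H)\le k\Rightarrow \hw(H')\le k$ cleanly — in particular, showing that after the normalization step each new bag is genuinely a union of $\le k$ edges of $H'$, which requires a careful structural analysis of how bags interact with the subtree $T_u$ and repeated use of the connectedness condition. One subtlety is that naively one might need to add subsets of unions of $k$ edges (exponentially many), so the whole point is to show the BMIP lets us get away with subsets of $c$-wise \emph{intersections} only; justifying that reduction is the crux.
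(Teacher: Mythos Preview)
Your high-level strategy is exactly the paper's: build a polynomial-size set of subedges $f(H,k)$, set $H'=(V(H),E(H)\cup f(H,k))$, and show $\ghw(H)\le k\Leftrightarrow \hw(H')\le k$, then invoke the \ptime HD algorithm. The easy direction and the overall architecture are fine.

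The gap is in your concrete choice of $f(H,k)$. You propose adding ``all subsets of all $c$-wise (and fewer-wise) intersections of edges of $H$''. This is not enough to make the hard direction go through while \emph{preserving the width bound $k$}. The repair step works edge-by-edge: given a special condition violation $(u,e,v)$ in a bag-maximal GHD, one must replace the single edge $e\in\lambda_u$ by a \emph{single} subedge $e'=e\cap B_u\in f(H,k)$. The paper shows (via the key identity $e\cap B_u = e\cap\bigcap_{p\in\critp(\sigma)} B(\lambda_p)$ for bag-maximal GHDs) that $e'$ is not a subset of a single $c$-wise intersection but rather a \emph{union} of up to $k^{c-1}$ such subsets: distributing $\cap$ over the unions $B(\lambda_p)=\bigcup_{j}e_j^{(p)}$ gives a union over transversals, and the ``curtailed transversal tree'' argument prunes each transversal to $c-1$ distinct edges (after which the intersection with $e$ has size $O(\log n)$). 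With your $f(H,k)$, covering $e\cap B_u$ would require up to $k^{c-1}$ subedges instead of one, blowing the width to $k^c$ rather than $k$, so $\hw(H')\le k$ would fail even when $\ghw(H)\le k$.

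What you are missing, then, are three linked ingredients: (i) the notion of a \emph{bag-maximal} GHD, which makes the identity $e\cap B_u = e\cap\bigcap_{p}B(\lambda_p)$ hold along the critical path; (ii) the transversal-tree argument bounding the number of ``branches'' in that intersection-of-unions by $k^{c-1}$; and (iii) the resulting correct definition of $f(H,k)$ as the set of all possible such unions (of size at most $m^{k+1}\cdot n^{O(k^{c-1})}$), so that each SCV can be repaired by swapping one edge for one edge. Your final paragraph correctly flags this as ``the crux'', but your proposed construction does not resolve it.
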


\begin{proof}
Let $\HH = (V(H),E(H))$ be an arbitrary hypergraph.
Our goal is to show that there exists a polynomially bounded, polynomial-time computable 
set $f(H,k)$ of subedges of $H$, such that 
$\ghw(\HH)=k$ iff  $\hw(\HH')=k$ with $H' = (V(H), E(H) \cup f(\HH,k))$. 
By our considerations above, in order to guarantee the equivalence $\ghw(\HH)=k$ iff  $\hw(\HH')=k$,
it suffices to construct $f(H,k)$ in such a way that, in every GHD $\mcG$ of $H$,
for every node $u$ in $\mcG$, and every edge $e \in \lambda_u$, 
the set $f(H,k)$ contains the 
subedge $e' = e \cap B_u$. 

Let $\mcH=\defGHD$  be a bag-maximal GHD of $\HH$, 
let $u \in T$, $e \in \lambda_u$, and $e \setminus B_u \neq \emptyset$. 
Let $\pi = (u_0, u_1, \dots, u_\ell)$ with $u_0 = u$  be the critical path of $(u,e)$.
By Lemma~\ref{lem:crit}, the equality 
$
e\cap B_u=  e\,\cap\bigcap_{i=1}^\ell B(\lambda_{u_i})
$
holds.
For $i \in \{1, \dots, \ell\}$, let $\lambda_{u_i} = \{e_{i1},\dots, e_{ij_i}\}$
with $j_i \leq k$. 
Then $e\,\cap\bigcap_{i=1}^\ell B(\lambda_{u_i})$ and, therefore, 
also $e\cap B_u$, is of the form 
$$e \cap (e_{11} \cup \dots \cup  e_{1j_1}) \cap \dots \cap (e_{\ell 1} \cup \dots \cup  e_{\ell j_\ell}).
$$
We want to construct $f(\HH,k)$ in such a way that it contains all possible sets  $e\cap B_u$ of vertices. 
To 
this end, we proceed by a stepwise transformation of the above intersection of unions into a union of intersections via distributivity of $\cup$ and $\cap$. 

For $i \in \{ 0, \dots, \ell\}$, 
let $I_i = e \cap \bigcap_{\alpha=1}^i B(\lambda_{u_\alpha}) = e \cap \bigcap_{\alpha=1}^i (e_{\alpha1} \cup \dots \cup  e_{\alpha j_\alpha})$. 
Then $I_0 = e$. For $I_1$ we have to distinguish two cases: if $e \in \lambda_{u_1}$, then $e \subseteq B(\lambda_{i_1})$ and, therefore, 
$I_1 = I_0$. If $e \not\in \lambda_{u_1}$, then 
$I_1 = (e \cap e_{11}) \cup \dots \cup  (e \cap e_{1j_1})$. 
In the latter case, for computing $I_2$, we have to go through all sets $(e \cap e_{1\beta})$ with  $\beta \in \{1, \dots, j_1\}$
and distinguish the two cases if
$\{e, e_{1\beta}\} \cap \lambda_{u_2} \neq \emptyset $ holds or not. If it holds, then we let $(e \cap e_{1\beta})$ in the disjunction of $I_1$ unchanged. Otherwise we replace it by 
$(e \cap e_{1\beta} \cap e_{21}) \cup \dots \cup  (e \cap e_{1\beta} \cap e_{2j_2})$. 
This splitting of intersections into unions of intersections can be iterated over all $i \in \{1, \dots, \ell\}$ in order to arrive 
at $I_\ell = e\,\cap\bigcap_{i=1}^\ell B(\lambda_{u_i}) =  e\cap B_u$, where $I_\ell$ is represented as a union of intersections.

We formalize the computation of the intersections in
$I_0$, \dots, $I_\ell$ by constructing the ``$\bigcup\bigcap$-tree'' in 
Algorithm~\ref{alg:uitree} ``\uitree''.
In a loop over all $i \in \{1, \dots, \ell\}$, we thus compute trees $\calT_i$ such that each node $p$ in $\calT_i$ 
is labelled by  a set $\lab(p)$ of edges. By $\inter(p)$ we denote the intersection of the edges in $\lab(p)$. The parent-child relationship
between a node $p$ and its child nodes $p_1, \dots, p_{j_\alpha}$ corresponds to a splitting step, where the intersection $\inter(p)$
is replaced by the union $(\inter(p) \cap e_{\alpha 1}) \cup \dots \cup (\inter(p) \cap e_{\alpha j_\alpha})$.
It can be proved by a straightforward induction on $i$ that, in the tree $\calT_i$, the union of $\inter(p)$ over all {\em leaf nodes\/} $p$ of $\calT_i$ yields
precisely the union-of-intersections representation of $I_i$. 

We observe that, in the tree $\calT_\ell$, each node has at most $k$ child nodes. Nevertheless, $\calT_\ell$ can become exponentially big since we have no appropriate bound on the length $\ell$ of the critical path. 
Recall, however, that we are assuming the LogBMIP, i.e., there exists a constant $c>1$, s.t.\
any intersection of $\geq c$ edges of $H$ has at most $a \log n$ elements, where $a$ is a constant and $n$ denotes the size of $H$.
Now let $\calT^*$ be the {\em reduced $\cupcap$-tree},
which is obtained from $\calT_\ell$ by cutting off all nodes of depth greater than $c-1$. Clearly, $\calT^*$ has at most 
$k^{c-1}$ leaf nodes and 
the total number of nodes in $\calT^*$  is bounded by $(c-1)  k^{c-1}$. 

The set $f(H,k)$ of subedges that we add to $H$ will consist in 
all possible sets $I_\ell$ that we can 
obtain from all possible critical paths $\pi = (u_0, u_1, \dots, u_\ell)$ in all possible bag-maximal GHDs
$\mcH$ of width $\leq k$ of $H$. We only show that, in case of the LogBMIP, 
the number of possible sets $I_\ell$ is polynomially bounded.
The polynomial-time computability of this set of sets is then easy to see. 
The set of all possible sets $I_\ell$ is obtained by first considering all possible reduced $\cupcap$-trees $\calT^*$
and then considering all sets $I_\ell$ that correspond to some extension $\calT_\ell$ of $\calT^*$.  

First, let 
$m$ denote the number of edges in $E(H)$, then the 
number of possible reduced $\cupcap$-trees $\calT^*$ for given $H$ and $k$ is bounded by 
$m \cdot m^{(c-1)  k^{c-1}}$. 
This can be seen as follows: we can first construct the complete $k$-ary tree of depth $c-1$. 
Clearly, this tree has $\leq (c-1)  k^{c-1}$ nodes. The root is labelled with edge $e$. Now we may label each other node in this tree
either by a set of edges which is obtained from the label of its parent by adding one new edge 
(in particular, by an edge different from $e$) to express that such a node with such a label exists in $\calT^*$ . 
Or we may label a node (and consequently all its descendants) by some stop symbol $\bot$ to express that $\calT^*$ shall not contain this node. Hence, in total, we have $m$ choices for the initial $\cupcap$-tree $\calT_0$ (namely the edge $e$ labelling the root)
and 
$\leq m^{(c-1)  k^{c-1}}$ choices to expand $\calT_0$ to $\calT^*$.

It remains to determine the number of possible sets $I_\ell$ that one can get from possible extensions $\calT_\ell$ of $\calT^*$.  
Clearly, if a leaf node in $\calT^*$ is at depth $< c-1$,  then no descendants at all of this node have been cut off. 
In contrast, a leaf node $p$ in $\calT^*$ at depth $ c-1$ may be the root of a whole subtree in $\calT_\ell$. 
Let $U(p)$ denote the union of the intersections represented by all leaf nodes below $p$. By construction of $\calT_\ell$, 
$U(p) \subseteq \inter(p)$ holds. Moreover, by the LogBMIP, 
$|\inter(p)| \leq a \log n$ for some constant $a$. 
Hence, $U(p)$ takes one out of at most $2^{a \log n} = n^a$ possible values. 

In total, an upper bound on the  number of possible sets $I_\ell$ (and, hence, on $|f(H,k)|$) 
is obtained as follows: there are at most $m \cdot m^{(c-1)  k^{c-1}}$ reduced trees $\calT^*$; each such tree 
has at most $k^{c-1}$ leaf nodes, and each leaf node represents at most $n^{a}$ different sets of vertices.
Putting all this together, we conclude that $|f(H,k)|$
is bounded by $m \cdot m^{(c-1)  k^{c-1}} \cdot n^{a  k^{c-1}}$ for some constant $a$.
\end{proof}

\begin{algorithm}[t]
\SetKwData{Left}{left}\SetKwData{This}{this}\SetKwData{Up}{up}
\SetKwFunction{Union}{Union}\SetKwFunction{FindCompress}{FindCompress}

\SetKwData{N}{N}

\SetKwInOut{Input}{input}\SetKwInOut{Output}{output}

\Input{GHD $\mcH$ of $H$, an edge $e \in E(H)$, critical path $\pi = (u_0, \dots, u_\ell)$ of $\mcH$}
\Output{$\cupcap$-tree $T_\ell$}
\BlankLine
\tcc{Initialization: compute $(N,E)$ for $T_0$ }
$N \la \{p\}$\;
$E \la \emptyset$\;
$\lab(p) \la \{e \}$\;
$T \la (N,E)$\;
\BlankLine
\tcc{Compute $T_i$ from $T_{i-1}$ in a loop over $i$}
\For{$i\leftarrow 1$ \KwTo $\ell$}{
\ForEach{leaf node $p$ of $\;T$}{
\If{$\lab(p) \cap \lambda_{u_i}  = \emptyset$}{
Let $\lambda_{u_i} = \{e_{i 1}, \dots,  e_{i j_i}\}$\;
Create new nodes $\{p_1,  \dots,  p_{j_i}\}$\;
           \lFor{$\alpha \la 1$ \KwTo $j_i$}{$\lab(p_\alpha) \la \lab(p_\alpha) \cup \{ e_{i \alpha} \}$}
           $N \la N \cup \{p_1,  \dots,  p_{j_i}\}$\;
           $E \la E \cup \{(p,p_1),  \dots,  (p,p_{j_i})\}$\;
}
}
$T\la(N,E)$\;
}
\caption{\uitree}\label{alg:uitree}
\end{algorithm}%

\begin{example}[Example \ref{ex:ghw4} continued]
\label{ex:ghw5}
The constructed $\cupcap$-tree of the critical path $(u,u_1,u^*)$ of $(u,e_2)$ in Figure \ref{fig:H0_GHD}(b) is given in Figure \ref{fig:uitree}. The intersection of unions $e_2 \cap (e_3 \cup e_7)$ is replaced by the unions of the leaf nodes $(e_2 \cap e_3) \cup (e_2 \cap e_7)$, which yields the same edge $e'_2 = \{ v_3, v_9 \}$ as in Example \ref{ex:ghw2}. 
\hfill$\Diamond$
\end{example}

\begin{figure}[b]
  \centering 
      \begin{tikzpicture}[every node/.style = {shape=rectangle,draw,align=center},level distance=20pt]
  \node { $e_2$ }
    child { node { $e_2, e_3$ } }
    child { node { $e_2, e_7$ }
    };
\end{tikzpicture}
\caption{$\cupcap$-tree of the critical path $(u,u_1,u^*)$ of $(u,e_2)$ in Figure \ref{fig:H0_GHD}(b)}
\label{fig:uitree}
\end{figure}
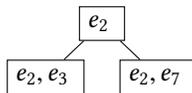

We have defined 
in Section \ref{sect:introduction}
the degree $d$ of a hypergraph $H$. 
We now consider hypergraphs of bounded degree.

\begin{definition}\label{def:bdegree}
We say that a hypergraph $H$ has
the 
{\em $d$-bounded degree property (\dBDP)} if \linebreak 
$\ddegree{\HH}\leq d$ holds.

Let $\classC$ be a class of hypergraphs. 
We say that $\classC$ 
has the  
{\em bounded degree property (BDP)} 
if there exists a constant $d$
such that
every hypergraph $H$ in $\classC$
has the \dBDP.
\end{definition}

The class of 
hypergraphs of bounded degree is an interesting
special case of the class of hypergraphs enjoying the BMIP.
Indeed, suppose that each vertex in a hypergraph $H$ 
occurs 
in at most $d$ edges for some constant $d$. 
Then the intersection of $d+1$ hyperedges is always empty. 
The following corollary is thus immediate.

\begin{corollary}
\label{corol:degreeGHD}
For every class $\classC$ of hypergraphs of bounded degree, for each constant 
$k$, 
the problem  \rec{GHD,\,$k$} is tractable.
\end{corollary}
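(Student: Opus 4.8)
The plan is to derive Corollary~\ref{corol:degreeGHD} directly from Theorem~\ref{theo:LogBMIP} by showing that bounded degree is a (very strong) special case of the LogBMIP. Concretely, let $\classC$ be a class of hypergraphs such that there is a constant $d$ with $\mathrm{degree}(H)\leq d$ for every $H\in\classC$. I claim that every such $H$ has the $(0, d{+}1)$-BMIP, i.e.\ $\cmiwidth{d+1}{H}\leq 0$. Indeed, suppose $e_1,\ldots,e_{d+1}$ are $d{+}1$ distinct edges of $H$ and suppose towards a contradiction that some vertex $v$ lies in $e_1\cap\cdots\cap e_{d+1}$. Then $v$ occurs in at least $d{+}1$ distinct edges of $H$, so $\mathrm{degree}(H)\geq d{+}1 > d$, contradicting the bound on the degree. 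Hence $e_1\cap\cdots\cap e_{d+1}=\emptyset$, which is exactly $\cmiwidth{d+1}{H}\leq 0$.

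Next I would observe that the $(0,d{+}1)$-BMIP trivially implies the LogBMIP for the class $\classC$: taking the constant $c:=d{+}1$, for every $H\in\classC$ we have $\cmiwidth{c}{H}=0$, which is certainly $\calO(\log n)$ (indeed $\calO(1)$). Thus $\classC$ enjoys the LogBMIP in the sense of Definition~\ref{def:bmip}. Note that here $c$ depends only on the global constant $d$ of the class, not on the individual hypergraph, so the uniformity requirement in the definition of LogBMIP is met.

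Finally, I would invoke Theorem~\ref{theo:LogBMIP} verbatim: since $\classC$ enjoys the LogBMIP, for every constant $k\geq 1$ the \rec{GHD,\,$k$} problem is tractable for inputs from $\classC$ — one can check $\ghw(H)\leq k$ in polynomial time and, in the positive case, compute a GHD of width $k$. This is precisely the statement of the corollary. The argument is essentially a one-line implication once the containment of bounded-degree classes in LogBMIP classes is spelled out; there is no real obstacle. The only point worth a sentence of care is the uniformity of the constant $c$ across the class, and the fact that an \emph{empty} multi-intersection width vacuously satisfies the $\calO(\log n)$ bound, so that Theorem~\ref{theo:LogBMIP} genuinely applies with no further work. (In fact the slightly stronger statement — that $\classC$ even enjoys the BMIP, with $i=0$ and $c=d{+}1$ — holds, so one could equally cite the BMIP case of the theorem; the remark preceding the corollary already makes this observation, and the proof just records the contrapositive about $(d{+}1)$-fold intersections being empty.)
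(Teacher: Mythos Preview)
Your proposal is correct and follows essentially the same approach as the paper: the text immediately preceding the corollary observes that if every vertex lies in at most $d$ edges then any intersection of $d{+}1$ distinct edges is empty, so bounded degree implies the BMIP (with $c=d{+}1$, $i=0$) and hence the LogBMIP, after which Theorem~\ref{theo:LogBMIP} applies directly. Your write-up simply spells out this implication in a bit more detail, including the contrapositive argument and the uniformity-of-$c$ remark, but the idea is identical.
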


For the important special case of the BIP, 
the upper bound on $|f(\HH,k)|$ in the proof of Theorem~\ref{theo:LogBMIP},
improves to $m^{k+1}\cdot 2^{k \cdot i}$ .
More specifically, in case of the BIP, the set $f(H,k)$ becomes
$$f(\HH,k)= \bigcup_{e\in E(\HH)}   
\Big({\bigcup_{e_1,\ldots,e_j\in (E(\HH) \setminus \{e\}),\, j\leq k }} 
2^{(e\cap (e_1\cup\cdots\cup e_j))}\Big),$$
i.e., $f(\HH,k)$ contains all subsets of intersections of edges $e \in E(H)$ with unions of $\leq k$ 
edges of $H$ 
different from $e$. In case of the BIP,  the intersection $e\cap (e_1\cup\cdots\cup e_j)$ has at most $i\cdot k$ 
elements. Hence, 
$|f(\HH,k)| \leq m^{k+1}\cdot 2^{k \cdot i}$ holds.
We thus get the following parameterized complexity result.

\newcommand{\thmBMIPfpt}{%
For each constant $k$,
the \rec{GHD,\,$k$} problem is fixed-parameter tractable 
w.r.t.\ the parameter $i$ for hypergraphs enjoying the BIP, i.e., 
in this case, 
\rec{GHD,\,$k$} can be solved in time $\calO(h(i) \cdot \poly(n))$, where 
$h(i)$ is a function depending  on the intersection width $i$ only
and 
$\poly(n)$ is a function that depends polynomially on the size $n$ of 
the given hypergraph $H$.
}

\begin{theorem}\label{theo:BMIPftp}
\thmBMIPfpt
\end{theorem}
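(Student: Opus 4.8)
\textbf{Proof plan for Theorem~\ref{theo:BMIPftp}.} The plan is to extract the fixed-parameter tractable bound directly from the construction of the subedge function $f(\HH,k)$ used in the proof of Theorem~\ref{theo:LogBMIP}, by reanalyzing it under the stronger \icBMIP\ assumption instead of the LogBMIP assumption. Recall that in that proof the only place where the size of the intersections enters is in bounding, for a fixed ``repair candidate'' (a special condition violation at node $u$, characterised by the path $\pi$ down to the covering node $u_0$), the number of distinct subedges $e'=e\cap B_u$ that can arise. The key structural fact established there is that, for a bag-maximal GHD, $e\cap B_u = e\cap\bigcap_{u'\in\pi}B(\lambda_{u'})$, each $B(\lambda_{u'})$ being a union of at most $k$ edges of $\HH$, and that by distributivity this intersection reduces to a union of intersections where each summand is an intersection of $e$ with at most $c-1$ further distinct edges (after which intersecting with more edges only shrinks the set). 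This is exactly what the transversal tree $\TT(\pi)$ pruned at depth $c-1$ enumerates, giving at most $k^{c-1}$ branches.

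The only modification needed is the following: under the \icBMIP, an intersection of $e$ with $c-1$ distinct other edges is an intersection of $c$ distinct edges, hence has at most $i$ elements (rather than $\calO(\log n)$). Therefore each branch of $\TT(\pi)$ contributes a set of size $\leq i$, whose powerset has $\leq 2^{i}$ subsets; with $\leq k^{c-1}$ branches we get at most $\big(2^{i}\big)^{k^{c-1}} = 2^{k^{c-1}\cdot i}$ candidate subedges per repair candidate. Multiplying by the bound $m^{k+1}$ on the number of ``distinct'' SCVs (derived exactly as in the proof of Theorem~\ref{theo:LogBMIP}, and independent of $i,c$), we obtain
\[
|f(\HH,k)| \;\leq\; m^{k+1}\cdot 2^{k^{c-1}\cdot i},
\]
and $f(\HH,k)$ is computable within the same time bound (up to a polynomial factor). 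Writing $h(i,c) = 2^{k^{c-1}\cdot i}$ — a function of $i$ and $c$ only, since $k$ is a fixed constant — this is of the form $h(i,c)\cdot\poly(n)$.

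The remaining step is to feed $H' = (V(H), E(H)\cup f(\HH,k))$ into the known polynomial-time algorithm \rec{HD,\,$k$} of~\cite{2002gottlob}. Since $f$ is a subedge function, the chain $\ghw(\HH)\leq \hw(\HH')\leq \hw(\HH)$ holds, and the argument of Theorem~\ref{theo:LogBMIP} shows that the SCV-repair transformation turns any width-$k$ GHD of $\HH$ into a width-$k$ HD of $\HH'$, so in fact $\hw(\HH')\leq k$ iff $\ghw(\HH)\leq k$; moreover a GHD of $\HH$ of the same width is read off from any HD of $\HH'$. The running time of \rec{HD,\,$k$} on $\HH'$ is polynomial in the size of $\HH'$, which is $n + |f(\HH,k)| \leq n + m^{k+1}\cdot 2^{k^{c-1}\cdot i}$; raising this to the fixed power dictated by the $O(n^{2k})$-type bound of the HD algorithm keeps the overall time of the form $h'(i,c)\cdot\poly(n)$ for a suitable function $h'$ of $i,c$ only. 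I expect no genuine obstacle here: the work is entirely a bookkeeping refinement of the LogBMIP analysis, the main subtlety being to phrase the ``at most $i$ elements after intersecting $c-1$ extra edges'' observation correctly against the off-by-one in the definition of \icBMIP\ (intersection of $c$, not $c-1$, edges), and to make sure the $k^{c-1}$ transversal-tree bound genuinely does not depend on $n$. Full details are deferred to Appendix~\ref{app:bip}.
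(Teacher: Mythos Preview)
Your proposal is correct and follows essentially the same approach as the paper: both argue that under the \icBMIP\ the per-branch bound in the curtailed transversal tree improves from $n^{a}$ to $2^{i}$, yielding $|f(\HH,k)|\leq m^{k+1}\cdot 2^{k^{c-1}\cdot i}$, from which fixed-parameter tractability in $(i,c)$ follows. Your additional remarks about feeding $H'$ into the HD algorithm and tracking the overall running time are a bit more explicit than the paper's version, but the argument is the same.
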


\section{Efficient Computation of FHDs}
\label{sect:fhd-exact}

In Section \ref{sect:ghd}, we have shown that under certain 
conditions 
(with the BIP and BDP as most specific and the LogBMIP as most general conditions)
the problem of computing a GHD of width $\leq k$ 
can be reduced to the problem of computing an  HD of width $\leq  k$. 
The key to this problem reduction was to add subedges which allowed us
to repair all possible special condition violations (SCVs) in all possible GHDs of width $\leq k$. 
When trying to 
carry over these ideas from GHDs to FHDs, we encounter {\em two major challenges\/}: Can 
we repair SCVs in an FHD by ideas similar to GHDs? Does 
the 
special condition in case of FHDs allow us 
to extend the HD algorithm from \cite{2002gottlob}  to 
FHDs?

As for the first challenge, recall from Theorem~\ref{theo:LogBMIP} that the 
tractability of \rec{GHD,\,$k$} was achieved by adding polynomially many subedges $f(H,k)$ to a hypergraph $H$,
such that $B_u = B(\lambda_u)$ can be enforced in every node $u$ of a GHD of $H$. In other words, for $S_u = \cov(\lambda_u)$, 
we had $B_u = \bigcup S_u$ in case of GHDs. GHDs with this property clearly satisfy the special condition. We thus reduced the \rec{GHD,\,$k$}  problem to the \rec{HD,\,$k$} problem, which is well-known to be tractable \cite{2002gottlob}.
In contrast, for FHDs, the fractional edge cover function $\gamma_u$ at a node $u$ 
may take any value in $[0,1]$. Therefore, $e \in \cov(\gamma_u)$ (i.e., $\gamma_u(e) > 0$) does not imply $\gamma_u(e) = 1$.
Hence, substantially more work will be needed to achieve $B_u = \bigcup S_u$ with $S_u = \cov(\gamma_u)$
also for FHDs.

As for the second challenge, we will encounter another obstacle compared 
to the 
HD algorithm: a crucial step of the top-down 
construction of an  HD in  \cite{2002gottlob}
is to ``guess'' $\leq k$ edges with $\lambda_u(e) = 1$ for the 
next node $u$ in the HD. However, for a fractional cover $\gamma_u$, we do not 
have such a bound on the number of edges with non-zero weight.
In fact, it is easy to exhibit a family $(H_n)_{n \in \mathbb{N}}$ of 
hypergraphs where it is advantageous to have unbounded $\cov(\gamma_n)$ 
even if $(H_n)_{n \in \mathbb{N}}$ enjoys the BIP, as the following example
illustrates:

\begin{example}
\label{ex:fhwLongEdge}
{\em 
Consider the family $(H_n)_{n \in \mathbb{N}}$ of hypergraphs with $H_n 
=(V_n,E_n)$ defined as follows:

\smallskip
$V_n = \{v_0, v_1, \dots, v_n\}$

$E_n= \{ \{v_0,v_i\} \mid 1 \leq i \leq n\} \cup \{\{v_1, \dots, v_n\}\}$

\smallskip

\noindent
Clearly $\iwidth{H_n} = 1$, but an optimal fractional edge cover of $H_n$ is 
obtained by the following mapping $\gamma$ with 
$\cov(\gamma) = E_n$:

\smallskip

$\gamma(\{v_0, v_i\}) = 1/n$ for each $i \in \{1, \dots, n\}$ and

$\gamma(\{v_1,\dots,  v_n\}) = 1 - (1/n)$ 

\smallskip
\noindent
such that $\weight(\gamma) = 2 - (1/n)$, which is 
optimal in this case. 
}
\end{example}

Nevertheless, in this section, we use the ingredients from 
the \rec{GHD,\,$k$} problem to 
prove a similar (slightly weaker though) tractability result 
for the \rec{FHD,\,$k$} problem. 
More specifically, we shall show that the
\rec{FHD,\,$k$} problem becomes
trac\-table for fixed $k$, 
if we impose the bounded degree property.
Thus, the main result of this section is: 

\newcommand{\thmexactFHD}{%
For every hypergraph class $\classC$ 
that has bounded degree, and for every 
constant $k \geq 1$, 
the 
\rec{FHD,\,$k$} problem is tractable, i.e., 
given a hypergraph $\HH \in \classC$, it is feasible in polynomial time to 
check  $\fhw(\HH)\leq k$ and, if so, to compute an FHD of 
width $k$ of $\HH$.%
}

\begin{theorem}\label{theo:exactFHD}
\thmexactFHD
\end{theorem}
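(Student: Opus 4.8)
The plan is to reduce the \rec{FHD,\,$k$} problem for a hypergraph $\HH$ with the BIP and bounded degree $d$ to the \rec{HD,\,$k$} problem on an augmented hypergraph $\HH'$, mirroring the strategy of Theorem~\ref{theo:LogBMIP} but overcoming the two obstacles described above. The first step is to observe that, because $\HH$ has bounded degree $d$, any fractional edge cover $\gamma_u$ of width $\leq k$ at a node $u$ can be replaced, without increasing the width and without shrinking $B(\gamma_u)$, by one whose support $\cov(\gamma_u)$ has bounded size. Concretely, $B(\gamma_u)$ depends only on which vertices receive total weight $\geq 1$; restricting attention to the vertices in $B(\gamma_u)$, the relevant fractional cover is a basic feasible solution of a linear program whose constraint matrix has entries in $\{0,1\}$ and at most $d$ ones per column, and whose number of tight constraints bounding the support is controlled by $d$ and $k$. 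Hence one may assume $|\cov(\gamma_u)| \leq g(d,k)$ for a fixed function $g$. This is the analogue of the ``guess $k$ edges'' step of the HD algorithm: at each node we now guess a set of at most $g(d,k)$ edges together with rational weights of bounded denominator, which is still polynomially many choices for fixed $d,k$.

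The second step is to handle special condition violations. As in the GHD case we work with bag-maximal FHDs: whenever $v \in B(\gamma_u) \setminus B_u$ can be added to $B_u$ (and to all bags on the relevant connected piece) without breaking connectedness, we do so. Now suppose an SCV remains at $u$: there is $e$ with $\gamma_u(e) > 0$ and some $v \in (e \cap B(\gamma_u)) \setminus B_u$, and $e$ is covered at a descendant $u_0$; let $\pi$ be the path from $u$ to $u_0$. The key structural fact, provable exactly as in Theorem~\ref{theo:LogBMIP} using bag-maximality and connectedness, is that the ``troublesome'' part of $e$ that must nonetheless be re-covered at $u$, namely $e \cap B_u$, equals $e \cap \bigcap_{u' \in \pi} B(\gamma_{u'})$. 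Since each $B(\gamma_{u'})$ is the union of at most $g(d,k)$ edges, distributivity rewrites this intersection as a union of intersections of edges; because $\HH$ has the $i$-BIP, any intersection of two distinct edges has size $\leq i$, so every such intersecting term — once it involves two genuinely distinct edges — has at most $i$ elements, and the only way to avoid shrinking below size $i$ is to intersect copies of a single edge, which are then handled by the transversal/pruning bookkeeping of Theorem~\ref{theo:LogBMIP} truncated at depth $2$. Consequently $e \cap B_u$ lies among a polynomial-size, \ptime-computable family of subedges of $\HH$. We define $f(\HH,k)$ to be the union of all these candidate subedges over all edges $e$ and all choices of at most $g(d,k)$ edges forming the unions $B(\gamma_{u'})$; as in Theorem~\ref{theo:LogBMIP} the number of ``distinct'' SCVs is bounded by a polynomial in the number $m$ of edges, so $|f(\HH,k)|$ is polynomial.

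The third step assembles these pieces. Set $\HH' = (V(\HH), E(\HH) \cup f(\HH,k))$. Repairing SCVs by replacing $\gamma_u(e)>0$ with weight on the subedge $e' = e \cap B_u \in f(\HH,k)$ (and exhaustively iterating) shows that if $\fhw(\HH) \leq k$ then $\HH'$ has an FHD of width $\leq k$ satisfying the special condition; conversely adding subedges never increases fractional width, and an FHD of $\HH'$ restricts to an FHD of $\HH$ of the same width, so $\fhw(\HH') = \fhw(\HH)$. Finally, for FHDs satisfying the special condition whose covers have support of bounded size $g(d,k)$, the top-down HD-style algorithm of \cite{2002gottlob} goes through: at each step we guess the bounded-support fractional cover for the next node (polynomially many options), split into \comp{$B_u$}s exactly as for HDs, and the special condition guarantees progress. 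This yields a polynomial-time algorithm deciding $\fhw(\HH) \leq k$ and producing a witnessing FHD, which after un-substituting subedges from $f(\HH,k)$ back to their parent edges becomes an FHD of the original $\HH$ of width $\leq k$.

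The main obstacle I expect is the interplay between the two restrictions in the SCV-repair argument: bounded degree is what lets us bound $|\cov(\gamma_u)|$ and thus express $B(\gamma_u)$ as a union of boundedly many edges, but it is the BIP that then makes the resulting iterated intersections small; getting both to cooperate — in particular arguing that a bag-maximal FHD really does satisfy $e \cap B_u = e \cap \bigcap_{u'\in\pi} B(\gamma_{u'})$ in the fractional setting, where $\gamma_{u'}(e)>0$ does not give $e \subseteq B(\gamma_{u'})$ — is the delicate point, and is precisely where the fractional case diverges from the GHD case discussed before the theorem. Showing that bounded support can be assumed without loss of generality (the basic-feasible-solution argument) and that it suffices for the \cite{2002gottlob} algorithm is the other place where care is needed. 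Full details are deferred to Appendix~\ref{app:fhd-exact}.
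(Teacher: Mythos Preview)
Your second step contains a false claim that breaks the argument. You assert that ``each $B(\gamma_{u'})$ is the union of at most $g(d,k)$ edges,'' but $B(\gamma_{u'})$ is the set of vertices receiving total weight $\geq 1$ under $\gamma_{u'}$, which is in general a \emph{proper} subset of $\bigcup\cov(\gamma_{u'})$. Bounded support tells you only that $B(\gamma_{u'})$ lies inside a union of $g(d,k)$ edges, not that it equals one. Hence the distributivity step that is supposed to rewrite $e \cap \bigcap_{u'\in\pi} B(\gamma_{u'})$ as a union of intersections of edges does not go through, and there is nothing for the BIP to bite on. This is exactly the obstacle the paper flags just before the theorem: in the integral case one can always find a node $u'$ on the path with $e \notin \lambda_{u'}$ (otherwise bag-maximality would let you add $v$ to every bag), so the edges at $u'$ are all distinct from $e$ and the BIP applies; in the fractional case $\gamma_{u'}(e) > 0$ along the whole path is perfectly compatible with $e \not\subseteq B(\gamma_{u'})$, so that case distinction collapses and your transversal-tree truncation has no foothold. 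You correctly flag this as ``the delicate point'' in your final paragraph, but the body of your argument simply assumes it away.

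The paper's route is genuinely different: it does not attempt to repair the full special condition. Instead, Lemma~\ref{lem:CBounded} shows that under BIP and bounded degree every width-$k$ FHD can be converted to one with \emph{$c$-bounded fractional part}: each bag splits as $B_u = B_1 \cup B_2$ where $B_1$ is covered by the edges with $\gamma_u$-weight exactly~$1$ and $|B_2| \leq c = c(k,d,i)$. On the integral part $B_1$ the GHD-style subedge repair is sound (Lemma~\ref{lem:WeakSC}, yielding the \emph{weak} special condition); the fractional part $B_2$ is small enough to be guessed vertex-by-vertex in the alternating algorithm, and one then checks by an LP that $B_2$ has a fractional cover of the residual weight. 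Your bounded-support observation is correct---it is in fact the starting point of the stronger follow-up result sketched in Section~5.1 that drops the BIP hypothesis---but by itself it does not make $B(\gamma_{u'})$ a union of edges, and without that your SCV-repair does not produce a polynomial subedge family.
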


To prove this result, we tackle the two presented main challenges in reverse order. First, we will show that every hypergraph $H$ with degree $d$ allows for an FHD $\mcF$ with bounded support $\cov(\gamma_u)$ 
at every node $u$ of $\mcF$ (Lemma~\ref{lem:bsupp}).  Second, we will devise a polynomial subedge function that allows us to repair all possible SCVs of such $\mcF$  with bounded $\cov(\gamma_u)$ at every node $u$ 
(Lemma~\ref{lem:desired-subedge-function}).

\nop{%
\subsection{Preliminaries}
\label{sect:Preliminaries}

We refer the reader to our recent article \cite{DBLP:journals/corr/FischlGP16} for 
most of the basic definitions needed also in this work. Below, we recall some crucial 
definitions and introduce some 
additional ones mainly to fix the terminology.

\smallskip

\noindent
We now recall the definition of FHDs and some related crucial notions:

\begin{definition}
Let $H = (V(H),E(H))$ be a hypergraph and let $\gamma \colon E(H) \ra [0,1]$ be an
edge-weight function for $H$. For $v\in V(H)$, 
we write $\gamma(v)$ to denote the total weight that $\gamma$ assigns to $v$. Moreover, 
we write by $B(\gamma)$ to denote  the set of all 
vertices {\em covered\/} by $\gamma$, i.e.:
\[ \gamma (v) = \sum_{e\in E(H), v\in e} \gamma(e).
\]
\[ B(\gamma) = \left\{ v\in V(H) \mid \sum_{e\in E(H), v\in e} \gamma(e) \geq 1 \right\} = 
\left\{ v\in V(H) \mid \gamma(v) \geq 1 \right\}
\]
\end{definition}

\begin{definition}
 \label{def:FHD}
Let $H=(V(H),E(H))$ be a  hypergraph. 
A {\em fractional hypertree decomposition\/} (FHD) of $H$
is a tuple 
$\left< T, (B_u)_{u\in N(T)}, (\gamma)_{u\in N(T)} \right>$, such that 
$T = \left< N(T),E(T)\right>$ is a rooted tree and 
the 
following conditions hold:

\begin{enumerate}
 \item for each $e \in E(H)$, there is a node $u \in N(T)$ with $e \subseteq 
B_u$;
 \item for each $v \in V(H)$, the set $\{u \in N(T) \mid v \in B_u\}$ is 
connected 
in $T$;
 \item for each $u\in N(T)$, $\gamma_u$ is an edge-weight function $\gamma_u \colon E(H) 
\ra 
[0,1]$
with 
$B_u \subseteq  B(\gamma_u)$.
\end{enumerate}
\end{definition}

\medskip
\noindent
The width of an FHD is the maximum weight of the functions 
$\gamma_u$, 
over all nodes $u$ in $T$. Moreover, 
the fractional hypertree width of $H$ (denoted $\fhw(H)$) 
is the minimum width over all FHDs of $H$.
Condition~2 is often called the ``connectedness condition'' and 
the set $B_u$ is usually referred to as the ``bag'' at node $u$. 
Note 
that, in contrast to hypertree decompositions (HDs), the underlying tree $T$ of an FHD 
does not need to be rooted. For the sake of uniformity, 
we assume that also the tree underlying an 
FHD is rooted with the understanding that the root is arbitrarily 
chosen. 
Finally, by slight abuse of notation, we shall write $u \in T$ short for $u \in N(T)$. Hence, 
FHDs will be referred to as $\defFHD$ or, simply as $\defFHDohne$.

In our tractability proof of \rec{FHD,\,$k$}, we will make heavy use of certain unions and intersections of sets of vertices. We thus introduce the following notation.

\noindent
We sometimes identify sets of edges with hypergraphs. If a set of edges $E$ is used, where instead a hypergraph is expected, then we mean the hypergraph $(V,E)$, where $V$ is simply the union of all edges in $E$. 
Finally, for a set $E$ of edges, it is convenient to write $\bigcup E$ (and $\bigcap E$, respectively)  to denote the set of vertices obtained by taking the union
(or the intersection, respectively) of the edges in $E$.
}%

\medskip
\noindent {\bf Bounded Support.} 
First, we show that, for every FHD $\calF$ of width $k$ of a hypergraph $H$
of degree $\leq d$, there exists an FHD $\calF'$ of width $\leq k$ of $H$
satisfying the
following important property: for every node $u$ in the FHD $\calF'$, the 
fractional edge cover $\gamma_u$ has support $\supp(\gamma)$ bounded by a constant that depends only 
on $k$ and $d.$
For this result, we need to introduce, analogously to edge-weight functions 
and edge covers in Section \ref{sect:prelim},
the notions of vertex-weight functions and vertex covers.

\begin{definition}
\label{def:cover-and-support}
A vertex-weight function $w$ for a hypergraph $\HH$ assigns a weight $w(v) \geq 0$ to each vertex $v$ of $\HH$. 
We say that $w$ is a 
{\em fractional vertex cover\/}
of $\HH$  if for each edge  $e\in E(\HH)$, $\Sigma_{v\in e} w(v)\geq 1$ holds.
For a vertex-weight function $w$ for hypergraph $\HH$, we denote by 
$\weight(w)$ its total weight, i.e. $\Sigma_{v\in V(\HH)}w(v)$.
The {\em fractional vertex cover number\/}  $\tau^*(\HH)$ is defined as the minimum 
$\weight(w)$ where $w$ ranges over all fractional  vertex covers of $\HH$.
The {\em vertex support\/} $\vsupp(w)$ of a hypergraph $\HH$ under a vertex-weight function $w$  is defined as $\vsupp(w)=\{v\in V(\HH) \, | \, w(v)>0\}$.
\end{definition}

For our result on bounded support, we will exploit the well-known dualities $\rho^*(\HH)=\tau^*(\HH^d)$ and $\tau^*(\HH)=\rho^*(\HH^d)$, where  $\HH^d$ denotes the dual of $\HH$. 
To make optimal use of this, we make, for the moment,  several assumptions.
First of all, we will assume w.l.o.g. that (1) hypergraphs have no isolated vertices and (2) no empty edges. 
In fact for hypergraphs with isolated vertices (empty edges), $\rho^*$ ($\tau^*$) would be undefined or at least not finite.
Furthermore, we make the following temporary assumptions. Assume that 
(3) hypergraphs never have two distinct vertices of the same ``edge-type'' 
(i.e., the two vertices occur in precisely the same edges) and 
(4) they never have two distinct edges of the same ``vertex-type'' (i.e., we exclude duplicate edges). 

Assumptions (1) -- (4) can be safely made.
Recall that we are ultimately interested in the computation of an FHD of width $\leq k$ for 
given $k$. As mentioned above, without assumption (1), the computation of an edge-weight function and, hence, of an FHD of width $\leq k$ makes no sense. Assumption (2) does not restrict the search for a specific FHD since 
we would never define an edge-weight function with non-zero weight on an empty edge.
As far as assumption (3) is concerned, suppose that a hypergraph $H$ has groups of multiple vertices of identical edge-type. Then it is sufficient to consider 
the reduced hypergraph $H^-$ resulting from $H$ by ``fusing'' each such group to a single vertex. Obviously $\rho^*(H)=\rho^*(H^-)$, and each edge-weight function for $H^-$ 
can be extended in the obvious way to an edge-weight function of the same total weight to $H$.
Finally, assumption (4) can also be made w.l.o.g., since we can again 
define a reduced hypergraph $H^-$ resulting from $H$ by retaining only one edge from each group of identical edges. Then every edge cover of $H^-$ is an edge cover of $H$. Conversely, every 
edge cover of $H$ can be turned into an edge cover of $H^-$ by assigning to each edge $e$ in $H^-$ the sum of the weights of $e$ and all edges identical to $e$ in $H$.

Under our above assumptions  (1) -- (4), 
for every hypergraph $\HH$, the property $\HH^{dd}=\HH$ holds 
and there is an obvious one-to-one correspondence between the edges (vertices) of $\HH$ and the vertices (edges)  of $\HH^d$.
Moreover, there is an obvious one-to-one correspondence between the fractional edge covers of 
$\HH$ and the fractional vertex covers of $\HH^{d}$. In particular, if there is a fractional edge cover $\gamma$ for $\HH$, then its corresponding ``dual'' $\gamma^d$  assigns to each vertex $v$ of $\HH^d$ the same weight as to the edge in $\HH$ that is represented by this vertex and vice versa. 

Note that if we do not make assumptions (3) and (4), then there are hypergraphs $H$ with $H^{dd}\neq H$. For instance, consider the hypergraph $\HH_0$ with $V(\HH_0)=\{a,b,c\}$ and $E(\HH_0)=\{\,e=\{a,b,c\}\,\}$, i.e., property (3) is violated.
The hypergraph $\HH_0^d$ has a unique vertex $e$ and a unique hyperedge $\{e\}$. 
Hence, $H_0^{dd}$ is (isomorphic to) the hypergraph with a unique vertex $a$ and 
a unique hyperedge $\{a\}$, which is clearly different from the original hypergraph~$H_0$.

To get an upper bound on the support $\supp(\gamma)$ of a 
fractional edge cover of a hypergraph $H$, we make use of the following result for fractional vertex covers.
This result is due to Zolt{\'a}n F{\"u}redi~\cite{furedi1988}, who 
extended earlier results by Chung et al.~\cite{chung1988}. 
Below, we appropriately reformulate F{\"u}redi's result for our purposes: 

\begin{proposition}[\cite{furedi1988}, page 152, \ Proposition 5.11.(iii)]\label{prop:furedi}
For every hypergraph $\HH$ of rank (i.e., maximal edge size) $r$, and every fractional vertex cover $w$ for $\HH$  satisfying  $\weight(w)= \tau^*(\HH)$, the 
property $|\vsupp(w)| \leq r\cdot \tau^*(\HH)$ holds.
\end{proposition}

By  duality, exploiting the relationship $\rho^*(\HH)=\tau^*(\HH^d)$ and by recalling that the degree of $\HH$ corresponds to the rank of $\HH^d$, we immediately get the following corollary:

\begin{corollary}\label{cor:bsupp}
For every hypergraph $\HH$ of degree $d$, and every fractional edge cover $\gamma$ for $\HH$ satisfying  $\weight(\gamma) = \rho^*(\HH)$,   
the property
$|\supp(\gamma)|\leq d\cdot \rho^*(\HH)$ holds.
\end{corollary}

From now on, we no longer need to make the assumptions (3) and (4) above. 
In fact, Proposition~\ref{prop:furedi} and Corollary~\ref{cor:bsupp} also hold for hypergraphs that do not fulfil these conditions as was pointed out above by our considerations 
on reduced hypergraphs $H^-$. 
Moreover, from now on, we exclusively concentrate on fractional {\em edge\/} covers. 
The excursion to  fractional {\em vertex\/} covers was only needed to make use of 
F{\"u}redi's result reformulated in Proposition \ref{prop:furedi} above. 

Proposition~\ref{prop:furedi} and Corollary~\ref{cor:bsupp} state bounded support properties for the {\it optimal} weight functions $\tau^*$ and $\rho^*$. 
The following lemma allows us to extend the upper bound 
$k\cdot d$ on the support of a fractional edge cover $\gamma$ of width $k$ of a hypergraph $H$
of degree $d$ to the fractional edge cover $\gamma_u$ in {\em every\/} node $u$ of an FHD of width $\leq k$ of 
$H$.

\begin{lemma}\label{lem:bsupp}
Let $\HH$ be a hypergraph of degree $d$ and let 
$\calF= \defFHD$ 
be an FHD 
of $\HH$ of width $k$. Then there exists  an FHD 
$\calF'= \left< T, (B_u)_{u \in T}, (\gamma'_u)_{u \in T} \right>$ of $\HH$ of width $\leq k$ such that 
$\calF$ and $\calF'$ have exactly the same tree structure $T$
and, 
for every node $u$ in $T$, we have 
$|\supp(\gamma'_u)| \leq k\cdot d$ and, 
$B(\gamma_u) \subseteq B(\gamma'_u)$.
\end{lemma}

\begin{proof}
Let $\HH$ and 
$\calF=\defFHDohne$ 
be as above. For each node $u$ in $T$, consider the sub-hypergraph $\HH_u$ of $\HH$ where  $V(\HH_u)=B(\gamma_u)$ and  
$E(\HH_u)\, =\, \{e\cap V(\HH_u)\,|\,e\in \supp(\gamma_u)\}\ =\ 
\{e\cap B(\gamma_u)\,|\,e\in \supp(\gamma_u)\}$. Note that one  or more edges from $\supp(\gamma_u)$ may give rise  to a same edge $e'$ of $\HH_u$, 
when deleting vertices $v \not\in B(\gamma_u)$ from edges $e \in \supp(\gamma_u)$.
We call all such edges the {\em originators} of $e'$ and denote the set of all originator edges for $e'$ by  $orig(e')$. 

Now let $\gamma^\downarrow_u:E(\HH_u)\rightarrow  (0,1]$ be the edge-weight function which assigns each edge $e'$ of $\HH_u$ weight 
$\gamma^\downarrow_u(e')=\Sigma_{e\in orig(e')}\gamma_u(e)$, i.e., the sum of all weights by $\gamma_u$ of its originators. 
Clearly,  $\gamma^\downarrow_u$ is a fractional edge cover of total weight at most $k$  for $\HH_u$.  Now take an optimal fractional edge cover $\gamma^*_u$ for $\HH_u$. The total weight of this cannot be greater than $k$ either. Hence, by Corollary~\ref{cor:bsupp}, $\supp(\gamma^*_u)\leq k\cdot d$. Now transform $\gamma^*_u$ into an edge-weight function $\gamma'_u$ of the entire  hypergraph $\HH$ by assigning for  edge $e'$ of $\HH_u$ the entire weight of $e'$ to only one of its originators, whilst assigning weight 0 to all other originators. Clearly, the support of $\gamma'_u$  is bounded by $k\cdot d$ and $B(\gamma_u) \subseteq B(\gamma'_u)$. $B_u$ is thus covered by $B(\gamma'_u)$, 
and  the resulting FHD $\calF'$
has all requested properties.
\end{proof}

We have now tackled the first challenge of this section by showing that whenever a hypergraph $H$ has an FHD $\calF$ of width $\leq k$ then $H$ also has an FHD $\calF'$ of width $\leq k$ such that in each node $u$ of $\calF'$ we have $|\supp(\gamma'_u)| \leq k\cdot d$. 
We yet have to overcome the following obstacle: in the alternating algorithm 
in \cite{2002gottlob} for deciding the \rec{{HD},\,$k$} problem, we guess at every node $u$ of the HD a set $S_u$ of edges
with $|S_u| \leq k$ such that the edges in $S_u$ get weight $1$ by $\lambda_u$ and all other edges get weight $0$. 
Hence, we get $B(\lambda_u) = \bigcup S_u$. From this, 
we determine the bag $B_u \subseteq B(\lambda_u)$ via the 
{\em special condition\/} recalled in Definition~\ref{def:HD},
which distinguishes HDs from GHDs. More specifically, let $u'$ denote the parent of $u$ in the hypertree decomposition and let
$C$ denote the vertices in the edges that have to be covered by some node in the subtree rooted at $u$. Then we may set $B_u = B(\lambda_u) \cap (B_{u'} \cup C)$. 

In our case, when trying to construct a {\em fractional\/} hypertree decomposition of width $\leq k$ for a hypergraph with degree bounded by $d$, we know by Lemma~\ref{lem:bsupp} that 
we may restrict ourselves to edge-weight functions $\gamma_u$ with 
$|\supp(\gamma_u)| \leq  k\cdot d$. Moreover, we can be sure that 
$B(\gamma_u) \subseteq \bigcup S$ with $S = \supp(\gamma_u)$ holds. 
However, in contrast to the 
HD-setting studied in \cite{2002gottlob},
$B(\gamma_u) =  \bigcup S$ does in general {\em not\/} hold. 
Consequently, it is, of course, also unclear how to determine $B_u$.

\medskip
\noindent{\bf Subedge Functions.}
We will now provide a solution to both problems: how to determine $B(\gamma_u)$ and
how to determine $B_u$ for each node $u$ in an FHD? But before we do this, we define some useful notation for certain unions and intersections of families of sets.

\begin{definition}
\label{def:unions-and-intersections}
Let $S$ be a family of sets. We define the following further families of sets.
\begin{description}
\item[$\Cup S$] denotes the set-family which consists in all possible unions of an arbitrary number of sets 
from~$S$. 
(Note that $|\Cup S|\leq 2^{|S|}$). 
\item[$\Cup_i S$] for an integer $i\geq 1$, denotes the set-family which consists in
all possible unions of $\leq i$ sets from $S$. 
(Note that $|\Cup_i(S)|\leq  |S|^{i+1}$). 
\item[$\Cap S$] denotes the set-family which consists in all possible intersections of an arbitrary number of sets from $S$. 
(Note that $|\Cap S|\leq 2^{|S|}$).
\item[$\Cap_i S$] for an integer $i\geq 1$, denotes the set-family which consists in 
all possible intersections of $\leq i$ sets from $S$. 
(Note that $|\Cap_i S|\leq |S|^{i+1}$).
\end{description}

\noindent
If $S$ and $S'$ are both families of sets, then 
$S\capdot S'$ denotes the {\em pointwise intersection} between $S$ and $S'$, i.e., $S\capdot S'=\{A\cap B\,|\, A\in S$ and $B\in S'\}$.
\end{definition}

We now 
establish a bound on the number of possible sets $B(\gamma)$ that can arise in a hypergraph 
for all possible choices of a weight function $\gamma$.

\begin{definition}
\label{def:INSET}
Let $\INSET(H)$ denote the set of all possible sets $B(\gamma)$ such that $\gamma$ 
is an edge-weight function of $H$. For $S \subseteq E(H)$, we denote by $\INSET(S)$ 
the set of all possible sets $B(\gamma)$ where $\gamma(e) > 0$ if $e \in S$ and $\gamma(e) = 0$ if $e \not\in S$. 
That is, $S$ denotes the support of $\gamma$.
\end{definition}

\begin{definition}
\label{def:TYPES-CLASSES}
An {\em intersection type\/} of a hypergraph $H = (V(H),E(H))$
(or, simply a {\em type\/}, for short), is a subset of $E(H)$. 
For a hypergraph $H$, $\TYPES(H)= 2^{E(H)}$ consists of all possible types of $H$. 
For a type $t \in \TYPES(H)$, we define its class $\class(t)=\bigcap_{e\in t} e$ as the intersection of all edges in $t$. 
The set of all classes of $H$ is denoted by $\CLASSES(H)$. 

For a class $c \in \CLASSES(H)$ there may be more than one type $t$ with
$\class(t)= c$. However there is only one maximal type, namely
$\{e'\in E(H)\,|\,c\subseteq e'\}$; we denote by $\type(c)$ this unique maximal type.
\end{definition}

Note that $\TYPES(H)$ and $\CLASSES(H)$ depend only on $H$ and not on any edge-weight function. Moreover, every set $B(\gamma)$, for whatever edge-weight function,  must be equal to the union  of some classes of $H$. In fact, for any particular edge-weight function $\gamma$, the set
$B(\gamma)$ consists of the union of all sets $\class(t)$  for all types $t$  that satisfy $\gamma(t)\geq 1$ where 
$\gamma(t)=\Sigma_{e\in t}\gamma(e)$. 
Finally, the inequality $|\CLASSES(H)| \leq  |\TYPES(H)|$ clearly holds. 
We thus get the following lemma.

\begin{lemma}\label{lem:inset} 
Let $H$ be a hypergraph. Then the following properties hold: 
\begin{enumerate}
\item If $\gamma$ is an edge-weight function, then  $B(\gamma)\in\Cup\,{\CLASSES}(H)$.
\item $\INSET(H)\subseteq\Cup\,{\CLASSES}(H).$
\item 
$|\INSET(H)|\leq  
2^{|{\CLASSES}(H)|}\leq 2^{|\TYPES(H)|} \leq  2^{2^{|E(H)|}}$ and all three sets, $\INSET$, $\CLASSES$ and $\TYPES$, can be computed from $H$ in polynomial time if  the cardinality
of $E(H)$ is bounded by a constant.
\end{enumerate}
\end{lemma}

\noindent
The above inclusion $\INSET(H)\subseteq\Cup\,{\CLASSES}(H)$
only gives us an {\em exponential\/} upper bound $2^{2^{|E(H)|}}$ on the number of possible 
sets $B(\gamma_u)$ at any node $u$ in an FHD. 
However, by 
Lemma \ref{lem:bsupp}, we may assume w.l.o.g.\ that  
$|S_u| \leq k\cdot d$ with $S_u = \supp(\gamma_u)$
holds for every edge-weight function $\gamma_u$ of interest.
Hence, we only need to consider {\em polynomially\/} many values
for $\supp(\gamma_u)$. 
Moreover, for each $S_u$, 
there exist only {\em polynomially\/} many 
possible sets $B(\gamma_u)$
with $\supp(\gamma_u) = S_u$, i.e. $|\INSET(S_u)|\leq 2^{2^{|S_u|}}$. 
Hence, with Lemma \ref{lem:inset}, the first problem stated above is essentially solved.

It remains to find a solution to the second problem stated above, i.e., 
how to determine $B_u$ for each node $u$ in an FHD of width $k$?
We tackle this problem by again using the idea of 
{\em subedge functions} as described in Section \ref{sect:ghd} for deriving tractability results for the \rec{{GHD},\,$k$} problem. 
A subedge function takes as input a hypergraph 
$H = (V(H),E(H))$ and produces as output a set $E'$ of subedges of the edges in $E(H)$, 
such that $E'$ is then added to $E(H)$.
Clearly, adding a set $E'$ of subedges does not change the $\fhw$ of $H$.
Below, we shall define a whole family of subedge functions $h_{d,k}$, which, for fixed upper bounds $d$ on the degree and $k$ on the $\fhw$, take a hypergraph $H$ as input and 
return a polynomially bounded, polynomial-time computable set $E'$ of subedges of $E(H)$. Adding these subedges to $E(H)$ will then allow us to define a {\em polynomial\/} 
upper bound on the set of 
all possible bags $B_u$ at a given node $u$ in an FHD of $H$.

Towards this goal, we follow a similar approach as in the proof of Theorem \ref{theo:LogBMIP}. There we have used the LogBMIP to devise a polynomially bounded subedge function. Here, we will restrict ourselves to hypergraphs of bounded degree.

There are mainly two issues when trying to adapt the construction from the GHD case.
First, we carry over the notion of bag-maximality from GHDs to FHDs in the obvious way: 
we say that an FHD $\calF$ is {\em bag-maximal\/} 
if for each node $u$ of $\calF$, 
for every vertex $v\in B(\gamma_u) \setminus B_u$,
adding $v$ to $B_u$ would violate the connectedness condition. Clearly, for every FHD 
$\defFHDohne$,
a bag-maximal FHD $\calF^+=\left< T,(B_u^+),(\gamma_u) \right>$ 
can be generated by adding vertices from $B(\gamma_u) \setminus B_u$ 
to bags $B_u$ as long as possible. We may thus assume  w.l.o.g. that our FHD $\calF$ is bag-maximal.

For the definition of an appropriate subedge function
(denoted $h_{d,k}$ to 
indicate that it depends on $d$ and $k$),
take 
a hypergraph $H$ with degree bounded by $d \geq 1$ and consider an arbitrary 
FHD $\calF$ of $H$ of width $\leq k$.
Let $u$ be an arbitrary node in $\calF$
with edge-weight function $\gamma_u$ and let $e \in \supp(\gamma_u)$ with 
$e \cap B(\gamma_u) \not\subseteq B_u$. 
As in Section \ref{sect:ghd},
our goal is to define $h_{d,k}$ in such a way that 
$e' = e \cap B_u$ is contained in $h_{d,k}$ 
for every edge $e \in \cov(\gamma_u)$ and every possible bag $B_u$ in $\mcF$.
As a first step towards this goal, 
we extend the notion of critical paths from the GHD setting to FHDs.

\begin{definition}
\label{def:critical_fhd}
Let $\calF=\defFHD$  
be an FHD
of a hypergraph $\HH$. Moreover, let 
$u$ be a node in $\calF$ and 
let $e \in \supp(\gamma_u)$ with 
$e \cap B(\gamma_u) \not\subseteq B_u$. 
Let $u^*$ denote the node closest to $u$, such that $u^*$ covers $e$, i.e., 
$e \subseteq B_{u^*}$. Then, analogously to Definition \ref{def:critical_ghd},
we call the path $\pi = (u_0,u_1,\ldots,u_l)$ with $u_0 = u$ and $u_l = u^*$ the
{\em critical path\/} of $(u,e)$ denoted as $\critp(u,e)$.
\end{definition}
The following lemma
allows us to characterize the subsets $e'$ of $e$ needed in the subedge function
$h_{d,k}$. The proof of Lemma \ref{lem:subedge-characterization} can be literally translated from the proof of Lemma \ref{lem:crit}.

\begin{lemma}
\label{lem:subedge-characterization}
Let $\calF=\defFHD$  
be a bag-maximal FHD
of a hypergraph $\HH=(V(H),$ \linebreak $E(H))$, let $u\in T$, $e \in \supp(\gamma_u)$, and $e \cap B(\gamma_u) \not\subseteq B_u$. Let $\pi = (u_0,u_1,\ldots,u_l)$ with $u_0 = u$ be the critical path of $(u,e)$. Then the following equality holds: 

\[e\cap B_u=e\cap \bigcap^l_{i=1} B(\gamma_{u_i})\]

\end{lemma}

We want to define the subedge function $h_{d,k}$ such that all subedges appearing on the right-hand side of above equality. To achieve this while abstracting from the knowledge of a particular decomposition and from the knowledge of particular edge-weight functions, we will make two bold over-approximations. First, instead of considering concrete critical paths, we will consider arbitrary finite sequences $\xi= (\xi_1,\ldots,\xi_{\max(\xi)})$ of groups of $\leq k\cdot d$ edges of $\HH$, where each such group represents a potential support $\supp(\gamma_u)$ at some potential node $u$ of a potential FHD of $\HH$. Clearly, each effective path  $\critp(u,e)$ for any possible combination of a decomposition node $u$ and an
edge $e$ of any possible FHD $\calF$ of $\HH$ is among these sequences.
The second over-approximation we make is that instead of considering particular edge-weight functions, we will simply 
consider (a superset of) all possible supports of $\leq k\cdot d$ atoms, and for each such support $S$ (a superset of) all possible sets $B(\gamma)$, i.e. $\INSET(S)$, that could possibly arise with this support. A support is simply given by a subset of $\leq k\cdot d$ edges of $\HH$. For each such support, by Lemma~\ref{lem:inset}, there are in fact no more than $2^{2^{k\cdot d}}$ $B(\gamma)$-sets and these are determined by unions of classes from $\CLASSES(H')$, where  $H'$ is the 
subhypergraph of $\HH$ given by the support. To make this more formal, 
we give the following definition.
Recall the notion of $\CLASSES$ (which denotes the set of intersections of edges contained in some type; in our case, each type consists of at most $k\cdot d$ edges)
from Definition~\ref{def:TYPES-CLASSES}.

\begin{definition}
\label{def:CLASSES}
Let $H$ be a hypergraph and let 
$\xi=(\xi_1,\ldots,\xi_{\max(\xi)})$ be an arbitrary sequence of groups of 
$\leq k\cdot d$ edges of $\HH$.
For $i \in \{1, \dots, \max(\xi)\}$, by slight abuse of notation,
we overload the notion of $\CLASSES$ from 
Definition~\ref{def:TYPES-CLASSES} as follows: 
we write $\CLASSES(\xi_i)$ to denote the set $\CLASSES(\HH_\xi^i)$, where  $\HH_\xi^i$ is the subhypergraph of $\HH$ whose edges are the $\leq k\cdot d$  
edges of $\xi_i$ and whose vertices are precisely all vertices occurring in these edges.

Let $\pi$ be a critical path of the form $\pi = (u_0, u_1, \ldots ,u_l)$
of some FHD $\calF=\defFHD$ of $\HH$.
Suppose that each edge-weight function $\gamma_{u}$ in $\calF$ has $(k\cdot d)$-bounded support. 
Then we denote by $\xi^{\pi}$ the sequence  $\xi^{\pi} = (\xi_1,\ldots,\xi_l)$ with 
$\xi_i= \supp(\gamma_{u_i})$
for $1\leq i\leq l$.
\end{definition}

Our goal is to compute a set of subedges of the edges in $E(H)$ containing all 
sets of the form $e\cap \bigcap^l_{i=1}B(\gamma_{u_i})$ with $\pi = \critp(u,e) = (u_0,u_1,\ldots,u_l)$.
We may use that each $\gamma_{u_i}$ has $(k\cdot d)$-bounded support.
By Lemma \ref{lem:inset}, we know that every possible $B(\gamma_{u_i})$-set 
is contained in $\Cup\,{\CLASSES}(H)$, i.e., 
every possible $B(\gamma_{u_i})$ along a critical path $\pi$ can be represented 
as the union of classes 
(where each class is in turn
the intersection of some edges selected from 
$\supp(\gamma_{u_i})$). Hence, to obtain $e\cap \bigcap^l_{i=1}B(\gamma_{u_i})$,
we need to compute the intersection of all unions of classes  along a critical path 
$\pi$. 

\begin{algorithm}[t]
\SetKwData{Left}{left}\SetKwData{This}{this}\SetKwData{Up}{up}
\SetKwFunction{Union}{Union}\SetKwFunction{FindCompress}{FindCompress}

\SetKwData{N}{N}

\SetKwInOut{Input}{input}\SetKwInOut{Output}{output}

\Input{A sequence $\xi = (\xi_1,\ldots,\xi_{\max(\xi)})$ of groups of $\leq k \cdot d$ edges of a hypergraph $H$.}
\Output{$\IF(\xi)$}
\BlankLine
\tcc{Initialization: create a tree $T_c$ for each $c \in \CLASSES(\xi_1)$}
$\IF(\xi) \la \emptyset$\;
\ForEach{$c \in \CLASSES(\xi_1)$}{
$N_c \la \{v\}$; $E_c \la \emptyset$; $T_c \la (N_c,E_c)$\;
\BlankLine
$\sett(v)\la c$\;
$\levels(v) \la \{1\}$\; 
$\edges(v) \la \{e\in E(\HH)\, |\, c\subseteq e\}$\;
$\markk(v) \la \ok$\;
\BlankLine
$\IF(\xi) \la \IF(\xi) \cup T_c$\;
}
\BlankLine
\tcc{Expand and update the trees as follows}
\For{$i\leftarrow 2$ \KwTo $\max(\xi)$}{
\ForEach{leaf node $v$ of a tree $T_v=(N_v,E_v) \in \IF(\xi)$ with $\max(\levels(v)) = i-1$ and $\markk(v) = \ok$}{
   \ForEach{$c \in \CLASSES(\xi_i)$}{
      \Switch{$\sett(v) \cap c$}{
         \lCase(\tcc*[f]{Dead End}){$\sett(v) \cap c = \emptyset$}{
             $\markk(v) \la \fail$
         }%
         \lCase(\tcc*[f]{Passing}){$\sett(v) \cap c = \sett(v)$}{
             $\levels(v) = \levels(v) \cup \{ i\}$
         }
         \Case(\tcc*[f]{Expand}){$\sett(v) \cap c \subsetneq \sett(v)$}{
             Create a new node $v'$\;
             $N_v \la N_v \cup \{v'\}$; $E_v \la E_c \cup \{v,v'\}$; $T_v \la (N_v,E_v)$\;
             \BlankLine             
             $\sett(v')\la \sett(v) \cap c$\;
             $\levels(v') \la \{i\}$\; 
             $\edges(v') \la \{e\in E(\HH) \mid \sett(v')\subseteq e\}$\;
             $\markk(v') \la \ok$\;
         }
      }
   }
}
}
\caption{\iforest}\label{alg:uiforest}
\end{algorithm}

Recall that we generalize the support of edge-weight functions $\gamma_{u_i}$ along
a concrete critical path $\pi$ in a concrete FHD $\calF$ of $H$ to 
sequences 
$\xi =(\xi_1,\ldots,\xi_{\max(\xi)})$, where each $\xi_i$ is an arbitrary set of $\leq k\cdot d$
edges from $H$. As the crucial data structure to compute the desired intersections
of unions of classes, we now define the {\em intersection forest} $\IF(\xi)$.
This data structure will give us a systematic way to convert the intersections of unions of classes for all possible sequences $\xi$ into a union of intersections. 
Intuitively, each branch (starting at a root) in $\IF(\xi)$ represents a 
possible transversal of the family $\{\CLASSES(\xi_i)\}_{1\leq i\leq \max(\xi)}$
for some sequence $\xi = (\xi_1,\ldots,\xi_{\max(\xi)})$, i.e., 
a transversal selects one class from $\CLASSES(\xi_i)$ for each $i \in \{1, \dots, \max(\xi)\}$.
On every branch, we will then compute the intersection of the classes selected along this branch. Since each class is in turn an intersection of edges (namely the edges contained in some type), every branch in 
$\IF(\xi)$ therefore simply yields an intersection of edges from $H$. Similar to the $\uitree$ algorithm presented in Section \ref{sect:ghd}, we formalize the construction of $\IF(\xi)$ in the Algorithm \ref{alg:uiforest}
``\iforest''. We define $\IF(\xi)$ as a rooted forest such that each of its nodes $v$ is labelled by 
\begin{itemize}
\item a set of vertices $\sett(v)\subseteq V(\HH)$, 
\item a set of levels $\levels(v) \subseteq \{1, \dots, \max(\xi)\}$,  
\item a set of edges  $\edges(v)$  such that $\edges(v) = \{e \in E(H) \mid \sett(v)\subseteq e\}$;\\
in other words,  $\sett(v)$ is a {\em class\/} and $\edges(v)$ is its (maximal) {\em type\/}, 
see Definition~\ref{def:TYPES-CLASSES},
\item and a mark $\markk(v)\in\{\ok, \fail\, \}$.  
\end{itemize}

The expansion of the trees in Algorithm \ref{alg:uiforest} can be seen as follows:
\begin{enumerate}
\item {\em Dead End.}\ If for each  class $c$ of $\CLASSES(\xi_{i})$, $\sett(v)\cap c=\emptyset$, then $v$ has no children, and its mark is set to $\markk(v)=\fail$. Intuitively this is a dead end as it cannot be continued to yield a non-empty intersection of a transversal of the family $\{\CLASSES(\xi_j)\}_{1\leq j\leq \max(\xi)}$.
\item{\em Passing.} For each  class $c$ of $\CLASSES(\xi_{i})$ fulfilling $\sett(v)\cap c=\sett(v)$, insert $i+1$ into $\levels(v)$.
Intuitively, this makes sure the same value $\sett(v)$ is never repeated on a branch, and, as a consequence, every child node must have a strictly smaller $\sett()$-component and, thus,
at least one more edge in its $edges()$-label than its parent (see also Fact~1 in Lemma \ref{lem:facts-IF} below).
\item {\em Expand}. For each class $c$ of $\CLASSES(\xi_{i+1})$ fulfilling $\sett(v)\cap c\subsetneq \sett(v)$, 
create a child $v'$ of $v$, and let $\sett(v')=\sett(v)\cap c$,
$\levels(v')=\{i+1\}$, 
$\edges(v')=\{ e\in E(\HH)\, |\, \sett(v') \subseteq e\}$,
and $\markk(v)= \ok$. Note that we thus clearly have $\sett(v') \subsetneq \sett(v)$
and $\edges(v) \subsetneq \edges(v')$.
\end{enumerate}

\nop{%
{\bf Construction of the intersection forest $\IF(\xi)$.} \\
We define $\IF(\xi)$ as a rooted forest such that each of its nodes $v$ is labelled by 
\begin{itemize}
\item a subset $\sett(v)\subseteq V(\HH)$, 
\item a set of levels $\levels(v)$,  
\item a set   $\edges(v)$ of edges of $E(\HH)$ such that $\sett(v)\subseteq
\bigcap \edges(v)=\bigcap_{e\in {\mathit edges(v)}}e$, 
\\
in other words: $\sett(v)$ is a {\em class\/} and $\edges(v)$ is its (maximal) {\em type\/}, 
see Definition~\ref{def:TYPES-CLASSES},
\item and a mark $\markk(v)\in\{\ok, \fail\, \}$.  
\end{itemize}

\noindent
{\em Initialization of $\IF(\xi)$.} 
For every non-empty class $K\in \CLASSES(\xi_1)$, the intersection forest $\IF(\xi)$ 
contains a root vertex $v$ where 
\begin{itemize}
\item $\sett(v)=K$, 
\item $\levels(v)=\{1\}$, 
\item $\edges(v)=\{e\in E(\HH)\, |\, K\subseteq e\}$, and 
\item $\markk(v)=\ok$. 
\end{itemize}

\noindent
{\em Further expansion of $\IF(\xi)$.}
Each tree in $\IF(\xi)$ is further expanded and updated by the following inductive procedure.
Let $v$ be a node of $\IF(\xi)$ with  $\max(\levels(v))=i < \max(\xi)$ and  $\markk(v)= \ok$. 
Then  we distinguish between three cases:
\begin{enumerate}
\item {\em Dead End.}\ If for each  class $K$ of $\CLASSES(\xi_{i+1})$, $\sett(v)\cap K=\emptyset$, then $v$ has no children, and its mark is set to $\markk(v)=\fail$. Intuitively this is a dead end as it cannot be continued to yield a non-empty intersection of a transversal of the family $\{\CLASSES(\xi_i)\}_{1\leq i\leq \max(\xi)}$.
\item{\em Passing.} For each  class $K$ of $\CLASSES(\xi_{i+1})$ fulfilling $\sett(v)\cap K=\sett(v)$, insert $i+1$ into $\levels(v)$.
Intuitively, this makes sure the same value $\sett(v)$ is never repeated on a branch, and, as a consequence, every child node must have a strictly smaller $\sett()$-component and, thus,
at least one more edge in its $edges()$-label than its parent (see also Fact~1 in Lemma \ref{lem:facts-IF} below).
\item {\em Expand}. For each   class $K$ of $\CLASSES(\xi_{i+1})$ fulfilling $\sett(v)\cap K\subsetneq \sett(v)$, 
create a child $v'$ of $v$, and let $\sett(v')=\sett(v)\cap K$,
$\levels(v')=\{i+1\}$, 
$\edges(v')=\{ e\in E(\HH)\, |\, \sett(v') \subseteq e\}$,
and $\markk(v)= \ok$. Note that we thus clearly have $\sett(v') \subsetneq \sett(v)$
and $\edges(v) \subsetneq \edges(v')$.
\end{enumerate}
}%

\smallskip
\noindent
We now define some useful notation for talking about the intersection forest $\IF(\xi)$.
\begin{definition}
\label{def:iforest}
Let $H$ be a hypergraph and let 
$\xi=(\xi_1,\ldots,\xi_{\max(\xi)})$ be an arbitrary sequence of groups of 
$\leq k\cdot d$ edges of $\HH$.
For $1 \leq i\leq \max(\xi)$, let $\iflevel_i(\xi)$ denote the set of all nodes $v$ of $\IF(\xi)$ such that $i \in \levels(v)$ and $\markk(v)=\ok$. 
We denote by $\FRINGE_i(\xi)$ the collection of all sets $\sett(v)$ where $v\in \iflevel_i(\xi)$. Finally, let the {\em fringe of $\xi$} be defined as $\FRINGE(\xi)=\FRINGE_{\max(\xi)}(\xi)$.
\end{definition}
Note that, by definition of  $\IF(\xi)$, there cannot be any {\em fail} node on level $\max(\xi)$.
Hence, $\FRINGE(\xi)$ coincides with the set of all $\sett$-labels at 
level $\max(\xi)$, i.e. the leaf nodes $v$ of all trees in $\IF(\xi)$ such that $\markk(v)= \ok$.
We now establish some easy facts about $\IF(\xi)$.

\begin{lemma}\label{lem:facts-IF} 
Let $H$ be a hypergraph with degree $d$ and let 
$\xi=\xi_1,\ldots,\xi_{\max(\xi)}$ be an arbitrary sequence of groups of 
$\leq k\cdot d$ edges of $\HH$.
Then the intersection forest $\IF(\xi)$ according to the construction in Algorithm \ref{alg:uiforest}
has the following properties:

\begin{description}
\item[Fact~1.] If node $v'$ is a child of node $v$ in $\IF(\xi)$, then $\edges(v')$ must contain at least one new edge in addition to the edges already present in $\edges(v)$.

\item[Fact~2.]The depth of $\IF(\xi)$ is at most $d-1$.

\item[Fact~3.]{\it 
Let $a=2^{k\cdot d}$. 
Then 
$\IF(\xi)$ has no more than  $a^{d+1}$ nodes and $|\FRINGE(\xi)|\leq a^d=2^{d^2\cdot k}$.}\ 
\end{description}
\end{lemma}

\begin{proof} The facts stated above can be seen as follows.

\begin{description}
\item[{\it Fact~1.}] 
Note that we can only create a child node through an {\em Expand} operation. 
This requires that $\sett(v') = \sett(v)\cap c \subsetneq \sett(v)$
for some class $c \in \CLASSES(\xi_i)$. Recall that $c$ is the intersection of all edges of $\type (c)$.
Hence, for $\sett(v')$ to shrink, $\type (c)$ must contain at least one new edge not yet contained in $\edges(v)$, and this edge is therefore included into  $\edges(v')$.
\item[{\it Fact~2.}] 
This follows from Fact~1 and the fact that, if $\HH$ is of degree $d$,
then any intersection of $d+1$  or more edges is empty. 
\item[{\it Fact~3.}] 
For each sequence $\xi$ as above, 
for whatever $\xi_i$, the inequality $|\CLASSES(\xi_i)|\leq a$
holds by Lemma~\ref{lem:inset}.
Hence, Fact~3 follows from the depth $d-1$ established in Fact~2 and the 
fact that we have at most $a$ such trees, each with branching not larger than $a$. 
\end{description}
This concludes the proof of the lemma.
\end{proof}

Recall that we are studying the set of arbitrary sequences 
$\xi=(\xi_1,\ldots,\xi_{\max(\xi)})$ of groups of $\leq k\cdot d$ edges of $\HH$
because they give us a superset of possible critical paths  $\pi = \critp(u,e)$ in possible FHDs of $\HH$, such that
each group $\xi_i$ of edges corresponds to the support of the edge-weight function 
$\gamma_{u_i}$ at the $i$-th node $u_i$ on path $\pi$.
The following lemma establishes that the intersection forests 
(and, in particular, the notion of $\FRINGE(\xi)$)
introduced above indeed
give us a tool to generate a superset of the set of all possible sets
$\bigcap^l_{i=1} B(\gamma_{u_i})$ 
in  all possible 
FHDs of $\HH$ of width $\leq k$.

\begin{lemma}
\label{lem:bags-via-fringe} 
Let $\HH$ be a hypergraph of degree $d \geq 1$ and let 
$\calF$ be an FHD of $H$ of width $\leq k$.
Consider a  critical path $\pi = (u_0, u_1, \dots, u_l)$
of FHD $\calF$, 
together with its associated sequence $\xi^{\pi}$
introduced in  Definition~ \ref{def:CLASSES}. 
We claim that the following relationship holds:
\[\bigcap^l_{i=1} B(\gamma_{u_i})\ \in\ \Cup \FRINGE(\xi^{\pi})\]
\end{lemma}

\begin{proof}
We show by induction on $i$ that, for  every $i \in \{1, \dots, l\}$, 
the following relationship holds:
\[\bigcap^i_{j=1} B(\gamma_{u_j}) \in \Cup 
\FRINGE_i(\xi^{\pi}) \]

\smallskip

\noindent
{\em Base Step.} 
For the base case $i=1$, recall that in the \iforest algorithm, the set  $\FRINGE_1(\xi^{\pi})$ is initialized 
to $\CLASSES(\xi_1)$ with $\xi_1 = \cov(\gamma_{u_1})$. 
Moreover, by Lemma~\ref{lem:inset}(2), 
$\INSET(H)\subseteq\Cup\,{\CLASSES}(H)$ and, hence, we have
$\INSET(\supp(\gamma_{u_1})) \subseteq\Cup\,{\CLASSES}(\supp(\gamma_{u_1}))$.
In total, $\INSET(\supp(\gamma_{u_1})) \subseteq\Cup\FRINGE_1(\xi^{\pi})$ indeed holds.

\smallskip

\noindent
{\em Inductive Step.} 
Assume for some $i < l$ that 
$\bigcap^i_{j=1} B(\gamma_{u_j}) \in \Cup \FRINGE_i(\xi^{\pi})$
holds. We show that the desired relationship also holds for $i+1$. Clearly,  
$\bigcap^{i+1}_{j=1} B(\gamma_{u_j}) = \bigcap^i_{j=1} B(\gamma_{u_j}) \cap B(\gamma_{u_{i+1}})$. 
From this, by the induction hypothesis, together with 
$B(\gamma_{u_{i+1}})\in \INSET(\xi^{\pi}_{i+1})$, and the inclusion $\INSET(\xi^{\pi}_{i+1})\subseteq 
\Cup \CLASSES(\xi^{\pi}_{i+1})$, which holds by Lemma~\ref{lem:inset}, we obtain:
$$\bigcap^{i+1}_{j=1} B(\gamma_{u_j}) \in  \FRINGE_i(\xi^{\pi})\capdot \Cup \CLASSES(\xi^{\pi}_{i+1}).$$
By using the distributivity of $\capdot$ over $\Cup$, we get: 
$$\FRINGE_i(\xi^{\pi})\capdot \Cup \CLASSES(\xi^{\pi}_{i+1})
= 
\Cup (\FRINGE_i(\xi^{\pi})\capdot  \CLASSES(\xi^{\pi}_{i+1})).$$
Moreover, by the construction of $\IF(\xi^{\pi})$, 
we have 
$$\FRINGE_i(\xi^{\pi})\capdot  \CLASSES(\xi^{\pi}_{i+1}) =
\FRINGE_{i+1}(\xi^{\pi}).$$ 
In fact, the {\em Passing} and {\em Expand} cases make precisely these intersections when producing level $i+1$ of $\IF(\xi^{\pi})$. Therefore, in total, we obtain that
$$\bigcap^{i+1}_{j=1} B(\gamma_{u_j}) \in \Cup  \FRINGE_{i+1}(\xi^{\pi})
$$
indeed holds, which settles the inductive step.
\end{proof}

\noindent
The desired subedge function $h_{d,k}$ therefore looks as follows: 

\begin{lemma}
\label{lem:desired-subedge-function} 
Let $\HH$ be a hypergraph of degree $d \geq 1$ and let $k \geq 1$.
Let the subedge function $h_{d,k}$ be defined as 
\[h_{d,k}(\HH)= E(\HH)\capdot (  \Cup_{2^{d^2\cdot k}}\,\Cap_d E(\HH) )\]
Then (for fixed constants $d$ and $k$), 
the size of $h_{d,k}(\HH)$ is polynomially bounded and 
$h_{d,k}(\HH)$ can be computed in polynomial time. 
Moreover $h_{d,k}(\HH)$
contains all subedges $e\cap B_u$ of all $e \in E(H)$ for all possible bags $B_u$ of whatever bag-maximal FHD of width $\leq k$ of $\HH$.
\end{lemma}

\begin{proof}
For any sequence $\xi$, each element of $\FRINGE(\xi)$ is  the intersection of at most $d$ edges.
Moreover, by Fact~3 of Lemma~\ref{lem:facts-IF},
$|\FRINGE(\xi)|\leq a^d=2^{d^2\cdot k}$ holds. 
Therefore, for all possible sequences $\xi$, we have
$$\Cup \FRINGE(\xi)\subseteq \Cup_{2^{d^2\cdot k}}\,\Cap_d E(\HH).$$
Given that $d$ and $k$ are constants, the set $\Cup_{2^{d^2\cdot k}}\,\Cap_d E(\HH)$ is of polynomial size and is clearly computable in polynomial time from $\HH$. 
By Lemma \ref{lem:subedge-characterization} together with 
Lemma \ref{lem:bags-via-fringe}, 
it is then also clear that the subedge function 
$h_{d,k}(\HH)$ 
contains all subedges $e\cap B_u$ for all possible bags of whatever bag-maximal FHD of 
width $\leq k$ of $\HH$. 
\end{proof}

\medskip\noindent{\bf Deciding the {\sc Check} Problem for Hypergraphs of Bounded Degree.}
With the subedge function $h_{d,k}$ at hand, we have a powerful tool that will allow us to 
devise a polynomial-time decision procedure for the \rec{{FHD},\,$k$} problem. Towards this
goal, we first observe that adding the edges in $h_{d,k}(H)$ to a hypergraph $H$ 
allows us to restrict ourselves to FHDs of a very peculiar form. This form is captured by the following 
definition.

\begin{definition}
An FHD $\calF=\defFHD$ of a hypergraph $\HH$ is {\em strict\/} 
if for every decomposition node $u$ in $T$, the equality 
$B_u=B(\gamma_u)=\bigcup \supp(\gamma_u)$ holds. 
\end{definition}

Below we show that, in case of bounded degree,
we can transform every FHD of width $\leq k$ into a strict FHD of width $\leq k$.

\begin{lemma}\label{lem:transform-into-strict}
Let $H = (V(H),E(H))$ be a hypergraph of degree $\leq d$ and let 
$\calF=\defFHD$ be an FHD of $H$ of width $\leq k$.  
Suppose that $H'$ is obtained from $H$ by adding the edges in $h_{d,k}(\HH)$, i.e., 
$H' = (V(H'),E(H'))$ with 
$V(H') = V(H)$ and 
$E(H') = E(H) \cup h_{d,k}(\HH)$.
Then $H'$ admits a {\em strict\/} FHD $\calF'= \left<T,(B_u)_{u \in T},(\gamma'_u)_{u \in T}\right>$ 
of width $\leq k$.%
\end{lemma}

\begin{proof}
Let $\calF=\defFHD$ be an arbitrary FHD of $H$ of width $\leq k$.  
W.l.o.g., assume that $\calF$ is bag-maximal.
Of course, $\calF$ is also an FHD of $H'$ of width $\leq k$.  
Let $u$ be a node in $\calF$ and let $e \in \supp(\gamma_u)$. 
Suppose that $e \cap B(\gamma_u) \not\subseteq B_u$. We modify $\gamma_u$ 
as follows: by Lemma~\ref{lem:desired-subedge-function}, 
$E(H) \cup h_{d,k}(\HH)$ is guaranteed to contain the subedge 
$e'=e\cap B_u$  of $e$. Then we ``replace'' $e$ in $\gamma_u$ by $e'$, i.e.,
we set $\gamma_u(e') := \gamma_u(e') + \gamma_u(e)$ and $\gamma_u(e) := 0$.
The FHD $\calF' = \left<T,(B_u)_{u \in T},(\gamma'_u)_{u \in T}\right>$ is obtained 
by exhaustive application of this transformation step. Clearly, such a transformation step never 
increases the support. Moreover, the resulting FHD $\calF'$ is strict. 
\end{proof}

Our strategy to devise a polynomial-time decision procedure 
for the \rec{{FHD},\,$k$} problem is to reduce it to the \rec{{HD},\,$k$} problem and then adapt the
algorithm from \cite{2002gottlob}. Note however, that the algorithm from \cite{2002gottlob} requires the HDs to 
be in a certain {\em normal form\/}. We thus have to make sure that also in the FHD-setting, we 
can always achieve an analogous normal form. Below we define the {\em fractional normal form\/} 
(FNF):

\newcommand{\deffhdnf}{
An FHD $\mcF = \defFHD$ of a hypergraph $H$ is in 
{\em fractional normal form (FNF)\/}
if for
each node $r \in T$, and for each child $s$ of $r$, the following
conditions hold:
 \begin{enumerate}
  \item there is {\em exactly one\/} \comp{$B_r$} $C_r$ such that 
  $\VTs = C_r \cup (B_r \cap B_s)$ holds; 
  \item $B_s \cap C_r \neq \emptyset$, where $C_r$ is the \comp{$B_r$} satisfying
        Condition~1;
  \item $B(\gamma_s) \cap B_r \subseteq B_s$.
  \end{enumerate}
}

\begin{definition}
\label{def:fhdnf}
\deffhdnf
\end{definition}

\smallskip

An HD  $\calH = \defHD$ can be considered as a special case of an 
FHD where the edge-weight functions $\lambda_u$ only assign weights 0 or 1 to each edge
and where the so-called {\em special condition\/} holds.
When applied to HDs, the fractional normal form recalled above coincides with the normal
form defined in \cite{2002gottlob}. Indeed, the transformation of an arbitrary FHD into FNF given in Appendix \ref{app:FHD-normal-form} follows closely the transformation of HDs into
the normal form given in \cite{2002gottlob}.

We now strengthen Lemma~\ref{lem:transform-into-strict} such that also 
FNF and bounded support can be guaranteed.

\begin{lemma}
\label{lem:transform-into-strict-FNF}
Let  $H = (V(H),E(H))$ be a hypergraph of degree $\leq d$ and let
$\calF=\defFHD$ be an FHD of $H$ of width $\leq k$.   
Suppose that $H'$ is obtained from $H$ by adding the edges in $h_{d,k}(\HH)$, i.e., 
$H' = (V(H'),E(H'))$ with 
$V(H') = V(H)$ and 
$E(H') = E(H) \cup h_{d,k}(\HH)$.
Then $H'$admits a {\em strict\/} FHD $\calF'= \left<T,(B_u)_{u \in T},(\gamma'_u)_{u \in T}\right>$  
{\em in fractional normal form\/}
of width $\leq k$ that has $(k\cdot d)$-bounded support. 
\end{lemma}

\begin{proof}
By Lemma~\ref{lem:bsupp} there exists an FHD $\calF_1$ of $\HH$ whose supports are all bounded by 
$k\cdot d$. Without changing the supports, we can transform this FHD into a bag-maximal one, and we thus assume w.l.o.g. that $\calF_1$ is bag-maximal.

Now transform $\calF_1$ into an FHD $\calF_2$ of width $k\cdot d$  {\em in FNF\/},
by proceeding  according to the proof of Theorem \ref{thm:nf} in Appendix \ref{app:FHD-normal-form},
which, in turn follows closely the transformation in the proof of  Theorem 5.4 in~\cite{2002gottlob}.
Note that this transformation preserves the support bound of $k\cdot d$. In fact, 
the component split made for ensuring condition 1 of FNF can never lead to a larger support, given that the 
bags never increase. Ensuring condition 2 results in eliminating nodes from the tree, so nothing bad can happen. 
Observe that condition 3, which is $B(\gamma_s)\cap B_r\subseteq B_s$ for a child node $s$ of decomposition node $r$,
is initially satisfied, because the initial FHD $\calF_1$ is bag-maximal. Observe further that the splitting of a node (subtree) into several nodes (subtrees) performed to 
achieve condition~1 of FNF does not destroy the validity of condition~3.

Finally transform the FHD $\calF_2$ via Lemma~\ref{lem:transform-into-strict} into a 
strict FHD $\calF' =\left<(T,(B_u)_{u\in T},(\gamma'_u)_{u \in T}\right>$ of 
$\HH'$ of width $\leq k$ and with  $(k\cdot d)$-bounded support. 
This strict FHD $\calF'$ is still in FNF. To see this, first note that the  tree structure $T$ of the decomposition and all bags $B_u$ remain exactly the same. Moreover, for whatever set $S \subseteq V(H)$,
$\HH$ and $\HH'$ have exactly the same $[S]$-components. This can be seen 
by recalling from \cite{2002gottlob} 
that two vertices $v_1,v_2$ in a hypergraph $H$ are $[S]$-adjacent if they are
adjacent in the subhypergraph of $H$ induced by $V(H) \setminus S$. 
Hence, $[S]$-adjacency remains unaltered when adding subedges. 

Given that conditions 1 and 2 of FNF are only formulated in terms of $B_i$-bags  
and $[B_i]$-components 
-- and all such bags and components are the same for $\calF$ and $\calF'$ -- 
they remain valid. Condition~3, which requires that $B(\gamma'_s)\cap B_r\subseteq B_s$ for
child node $s$ of $r$, is now trivially satisfied, because $\calF'$ is {\em strict} and,
therefore, even $B(\gamma'_s)=B_s$ holds.
\end{proof}

The following theorem finally establishes the close connection between 
the \rec{{FHD},\,$k$} and \rec{{HD},\,$k$} problems for hypergraphs $H$ of degree
bounded by some constant $d \geq 1$.  
Recall that the edge-weight functions $\lambda_u$ in an HD only assign values 0 or 1 to 
edges. As in \cite{2002gottlob}, it is convenient to identify $\lambda_u$
with a set $S_u$ of edges, namely the edges in $E(H)$ that are assigned value 1. In other words, $S_u = \supp(\lambda_u)$. Moreover, we can identify a set of edges $S_u$ with the 
hypergraph whose set of vertices is $\bigcup S_u$ and whose set of edges is $S_u$. 
For given edge-weight function $\lambda_u$ with $S_u = \supp(\lambda_u)$, 
we shall write $H_{\lambda_u}$ to denote this hypergraph.

\begin{theorem}
\label{theo:FHD-vs-HD}
Let $\HH$ be a hypergraph whose degree is bounded by $d \geq 1$
and define 
$H'$ as above, i.e., 
$H' = (V(H'),E(H'))$ with 
$E(H') = E(H) \cup h_{d,k}(\HH)$.
Then the  following statements are equivalent:
\begin{enumerate}
\item $\fhw(\HH) \leq k$.
\item $\HH'$ admits a strict hypertree  decomposition 
(thus a query decomposition) 
$\calH = \left<T,(B_u)_{u\in T},\right.$
$\left.(\lambda_u)_{u\in T}\right>$ of width  $\leq k\cdot d$ in normal form
such that for each decomposition node $u$ of $\calH$, $\rho^*(H_{\lambda_u})\leq k$ holds
(i.e., $H_{\lambda_u}$ has a fractional edge cover of weight $\leq k$). 
\end{enumerate}
\end{theorem}

\begin{proof}
To prove $1\Rightarrow 2$, we use Lemma~\ref{lem:transform-into-strict-FNF}
to conclude from $\fhw(H) \leq k$ that there exists a strict FHD 
$\calF= \left<T,(B_u)_{u\in T},(\gamma_u)_{u\in T}\right>$  of $H$ in  
{\em fractional normal form\/}
of width $\leq k$ that has $(k\cdot d)$-bounded support. 
The FHD $\calF$ can be naturally transformed into a GHD 
$\calH = \left< T, (B_u)_{u\in T}, (\lambda_u)_{u\in T} \right>$ by leaving the tree structure and the bags $B_u$ 
unchanged and by defining $\lambda_u$ as the characteristic function of $\supp(\gamma_u)$, i.e., 
$\lambda_u(e) = 1$ if $e \in \supp(\gamma_u)$ and $\lambda_u(e) = 0$ otherwise.
Since $\calF$  is strict, it follows immediately that the resulting GHD satisfies the special condition, i.e., 
$\calH$ is in fact an HD. Moreover, since $\mcF$ is in (fractional) normal form, 
also the HD $\calH$ is in the normal form from \cite{2002gottlob}. 

To see $2\Rightarrow 1$, assume $2$ holds with query decomposition 
$\calH=\defHD$ and assume further that, 
for each decomposition node $u$ of $\calH$, there exists a fractional edge cover  
$\gamma'_u$ for $H_{\lambda_u}$  of width $\leq k$. 
In particular, we thus have  $B(\lambda_u)=B(\gamma'_u)= B_u$. 
Similarly to the proof of Lemma~\ref{lem:bsupp}, we can transform each fractional edge cover
$\gamma'_u$ of the induced subhypergraph $H_{\lambda_u}$ of $H'$ into an 
edge-weight function $\gamma_u$ of $H$ by moving the weights $\gamma'_u(e')$ 
of each edge $e'$ in 
$H_{\lambda_u}$ to one of its ``originator edges'' $e$ in $H$ (i.e., an edge $e$ in $H$
with $e' \subseteq e$).
By replacing $\lambda_u$ with  $\gamma_u$, we obtain an FHD
$\calF=\defFHD$ of width $\leq k$ of $H$.
\end{proof}

\nop{*************************
Note that the HDs with bounded supports used here seem to be of the type of {\em fractionally improved  HDs} Davide Longo is currently implementing; it may be worthwhile formulating the results in a more general context for such HDs).
*************************}

\noindent
We are now ready to prove the main result of this section:

\begin{proof}[Proof of Theorem \ref{theo:exactFHD}]
By Theorem~\ref{theo:FHD-vs-HD}, it is sufficient to look for a strict hypertree  decomposition (thus a query decomposition) $\calH= \defHD$ 
of $\HH'$ of width  $\leq d\cdot k$ such that for each decomposition node $u$ of $\calH$, $\rho^*(H'_{\lambda_u})\leq k$ holds. 
This is achieved by modifying the alternating algorithm $k$-decomp 
from~\cite{2002gottlob} by 
inserting the following two checks at each node $u$:
\begin{itemize}
\item if $u$ has a parent $r$, then 
$\bigcup S_u \subseteq B(\lambda_r)\cup \treecomp(u)$ where
$S_u = \supp(\lambda_u)$
and $\treecomp(u)$ is defined as the set of vertices which have to appear in $T_u$ and which do not appear outside $T_u$.
This makes sure that we may set $B_u= \bigcup S_u$ without violating the connectedness condition. Hence, 
the resulting decomposition is strict.
\item $\rho^*(H'_{\lambda_u})\leq k$.
\end{itemize}
The so modified algorithm clearly runs in 
\alogspace = \ptime
\end{proof}

\newcommand{\rarity}[1]{\ensuremath{\mathit{rank}(#1)}}

\medskip\noindent{\bf Deciding the {\sc Check} Problem for Hypergraphs of Bounded Rank.}
We conclude this section by presenting another class of hypergraphs for which 
the \rec{FHD,$k$} problem is tractable. More specifically, we study the class of hypergraphs with bounded rank, which properly contains the class of graphs, whose
rank is bounded by 2. 
We first formally define
the bounded rank property:

\begin{definition}\label{def:brank}
The {\em rank} of a hypergraph $\HH$ (denoted $\rarity{\HH}$)
is the 
maximum cardinality of any edge  $e$ of $\HH$.
We say that a hypergraph $H$ has
the {\em $r$-bounded rank property ($r$-BRP)} if 
$\rarity{\HH}\leq r$ holds. 
For a class $\classC$ of hypergraphs,
we say that $\classC$ 
has the  
\emph{bounded rank property (BRP)} 
if there exists a constant $r$
such that
every hypergraph $H$ in $\classC$
has the $r$-BRP.
%
\end{definition}

For every hypergraph with $\rarity{\HH}\leq r$ for some constant $r$, the following lemma is immediate:

\begin{lemma}
\label{lem:brp}
Let $H$ be a hypergraph whose arity is bounded by some constant $r$. Then for every (fractional) edge weight function $\gamma$ for $H$ satisfying $\weight(\gamma) \leq k$, the property 
$|B(\gamma)| \leq r \cdot k$ holds.  
\end{lemma}

We can now use the above lemma to devise a polynomial-time decision procedure 
for the \rec{{FHD},\,$k$} problem by a straightforward adaptation of the 
$k$-decomp  algorithm from \cite{2002gottlob}
for the \rec{{HD},\,$k$} problem.
The key idea in $k$-decomp is, in a top-down construction of the HD, to guess for the 
current component $C$ (initially, $C = V(H)$)
an edge cover $\lambda_u$ of the next node $u$ and 
to call the decomposition procedure recursively for all
$[B(\lambda_u)]$-components inside $C$. Strictly speaking, one would need the 
$[B_u]$-components inside $C$. However, by the special condition of HDs, it was 
shown in \cite{2002gottlob} that the $[B(\lambda_u)]$-components and $[B_u]$-components 
coincide.
For our algorithm to solve the \rec{{FHD},\,$k$} problem for hypergraphs with the BRP, the 
situation is even easier: by the constant bound on $|B(\gamma_u)|$ for every $u$, we can even afford to directly guess $B_u$, followed by a check (via linear programming) that 
$B_u$ indeed has a fractional edge cover of weight $\leq k$. If not, we reject. If so, we
call the decomposition procedure recursively for all
$[B_u]$-components. We thus get the following tractability result:

\newcommand{\thmRankFHD}{%
For every hypergraph class $\classC$ 
that has bounded rank, and for every 
constant $k \geq 1$, 
the 
\rec{FHD,\,$k$} problem is tractable, i.e., 
given a hypergraph $\HH \in \classC$, it is feasible in polynomial time to 
check  $\fhw(\HH)\leq k$ and, if so, to compute an FHD of 
width $k$ of $\HH$.%
}

\begin{theorem}\label{theo:exactFHD}
\thmRankFHD
\end{theorem}

\section{Efficient Approximation of FHW}
\label{sect:fhd}

We now turn our attention to approximations of the $\fhw$. 
It is known from~\cite{DBLP:journals/talg/Marx10}
that 
a tractable cubic approximation of the $\fhw$ always 
exists, i.e.: for $k \geq 1$,
there exists a polynomial-time algorithm that, given a hypergraph $H$ with 
$\fhw(H) \leq k$, finds an FHD of $H$ of  width $\calO(k^3)$. 
In this section, we search for conditions which guarantee a better 
approximation 
of the $\fhw$.

Natural first candidates for restricting hypergraphs are the BIP and, more 
generally, the BMIP. 
For the \rec{GHD,\,$k$} problem, these restrictions guarantee tractability. 
We have to leave it as an open question for future research 
if the BIP or even the BMIP also guarantees tractability 
of the \rec{FHD,\,$k$} problem for fixed $k \geq 1$.
However, 
in this section, we will show that a significantly better polynomial-time approximation 
of the $\fhw$ than in the general case is possible for hypergraphs enjoying the BIP or BMIP.

\subsection{Approximation of FHW in case of the BIP}
\label{sect:fhw-bip}

We first inspect the case of the bounded intersection property. We will show that the BIP allows for 
an arbitrarily close approximation of the $\fhw$ in polynomial time. 
Formally, the main result of this section is 
as follows:

\newcommand{\thmFHDwithBIP}{%
Let $\classC$  be a hypergraph class 
that enjoys the BIP and let $k, \epsilon$ be  arbitrary constants
with $k \geq 1$ and $\epsilon > 0$. 
Then there exists a polynomial-time algorithm that, given a hypergraph $H \in 
\classC$ with 
$\fhw(H) \leq k$, finds an FHD of $H$ of width $\leq k + \epsilon$.%
}

\begin{theorem}\label{theo:FHDwithBIP}
\thmFHDwithBIP
\end{theorem}

In the remainder of this section, we develop the necessary machinery to finally prove
Theorem~\ref{theo:FHDwithBIP}.
For this, we first introduce the crucial concept of a 
{\em $c$-bounded fractional part\/}.   
Intuitively, FHDs with $c$-bounded fractional part 
are FHDs, where the fractional edge cover $\gamma_u$ in every
node $u$ is ``close to an (integral) edge cover'' -- with the possible exception 
of up to $c$ vertices in $B(\gamma_u)$. 

It is convenient to first introduce the following notation: 
let $\gamma : E(H) \ra [0,1]$  and let $S \subseteq \cov(\gamma)$. We 
write $\gamma |_S$ to denote the {\em restriction of $\gamma$ to $S$\/}, i.e., 
$\gamma |_S (e) = \gamma(e)$ if $e \in S$ and  $\gamma|_S (e)= 0$  otherwise.

\begin{definition}
\label{def:c-bounded}
Let 
$\mcF= \left< T, (B_u)_{u\in T}, (\gamma_u)_{u\in T} \right>$ be 
an  FHD of some hypergraph $H$
and let $c \geq 0$. 
We say that $\mcF$ has {\em $c$-bounded fractional part\/} if
in every node $u \in T$, the following property holds: 
Let $R = \{e \in \cov(\gamma_u) \mid \gamma_u(e) < 1\}$.  
Then $|B(\gamma_u|_{R})| \leq c$.
\end{definition}
Clearly, for the special case $c= 0$, an FHD with 
$c$-bounded fractional part is essentially a GHD; in this case, we can simply define 
$\lambda_u(e) = 1$ if $\gamma_u(e) = 1$ and $\lambda_u(e) = 0$ otherwise, for every 
decomposition node $u$.
We next generalize the special condition (i.e., condition 4 of the definition of HDs) 
to FHDs. To this end, we define the {\em weak special condition\/}. 
Intuitively, it requires that the special condition
must be satisfied by the integral part of each fractional edge cover. 
For the special case $c = 0$, an FHD with 
$c$-bounded fractional part satisfying the weak special condition is thus
essentially a GHD satisfying the special condition, i.e., an HD.

\begin{definition}
\label{def:weak-SC}
Let 
$\mcF= \left< T, (B_u)_{u\in T}, (\gamma_u)_{u\in T} \right>$ be 
an  FHD of some hypergraph $H$.
We say that 
$\mcF$ satisfies the {\em weak special condition\/} if
in every node $u \in T$, the following property holds: 
for $S = \{e \in E(H) \mid \gamma_u(e) = 1\}$, we have 
$B(\gamma_u|_{S}) \cap V(T_u) \subseteq B_u$.
\end{definition}

The proof of Theorem~\ref{theo:FHDwithBIP} will be based on 
two key lemmas. Consider constants $k \geq 1$, $i \geq 0$, and $\epsilon > 0$, 
and
suppose that we are given a hypergraph $H$ with 
$\iwidth{\HH} \leq i$. 
We will show the following properties: 
if  $H$ has an FHD of width $\leq k$, then 
(1) $H$ also has an FHD of width $\leq k + \epsilon$ with 
$c$-bounded fractional part, where $c$ only depends on $k$, $\epsilon$, and  $i$, but not on the size of $H$, and
(2) we can extend $H$ to a hypergraph $H'$ by adding polynomially many edges, such that 
$H'$ has an FHD of width $\leq k+\epsilon$ with 
$c$-bounded fractional part satisfying the weak special condition. 
Theorem~\ref{theo:FHDwithBIP} will then be proved by appropriately adapting the 
\rec{HD,\,$k$} algorithm from \cite{2002gottlob}.

\medskip
\noindent {\bf $c$-bounded fractional part and weak special condition.} 
We start by proving two lemmas which, taken together, ensure that in case of the BIP, we
can transform an arbitrary FHD into an FHD with $c$-bounded fractional part satisfying 
the weak special condition at the expense of increasing the width by at most $\epsilon$.

\newcommand{\newlemCBounded}{%
Consider constants $k \geq 1$, $i \geq 0$, and $\epsilon > 0$, 
and
suppose that we are given a hypergraph $H$ with 
$\iwidth{\HH} \leq i$. 
If $H$ has an FHD of width $\leq k$, then it also has an FHD of width $\leq k + \epsilon$ with 
$c$-bounded fractional part, where $c$ only depends on $k$, $\epsilon$, and  $i$, but not on the size of $H$. 
More precisely, we have
$c = 2 i k^2 + \frac{4k^3i}{\epsilon}$.
}

\begin{lemma}\label{lem:fhwCBounded}
\newlemCBounded
\end{lemma}

\def\El{E_\ell}
\def\Eh{E_h}
\def\Ehs{E^s_h}
\def\Ehb{E^b_h}
\def\Vl{V_\ell}
\def\Vs{V_s}
\def\Vb{V_b}

\begin{proof}
Consider an arbitrary node $u$ in an FHD 
$\mcF= \left< T, (B_u)_{u\in T}, (\gamma_u)_{u\in T} \right>$ 
of $H$ and let $\gamma_u$ be an optimal fractional edge cover of $B_u$ with 
$\weight(\gamma_u) \leq k$. It suffices to show that there exists a fractional edge cover
$\gamma^*_u$ of $B(\gamma_u)$ (and, hence, of $B_u$) with $\weight(\gamma^*_u) \leq k+\epsilon$ and 
$|B(\gamma^*_u|_{R^*})| \leq c$ for $R^* = \{e \in \cov(\gamma^*_u) \mid \gamma^*_u(e) < 1\}$.  

We first partition the edges in $\cov(\gamma_u)$ into ``heavy'' ones (referred to as $\Eh$)
and ``light-weight'' ones (referred to as $\El$). Moreover, we further partition the heavy edges into 
``big'' and ``small'' ones (referred to as $\Ehb$ and $\Ehs$, respectively). 
For the border between
``heavy'' and ``light-weight'' edges, we could, in principle, 
choose any value $w \in (0,1)$. To keep the notation simple, we choose $w = 0.5$.
For the border between ``big'' and ``small'' edges, we have to 
choose a specific constant $d$, which will be introduced below. 
We thus define the following sets of edges: 

\smallskip

$E_\ell = \{ e \in  \cov(\gamma_u) \mid \gamma_u(e) < 0.5\}$,

$E_h = \{ e \in \cov(\gamma_u) \mid \gamma_u(e) \geq 0.5\}$,

$\Ehs = \{ e \in E_h \mid |e \cap B(\gamma_u)| < d\}$, and 

$\Ehb = \{ e \in E_h \mid |e \cap B(\gamma_u)| \geq d \}$ with $d = \frac{2 k^2 i}{\epsilon}$

\medskip
\noindent
This allows us to define the subsets $\Vl$, $\Vs$, and $\Vb$ of the vertices in $B(\gamma_u)$, 
s.t.\ $\Vl$ consists of the vertices only covered by light-weight edges, 
$\Vs$ consists of the vertices contained in at least one heavy edge but not in a big one, and
$\Vb$ consists of the vertices  contained in at least one big heavy edge,  i.e.: 

\smallskip

$\Vl = \{ x \in B(\gamma_u) \mid \forall e \in \Eh$: $x \not\in e\}$, 

$\Vs = \{ x \in B(\gamma_u) \mid \exists e \in \Ehs \mbox{ with } x \in e \mbox{ and } 
             \forall e \in \Ehb$: $x \not\in e\}$, 

$\Vb = \{ x \in B(\gamma_u) \mid \exists e \in \Ehb \mbox{ with } x \in e\}$.

\smallskip
\noindent
Our proof proceeds in three steps. We will first show that $\Vl$ and  $\Vs$ are bounded by constants.
$\Vb$ can in principle become arbitrarily large (cf.\ edge $\{v_1, \dots, v_n\}$ in Example \ref{ex:fhwLongEdge}). 
However, we will show that $\gamma_u$ can be transformed into $\gamma^*_u$ with 
$\weight(\gamma^*_u) \leq \weight(\gamma_u) + \epsilon$, s.t.\ all vertices of $\Vb$ are 
covered by edges of weight 1 in $\gamma^*_u$.

\medskip
{\sc  Claim A.} $|\Vl| < c_1$ with $c_1 = 2 i k^2$. 

\medskip
{\sc Proof of Claim A.} Let $e \in E_\ell$ and let $m = |e \cap B(\gamma_u)|$. 
We first show that $m < 2 i k$ holds, i.e., 
the size of light-weight edges is bounded by a constant. (Recall from Example \ref{ex:fhwLongEdge} that this is, 
in general, not the case for heavy edges.)
Indeed, 
by $e \in E_\ell$, we know that $e$ puts weight $< 0.5$ on each of its vertices. Hence, weight $> 0.5$ must be put on each vertex of 
$e \cap B(\gamma_u)$ by other edges $e' \in E_\ell$. In total, the other edges must put weight $> 0.5 m$ on the vertices of $e$.
Now we make use of the assumption that 
$\iwidth{H} \leq i$ holds, i.e., the intersection of any two edges of $H$ contains at most $i$ vertices. 
Hence, whenever $\gamma_u$ puts 
weight $w$ on some edge $e' \neq e$, then $e'$ puts at most weight $i  w$  in total on the vertices in $e$. 
By $weight(\gamma_u) \leq k$, the edges $e' \in E_\ell$ with $e' \neq e$ can put at most $i k$ total weight on the vertices in $e$. 
We thus get the inequality $i k  > 0.5 m$. In other words, we have $m < 2 i k$.

Now suppose that, for an arbitrary edge in $E_\ell$, we have $\gamma_u(e) = w$. Then $e$ can put at most 
weight $m w$ in total on the vertices in $\Vl$. By $weight(\gamma_u) \leq k$, all edges in $E_\ell$ together can 
put at most weight $m k$ on the vertices in $\Vl$. By our definition of $\Vl$, we have $\Vl \subseteq B(\gamma_u)$,
i.e.,
each vertex in $\Vl$ receives at least weight 1. Hence, there can be at most $m k$ vertices in $\Vl$. In total, we
thus have $|\Vl| < 2 i k^2$. 
\hfill$\diamond$

\medskip
{\sc  Claim B.} $|\Vs| < c_2$ with $c_2 = \frac{4k^3i}{\epsilon}$.

\medskip
{\sc Proof of Claim B.} First, we show that $|\Eh| \leq 2k$,
i.e., 
the number of heavy edges is bounded by a constant. (Recall from Example \ref{ex:fhwLongEdge} that this is, 
in general, not the case for the light-weight edges.)
By the definition of $E_h$, each edge in $E_h$ has weight $\geq 0.5$. By $weight(\gamma_u) \leq k$, the
total weight of the edges in $E_h$ is $\leq k$. Hence, there can be at most $2k$ edges in $E_h$. 

By $\Ehs \subseteq \Eh$ this implies that also $|\Ehs| \leq 2k$ holds.
Since 
$|e \cap B(\gamma_u)| < d$ with $d = \frac{2 k^2 i}{\epsilon}$ holds for all edges in $\Ehs$,
we thus have $|\Vs| < \frac{4k^3i}{\epsilon}$.
\hfill$\diamond$

\medskip
{\sc  Claim C.} For every $e \in \Ehb$, we have $\gamma_u(e) \geq 1 - \frac{\epsilon}{2k}$.

\medskip
{\sc Proof of Claim C.} Let $e \in \Ehb$ and let $ e \cap B(\gamma_u) = m$ with 
$m \geq \frac{2 k^2 i}{\epsilon}$.
Let $e' \in \cov(\gamma_u)$ with $e' \neq e$. 
Moreover, since $\iwidth{\HH} \leq i$,
we have $|e' \cap e| \leq i$. Hence, 
if $\gamma_u(e') = w$, then $e'$ can put at most total weight $wi$ on the vertices in $e$. 
By $\weight(\gamma_u) \leq k$, all edges $e' \in \cov(\gamma_u)$ with $e' \neq e$ taken together can  put 
at most total weight $ki$ on the vertices in $e$. The average weight thus put on each vertex in 
$e \cap B(\gamma_u)$ is
at most $\frac{ki}{m}$. Together with the condition $m \geq \frac{2 k^2 i}{\epsilon}$,
we thus get the upper bound $\frac{ki\epsilon}{2 k^2 i} = \frac{\epsilon}{2k}$ on the average weight put on 
each vertex in $e\cap B(\gamma_u)$ by all of the edges $e' \neq e$. Hence, there is at least one vertex 
$x \in e\cap B(\gamma_u)$ for which the total weight of the edges $e' \neq e$ with $x \in e'$ is 
at most $\frac{\epsilon}{2k}$. Therefore, for $x$ to be in $B(\gamma_u)$,  
$\gamma_u(e) \geq  1 - \frac{\epsilon}{2k}$ must hold.
\hfill$\diamond$

\medskip 
{\sc Conclusion of the proof of Lemma \ref{lem:fhwCBounded}.}
We are now ready to construct the desired fractional cover $\gamma^*_u$ of $B(\gamma_u)$ (and, hence, of $B_u$):
\[\gamma^*_u (e) = \left\{
  \begin{array}{ll}
    1 & \mbox{if } e \in \Ehb \\
    \gamma_u(e) & \mbox{otherwise}
  \end{array}
\right.
\]
The fractional cover $\gamma^*_u$ has the following properties: 
\begin{itemize}
\item In $\gamma^*_u$, the weight of an edge is never decreased compared with $\gamma_u$. Hence, 
$B(\gamma_u) \subseteq B(\gamma^*_u)$ and, therefore, $B_u \subseteq B(\gamma^*_u)$ holds. 
\item By Claim C, for every edge 
$e \in \Ehb$, we have $\gamma_u(e) \geq 1 - \frac{\epsilon}{2k}$ and, 
thus, $\gamma^*_u(e) = 1 \leq \gamma_u(e) + \frac{\epsilon}{2k}$.
Moreover, in the proof of Claim B, we have shown that 
$|\Eh| \leq 2k$ 
(and, thus, also $|\Ehb| \leq 2k$) holds. 
Hence, we have $\weight(\gamma^*_u) \leq \weight(\gamma_u) + \epsilon$.
\item By the definition of $\Vl,\Vs$, and $\gamma^*_u$, we have 
$B(\gamma^*_u|_{R^*}) \subseteq \Vl \cup \Vs$ with 
$R^* = \{e \in \cov(\gamma^*_u) \mid \gamma^*_u(e) < 1\}$, since 
all of the big, heavy edges under $\gamma_u$ have weight 1 in $\gamma^*_u$.  
Moreover, 
by Claims A and B, we have $|\Vl \cup \Vs| \leq c_1 + c_2 = 2 i k^2 + \frac{4k^3i}{\epsilon}$.
Hence, we also have 
$|B(\gamma^*_u|_{R^*})| \leq c_1 + c_2$.
\end{itemize}
By carrying out this transformation of $\gamma_u$ into $\gamma^*_u$ for every node $u$ of $\calF$, 
we thus get the desired FHD of $H$ of width $\leq k + \epsilon$ with 
$c$-bounded fractional part for 
$c = 2 i k^2 + \frac{4k^3i}{\epsilon}$
\end{proof}

The following lemma shows that, in case of the BIP, the weak special condition can be enforced without a
further increase of the width.

\begin{lemma}\label{lem:FHDweakSC}
Let $c \geq 0, i \geq 0$, and $k \geq 1$. 
There exists a polyno\-mial-time
computable function $f_{(c,i,k)}$ which takes as input a hypergraph $H$ 
with $\iwidth{H} \leq i$
and yields
as output a set of subedges of $E(H)$ with the following property:
if $H$ has an FHD of width $\leq k$ with 
$c$-bounded fractional part then 
$H'$ has an FHD of width $\leq k$ with 
$c$-bounded fractional part satisfying the weak special condition, where
$H' = (V(H), E(H) \cup f_{(c,i,k)}(H))$.
More specifically, $f_{(c,i,k)}(H)) = \{e' \mid e'$ is a subedge of some $e \in E(H)$ with $|e'| \leq ki + c\}$.
\end{lemma}

\begin{proof}
Let $H$ be a hypergraph with 
$\iwidth{\HH} \leq i$ and let 
$H'$ and $f_{(c,i,k)}(H))$ be as defined above.
Moreover, let $\mcF= \left< T, (B_u)_{u\in T}, (\gamma_u)_{u\in T} \right>$ 
be an FHD of $H$ (and hence also of $H'$) of width $\leq k$ with 
$c$-bounded fractional part. We assume $T$ to be rooted, where the root can be arbitrarily chosen among the 
nodes of $T$.
We have to show that $\mcF$ can be transformed into an 
FHD of $H'$ of width $\leq k$ with 
$c$-bounded fractional part satisfying the weak special condition.

We proceed similarly as in the proof of 
Theorem~\ref{theo:LogBMIP} -- with some simplifications due to the assumption of the BIP 
(rather than the less restrictive BMIP) and with some slight complications due to the fractional part.  
Suppose that $\mcF$ contains a violation of the weak special condition
(a weak-SCV, for short), i.e., 
there exists a node $u$ in $T$, an edge $e \in E(H)$ with $\gamma_u(e)  = 1$ and a vertex
$x \in e \cap V(T_u)$, s.t.\ $x \not\in B_u$ holds. We write $(u,e,x)$ to denote 
such a weak-SCV. W.l.o.g., we can choose a weak-SCV in such a way that there exists no weak-SCV for any node 
$u'$ below~$u$. 
We show that this weak-SCV can be eliminated by appropriately modifying 
the FHD $\mcF$ of $H'$.

By the connectedness condition, 
$e$ must be covered by some node $u^* \in T_u$, i.e., 
$u^*$ is a descendant of $u$ and $e \subseteq B_{u^*}$ holds. 
Let $\pi$ denote the path in $T$ from $u$ to $u^*$. 
We distinguish two cases: 

\smallskip
\noindent
{\em Case~1.} Suppose that for every node $u'$ along the path $\pi$ with $u' \neq u$, 
we have $x \in B_{u'}$. Then we simply add $x$ to $B_u$.
Clearly, this modification does not violate
any of the conditions of FHDs, i.e., the connectedness condition and 
the condition $B_{u} \subseteq B(\gamma_{u})$ are still fulfilled. 
Moreover, the weak-SCV $(u,e,x)$ has been eliminated and no new weak-SCV is introduced.
Finally, note that $\gamma_u$ is left unchanged by this transformation. Hence, the resulting FHD still 
has $c$-bounded fractional part.

\smallskip
\noindent
{\em Case~2.} Suppose that there exists a node $u'$ along the path 
$\pi$ with $u' \neq u$ and $x \not \in B_{u'}$. Of course, also 
$u' \neq u^*$ holds, since $x \in e$ and $e$ is covered by $u^*$.
We may also conclude that 
$\gamma_{u'}(e) < 1$. Indeed, suppose to the contrary that
$\gamma_{u'}(e) = 1$. Then $\mcF$ would contain the weak-SCV 
$(u',e,x)$ where $u'$ is below $u$, which contradicts our choice of 
$(u,e,x)$.

By the connectedness condition, $e \cap B_u \subseteq B_{u'}$ holds and, hence, 
also $e \cap B_u \subseteq e \cap B_{u'}$. Moreover, 
$B_{u'} \subseteq B(\gamma_{u'})$ holds by the definition of FHDs and
$B(\gamma_{u'})$ is of the form 
$B(\gamma_{u'}) = B_1 \cup B_2$ with 
$B_1 =  B(\gamma_{u'}|_{R})$ and $B_2 =  B(\gamma_{u'}|_{S})$
with 
$R = \{e \in \cov(\gamma_{u'}) \mid \gamma_{u'}(e) < 1\}$
and 
$S = \{e \in \cov(\gamma_{u'}) \mid \gamma_{u'}(e) = 1\}$.
Now let $S = \{e_1, \dots, e_\ell\}$ denote the set of edges with weight $1$ in 
$\gamma_{u'}$. 
Clearly, $\ell \leq k$, since the width of $\mcF$ is $\leq k$.
In total, we have: 
\[ (e \cap B_{u})  \subseteq e \cap B_{u'} = 
e \cap (e_1 \cup \dots \cup e_\ell \cup B_1) =
(e \cap e_1) \cup \dots \cup (e \cap e_\ell) \cup (e \cap B_1).
\]
The first $\ell$ intersections each have cardinality $\leq i$ and the 
last intersection has cardinality $\leq c$
by our assumption of $c$-fractional boundedness.
Together with $\ell \leq k$, we thus have
$|e \cap B_{u}| \leq k i + c$. 

Now let $e' = e \cap B_{u}$. We have just shown that $e'$ is a subset of $e$ with 
$|e'| \leq k  i + c$. Hence, $e'$ is an edge in $H'$. We can thus modify $\mcF$
by modifying $\gamma_u$ to $\gamma'_u$ as follows: 
we set $\gamma'_u(e) = 0$, $\gamma'_u(e') = 1$,  and let 
$\gamma'_u$ be identical to $\gamma_u$ everywhere else.
Clearly, we still have $B_u \subseteq B(\gamma'_u)$
and also $\weight(\gamma'_u) \leq k$ still holds.
Moreover, the weak-SCV 
$(u,e,x)$ has been eliminated and no new weak-SCV $(u,e',z)$ can arise since $e' = e \cap B_{u}$ 
implies $e' \subseteq B_u$. Finally, note that $B(\gamma_u|_R) = B(\gamma'_u|_R)$ and 
$R = \{e \in \cov(\gamma_u) \mid \gamma_u(e) < 1\}
= \{e \in \cov(\gamma'_u) \mid \gamma'_u(e) < 1\}$. Hence, 
the resulting FHD still 
has $c$-bounded fractional part.

\smallskip
\noindent
To conclude,  every modification of $\mcF$ by either Case 1 or Case 2 strictly decreases the number of 
weak-SCVs in our FHD. Moreover, the FHD resulting from such a modification still has $c$-bounded fractional part.
By exhaustively eliminating the weak-SCVs as described above, we thus end up with 
an FHD of $H'$ of width $\leq k$ with 
$c$-bounded fractional part satisfying the weak special condition.
\end{proof}

\medskip
\noindent
{\bf Normal Form of FHDs.} 
\label{sect:FNF}
In order to adapt the HD algorithm from \cite{2002gottlob} to turn it into an FHD algorithm that searches for FHDs with $c$-bounded fractional part for some constant $c$ and satisfying the weak special condition, we will carry over the normal form for HDs to FHDs having the two above properties. Recall the normal form introduced in Definition \ref{def:fhdnf}.

\medskip
\noindent
{\sc Definition \ref{def:fhdnf}}
\deffhdnf

We have already discussed in Section \ref{sect:fhd-exact} that the transformation of HDs into normal form presented in \cite{2002gottlob} can be taken
over almost literally to transform FHDs into FNF (see Theorem \ref{thm:nf} in the appendix). Below we argue that if this transformation is applied to an FHD with $c$-bounded fractional part and weak special condition then the resulting FHD in FNF also 
satisfies these two properties.

\begin{lemma}%
\label{lem:nf-fhw-approx}
Let $c \geq 0$.
For each FHD $\mcF$ of a hypergraph $H$ with $c$-bounded fractional part satisfying the weak special condition 
and with $\width(\mcF) \leq k$ there 
exists an FHD $\mcF^+$ of $H$ in FNF with $c$-bounded fractional part satisfying the weak special condition and with 
$\width(\mcF^+) \leq k$.
\end{lemma}

\begin{proof}[Proof Sketch]
The crucial part of the transformation into normal form is to ensure Conditions~1 and 2. Here, the proof of Theorem~5.4 from \cite{2002gottlob} can be taken over literally (as is done in the proof of Theorem \ref{thm:nf}  in the appendix) because it only makes use of the tree structure of the decomposition, the bags, and the connectedness condition. 
Ensuring also Condition~3 of our FNF is easy, because we may always 
extend $B_s$ by vertices from $B(\gamma_s)  \cap B_r$ without violating the connectedness condition.
Moreover, the transformation of HDs in \cite{2002gottlob} preserves the special condition. Analogously, when applying
this transformation to FHDs (as detailed in Theorem \ref{thm:nf}), the weak special condition is preserved. Finally, if $\mcF$ has $c$-bounded fractional part then so has $\mcF^+$. This is due to the fact that, by this transformation, no set
$B(\gamma_u|_{R})$ with $R = \{e \in \cov(\gamma_u) \mid \gamma_u(e) < 1\}$ is ever increased.
\end{proof}

Suppose that an FHD 
$\mcF = \defFHD$
is in FNF. Then, for every node $s \in T$, we define  
$\tc(s)$ as follows: 

\begin{itemize}
\item 
If $s$ is the root of $T$, then we set $\tc(s) = V(H)$.
\item 
Otherwise, let $r$ be the parent of $s$ in $T$. 
Then we set $\tc(s) = C_r$, where $C_r$ is the unique $[B_r]$-compo\-nent 
with 
$V(T_s) = C_r \cup (B_r \cap B_s)$ according to Condition~1 of the FNF.
\end{itemize}

We now carry Lemmas~5.5 -- 5.7 from \cite{2002gottlob} over to fractional 
hypertree decompositions
in fractional normal  form. The proofs from  \cite{2002gottlob} can be easily 
adapted to our setting. 
We therefore state the lemmas without proof.

\begin{lemma}[Lemma~5.5 from \cite{2002gottlob}]
\label{lem:55}
Let $\mcF = \defFHD$ be an arbitrary FHD 
of a hypergraph $H$ in fractional normal  
form, let $u \in T$, and let $W = \tc(u) \setminus B_u$. 
Then, for any \comp{$B_u$} C such
that $(C \cap W) \neq \emptyset$, we have $C \subseteq W$.
Therefore, $\mcC = \{ C' \subseteq V(H) \mid C' \text{ is a \comp{$B_u$} and }$ $ C'$ 
$\subseteq \tc(u) \}$ is a partition of $W$.
\end{lemma}

\begin{lemma}[Lemma~5.6 from \cite{2002gottlob}]
\label{lem:56}
Let $\mcF = \defFHD$  be  an arbitrary FHD 
of a hypergraph $H$ in fractional normal  
form and let 
$r \in T$. Then, $C = \tc(s)$ for some child $s$ of $r$ if and only if $C$
is a \comp{$B_r$} of $H$ and $C \subseteq \tc(r)$.
\end{lemma}

\begin{lemma}[Lemma~5.7 from \cite{2002gottlob}]
\label{lem:57}
For every FHD $\mcF = \defFHD$ of a hypergraph $H$ in fractional normal  
form, $|\nodes(T)| \leq |V(H)|$.
\end{lemma}

The next lemma is crucial for designing an algorithm that computes a concrete 
FHD.
The lemma is based on  Lemma~5.8 from \cite{2002gottlob}. However, the proof in 
the FHD-setting 
requires a slightly more substantial modification of the proof in the HD-setting. We 
therefore state
the lemma together with a full proof below.

\begin{lemma}[Lemma~5.8 from \cite{2002gottlob}]
\label{lem:58}
Let $c \geq 0$ and let $\mcF = \defFHD$ be an FHD in FNF of a hypergraph 
$H$ such that $\mcF$ 
has $c$-bounded fractional part and satisfies the {\em weak special condition\/}.
Further, let $s$ be 
a node in $T$ and let $r$ be the parent of $s$ in $T$. 
Let $R_1 = \{e \in E(H) \mid \gamma_s(e) = 1\}$ and $R_2 = \{e \in E(H) \mid \gamma_s(e) < 1\}$, 
and let $B_s = B_1 \cup B_2$ with 
$B_1 =  B_s \cap B(\gamma_s|_{R_1})$ and $B_2 =  B_s \cap B(\gamma_s|_{R_2})$. 
Finally, let $C$ be a set of 
vertices
such that $C \subseteq \tc(s)$. 
Then the following equivalence holds:

\begin{center}
$C$ is a $[B_s]$-component if and only if 
$C$ is a 
$[B(\gamma_s|_{R_1}) \cup B_2]$-component.
\end{center}
\end{lemma}

{\sc Remark.} The crux of the proof of Lemma~5.8 from \cite{2002gottlob} and likewise of Lemma~\ref{lem:58}
stated here is the following: 
by the definition of FHDs, we have 
$B_s \subseteq B(\gamma_s|_{R_1}) \cup B_2$. Hence, every  
$[B(\gamma_s|_{R_1}) \cup B_2]$-path is also a $[B_s]$-path, but the converse is, at first glance, not clear.
However, by the weak special condition, $(B(\gamma_s|_{R_1}) \cup B_2) \setminus B_s$ only contains elements from $B_r \cap B_s$. Moreover, we are assuming that $C$ is a subset of $\tc(s)$, i.e., it is in the complement of $B_r$. 
Hence, $[B_s]$-paths and $[B(\gamma_s|_{R_1}) \cup B_2]$-paths actually coincide. 
From this it is then straightforward to conclude that, inside $\tc(s)$,
$[B_s]$-components and 
$[B(\gamma_s|_{R_1}) \cup B_2]$-components coincide.

\begin{proof}
Let $W = B(\gamma_s|_{R_1}) \cup B_2$. We first prove the following Property (1), which is the analogue of Property (1) in the proof of Lemma~5.8 from \cite{2002gottlob}:  
 \begin{gather} 
 W  \cap  \tc(s) \subseteq B_s. 
 \end{gather}
 
\smallskip
{\sc Proof of Property (1).} By the definition of FHDs, we have $B_s \subseteq B(\gamma_s|_{R_1}) \cup B_2 = W$.
By the weak special condition, we have $B(\gamma_s|_{R_1}) \cap V(T_s) \subseteq B_s$. 
By the definition of $\tc(s)$, we have $V(T_s) = \tc(s)  \cup (B_s \cap B_r)$, i.e., 
also $\tc(s) \subseteq V(T_s)$ clearly holds. 
In total, we thus have: 

\medskip
$W  \cap \tc(s) = 
      (B(\gamma_s|_{R_1}) \cup B_2)  \cap \tc(s) \subseteq$
      
$(B(\gamma_s|_{R_1}) \cap \tc(s)) \cup B_2 \subseteq 
      (B(\gamma_s|_{R_1}) \cap V(T_s)) \cup B_2  \subseteq  B_s. \hfill \diamond$
\medskip
\noindent
It remains to show for 
$C \subseteq \tc(s)$, that $C$ is a 
$[B_s]$-component if and only if 
$C$ is a 
$[W]$-component.
This proof follows the line of argumentation in the proof of Lemma~5.8 from \cite{2002gottlob} 
-- replacing Property (1) there with our Property~(1) proved here. For the sake of completeness, we present a detailed proof of the desired equivalence below.

\medskip
{\sc Proof of the ``only if''-direction.}
Suppose that $C$ is a $[B_s]$-component with $C \subseteq \tc(s)$. 
Then, in particular, $C \cap B_s = \emptyset$.
Hence, by Property (1), we have 
$C \cap W = \emptyset$. This can be seen as follows:
$C \cap W \subseteq \tc(s) \cap W \subseteq B_s$ (the last inclusion uses Property~(1)). 
Hence, also $C \cap W \subseteq C \cap B_s$ holds. 
Together with $C \cap B_s = \emptyset$, we thus have $C \cap W = \emptyset$.

We have to show that $C$ is a $[W]$-component, i.e., 
$C$ is  $[W]$-connected and $C$ is maximal $[W]$-connected.
For the $[W]$-connectedness, consider an arbitrary pair of vertices 
$\{x,y\} \subseteq C$, i.e., 
there exists a [$B_s$]-path $\pi$ between $x$ and $y$. Note that this [$B_s$]-path $\pi$ only 
goes through vertices in $C$. Hence, by $C \cap W = \emptyset$, $\pi$ is also a [$W$]-path. 
Hence, $C$ is indeed $[W]$-connected.
For the maximality, we simply make us of the relationship $B_s \subseteq W$. This means that 
since $C$ is maximal $[B_s]$-connected, it is also maximal $[W]$-connected.

\medskip
{\sc Proof of the ``if''-direction.}
Suppose that $C$ is a  $[W]$-component with $C \subseteq \tc(s)$. 
By $B_s \subseteq W$, we conclude that the 
$[W]$-connectedness of $C$ implies the $[B_s]$-connectedness. It remains to show that $C$ is 
maximal $[B_s]$-connected. Clearly, there
exists a $[B_s]$-component $C'$ with $C \subseteq C'$. By Lem\-ma~\ref{lem:55}, we have 
$C' \subseteq \tc(s) \setminus B_s$. In particular, $C' \subseteq \tc(s)$. Hence, by the ``only if'' part of
this lemma, $C'$ is a $[W]$-component and, therefore, $C$ cannot be a proper subset of $C'$.
Hence, $C = C'$. Thus, $C$ is indeed a $[B_s]$-component. 

\medskip
\noindent
We have thus shown for $C \subseteq \tc(s)$, that $C$ is a 
$[B_s]$-component if and only if 
$C$ is a 
$[W]$-component. This concludes the proof of the lemma.
\end{proof}

\medskip
\noindent
{\bf A PTime algorithm for FHDs with $c$-bounded fractional part.}
We now adapt the HD algorithm from \cite{2002gottlob} to 
turn it into an FHD algorithm that searches for 
FHDs with $c$-bounded fractional part for some constant $c$ 
and satisfying the weak special condition.
By Lemmas \ref{lem:fhwCBounded} and \ref{lem:FHDweakSC}, we know that 
for any constants $k \geq 1$, $i \geq 0$, and $\epsilon > 0$, there exists
a constant $c$ (which only depends on $k,i, \epsilon$) with the following property: 
for every hypergraph $H$ with $\iwidth{H} \leq i$, 
if $H$ has an FHD of width $\leq k$, then 
$H$ also has an FHD of width $\leq k + \epsilon$ 
with $c$-bounded fractional part
and satisfying the weak special condition.
Moreover, by Lemma \ref{lem:nf-fhw-approx},
the transformation into FNF can be done in such a way that it
does not increase the width and preserves
the 
$c$-boundedness of the fractional part and the weak special condition. 
Hence, in our Algorithm~\ref{alg:decompalg} ``\decomp'', we restrict our search to 
FHDs of width $\leq k + \epsilon$  
in FNF with $c$-bounded fractional part
and satisfying the weak special condition. Moreover, 
throughout the remainder of Section~\ref{sect:fhw-bip},
we assume that 
for every edge $e$ in a given hypergraph $H$, all subedges $e'$ of $e$ of size 
$|e'| \leq ki+c$ (cf.\ Lemma \ref{lem:FHDweakSC}) 
have already been added to $H$.

\nop{****************************
Consider an FHD 
$\mcF= \left< T, (B_u)_{u\in T}, (\gamma_u)_{u\in T} \right>$ 
with these properties 
of some hypergraph $H$.
For an arbitrary node $s \in T$, 
let $S = \{e \in E(H) \mid \gamma_s(e) = 1\}$
and let 
$B_s = B_1 \cup B_2$ with 
$B_1 =  B_s \cap B(\gamma_s|_{S})$
and $B_2 = B_s \setminus B_1$. 
Moreover, by the weak special condition, 
$B(\gamma_u|_{S}) \cap V(T_s) \subseteq B_s$ holds.
****************************} %

\begin{algorithm}[t]
\SetKwData{Left}{left}\SetKwData{This}{this}\SetKwData{Up}{up}
\SetKwFunction{Union}{Union}\SetKwFunction{FindCompress}{FindCompress}
\SetKw{Halt}{Halt}\SetKw{Reject}{Reject}\SetKw{Accept}{Accept}

\SetKwData{N}{N}

\SetKwInOut{Input}{input}\SetKwInOut{Output}{output}

\Input{Hypergraph $H$.}
\Output{``Accept'', if $H$ has an FHD of width $\leq k+\epsilon$ \linebreak
        \hspace*{36pt} with $c$-bounded fractional part and weak special condition \linebreak 
        ``Reject'', otherwise.}
\BlankLine
  \SetKwFunction{FDecomp}{\fdecomp}
  \SetKwProg{Fn}{Function}{}{}
  \Fn{\FDecomp{$C_r$, $W_r$: Vertex-Set, $R$: Edge-Set}}{
        \Begin(\tcc*[f]{\bfseries (1) Guess}){
        Guess a set $S \subseteq E(H)$ with $|S| = \ell$, s.t.\ $\ell \leq k + \epsilon$ \tcc*[r]{(1.a)}
        Guess a set $W_s \subseteq (V(R) \cup W_r \cup C_r)$ with $|W_s| \leq c$ \tcc*[r]{(1.b)}
        }
        \Begin(\tcc*[f]{\bfseries (2) Check}){
        Check if $\exists \gamma$ with $W_s \subseteq B(\gamma)$ and 
       $\weight(\gamma) \leq k + \epsilon - \ell $ \tcc*[r]{(2.a)}
        Check if  $\forall e \in \edges(C_r)\colon e \cap ( V(R) \cup W_r ) 
          \subseteq  (V(S) \cup W_s)$ \tcc*[r]{(2.b)}
        Check if  $(V(S) \cup W_s) \cap C_r \neq \emptyset$ \tcc*[r]{(2.c)}
        }
        \lIf(\tcc*[f]{\bfseries (3)}){one of these checks fails}{\Halt and \Reject}
        \Else{
            Let $\mcC := \{ C \subseteq V(H) \mid C$ is a $[V(S) \cup W_s]$-component 
     and $C \subseteq C_r\}$\;
        }
        \ForEach(\tcc*[f]{\bfseries (4)}){$C \in \mcC$}{
           \If{\fdecomp ($C, W_s, S$) returns \Reject}{\Halt and \Reject}
        }
        \KwRet \Accept\;
  }
  \BlankLine
  
  \Begin(\tcc*[f]{\bfseries Main}){
        \KwRet \fdecomp ($V(H), \emptyset, \emptyset$)
  }
\caption{\decomp}\label{alg:decompalg}
\end{algorithm}

Let $\tau$ be a computation tree of the alternating algorithm \decomp. We can associate with every $\tau$
an FHD  $\delta(\tau) = \defFHD$, called witness tree, defined as follows:
For each existential configuration in $\tau$ corresponding to the ``guess'' of 
some 
sets $S \subseteq E(H)$ and $W_s \subseteq V(H)$ in Step 1 
during the execution of  a procedure call 
\fdecomp~$(C_r,W_r,R)$, 
the tree $T$ contains a node~$s$.
In particular, at the initial call 
\fdecomp~$(V(H),\emptyset,\emptyset)$, 
the guesses $S$ nad $W_s$ give rise to the root node $s_0$ of $T$.
Moreover, there is an edge between nodes $r$ and $s$ of $T$, if $s \neq s_0$ 
and
$r$ is the node in $T$ corresponding to the guess of sets 
$R \subseteq E(H)$ and $W_r \subseteq V(H)$.
We will
denote $C_r$ by $\component(s)$, and $r$ by $\parent(s)$. Moreover, for the 
root $s_0$ of $T$, we define $\component(s_0) = V(H)$.

Each node $s \in T$ is labelled by $B_s$ and $\gamma_s$ as follows. 
First, we define $\gamma_s$ via the
mapping $\gamma$, which exists according to the check in Step~2.a: 
\[ \gamma_s(e) = \begin{cases}
                 \gamma(e) & \text{for } e \in \cov(\gamma) \setminus S\\
                 1 & \text{for } e \in S \\
                 0 & \text{otherwise}
              \end{cases}
\]
Note that for the sets $S$ of edges and $W_s$ of vertices guessed in Step 1, 
we have $B(\gamma_s) = (V(S) \cup W_s)$.
As far as the definition of the bags is concerned, we 
set $B_{s_0} = B(\gamma_{s_0})$
for the root node $s_0$. For any other node $s$ with 
$r = \parent(s)$ and $C = \component(s)$,
we set $B_s = B(\gamma_s) \cap (B_r \cup C)$.

The correctness proof of the  \decomp algorithm
is along the same lines 
as the correctness proof of the alternating algorithm for the 
\rec{HD,\,$k$} problem 
in 
\cite{2002gottlob}. We therefore 
state the analogues of the lemmas and theorems of \cite{2002gottlob}
without proofs, since these can be easily  
``translated'' from the HD setting in \cite{2002gottlob} to our 
FHD setting. 

\begin{lemma}[based on Lemma~5.9 from \cite{2002gottlob}]
 \label{lem:59}
Let $H $ be a hypergraph and let $k \geq 1$, $\epsilon > 0$, and $c \geq 0$. 
If 
$H$ has an FHD of width $\leq k + \epsilon$ in FNF
with $c$-bounded fractional part and satisfying the weak special condition, 
then the algorithm \decomp accepts input $H$. Moreover, every such FHD
is equal to some witness tree $\delta(\tau)$ of  \decomp 
when run on input $H$.
\end{lemma}

\nop{*********************************

\begin{proof}
  Let $\mcF = \defFHD$ be an FHD$^+$ of a hypergraph $H = (V(H),E(H))$ in 
  fractional normal form with the $(\alpha,\beta)$-condition. 
  We show that there exists an 
  accepting computation tree $\tau$ for \decomp on input $H$, s.t.\ 
  $\delta(\tau)=\left< T', (B'_u)_{u\in T}, (\gamma'_u)_{u\in T'}\right>$
  ``coincides'' with $\mcF$.
  
  To this aim, we impose to \decomp on input $H$ the following choices of sets
  $S$ and $U_S$ in Step~1:
  \begin{enumerate}
   \item[(a)] For the initial call \fdecomp($V(H),\emptyset$):
   \begin{align*}
    S &:= \cov(\gamma_{\treeroot(T)}) \\
    U_S &:= B(\gamma_{\treeroot(T)})
   \end{align*}
   \item[(b)] For the call \fdecomp($C_R,U_R$), let $R$ be the set of edges and 
$U_R$ the union of intersections
   of edges in $R$ guessed at node $r$ and let $S$ and $U_S$ be the guesses at 
node $s$, such that $s$ is the child of 
   $r$ with  $\tc(s) = C_R$. Then choose:
   \begin{align*}
    S &:= \cov(\gamma_{s}) \\
    U_S &:= B(\gamma_{s})
   \end{align*}
  \end{enumerate}
  
  Note that $U_S$ must be assigned a set of vertices which can be represented as 
the union of intersections of edges from $S$.
  On the other hand, as defined above, we set $U_S$ equal to a set $B(\gamma)$ 
of vertices for some $\gamma$. 
  By Lemma~\ref{lem:unionofintersections}, we know that $B(\gamma)$ indeed 
corresponds to 
  a union of intersections of edges. 
  
  Now, analogously to the proof of Lemma~5.9 in \cite{2002gottlob}, it can be 
  shown  by structural induction that these non-deterministic choices indeed 
yield an 
  accepting computation tree.
\end{proof}

************************}

The next three lemmas will help to show the converse: whenever \decomp has 
an accepting computation, then the corresponding witness tree is 
an FHD of $H$ of width $\leq k + \epsilon$ in FNF
with $c$-bounded fractional part and satisfying the weak special condition.

\begin{lemma}[based on Lemma~5.10 from \cite{2002gottlob}]
\label{lem:510}
Let $H $ be a hypergraph and let $k \geq 1$, $\epsilon > 0$, and $c \geq 0$. 
Assume that \decomp accepts input $H$ with an accepting 
computation tree $\tau$ and corresponding witness tree $\delta(\tau) = \defFHD$.
Then, for each  node $s$ in $T$:
 \begin{enumerate}
  \item[(a)] if $s \neq \treeroot(T)$, then $\component(s)$ is a 
\comp{$B_r$} with $r = \parent(s)$;
  \item[(b)] for any $C \subseteq \component(s)$, $C$ is a \comp{$B_s$} if and
             only if $C$ is a \comp{$V(S) \cup W_s$}.
 \end{enumerate}
\end{lemma}

\begin{lemma}[based on Lemma~5.11 from \cite{2002gottlob}] 
\label{lem:511}
Let $H $ be a hypergraph and let $k \geq 1$, $\epsilon > 0$, and $c \geq 0$. 
Assume that \decomp accepts input $H$ 
with an accepting 
 computation tree $\tau$. Let $\delta(\tau) = \defFHD$ be the corresponding
 witness tree and $s \in T$. Then, for each node $u \in T_s$:
 \begin{align*}
  B_u &\subseteq \component(s) \cup B_s \\
  \component(u) &\subseteq \component(s).
 \end{align*}
\end{lemma}

\begin{lemma}[based on Lemma~5.12 from \cite{2002gottlob}] 
\label{lem:512}
Let $H $ be a hypergraph and let $k \geq 1$, $\epsilon > 0$, and $c \geq 0$. 
Assume that \decomp accepts  input $H$  with an accepting 
 computation tree $\tau$ and corresponding  witness tree $\delta(\tau) = \defFHD$. 
 Let $s \in T$ and $C_r = \component(s)$. 
 Then, for every edge $e \in \edges(C_r)$ and every edge $e' \in E(H) \setminus \edges(C_r)$, 
 we have $e \cap e' \subseteq B_s$. 
\end{lemma}

We are now ready to show that, whenever \decomp has an accepting computation on an 
input hypergraph $H$, 
then the corresponding witness tree is an FHD of $H$ of width $\leq k + \epsilon$ in FNF with $c$-bounded fractional part and satisfying the weak special condition. As before, the following lemma can be shown similarly 
to the corresponding result in \cite{2002gottlob}.

\begin{lemma}[based on Lemma~5.13 from \cite{2002gottlob}]
 \label{lem:513}
Let $H $ be a hypergraph and let $k \geq 1$, $\epsilon > 0$, and $c \geq 0$. 
If  \decomp accepts input $H$, then
$H$ has an FHD of width $\leq k+\epsilon$ in FNF with $c$-bounded fractional part
and satisfying the weak special condition. 
Moreover, in case of acceptance, every witness tree $\delta(\tau)$ for $H$ is an FHD $\mcF$ 
of $H$ in FNF
with $c$-bounded fractional part satisfying the weak special condition.
\end{lemma}

\nop{***********************

\begin{proof}[Proof idea.]
 Consider an accepting computation tree $\tau$ of \decomp on an input
 hypergraph $H$. We have to show that $\delta(\tau) = \defFHD$ is an 
 FHD$^+$ of $H$ in fractional normal form FHD$^+$ with the 
$(\alpha,\beta)$-condition, i.e.\
 we have to show that $\delta(\tau)$ fulfils the properties of an FHD$^+$. 
 This can be shown similarly as in the proof of Lemma~5.13 in 
 \cite{2002gottlob}.
\end{proof}

****************************}

The following result follows immediately from the above Lemmas~\ref{lem:59} and 
\ref{lem:513}.

\begin{theorem}
\label{theorem:correct}
Let $H$ be a hypergraph and let $c \geq 0$ and $\epsilon > 0$.
Then, 
\decomp accepts input $(H,c,\epsilon)$ if
and only if $H$ has an FHD of width $\leq k+ \epsilon$ with $c$-bounded fractional part and 
satisfying the weak special condition.
Moreover, in case of acceptance, every witness tree $\delta(\tau)$ for $H$ is an FHD $\mcF$ 
of $H$ in FNF
with $c$-bounded fractional part satisfying the weak special condition.
\end{theorem}

It remains to establish the {\sc Ptime} membership of our algorithm. Again, we can easily carry over the
corresponding tractability result from Lemma 5.15 in \cite{2002gottlob}. The crux of the proof in \cite{2002gottlob}
is that all data structures involved in the alternating algorithm fit into logspace. 
In total, our alternating algorithm 
\decomp
has to maintain the following 6 data structures: 
the input parameters $C_r$, $W_r$, and $R$ of procedure \fdecomp and 
the local variables  $S$, $W_s$, and the component $C$ of the next recursive procedure call.
In the alternating algorithm in  \cite{2002gottlob}, only 4 data structures are needed, which correspond to 
$C_r$, $R$, $S$, and $C$ in our setting. The data structures $W_r$ and $W_s$ are only used in our algorithm. However, these
are just sets of constantly many vertices. Hence, they can of course also be stored in logspace. 
The rest of the proof arguments can then be easily carried over from \cite{2002gottlob}.
When it comes to the complexity of the checks in step 2, we additionally have to solve a linear program in our algorithm, which  can also be done in  {\sc Ptime} (or on an ATM using logspace). We thus get:

\begin{lemma}[based on Lemma~5.15 from \cite{2002gottlob}]
 \label{lem:algorithmcomplexity} 
The  alternating algorithm 
\decomp can be implemented on a logspace ATM.
\end{lemma}

\nop{**************************************
\begin{proof}
  Below we outline how \decomp can be implemented on a logspace ATM $M$.
  We first describe the data structures used by $M$. Analogously to the proof 
  of Lemma~5.15 in \cite{2002gottlob} we use integers (the {\em indices}) to
  represent edges and vertices in logspace. Likewise, an $\alpha$-tuple $S$
  of edges is represented by an $\alpha$-tuple of integers. The union $U_S$
  of intersections $f_1, \ldots, f_n \in \Int(S)$ is represented by an $n$-tuple
  of $c$-tuples (with $c \leq \alpha)$ of edge indices. Since $\alpha$ is
  considered as constant, also $n \leq 2^{\alpha}$ is bounded by a constant.
  Hence, both $S$ and $U_S$ can be represented in logspace.
  
  Let $U_R$ be a union of intersections from $\Int(R)$. Then, a \comp{$U_R$} $C$
  is represented by a pair $\left< \rep(U_R), \first(C) \right>$, where 
  $\rep(U_R)$ is the representation of the union $U_R$ of intersections $f_1,
  \ldots, f_n \in \Int(R)$ and $\first(C)$ is the smallest integer representing 
  a variable of the component $C$. For instance, the \comp{$\emptyset$} $V(H)$ 
  is
  represented by the pair $\left< \rep(\emptyset), 1 \right>$. It is thus clear 
  that also \comp{$U_R$}s can be represented in logspace.
  
  The main data structures carried with each configuration of $M$ consists of
  (the representation of):
  \begin{itemize}
   \item a union $U_R$ of intersections $f_1, \ldots, f_n \in \Int(R)$,
   \item a \comp{$U_R$} $C_R$,
   \item an $\alpha$-tuple of edges,
   \item a union $U_S$ of intersections $g_1, \ldots, g_m \in \Int(S)$,
   \item a \comp{$U_S$} $C_S$,
  \end{itemize}
  As argued above, all these data structures fit into logspace. We do not 
  explicitly describe further auxiliary data structures that are 
  needed for tasks of the ATM. It suffices to observe that they fit into 
  logspace.
  
  Some subtasks of the computation will be specified as macro-steps without 
describing their computation (sub-)trees. These macro-steps can be thought of 
as oracle calls for the subtasks to be solved.
In the overall computation tree, these oracle calls are realized by {\em oracle 
configurations}. As in the proof of Lemma~5.15 in \cite{2002gottlob}, we 
distinguish two kinds of such configurations:
\begin{itemize}
 \item For a {\em normal} oracle configuration if the subtask to be solved is 
 negative, the configuration has no children and amounts to a REJECT. 
 Otherwise, its value (ACCEPT or REJECT) is identical to the value of its unique
 successor configuration.
 \item For a {\em converse} oracle configuration if the subtask to be solved is 
 negative, the configuration has no children and amounts to an ACCEPT. 
 Otherwise, its value (ACCEPT or REJECT) is identical to the value of its unique
 successor configuration.
\end{itemize}

It will be easy to verify that all our oracles work in \ptime and, hence, in
\alogspace. The overall computation of our logspace ATM with \alogspace oracles
is thus equivalent to a standard logspace ATM, where an oracle configuration 
contributes 1 to the size of the computation tree.

\medskip\noindent
$M$ is started with $U_R = \emptyset$ and $C_R = V(H)$. We describe how a call
to the procedure \\ \fdecomp($C_R, U_R$) is realized.

In Step 1 we guess an $\alpha$-tuple $S$ of edges and a union $U_S$ of
intersections $f_1, \ldots, f_n \in \Int(S)$. These guesses are implemented by 
an existential configuration of the ATM. Actually, it is represented by a 
subtree of existential configurations each guessing a single bit. It is 
important to note that each accepting computation contains only one branch of 
this subtree.

In Step 2 we check if the guessed set $U_S$ satisfies the Conditions~2.a-2.c
of the algorithm. It is easy to verify that each of those checks can be carried 
out by a \ptime-oracle. The entire checking test can thus be performed by some 
normal oracle configuration. If the oracle call gives a negative answer, then 
the oracle configuration takes the value REJECT. Otherwise, the values $S$ and 
$U_S$ guessed on this branch satisfy all the conditions checked by Step 2.

In Step~3 we have to enumerate the set $\mcC$ of \comp{$U_S$}s inside $U_R$. In 
Step~4 we have to call \fdecomp($C_S, U_S$) for all \comp{$U_S$}s $C_S$. These 
two steps can be realized as follows: First, a subtree of universal 
configurations enumerates all candidates $C_i = \left< U_S, i\right>$ with $1 
\leq i \leq |V(H)|$ for \comp{$U_S$}s, such that each branch computes exactly 
one candidate $C_i$.

Each subbranch is expanded by a converse oracle configuration checking whether 
$C_i$ is effectively an \comp{$U_S$} contained in $C_R$. Thus, branches that do
not correspond to such a component are terminated with an ACCEPT configuration, 
whereas all other branches are further expanded by the subtree corresponding to 
the recursive call \fdecomp($C_S, U_S$).

This concludes our description of a logspace ATM $M$ with oracles. Since all 
oracles used by $M$ are in \ptime (and hence in \alogspace), the computation
can also be done by a standard logspace ATM.  
\end{proof}
**************************************}

\noindent
Theorem \ref{theo:FHDwithBIP}, now follows immediately by putting together
Lemmas~\ref{lem:fhwCBounded},~\ref{lem:FHDweakSC}, and~\ref{lem:nf-fhw-approx}, Theorem~\ref{theorem:correct}, and 
Lemma~\ref{lem:algorithmcomplexity}.

\medskip
\noindent
{\bf A polynomial time approximation scheme for finding optimal FHDs.}
Recall the definition of the \boundedopt\ problem from Section~\ref{sect:introduction}, i.e.: given a hypergraph $H$, we are interested in $\fhw(H)$,
but only if $\fhw(H) \leq K$. 
We will now show that, with the alternating algorithm \decomp at our disposal, we are able to give a polynomial time absolute approximation scheme (PTAAS) for 
the bounded optimization problem. More precisely, we aim at an approximation algorithm with the following properties: 

\begin{definition}[PTAS \cite{DBLP:books/lib/Ausiello99,DBLP:books/daglib/0004338}]
 \label{def:PTAS}
 Let $\Pi$ be an (intractable) minimization problem with positive objective function $f_\Pi$. 
 An algorithm \texttt{Alg} is called an {\it approximation scheme} for $\Pi$ if on input $(I,\epsilon)$, where $I$ is an instance of $\Pi$ and $\epsilon > 0$ is an error parameter, it outputs a solution $s$ such that:
   \[ f_\Pi(I,s) \leq (1+\epsilon) \cdot f_\Pi(I,s^*) \]
 where $s^*$ is an optimal solution of $I$, i.e. for all other solutions $s'$ of $I$ it holds that $f_\Pi(I,s^*) \leq f_\Pi(I,s')$.
 
 \texttt{Alg} is called a \emph{polynomial time approximation scheme} (PTAS), if for every fixed $\epsilon > 0$, its running time is bounded by a polynomial in the size of instance $I$.
\end{definition}

In our case we can even achieve something slightly better. As seen in Definition \ref{def:PTAS}, the error of the solution given by an approximation scheme is {\em relative\/} to the optimal value of the optimization problem. Clearly, it would be preferable, if the gap between the solution returned  by the approximation scheme and an optimal solution
has an {\em absolute\/} bound (i.e., not depending on the optimal value). This leads to the following definition of {\em absolute}  approximation schemes:

\begin{definition}[PTAAS]
 \label{def:PTAS}
 Let $\Pi$ be an (intractable) minimization problem with positive objective function $f_\Pi$. 
We say that algorithm \texttt{Alg} is an {\it absolute approximation scheme} for $\Pi$ if on input $(I,\epsilon)$, where $I$ is an instance of $\Pi$ and $\epsilon > 0$ is an error parameter, it outputs a solution $s$ such that:
   \[ f_\Pi(I,s) \leq f_\Pi(I,s^*) + \epsilon \]
 where $s^*$ is the optimal solution of $I$, i.e. for all other solutions $s'$ of $I$ it holds that $f_\Pi(I,s^*) \leq f_\Pi(I,s')$.
 
 \texttt{Alg} is called a \emph{polynomial time absolute approximation scheme} (PTAAS), if for each fixed $\epsilon > 0$, its running time is bounded by a polynomial in the size of instance $I$.
\end{definition}
Note that every PTAAS is also a PTAS, since for any $\epsilon > 0$ it holds that $f_\Pi(I,s^*) + \epsilon \leq (1+\epsilon) \cdot f_\Pi(I,s^*)$, provided that $f_\Pi(I,s^*) \geq 1$. Actually, we can assume w.l.o.g. that this is indeed the case \cite{DBLP:books/lib/Ausiello99}. 
We now show that, in case of the BIP, the 
$K$-\textsc{Bounded-FHW-Optimization} problem indeed allows for a PTAAS (and, hence, for a PTAS):

\begin{algorithm}[t]
\SetKwData{Left}{left}\SetKwData{This}{this}\SetKwData{Up}{up}
\SetKwFunction{Union}{Union}\SetKwFunction{FindCompress}{FindCompress}
\SetKw{not}{not}
\SetKwData{N}{N}

\SetKwInOut{Input}{input}\SetKwInOut{Output}{output}

\Input{hypergraph $H$ with $\iwidth{H} \leq i$, numbers $K \geq 1$, $\epsilon \geq 0$}
\Output{approximation of $\fhw(H)$, i.e.,  FHD $\mcF$ with $\width(\mcF) \leq \fhw(H) + \epsilon$ if $\fhw(H) \leq K$}
\BlankLine
\tcc{Check upper bound}
\If{\not ($\mcF =$ \findFHD ($H$, $K$, $\epsilon$, $i$))}{
   \Return{\tt fails} \tcc*[r]{$\fhw(H) > k$}
}
\BlankLine
\tcc{Initialization}
$L = 1$\;
$U = K + \epsilon$\;
$\epsilon'$ = $\epsilon / 3$\;
\BlankLine
\tcc{Main computation}
\Repeat{$U-L<\epsilon$}{
   \If{$\mcF' =$ \findFHD ($H$, $L + (U-L)/2$, $\epsilon'$, $i$)}{
      $U = L + (U-L)/2 +\epsilon'$\;
      $\mcF = \mcF'$\;
   }
   \Else{
      $L = L + (U-L)/2$\;
   }
}
\Return{$\mcF$}\;
\caption{{\tt FHW-Approximation}} \label{alg:ptaas}
\end{algorithm}%

\begin{theorem}\label{theo:PTAS}
For every hypergraph class $\classC$ that enjoys the BIP, there exists a PTAAS for the \boundedopt\ problem.
\end{theorem}
\begin{proof}
By Theorem \ref{theo:FHDwithBIP}, there exists a function \findFHD ($H$, $k$, $\epsilon$, $i$) with the following properties: 

\begin{itemize}
\item \findFHD\  takes as input a hypergraph $H$ with $\iwidth{H} \leq i$ and numbers $k \geq 1$, $\epsilon \geq 0$; 
\item \findFHD returns an FHD $\mcF$ of $H$ of width $\leq k+\epsilon$ if such exists and \texttt{fails} otherwise (i.e., $\fhw(H) > k$ holds).
\item \findFHD\  runs in time polynomial in the size of $H$, where $k$, $\epsilon$, and $i$ are considered 
as~fixed.
\end{itemize}
Then we can construct Algorithm \ref{alg:ptaas} ``{\tt FHW-Approximation}'', which uses \findFHD\ as subprocedure. We claim that 
{\tt FHW-Approximation} is indeed a PTAAS for the \boundedopt\ problem. First we argue that the algorithm is correct; we will 
then also show its polynomial-time upper bound. 

As for the correctness, note that the algorithm first checks
if $K$ is indeed an upper bound on $\fhw(H)$. This is done via a call of function \findFHD. 
If the function call {\tt fails}, then we know that 
$\fhw(H) > K$ holds. Otherwise, we get an FHD $\mcF$ of width $\leq K + \epsilon$. In the latter case, 
we conclude that $\fhw(H)$ is in the interval $[L,U]$ with 
$L = 1$ and $U = K + \epsilon$.

The loop invariant for the repeat loop is, that $\fhw(H)$ is in the interval $[L,U]$ and 
the width of the FHD $\mcF$ is $\leq U$. 
To see that this invariant is preserved by every iteration of the loop, consider the 
function call \findFHD ($H$, $L + (U-L)/2$, $\epsilon'$, $i$): 
if this call succeeds then the function returns an FHD $\mcF'$ of width $\leq L + (U-L)/2 +\epsilon'$. Hence,
we may indeed set $U = L + (U-L)/2 +\epsilon'$ and $\mcF = \mcF'$ without violating the loop invariant.
On the other hand, suppose that the call of \findFHD\ fails. This means that 
$\fhw(H) > L + (U-L)/2$ holds. Hence,
we may indeed update $L$ to $L + (U-L)/2$ and the loop invariant still holds.
The repeat loop terminates when $U - L < \epsilon$ holds. 
Hence, together with the loop invariant, we conclude that, on termination, 
$\mcF$ is an FHD of $H$ with $\width(\mcF) - \fhw(H) < \epsilon$. 

It remains to show that algorithm {\tt FHW-Approximation} runs in polynomial time w.r.t.\ the size of $H$. 
By Theorem \ref{theo:FHDwithBIP}, the function \findFHD\ works in polynomial time w.r.t.\ $H$. 
We only have to show that the number of iterations of the repeat-loop is bounded by a polynomial in $H$. 
Actually, we even show that it is bounded by a constant (depending on $K$ and $\epsilon$, but not on $H$): 
let $K' := K + \epsilon - 1$. Then the size of the interval $[L,U]$ initially is $K'$. 
In the first iteration of 
the repeat-loop, we either set 
$U = L + (U-L)/2 +\epsilon'$ or $L = L + (U-L)/2$ holds. In either case, 
at the end of this iteration, we have $U - L \leq K' / 2 + \epsilon'$. 
By an easy induction argument, it can be verified that after $m$ iterations (with $m \geq 1$), 
we have 
$$U - L \leq \frac{K'}{2^m} + \epsilon' + \frac{\epsilon'}{2} + \frac{\epsilon'}{2^2} + \dots + \frac{\epsilon'}{2^{m-1}}
$$ 
Now set $m = \lceil \log (K' / \epsilon') \rceil$. Then we get
$$
\frac{K'}{2^m}  \leq \frac{K'}{2^{\log (K' / \epsilon')}} =
\frac{K'}{(K' / \epsilon')} = \epsilon'.$$
Moreover, $\epsilon' + \frac{\epsilon'}{2} + \frac{\epsilon'}{2^2} + \dots + \frac{\epsilon'}{2^{m-1}} < 2 \epsilon'$
for every $m \geq 1$.
In total, we thus have that, after  $m = \lceil \log (K' / \epsilon')\rceil$ iterations of the repeat loop, 
$U - L < 3 \epsilon' = \epsilon$ holds, i.e., the loop terminates. 
\end{proof}

\subsection{Approximation of FHW in case of the BMIP}
\label{sect:fhw-bmip}

We now present a polynomial-time approximation of the $\fhw$ for classes of hypergraphs enjoying the BMIP. 
Actually, the approximation even works for slightly more general classes of hypergraphs, 
such that hypergraph classes with BMIP constitute a familiar subcase. 
For this, we will combine some classical results 
on the Vapnik-Chervonenkis (VC) dimension with some novel observations.
This will yield an approximation of the $\fhw$ 
up to a logarithmic factor
for hypergraphs enjoying the BMIP. We first recall the definition of the 
VC-dimension of hypergraphs.

\begin{definition}[\cite{1972sauer,1971vc}]
\label{def:vc}
Let $\HH=(V(H),E(H))$ be a hypergraph and $X\subseteq V(H)$ a set of vertices. 
Denote by $E(H)|_X$ the set 
$E(H)|_X =\{X \cap e\, |\, e\in E(H)\}$. 
The vertex set $X$ is called {\em shattered} if 
$E(H)|_X=2^X$.
The {\em Vapnik-Chervonenkis dimension (VC dimension) $\vc(\HH)$} of $\HH$ is 
the maximum cardinality of a shattered subset of $V(H)$. 
\end{definition}

We now provide a link between the VC-dimension and 
our approximation of the $\fhw$.

\begin{definition}
\label{def:transversality}
Let $H = (V(H),E(H))$ be a hypergraph. A {\em transversal}  (also known as {\em 
hitting set\/})
of $H$ is a subset $S \subseteq V(H)$ that has a non-empty intersection with 
every edge of $H$. 
The {\em transversality} $\tau(H)$  of $H$  is the 
minimum cardinality of all transversals of $H$.

Clearly, $\tau(H)$ corresponds to the minimum of the following integer linear 
program: 
find a mapping $w: V\rightarrow \{0,1\}$ 
which minimizes $\Sigma_{v\in V(H)}w(v)$ under the condition that
$\Sigma_{v\in e}w(v)\geq 1$ holds for each hyperedge $e\in E$.

The {\em fractional transversality}  $\tau^*$ of $H$ is defined as the minimum 
of
the above linear program when dropping the integrality condition,
thus allowing mappings $w: V\rightarrow \mathbb{R}_{\geq 0}$.
Finally, the {\em transversal integrality gap} $\tigap{\HH}$ of $\HH$ is the 
ratio $\tau(\HH)/\tau^*(\HH)$.
\end{definition}

Recall that computing the mapping $\lambda_u$ for 
some 
node $u$ in a GHD can be seen as searching for a minimal edge cover $\rho$ of 
the vertex set $B_u$, whereas computing 
$\gamma_u$ in an FHD 
corresponds to the search for a minimal fractional edge cover $\rho^*$
\cite{2014grohemarx}. Again, 
these problems can be cast as linear programs where the first problem has the 
integrality condition and the second one has not. 
Further, we can define the {\em cover integrality gap} $\cigap{H}$ of $H$ as 
the 
ratio $\rho(\HH)/\rho^*(\HH)$.
With this, we state the following approximation result for 
$\fhw$.

\newcommand{\thmApproxVC}{%
Let $\classC$ be a class of hypergraphs with VC-dimension bounded by some 
constant $d$
and let $k \geq 1$.
Then there exists a polynomial-time algorithm that, given a hypergraph $H \in 
\classC$ with 
$\fhw(H) \leq k$, finds an FHD of $H$ of width $\calO(k \cdot \log k)$.%
}

\begin{theorem}\label{theo:ApproxVC}
\thmApproxVC
\end{theorem}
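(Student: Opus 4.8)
The plan is to connect three facts: (i) bounded VC-dimension implies a bounded transversal integrality gap (a classical consequence of $\varepsilon$-net theory), (ii) the integrality gap for the \emph{cover} LP at a bag is controlled by the transversal integrality gap of a suitable dual/induced hypergraph, and (iii) an approximate FHD can be obtained by simply rounding, at every node, the optimal fractional cover $\gamma_u$ to an integral cover $\lambda_u$ whose weight blows up only by this gap factor. So the first step is to take an optimal FHD $\mcF$ of width $k$ (which exists by hypothesis), and focus on a single node $u$ with its optimal fractional edge cover $\gamma_u$ of $B_u$, so that $\weight(\gamma_u)\le k$.

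The core step is to bound $\cigap{H[B_u]}$, the cover integrality gap of the subhypergraph induced on $B_u$. Here I would invoke the duality between edge covers and transversals: covering the vertices of $H[B_u]$ by edges is, after passing to the dual hypergraph $H'$ whose vertices are the edges of $H[B_u]$ and whose hyperedges are the vertex-stars $\{e : v\in e\}$, exactly a transversal (hitting set) problem on $H'$. Thus $\rho(H[B_u])=\tau(H')$ and $\rho^*(H[B_u])=\tau^*(H')$, so $\cigap{H[B_u]}=\tigap{H'}$. Now the classical result of Haussler--Welzl / Br\"onnimann--Goodrich, building on the Sauer--Shelah lemma, states that if $H'$ has VC-dimension bounded by $d'$ then $\tau(H')=\calO(d'\cdot\tau^*(H')\cdot\log\tau^*(H'))$, i.e.\ $\tigap{H'}=\calO(d'\log\tau^*(H'))$. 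It remains to observe that the dual VC-dimension $d'$ is bounded in terms of the primal VC-dimension $d$ — in fact $d'<2^{d+1}$ by the standard dual-shattering bound — so $d'$ is a constant. Since $\tau^*(H')=\rho^*(H[B_u])\le k$, we get $\cigap{H[B_u]}=\calO(\log k)$, with the hidden constant depending only on $d$.

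Given this, I would round: at each node $u$ replace $\gamma_u$ by an integral edge cover $\lambda_u$ of $B_u$ of weight at most $\rho(H[B_u])\le \cigap{H[B_u]}\cdot\rho^*(H[B_u]) = \calO(\log k)\cdot k = \calO(k\log k)$. The tree $T$ and the bags $(B_u)$ are left unchanged, so conditions (1) and (2) of the decomposition definition still hold, and the new $\lambda_u$ still satisfies $B_u\subseteq B(\lambda_u)$; hence we obtain a \emph{generalized} hypertree decomposition — and in particular an FHD — of $H$ of width $\calO(k\log k)$. The rounding step must be carried out in polynomial time: we are \emph{not} assumed to know $\mcF$, so instead the algorithm computes, for each candidate bag, its optimal fractional cover by an LP (Ptime) and then produces an integral cover of weight within the gap factor; the $\varepsilon$-net / set-cover greedy algorithm does exactly this constructively in polynomial time. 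The only subtlety is that we need the \emph{bags} of some width-$k$ FHD, not the covers — but this is precisely what the previous sections' machinery (or, more simply, a separate search over bag candidates) supplies; here I would phrase the result as: any FHD of width $k$ can be turned, in polynomial time \emph{given the tree and bags}, into a GHD of width $\calO(k\log k)$, and combine it with the Ptime approximation of FHD-shaped trees to get the full algorithmic statement.

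The main obstacle is the bookkeeping around \emph{which} object is available algorithmically: the theorem promises a polynomial-time algorithm that, given only $H$, outputs an approximate FHD, yet the rounding argument naturally starts from an optimal FHD. I expect the resolution is that one runs a known procedure (e.g.\ the balanced-separator / Marx-style search, which already finds a width-$\calO(k)$-ish \emph{tree structure}) to obtain candidate bags, and then the VC-based rounding improves each bag's cover to $\calO(k\log k)$; alternatively, the paper may bound the relevant VC-dimension directly by the BMIP parameters rather than assuming it, in which case the extra work is the combinatorial lemma that the \icBMIP\ implies $\vc(H)\le$ some function of $i$ and $c$. Both the dual-VC bound and the $\varepsilon$-net gap bound are standard, so the real content is packaging them so that the $\log$-factor attaches to $k$ and not to $n$.
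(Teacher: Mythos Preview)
Your combinatorial core is correct and essentially matches the paper's: pass to the dual hypergraph to convert edge covers to transversals, use the bound $\vc(H^d) < 2^{\vc(H)+1}$ (Assouad), and invoke the classical transversal-gap result (the paper cites Ding--Seymour--Winkler rather than Haussler--Welzl/Br\"onnimann--Goodrich, but the content is the same) to get a cover integrality gap of $\calO(\log k)$ at every bag. Rounding then establishes $\ghw(H) = \calO(k\log k)$.

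Where you have a genuine gap is exactly the obstacle you flag: you never actually obtain the bags of a width-$k$ FHD, so the rounding step is non-constructive. Your proposed fixes do not work. Running Marx's balanced-separator algorithm yields a tree whose bags have fractional cover number $\calO(k^3)$, not $k$, so rounding those bags gives width $\calO(k^3\log k)$, which is no better than what was already known. And the ``previous sections' machinery'' (the \rec{GHD,\,$k$} algorithm) requires the BMIP, which by Lemma~\ref{lemma:BMIPvsVC} is strictly stronger than bounded VC-dimension, so it is not available under the hypotheses of this theorem.

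The paper's resolution is much simpler and sidesteps the need to ever see the bags of an optimal FHD. Once the rounding argument, used purely existentially, has established $\ghw(H) = \calO(k\log k)$, invoke the known inequality $\hw(H) \leq 3\,\ghw(H) + 1$ from \cite{DBLP:journals/ejc/AdlerGG07}. Since $k$ is fixed, $\calO(k\log k)$ is a constant, and \rec{HD,\,$k'$} is tractable for every fixed $k'$ by \cite{2002gottlob}. So the algorithm is simply: compute an HD of $H$ of width at most $3\cdot \calO(k\log k)+1 = \calO(k\log k)$ directly; this HD is in particular an FHD of the claimed width. No rounding, no separator search, and no access to an optimal FHD is needed at run time.
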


\newcommand{\ggnew}[1]{#1}
\newcommand\hcancel[2][black]{\setbox0=\hbox{$#2$}%
\rlap{\raisebox{.45\ht0}{\textcolor{#1}{\rule{\wd0}{2pt}}}}#2} 
\newcommand{\ggkill}[1]{\hcancel[violet]{#1}}

\begin{proof}
The proof proceeds in several steps.

\smallskip

\noindent
{\em Reduced hypergraphs.} 
As in Section \ref{sect:fhd-exact}, Proposition \ref{prop:furedi}, 
we consider, w.l.o.g., only  
hypergraphs $H$ that satisfy the following 4 conditions: (1) $H$ has no  isolated vertices 
and (2) no empty edges. Moreover, (3) no two distinct vertices in $H$  
have the same edge-type (i.e., the two vertices occur in precisely the same edges) and 
(4) no two distinct edges in $H$ have the same vertex-type (i.e., we exclude duplicate edges). 
Hypergraphs satisfying these conditions will be called ``reduced''.

\smallskip

\noindent
{\em Dual hypergraphs.} Given a hypergraph $H = \{V,E)$, the dual hypergraph 
$H^d  = (W,F)$ 
is defined as $W = E$ and $F = \{ \{e \in E \mid v \in e\} \mid v \in V\}$. 
We are  assuming that $H$ is {\em reduced\/}. This ensures that 
$(H^d)^d = H$  holds.
Moreover, it is well-known and easy to verify that the following relationships
between $H$ and $H^d$ hold for any reduced hypergraph $H$,  
(see, e.g., \cite{duchet1996hypergraphs}):

\smallskip

(1) The edge coverings of $\HH$ and the transversals of $\HH^d$
coincide.

(2) The fractional %
edge coverings of $\HH$ and the fractional transversals of 
$\HH^d$
coincide.

(3) $\rho(\HH)=\tau(\HH^d)$, $\rho^*(\HH)=\tau^*(\HH^d)$, and
$\cigap{\HH}=\tigap{\HH^d}$.

\smallskip

\noindent
{\em VC-dimension.} By a classical result (\cite{ding1994} Theorem (5.4), see also \cite{Bronnimann1995} for related results), for every 
hypergraph $H = (V(H),E(H))$  \ggnew{with at least two edges} we have:
$$\tigap{H}= \tau(H)/\tau^*(H) \leq 2\vc(H)\log(11\tau^*(H)).$$
\ggnew{For hypergraphs $H$ with a single edge only, $\vc(H)=0$, and thus the above inequation does not hold. However, for such hypergraphs 
$\tau(H)=\tau^*(H)=1$. By putting this together, we get:}  
\ggnew{$$\tigap{H}= \tau(H)/\tau^*(H) \leq \max(1,2\vc(H)\log(11\tau^*(H))).$$}
Moreover, in \cite{assouad1983}, it is shown that $\vc(\HH^d)<2^{\vc(\HH)+1}$ 
always
holds.
In total, we thus get 

\begin{align*}
\cigap{\HH}=\tigap{\HH^d} & \leq \ggnew{\max(1,}2\vc(\HH^d)\log(11\tau^*(\HH^d))\ggnew{)} \\
& \leq \ggnew{\max(1,}2^{\vc(\HH)+2}\log(11\rho^*(\HH))\ggnew{)}\\
& \ggnew{\leq \max(1,2^{d+2}\log(11\rho^*(\HH)))},\ 
\ggnew{\mbox{which is\ } O(\log\rho^*(H))}.
\end{align*}

\noindent
{\em Approximation of $\fhw$ by $\ghw$.} 
Suppose that $H$ has an FHD $\left< T, (B_u)_{u\in V(T)}, (\lambda)_{u\in V(T)} 
\right>$
of width $k$. Then there exists a GHD of $H$ of width %
$\calO(k \cdot \log k)$. Indeed, we can find such a GHD by leaving the 
tree 
structure $T$ and the bags $B_u$ for every node $u$ in $T$ unchanged and 
replacing each fractional edge cover $\gamma_u$ of $B_u$ by an optimal integral 
edge cover $\lambda_u$ of $B_u$. By the above inequality, we thus increase the 
weight at each node $u$ only by a factor $\calO(\log k)$. Moreover, we know 
from 
\cite{DBLP:journals/ejc/AdlerGG07} that $\hw(H) \leq 3 \cdot \ghw(H) + 1$ holds.
In other words, we can compute an HD of $H$ (which is a special case of an FHD) 
in polynomial time,  whose width is $\calO(k \cdot\log k)$.
\end{proof}

\ggnew{The following} 
Lemma~\ref{lemma:BMIPvsVC} establishes a relationship between BMIP and VC-dimension.
Together with Theorem \ref{theo:ApproxVC}, Corollary~\ref{cor:ApproxBMIP} is 
then immediate.

\newcommand{\lemBMIPvsVC}{%
If a class $\classC$ of hypergraphs has the BMIP then it has bounded 
VC-dimension.
However, there exist classes $\classC$ of hypergraphs with bounded VC-dimension 
that do not have 
the BMIP.}

\begin{lemma}\label{lemma:BMIPvsVC}
\lemBMIPvsVC
\end{lemma}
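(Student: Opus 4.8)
The plan is to handle the two halves of Lemma~\ref{lemma:BMIPvsVC} independently: the implication ``BMIP $\Rightarrow$ bounded VC-dimension'' by a direct counting argument, and the separation by an explicit family of hypergraphs.

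For the first half, suppose $\classC$ has the $(i,c)$-BMIP. I would take an arbitrary $H \in \classC$ and a shattered set $X \subseteq V(H)$ with $|X| = d$, and bound $d$ solely in terms of $i$ and $c$. For each $x \in X$, shattering provides an edge $e_x \in E(H)$ with $e_x \cap X = X \setminus \{x\}$; when $d \ge 2$ these edges are pairwise distinct because the sets $X \setminus \{x\}$ are. Assuming $d \ge c$ (the cases $d < c$ and $c = 1$ are trivial, the latter because a shattered set is contained in a single edge and hence has at most $i$ vertices), I pick $c$ distinct vertices $x_1, \dots, x_c$ of $X$; the $c$ distinct edges $e_{x_1}, \dots, e_{x_c}$ have an intersection containing $X \setminus \{x_1, \dots, x_c\}$, which has $d - c$ elements, so the $(i,c)$-BMIP forces $d - c \le i$. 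Thus $\vc(H) \le c + i$ uniformly over $\classC$.

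For the separation, I would define, for $n \ge 1$, the hypergraph $H_n$ with vertex set $\{1,\dots,n\} \cup \{v_1, \dots, v_n\}$ and edges $e_j = \{1,\dots,n\} \cup \{v_j\}$ for $j = 1, \dots, n$, and set $\classC = \{H_n \mid n \ge 1\}$. To see that $\classC$ fails the BMIP, fix any constants $c$ and $i$ and take $n = \max\{c, i+1\}$: since the $v_j$ are distinct, every $c$-element subfamily of $E_n$ intersects exactly in the core $\{1,\dots,n\}$, so $\cmiwidth{c}{H_n} = n > i$, and since $c, i$ were arbitrary, no $(i,c)$-BMIP holds. To see that $\vc(H_n) \le 1$, I would argue that a shattered set $X$ with $|X| \ge 2$ can contain no core vertex $\ell$ (every edge contains $\ell$, so $X \setminus \{\ell\}$ is never realized) and, if $X$ is made up only of petal vertices $v_j$, then every $e_j \cap X$ is empty or a singleton, so the full set $X$ (of size $\ge 2$) is missing from $E_n|_X$; hence no such $X$ exists and $\classC$ has VC-dimension at most $1$.

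The part that needs the most care is making the counterexample family uniformly bad for the BMIP, i.e.\ ensuring that the $c$-wise intersection width blows up for \emph{every} fixed $c$, not merely for $c = 2$: a family in which all but a bounded number of edges pairwise meet in a small set would still satisfy the BMIP for a large enough $c$. The construction above sidesteps this because \emph{all} $n$ edges of $H_n$ share the full $n$-element core. A secondary, routine point is the VC-dimension computation, where both the exclusion of core vertices and the failure of petal-only sets of size $\ge 2$ must be verified.
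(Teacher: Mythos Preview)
Your proof is correct. For the forward implication (BMIP $\Rightarrow$ bounded VC-dimension) your argument is essentially identical to the paper's: both pick $c$ vertices of a shattered set $X$, use the distinct edges realizing the traces $X \setminus \{x_j\}$, and apply the $(i,c)$-BMIP to their intersection to conclude $|X| \le c+i$. For the separation you give a different family than the paper: you use a sunflower with an $n$-element core $\{1,\dots,n\}$ and $n$ single-vertex petals, while the paper takes $V_n=\{v_1,\dots,v_n\}$ with edges $E_n=\{V_n\setminus\{v_i\}\mid 1\le i\le n\}$. Both constructions work for the same reason --- any $c$ distinct edges share at least $n-c$ vertices (in your example exactly $n$), so the BMIP fails for every pair $(c,i)$, while no $2$-element subset is shattered. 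Your family even attains VC-dimension exactly~$1$, whereas the paper only claims $\vc(H_n)\le 2$. The two counterexamples are equally elementary and neither offers a real advantage over the other.
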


\begin{proof} \mbox{}
[BMIP $\Rightarrow$ bounded VC-dimension.]
Let $c \geq 1, i \geq 0$ and let 
$H$ be a hypergraph with $\cmiwidth{c}{\HH}\leq i$.
We claim that then $\vc(H) \leq c +  i$ holds.

Assume to the contrary that there exists a set  $X \subseteq V$, such 
that $X$ is  shattered and $|X| > c + i$. 
We pick $c$ arbitrary, pairwise distinct vertices $v_1, \dots, v_{c}$ from $X$ 
and 
define $X_j = X \setminus \{v_j\}$ for each $j$. 
Then $X = (X_1 \cap \dots \cap X_c) \cup \{v_1, \dots, v_c\}$ holds
and also $|X| \leq   |X^*| + c$ with 
$X^* \subseteq X_1 \cap \dots \cap X_c$.

Since $X$ is shattered, for each $1 \leq j \leq c$, there exists a distinct
edge $e_j \in E(H)$ with $X_j = X \cap e_j$. 
Hence, $X_j = X \setminus \{v_j\} \subseteq e_j$ and also 
$X^* \subseteq e_1 \cap e_2 \cap  \dots \cap e_c$ holds, i.e., 
$X^*$ is in the intersection of $c$ edges of $H$. 
By $\cmiwidth{c}{\HH}\leq i$, we thus get $|X^*| \leq i$.
In total, we have $|X| \leq |X^*| + c \leq i + c$, which 
contradicts our assumption that $|X| > c+i$ holds.

\smallskip
\noindent
[bounded VC-dimension $\not\Rightarrow$ BMIP.]
It suffices to exhibit a family $(H_n)_{n \in \mathbb{N}}$ 
of hypergraphs such that 
$\vc(H_n)$ is bounded whereas 
$\cmiwidth{c}{\mbox{$H_n$}}$ is unbounded for any constant $c$.
We define $H_n = (V_n,E_n)$ as follows:

\smallskip
$V_n = \{v_1, \dots, v_n\}$ 

$E_n = \{ V_n \setminus \{v_i\} \mid 1 \leq i \leq n\}$

\smallskip
\noindent
Clearly, $\vc(H_n) < 2$. Indeed, take an arbitrary set $X \subseteq V$ with 
$|X| \geq 2$. Then $\emptyset \subseteq X$ but
$\emptyset \neq X \cap e$ for any $e \in E_n$. 
On the other hand, let $c \geq 1$ be an arbitrary constant and let $X = e_{i_1} 
\cap \dots \cap e_{i_\ell}$ for some $\ell \leq c$ and edges $e_{i_j} \in E_n$. 
Obviously, $|X| \geq n - c$ holds. Hence, also $\cmiwidth{c}{\mbox{$H_n$}} \geq 
n - c$, i.e., 
it is not bounded by any constant $i \geq 0$.
\end{proof}

\noindent
In the first part of Lemma \ref{lemma:BMIPvsVC}, 
we have shown that $\vc(H) \leq c +  i$ holds.
For an approximation of an FHD by a GHD, we need
to approximate the fractional edge cover $\gamma_u$ of each bag $B_u$ 
by an integral edge cover $\lambda_u$, i.e.,
we  consider fractional vs.\ integral edge covers of the induced hypergraphs
$H_u = (B_u, E_u)$ with $E_u = \{e  \cap B_u \mid e \in E(H)\}$.
Obviously, the bound $\vc(H) \leq c +  i$ carries over to 
$\vc(H_u) \leq c +  i$.

\begin{corollary}\label{cor:ApproxBMIP}
Let $\classC$ be a class of hypergraphs enjoying the BMIP and let $k \geq 1$.
Then there exists a polynomial-time algorithm that, given 
$H \in 
\classC$ with 
$\fhw(H) \leq k$, finds an FHD 
(actually, a GHD and even an HD) of $H$ 
of width $\calO(k \cdot \log k)$.%
\end{corollary}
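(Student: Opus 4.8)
The plan is to derive the statement as a direct instantiation of Theorem~\ref{theo:ApproxVC}, with Lemma~\ref{lemma:BMIPvsVC} supplying the hypothesis of that theorem. First I would invoke Lemma~\ref{lemma:BMIPvsVC}: if $\classC$ enjoys the BMIP with constants $c$ and $i$, then there is a constant $d = d(c,i)$ such that $\vc(H) \leq d$ for every $H \in \classC$. Since $c$ and $i$ are fixed for the class, $d$ is a genuine constant, so $\classC$ meets the premise of Theorem~\ref{theo:ApproxVC}. Applying that theorem then gives, for each constant $k \geq 1$, a polynomial-time algorithm that on input $H \in \classC$ with $\fhw(H) \leq k$ outputs an FHD of $H$ of width $\calO(k \log k)$; the running time and the width bound depend only on $k$ and $d$ (equivalently, on $k$, $c$, $i$), hence remain polynomial in $|H|$.

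Second, I would argue that the decomposition produced is in fact a GHD. For this it suffices to inspect the construction in the proof of Theorem~\ref{theo:ApproxVC}: the width-$\calO(k\log k)$ decomposition there is built from an FHD by keeping the tree and every bag $B_u$ fixed and replacing, at each node $u$, the fractional edge cover $\gamma_u$ by an \emph{optimal integral} edge cover $\lambda_u$ of the very same bag $B_u$; the cover integrality gap bound for classes of bounded VC-dimension is exactly what keeps $\rho(B_u)$ within a factor $\calO(\log k)$ of $\rho^*(B_u) \leq k$. So $\lambda_u$ witnesses $B_u \subseteq B(\lambda_u)$ with $\weight(\lambda_u) = \calO(k\log k)$, and the resulting object already satisfies condition~(3) of Definition~\ref{def:GHD}.

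Since the argument is purely a composition of two results already proved, there is no real mathematical obstacle; the only point requiring care is the bookkeeping that every quantity on which the running time and output width depend ($k$, and the VC-bound $d$, hence the BMIP constants) is a fixed constant of the class rather than a function of the input size. As an aside, one obtains a second, self-contained route tailored to the BMIP: the existence part of the proof of Theorem~\ref{theo:ApproxVC} already shows $\ghw(H) \leq k'$ for some constant $k' = \calO(k\log k)$ whenever $\fhw(H) \leq k$; since the BMIP implies the LogBMIP, Theorem~\ref{theo:LogBMIP} supplies a polynomial-time algorithm that computes a GHD of $H$ of width $\leq k'$, which is exactly the claimed output.
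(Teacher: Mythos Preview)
Your proof is correct and follows exactly the paper's route: the corollary is stated as ``immediate'' from Lemma~\ref{lemma:BMIPvsVC} together with Theorem~\ref{theo:ApproxVC}, which is precisely your first paragraph. One small imprecision in your second paragraph: the \emph{algorithm} in the proof of Theorem~\ref{theo:ApproxVC} does not start from an FHD and round each $\gamma_u$ to an integral cover (that is the existential argument showing $\ghw(H)=\calO(k\log k)$); the algorithmic step is to compute an HD of width $\calO(k\log k)$ via the bound $\hw(H)\leq 3\,\ghw(H)+1$ and the tractability of \rec{HD,\,$k'$}, and an HD is in particular a GHD, so your conclusion stands. Your alternative route in the third paragraph (existence of $\ghw(H)\leq k'$ plus Theorem~\ref{theo:LogBMIP} via BMIP $\Rightarrow$ LogBMIP) is also valid and is a pleasant variant specific to the BMIP setting.
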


\section{Conclusion and Future Work}
\label{sect:conclusion}
In this work,  we have settled the complexity of deciding $\fhw(H) \leq k$ for fixed 
constant $k\geq 2$ and $\ghw(H) \leq k$ for 
$k = 2$ by proving the $\NP$-completeness of both problems. 
This gives negative answers to two open problems.
On the positive side, we have identified rather 
mild restrictions such as the BDP (i.e., the bounded degree property), 
BIP (i.e., the bounded intersection property), LogBIP, 
BMIP (i.e., the bounded multi-intersection property), and LogBMIP, 
which give rise  to  a \ptime algorithm 
for the 
\rec{GHD,\,$k$} problem. Moreover, we have shown that the BDP 
ensures tractability also of the 
\rec{FHD,\,$k$} problem. For the BIP, we have shown that an arbitrarily close approximation 
of the $\fhw$ in polynomial time exists. In case of the BMIP, we have proposed a polynomial-time
algorithm for approximating the $\fhw$ up to a logarithmic factor.
As our empirical analyses reported 
in 
\cite{pods/FischlGLP19} show, these restrictions are very
well-suited for %
instances of CSPs and, even more so, of CQs. 
We believe that they deserve 
further attention. 

Our work does not finish here. We plan to explore several further issues 
regarding the computation and approximation of the fractional hypertree width. 
We find the following questions particularly appealing: (i)  Does the special 
condition defined by  Grohe and Marx~\cite{2014grohemarx} lead to tractable 
recognizability also for FHDs,  i.e., in case we 
define ``{\it sc-fhw$(H)$}'' as the smallest width an FHD of $H$ satisfying the 
special condition,  can $\mbox{\it sc-fhw}(H) \leq k$ be recognized efficiently?\ 
(ii)  Our tractability result in Section 5 for the \rec{FHD,\,$k$} problem is weaker than for \rec{GHD,\,$k$}. 
In particular, for the BIP and BMIP, we have only obtained efficient approximations of the $\fhw$. It is open if the 
BIP or even the BMIP suffices to ensure tractability of \rec{FHD,\,$k$}. And if not, we should at
least search for a better approximation of the $\fhw$ in case of the BMIP.
Or can non-approximability results be obtained under reasonable 
complexity-theoretic assumptions? 

\begin{acks}
 This work was supported by the Engineering and Physical
Sciences Research Council (EPSRC), Programme Grant EP{\slash}M025268{\slash} VADA:
Value Added Data Systems --- Principles and Architecture as well as by the Austrian
Science Fund (FWF):P30930 and Y698.
\end{acks}

\clearpage

\appendix

\section{Normal Form of FHDs}
\label{app:FHD-normal-form}

In Sections \ref{sect:fhd-exact} and \ref{sect:fhw-bip}, we made use of the fractional normal form (FNF), which generalizes the normal form of 
HDs from \cite{2002gottlob}. We will show here, that we can transform any FHD into fractional normal form. This transformation follows closely the transformation of HDs into normal form given in \cite{2002gottlob}. 
We first recall the definition of the FNF introduced in~Section~\ref{sect:fhd-exact}.

\medskip
\textit{Definition \ref{def:fhdnf}.} \deffhdnf

\noindent
We now carry over several properties of the normal form from  \cite{2002gottlob} 
to our  FNF defined above. An inspection of the corresponding proofs in \cite{2002gottlob} reveals that these properties hold with minor modifications also in the fractional case. We thus state the following results below without explicitly ``translating'' the proofs of \cite{2002gottlob} to the fractional setting.

Note that \cite{2002gottlob} deals with HDs and, therefore, in all decompositions 
considered there, the special condition holds. However, 
for all properties of the normal form carried over from HDs to FHDs 
in Lemmas~\ref{lem:52} and \ref{lem:53} below, 
the special condition is not needed.  

\begin{lemma}[Lemma~5.2 from \cite{2002gottlob}]
\label{lem:52}
Consider an arbitrary FHD $\mcF = \defFHD$ of a hypergraph $H$. 
Let $r$ be a node in $T$, 
let $s$ be a child of $r$ and let $C$ be a \comp{$B_r$} of $H$ such that
$C \cap \VTs \neq \emptyset$. Then, $\nodes(C,\mcF) 
\subseteq \nodes(T_s)$.
\end{lemma}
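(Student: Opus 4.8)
The statement is Lemma~\ref{lem:52} (Lemma 5.2 of \cite{2002gottlob}), adapted to FHDs: given an FHD $\mcF = \defFHD$ of $H$, a node $r$ with child $s$, and a $[B_r]$-component $C$ such that $C \cap \VTs \neq \emptyset$, one wants to conclude $\nodes(C,\mcF) \subseteq \nodes(T_s)$. The key observation is that nothing about the $\gamma$-labels is needed here; the argument is purely about bags and the connectedness condition, exactly as in the original proof. So the plan is to reproduce the classical argument, being careful that every invocation is of conditions (1) (edge cover), (2) (connectedness) of the FHD, and the definition of $[B_r]$-components.

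\textbf{First step.} Pick an arbitrary vertex $v \in C \cap \VTs$; such a vertex exists by hypothesis. By definition $v \in \VTs = \bigcup_{u \in T_s} B_u$, so there is some node $u^\ast \in T_s$ with $v \in B_{u^\ast}$. Also $v \in C$, so $v \notin B_r$ (since $C$ is a $[B_r]$-component, i.e.\ a subset of $V(H)\setminus B_r$). Thus $v$ occurs in a bag inside $T_s$ but not in $B_r$.

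\textbf{Second step: reduce to showing $\nodes(C,\mcF)$ lies in $T_s$ vertex by vertex.} Take any $w \in C$; I must show every node $u$ with $w \in B_u$ satisfies $u \in T_s$. Since $C$ is $[B_r]$-connected, there is a $[B_r]$-path $\pi$ from $v$ to $w$, say through vertices $v = v_0, v_1, \dots, v_h = w$ with connecting edges $e_0, \dots, e_{h-1}$, where $\{v_\ell, v_{\ell+1}\} \subseteq e_\ell \setminus B_r$. For each edge $e_\ell$ of $\pi$, by condition~(1) there is a node covering $e_\ell$, and by condition~(2) (connectedness) the set of nodes whose bag contains $v_\ell$ is connected in $T$; moreover two consecutive vertices $v_\ell, v_{\ell+1}$ both appear in some common edge $e_\ell$, hence in some common bag, so the connected subtrees $\nodes(\{v_\ell\},\mcF)$ and $\nodes(\{v_{\ell+1}\},\mcF)$ intersect. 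Chaining these, the union $\bigcup_{\ell=0}^{h} \nodes(\{v_\ell\},\mcF)$ is a connected subtree $T'$ of $T$, and it contains both $u^\ast$ (which contains $v_0 = v$) and every node containing $w$. Now I claim $T' \subseteq T_s$: since $u^\ast \in T_s$ and none of the vertices $v_\ell$ lies in $B_r$, the subtree $T'$ cannot contain $r$ (as $r \notin \nodes(\{v_\ell\},\mcF)$ for any $\ell$), nor can it leave $T_s$ by passing through $r$, because the only way out of $T_s$ in the rooted tree $T$ is through $s$ and then $r$ — and by connectedness of $T'$ together with $u^\ast \in T_s$, if $T'$ contained a node outside $T_s$ it would have to contain $s$'s parent $r$, contradiction. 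Hence every node containing $w$ lies in $T_s$, i.e.\ $\nodes(\{w\},\mcF) \subseteq \nodes(T_s)$, and since $w \in C$ was arbitrary, $\nodes(C,\mcF) \subseteq \nodes(T_s)$.

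\textbf{Anticipated obstacle.} The only delicate point is the topological argument that the connected subtree $T'$ built from $[B_r]$-paths stays inside $T_s$; this needs a clean statement that in a rooted tree, a connected set of nodes containing a node of $T_s$ but avoiding $r$ must be contained in $T_s$ (equivalently: $r$ separates $T_s$ from the rest of $T$). This is elementary but should be stated explicitly. Everything else is a verbatim transcription of the proof in \cite{2002gottlob}, and — as the excerpt already notes before Lemma~\ref{lem:52} — no use of the special condition or of the $\gamma$-labels is made, so the proof of \cite{2002gottlob} carries over with the word "HD" replaced by "FHD" throughout.
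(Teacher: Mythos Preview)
Your proof is correct and is precisely the classical argument from \cite{2002gottlob} that the paper invokes: the paper does not give its own proof of Lemma~\ref{lem:52} but simply notes that the original proof carries over verbatim to the FHD setting because it uses only the bags and the connectedness condition (not the $\lambda$- or $\gamma$-labels, nor the special condition). Your write-up faithfully reconstructs that argument, including the key topological point that $r$ separates $T_s$ from the rest of $T$, so a connected union of subtrees that meets $T_s$ and avoids $r$ must lie entirely in $T_s$.
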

 
\begin{lemma}[Lemma~5.3 from \cite{2002gottlob}]
\label{lem:53}
Consider an arbitrary FHD $\mcF = \defFHD$ of a hypergraph $H$. 
Let $r$ be a node in $T$ and let 
$U \subseteq V(H) \setminus B_r$ such that
$U$ is [$B_r$]-connected. Then
$\nodes(U,\mcF)$ induces a (connected) subtree of $T$.
\end{lemma}

We will now show that any FHD $\mcF$ of width $k$ can be transformed into an FHD of width $k$ in fractional normal form.

\begin{theorem}[Theorem~5.4 from \cite{2002gottlob}]
\label{thm:nf}
For every FHD $\mcF$ of a hypergraph $H$ with $\width(\mcF) = k$ there 
exists an FHD $\mcF^+$ of $H$ in FNF with $\width(\mcF^+) = k$.
\end{theorem}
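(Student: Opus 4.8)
The statement to prove is Theorem~5.4 (the normal-form theorem for FHDs), which asserts that every FHD of width $\leq k$ can be converted into one in fractional normal form (FNF) of the same width. Since the excerpt explicitly says that the FNF is obtained by carrying over the normal form of \cite{2002gottlob} with only minor modifications, the plan is to follow the transformation of \cite{2002gottlob} step by step and check that the fractional edge-cover labels $\gamma_u$ cause no trouble.

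\medskip
\noindent\textbf{Plan.}
First I would take an arbitrary FHD $\mcF = \langle T, (B_u)_{u\in T}, (\gamma_u)_{u\in T}\rangle$ of $H$ with $\width(\mcF)\leq k$, and process $T$ top-down, repairing violations of Conditions~1--3 of Definition~\ref{def:fhdnf} at each node. The core operation is the standard one: if $r$ is a node and $s$ a child of $r$ for which the normal-form conditions fail, then one restructures the subtree $T_s$ by regrouping it according to the $[B_r]$-components that $\VTs$ meets. Concretely, for each $[B_r]$-component $C$ with $C \cap \VTs \neq \emptyset$, one creates a fresh child $s_C$ of $r$ whose subtree is a ``cleaned-up'' copy of $T_s$ restricted to $\nodes(C,\mcF)$, with each bag intersected appropriately with $C \cup (B_r\cap B_s)$, and one sets $B_{s_C} := (B_s \cap C) \cup (B_r \cap B_s)$ or similar, inheriting the labels: the key point is that the $\gamma$-label of each node is simply \emph{kept unchanged} (or, where a bag shrinks, the same $\gamma_u$ still covers the smaller bag since $B' \subseteq B_u \subseteq B(\gamma_u)$), so the width never increases. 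Lemma~\ref{lem:52} is what guarantees that $\nodes(C,\mcF)$ lives entirely inside $T_s$, so this surgery is well-defined; Lemma~\ref{lem:53} guarantees that the relevant vertex sets induce connected subtrees, so connectedness (Condition~2 of Definition~\ref{def:FHD}) is preserved. Condition~3 of the FNF, $B(\gamma_s)\cap B_r \subseteq B_s$, is then enforced as a final pass: if some vertex $v\in B(\gamma_s)\cap B_r$ is missing from $B_s$, simply add $v$ to $B_s$; this keeps $B_s\subseteq B(\gamma_s)$ (since $v\in B(\gamma_s)$) and, because $v\in B_r$, it does not break connectedness (the set of nodes containing $v$ was already connected through $r$).

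\medskip
\noindent\textbf{Why the fractional case is no harder.}
The only place where \cite{2002gottlob} used the special condition in establishing the normal form was in Lemma~5.8 there; the excerpt already flags this and isolates it as Lemma~\ref{lem:58}, to be handled separately with the \emph{weak} special condition. For Theorem~5.4 itself, none of Lemmas~\ref{lem:52}, \ref{lem:53} (or the analogues of Lemmas 5.5--5.7 referenced as Lemmas~\ref{lem:54}--\ref{lem:57}) invoke the special condition. Hence the entire argument is about the tree structure and the bags, and the edge-cover labels $\lambda_u$ versus $\gamma_u$ play a purely passive role: in \cite{2002gottlob} one needs $B_u \subseteq B(\lambda_u)$, here one needs $B_u\subseteq B(\gamma_u)$, and every surgical step either leaves a bag alone, shrinks it (cover still valid), or enlarges it by a vertex already in $B(\gamma_u)$ (cover still valid). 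Termination is the usual argument: each repair strictly decreases a suitable well-founded measure on $T$ (e.g.\ the number of node/child pairs violating Conditions~1--3, processed top-down so that already-normalised upper parts stay normalised).

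\medskip
\noindent\textbf{Main obstacle.}
The genuine work is bookkeeping rather than a new idea: one must verify carefully that after the regrouping surgery at $(r,s)$, (i) all edges of $H$ are still covered by some bag (Condition~1 of Definition~\ref{def:FHD}) --- this follows because every edge was covered in the original $\mcF$ and the union of bags over the copies of $T_s$ still equals $\VTs$, so no vertex and hence no edge is lost; (ii) connectedness of every vertex is preserved, which is exactly where Lemma~\ref{lem:53} is used; and (iii) the newly created node/child pairs satisfy Conditions~1--3 of the FNF and introduce no new violations further down. Since the excerpt authorises citing the corresponding proofs of \cite{2002gottlob} with ``minor modifications,'' the cleanest presentation is to state that the construction of \cite{2002gottlob} applies verbatim, with $\lambda$ replaced by $\gamma$ throughout, observe that none of the intermediate lemmas used there relies on the special condition, and note the two trivial facts above about $\gamma$-covers being preserved under bag shrinking and under adding a vertex of $B(\gamma_u)$. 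I expect the write-up to be short, essentially a reduction to the cited result plus the remark that the fractional labels do not interfere.
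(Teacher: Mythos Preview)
Your proposal is correct and matches the paper's approach exactly: the paper's own remark on this theorem says that the transformation of \cite{2002gottlob} for Conditions~1 and~2 ``can be taken over literally because it only makes use of the tree structure of the decomposition, the bags and the connectedness condition,'' and that Condition~3 is ensured by extending $B_s$ with the vertices of $B(\gamma_s)\cap B_r$, which is precisely what you do. Your additional observations that the $\gamma$-labels are passive (covers survive bag shrinking and the Condition~3 enlargement) and that the special condition is not invoked here are exactly the right justifications.
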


\nop{*********************
\noindent
{\em Remark.}
The crucial part of the transformation into normal form is to ensure Conditions~1 and 2. Here, the proof of Theorem~5.4 from \cite{2002gottlob} can be taken over literally because it only makes use of the tree structure of the decomposition, the bags and the connectedness condition. Ensuring also Condition~3 of our FNF is easy, because we may always 
extend $B_s$ by nodes from $B(\gamma_s)  \cap B_r$ without violating the connectedness condition. To make these steps more clear, we will give a detailed proof of Theorem \ref{thm:nf} here. This proof follows closely the proof of Theorem~5.4 from \cite{2002gottlob}.
*********************}

\begin{proof}
 Let $\mcF = \defFHD$ be an arbitrary FHD of $H$ of width $k$. We show how to transform $\mcF$ into an FHD $\mcF^+$ of width $k$ in fractional normal form. The proof follows closely the proof of Theorem~5.4 from \cite{2002gottlob}.
 
 Assume that there exist two nodes $r$ and $s$ such that $s$ is a child of $r$, and $s$ violates any condition of Definition~\ref{def:fhdnf}. If $s$ satisfies Condition (1), but violates Condition (2), then $B_s \subseteq B_r$ holds. In this case, simply eliminate node $s$ from the tree $T$ and attach all children of $s$ to $r$. It is immediate to see that this transformation preserves all conditions of Definition~\ref{def:FHD}. 
 
Now assume that $T_s$ does not meet Condition (1) of Definition~\ref{def:fhdnf}, and let $C_1,\ldots,C_h$ be all the $[r]$-components containing some vertex occurring in $V(T_s)$. Hence, $V(T_s) \subseteq \left(\bigcup_{i=1}^h C_i \cup B_r\right)$. For each $[r]$-component $C_i$ ($1 \leq i \leq h$), consider the set of nodes $\nodes(C_i,\mcF)$. By Lemma~\ref{lem:53}, $\nodes(C_i,\mcF)$ induces a subtree of $T$, and by Lemma~\ref{lem:52}, $\nodes(C_i,\mcF) \subseteq \nodes(T_s)$. 
Hence $\nodes(C_i,\mcF)$ induces in fact a subtree of $T_s$.
 
 \newcommand{\new}{\ensuremath{u}}
 For each node $n \in \nodes(C_i,\mcF)$ define a new node $\new_n^{C_i}$, and let $\gamma_{\new_n^{C_i}}=\gamma_n$ and $B_{\new_n^{C_i}} = B_n \cap (C_i \cup B_r)$. Note that $B_{\new_n^{C_i}} \neq \emptyset$, because by definition of $\nodes(C_i,\mcF)$, $B_n$ contains some vertex belonging to $C_i$. Let $N_i$ = $\{ u_n^{C_i} \mid n \in \nodes(C_i,\mcF)\}$ and, for any $C_i$ ($1 \leq i \leq h$), let $T_i$ denote the (directed) graph $(N_i,E_i)$ such that $u_p^{C_i}$ is a child of $u_q^{C_i}$ if and only if $p$ is a child of $q$ in $T$. $T_i$ is clearly isomorphic to the subtree of $T_s$ induced by $\nodes(C_i,\mcF)$, hence $T_i$ is a tree as well.
 
 Now transform the FHD $\mcF$ as follows: Delete every node in $\nodes(T_s)$ from $T$, and attach to $r$ every tree $T_i$ for $1 \leq i \leq h$. Intuitively, we replace the subtree $T_s$ by a set of trees $\{ T_1, \ldots, T_h \}$. By construction, $T_i$ contains a node $u_n^{C_i}$ for each node $n$ belonging to $\nodes(C_i,\mcF)\ (1 \leq i \leq h)$. Then, if we let $\ffo{children}(r)$ denote the set of children of $r$ in the new tree $T$ obtained after the transformation above, it holds that for any $s' \in \ffo{children}(r)$, there exists an $[r]$-component $C$ of $H$ such that $\nodes(T_{s'})=\nodes(C,\mcF)$, and $V(T_{s'}) \subseteq (C\cup B_r)$.
 
 Furthermore, it is easy to verify that all the conditions of Definition~\ref{def:FHD} are preserved during this transformation. As a consequence, Condition~(2) of Definition~\ref{def:FHD} immediately entails that $(V(T_{s'})\cap B_r) \subseteq B_{s'}$. Hence, $V(T_{s'})=C \cup (B_{s'} \cap B_r)$. Thus, any child of $r$ satisfies both Condition~(1) and Condition~(2) of Definition~\ref{def:fhdnf}. 
 
Now assume that some node $u \in \ffo{children}(r)$ violates Condition~(3) of Definition~\ref{def:fhdnf}. 
Then we add to $B_u$ the set of vertices $B(\gamma_u)\cap B_r$. Because vertices in $B_r$ induce connected subtrees of $T$, and $B_r$ does not contain any vertices occurring in some $[r]$-component, this further transformation never invalidates any other condition. Moreover, for this transformation, we only use already covered vertices of $\gamma_u$ and therefore do not change the width of the FHD. 
 
 Note that the root of $T$ cannot violate any of the normal form conditions, because it has no parent in $T$. Moreover, the transformations above never change the parent $r$ of a violating node $s$. Thus, if we apply such a transformation to the children of the root of $T$, and iterate the process on the new children of the root of $T$, and so on, we eventually get a new FHD $\mcF^+ = \left<T^+,(B_u)_{u \in T^+},(\gamma_u)_{u\in T^+}\right>$ of $H$ in fractional normal form.
\end{proof}

\end{document}